\setlist[itemize]{leftmargin=*}
\tikzstyle{diamant}=[diamond, fill=couleurdefond, draw=black,inner sep=0.1em]
\tikzstyle{newe}=[rectangle, fill={gray!15}, draw=black, tikzit shape=rectangle, inner sep=0.2em]
\tikzstyle{cercle}=[circle, fill=couleurdefond, draw=black]
\tikzstyle{scercle}=[circle, fill=couleurdefond, draw=black, tikzit fill=white, inner sep=0.1em]
\tikzstyle{cartouche}=[rounded rectangle, fill=couleurdefond, draw=black,inner sep=0.2em]
\tikzstyle{neg}=[rounded rectangle, fill=couleurdefond, draw=black, execute at end node={$\neg$}]
\tikzstyle{sneg}=[rounded rectangle, fill=couleurdefond, draw=black, execute at end node={$\neg$}, scale=0.8]
\tikzstyle{negserie}=[rounded rectangle, fill=couleurdefond, draw=black, execute at end node={\footnotesize$\star\star$}]
\tikzstyle{diagrammevide}=[rectangle, fill=couleurdefond, draw=black, inner sep=1.25em, borddiagrammevide, tikzit shape=rectangle]
\tikzstyle{mdiagrammevide}=[rectangle, fill=couleurdefond, draw=black, inner sep=0.75em, sborddiagrammevide, tikzit shape=rectangle]
\tikzstyle{msdiagrammevide}=[rectangle, fill=couleurdefond, draw=black, inner sep=0.7em, msborddiagrammevide, tikzit shape=rectangle]
\tikzstyle{sdiagrammevide}=[rectangle, fill=couleurdefond, draw=black, inner sep=0.5em, sborddiagrammevide, tikzit shape=rectangle]
\tikzstyle{xsdiagrammevide}=[rectangle, fill=couleurdefond, draw=black, inner sep=0.4em, xsborddiagrammevide, tikzit shape=rectangle]
\tikzstyle{bs}=[shape=beam, fill=couleurdefond, draw, inner sep=0.25em, thick, tikzit fill=white]
\tikzstyle{sbs}=[shape=beam, fill=couleurdefond, draw, inner sep=0.2em, thick, tikzit fill=white]
\tikzstyle{npbs}=[shape=beam, horizontal fill={{npbsmoitiebasse}{npbsmoitiehaute}}, draw, inner sep=0.25em, thick, tikzit fill={rgb,255: red,128; green,128; blue,128}]
\tikzstyle{npbsalenvers}=[shape=beam, horizontal fill={{npbsmoitiehaute}{npbsmoitiebasse}}, draw, inner sep=0.25em, thick, tikzit fill={rgb,255: red,128; green,128; blue,128}]
\tikzstyle{snpbs}=[shape=beam, horizontal fill={{npbsmoitiebasse}{npbsmoitiehaute}}, draw, inner sep=0.2em, thick, tikzit fill={rgb,255: red,128; green,128; blue,128}]
\tikzstyle{snpbsalenvers}=[shape=beam, horizontal fill={{npbsmoitiehaute}{npbsmoitiebasse}}, draw, inner sep=0.2em, thick, tikzit fill={rgb,255: red,128; green,128; blue,128}]
\tikzstyle{cnot}=[shape=circle, draw, path picture={ 
\tikzstyle{thickcnot}=[shape=circle, draw, thick, path picture={ 
\tikzstyle{boite22}=[fill=white, draw=black, shape=rectangle, minimum height=1cm, minimum width=0.5cm]
\tikzstyle{boite15}=[fill=white, draw=black, shape=rectangle, minimum height=0.7cm, minimum width=0.5cm]
\tikzstyle{boite2}=[fill=white, draw=black, shape=rectangle, minimum height=0cm, minimum width=0cm]
\tikzstyle{snegpotentiel}=[fill=couleurdefond, draw=black, shape=rounded rectangle, inner sep=0.25em, tikzit fill={rgb,255: red,191; green,191; blue,191}, execute at end node={\footnotesize$\star$}]
\tikzstyle{negpotentiel}=[fill=couleurdefond, draw=black, shape=rounded rectangle, tikzit fill={rgb,255: red,191; green,191; blue,191}, execute at end node={$\star$}]
\tikzstyle{token}=[fill=black, draw=black, shape=circle, inner sep=0.1em]
\tikzstyle{whitetoken}=[fill=white, draw=black, shape=circle, inner sep=0.1em]
\tikzstyle{boitePBS}=[fill=white, draw=gray, thick, shape=rectangle, rounded corners=3pt, minimum height=0.6cm, inner sep=0.1em, minimum width=0.5cm]
\tikzstyle{boitePBS2}=[fill=white, draw=gray, thick, shape=rectangle, rounded corners=3pt, minimum height=0.55cm, inner sep=0.1em, minimum width=0.5cm]
\tikzstyle{sgene}=[fill={gray!30}, draw=black, shape=rounded rectangle, rounded rectangle east arc=0pt, minimum height=0.5cm, inner sep=0em, minimum width=0cm, scale=0.8]
\tikzstyle{sdetector}=[fill={gray!30}, draw=black, shape=rounded rectangle, rounded rectangle west arc=0pt, minimum height=0.5cm, inner sep=0em, minimum width=0cm, scale=0.8]
\tikzstyle{xsgene}=[fill={gray!30}, draw=black, shape=rounded rectangle, rounded rectangle east arc=0pt, minimum height=0.5cm, inner sep=0em, minimum width=0cm, scale=0.67]
\tikzstyle{xsdetector}=[fill={gray!30}, draw=black, shape=rounded rectangle, rounded rectangle west arc=0pt, minimum height=0.5cm, inner sep=0em, minimum width=0cm, scale=0.67]
\tikzstyle{PolRot}=[fill={gray!30}, draw=black, shape=rectangle, minimum height=0.5cm, inner sep=0.1em, minimum width=0.1cm]
\tikzstyle{PhS}=[fill=white, draw=black, shape=rectangle, minimum height=0.5cm, inner sep=0.1em, minimum width=0.1cm]
\tikzstyle{gene}=[fill={gray!30}, draw=black, shape=rounded rectangle, rounded rectangle east arc=0pt, minimum height=0.5cm, inner sep=0em, minimum width=0cm]
\tikzstyle{detector}=[fill={gray!30}, draw=black, shape=rounded rectangle, rounded rectangle west arc=0pt, minimum height=0.5cm, inner sep=0em, minimum width=0cm]
\tikzstyle{cartoucherouge}=[rounded rectangle, fill={red!55!white}, draw=black, tikzit fill=red]
\tikzstyle{cartouchebleu}=[rounded rectangle, fill={blue!33!white}, draw=black, tikzit fill=blue]
\tikzstyle{diamantrouge}=[diamond, fill={rgb,255: red,255; green,115; blue,115}, draw=black]
\tikzstyle{diamantbleu}=[diamond, fill={rgb,255: red,171; green,171; blue,255}, draw=black]
\tikzstyle{control}=[fill=black, draw=black, shape=circle, scale=0.35]
\tikzstyle{wcontrol}=[fill=white, draw=black, shape=circle, scale=0.35]
\tikzstyle{boite3qubits}=[fill=white, draw=black, shape=rectangle, minimum height=2.5cm]
\tikzstyle{boite2qubits}=[fill=white, draw=black, shape=rectangle, minimum height=1.75cm]
\tikzstyle{new}=[-, tikzit draw=magenta]
\tikzstyle{tirets}=[-, draw=black, dashed]
\tikzstyle{noire}=[-, draw=black, tikzit draw=magenta]
\tikzstyle{ep}=[-, draw=black, tikzit draw=magenta]
\tikzstyle{longdashed}=[-, dash pattern=on 5pt off 5pt]
\tikzstyle{pointilles}=[-, draw=black, dotted]
\tikzstyle{trait}=[-, draw=black, thick,dashed]
\tikzstyle{boxed}=[-, draw=gray, thick,dashed]
\tikzstyle{grise}=[-, draw={rgb,255: red,191; green,191; blue,191}]
\tikzstyle{rouge}=[-, draw=red]
\tikzstyle{bleue}=[-, draw=bleu, tikzit draw=blue]
\tikzstyle{verte}=[-, draw={rgb,255: red,0; green,230; blue,0}]
\tikzstyle{borddiagrammevide}=[-, dash pattern=on 0.5em off 0.5em on 0.5em off 0.5em on 0.5em off 0em]
\tikzstyle{msborddiagrammevide}=[-, dash pattern=on 0.28em off 0.28em on 0.28em off 0.28em on 0.28em off 0em]
\tikzstyle{sborddiagrammevide}=[-, dash pattern=on 0.2em off 0.2em on 0.2em off 0.2em on 0.2em off 0em]
\tikzstyle{xsborddiagrammevide}=[-, dash pattern=on 0.16em off 0.16em on 0.16em off 0.16em on 0.16em off 0em]
\tikzstyle{mediumdash}=[-, dash pattern=on 2pt off 2pt]
\tikzstyle{rougefonce}=[-, draw={red!50!black}, tikzit draw={rgb,255: red,136; green,0; blue,0}]
\tikzstyle{gate}=[fill=white, draw=black, shape=rectangle, minimum height=0.5cm, minimum width=0.1cm, inner sep=0.1em]
\tikzstyle{control}=[fill=black, draw=black, shape=circle, scale=0.35]
\tikzstyle{not}=[shape=circle, path picture={ 
\tikzstyle{wcontrol}=[fill=white, draw=black, shape=circle, scale=0.35]
\tikzstyle{empty}=[fill=white, draw=black, shape=rectangle, inner sep=0.4em, emptyborder]
\tikzstyle{globalphase}=[fill=white, draw=black, inner sep=0.15em, shape=rounded rectangle]
\tikzstyle{ancilla}=[fill=black, draw=black, shape=rectangle, minimum width=0.01cm, minimum height=0.25cm, inner sep=0.01em]
\tikzstyle{ground}=[fill=white, path picture={\draw[black](-1.5mm,0)--(-0.6mm,0);\draw[black,thick](-0.6mm,-1.75mm)--(-0.6mm,1.75mm) (0mm,-0.9mm)--(0mm,0.9mm) (0.6mm,-0.5mm)--(0.6mm,0.5mm);}, minimum width=0.1mm, draw=none, outer sep=0pt]
\tikzstyle{gate22}=[fill=white, draw=black, shape=rectangle, minimum height=0.7cm, minimum width=0.5cm]
\tikzstyle{void}=[shape=rectangle, minimum height=0.5cm]
\tikzstyle{circuit}=[fill=white, draw=black, shape=rectangle]
\tikzstyle{emptyborder}=[-, dash pattern=on 0.16em off 0.16em on 0.16em off 0.16em on 0.16em off 0em]
\tikzstyle{etc}=[-, draw=black, dashed, thick]
\tikzstyle{dots}=[-, dotted, draw=black]
\tikzstyle{shortdashed}=[-, draw=black, dash pattern=on 1pt off 1pt]
\tikzstyle{shortdashed2}=[-, draw=black, dash pattern=on 2pt off 2pt]
\tikzstyle{every node}=[font=\LARGE]
\newcommand{\interp}[1]{\left\llbracket #1 \right\rrbracket}
\newcommand{\interpD}[2]{\left\llbracket #1 \right\rrbracket^\sharp_{#2}}
\newcommand{\interpEq}[2]{\left\llbracket #1 \right\rrbracket_{#2}}
\newcommand{\CPTP}[1]{\llparenthesis #1 \rrparenthesis}
\newcommand{\eqeqref}[1]{\overset{\eqref{#1}}{=}}
\newcommand{\eqdeuxeqref}[2]{\overset{\eqref{#1}\eqref{#2}}{=}}
\newcommand{\eqtroiseqref}[3]{\overset{\eqref{#1}\eqref{#2}\eqref{#3}}{=}}
\newcommand{\eqquatreeqref}[4]{\overset{\eqref{#1}\eqref{#2}\eqref{#3}\eqref{#4}}{=}}
\newcommand{\eqspace}{\ =\ }
\newcounter{eqnt}
\newcounter{eqnabc}
\newenvironment{eqnabc}{\equation\refstepcounter{eqnabc}}{\tag{\alph{eqnabc}}\endequation}
\newcounter{eqnexpr}
\newcolumntype{b}{X}
\newcolumntype{s}{>{\hsize=.5\hsize}X}
\newcolumntype{C}{>{$}c<{$}}  
\newcolumntype{R}{>{$}r<{$}}  
\newcolumntype{L}{>{$}l<{$}}  
\DeclareRobustCommand{\crefnosort}[1]{%
  \begingroup\@cref@sortfalse\cref{#1}\endgroup
}
\DeclareRobustCommand{\crefnocompress}[1]{%
  \begingroup\@cref@compressfalse\cref{#1}\endgroup
}
\DeclareRobustCommand{\crefnosortnocompress}[1]{%
  \begingroup\@cref@sortfalse\@cref@compressfalse\cref{#1}\endgroup
}
\DeclareRobustCommand{\crefornocompress}[1]{%
  \begingroup\@cref@compressfalse\crefname{equation}{Equation}{Equation}\renewcommand{\crefpairconjunction}{ or~}\renewcommand{\creflastconjunction}{ or~}\cref{#1}\endgroup
}
\DeclareRobustCommand{\crefornosortnocompress}[1]{%
  \begingroup\@cref@sortfalse\@cref@compressfalse\crefname{equation}{Equation}{Equation}\renewcommand{\crefpairconjunction}{ or~}\renewcommand{\creflastconjunction}{ or~}\cref{#1}\endgroup
}
\newcommand*\vspacebeforeline[1]{
    \ifvmode 
        \vskip #1
        \vskip \z@skip
    \else
        \@bsphack
        \vadjust pre {%
            \@restorepar
            \vskip #1
            \vskip \z@skip
        }%
        \@esphack
    \fi
}
\newcommand{\smallskipbeforeline}{\vspacebeforeline{3pt plus 1pt minus 1pt}}
\let\oldscalebox\scalebox
\renewcommand{\scalebox}[2]{\raisebox{2.5pt-#1pt*5/2}{\oldscalebox{#1}{#2}}}
\title{Minimal Equational Theories for Quantum Circuits}
\author{Alexandre Cl\'ement}
\affiliation{
  \institution{Universit\'e Paris-Saclay, ENS Paris-Saclay, CNRS, Inria, LMF,}
  \city{91190, Gif-sur-Yvette}
  \country{France}}
\email{alexandre.clement@inria.fr}
\author{No\'e Delorme}
\affiliation{
  \institution{Universit\'e de Lorraine, \\CNRS, Inria, LORIA,}
  \city{F-54000 Nancy}
  \country{France}}
\email{noe.delorme@inria.fr}
\author{Simon Perdrix}
\affiliation{
  \institution{Universit\'e de Lorraine, \\CNRS, Inria, LORIA,}
  \city{F-54000 Nancy}
  \country{France}}
\email{simon.perdrix@loria.fr}
\begin{document}
\allowdisplaybreaks[1] 

\begin{abstract}
  We introduce the first minimal and complete equational theory for quantum circuits. Hence, we show that any true equation on quantum circuits can be derived from simple rules, all of them being standard except a novel but intuitive one which states that a multi-control $2\pi$ rotation is nothing but the identity. Our work improves on the recent complete equational theories for quantum circuits, by getting rid of several rules including a fairly impractical one. One of our main contributions is to prove the minimality of the equational theory, i.e. none of the rules can be derived from the other ones. More generally, we demonstrate that any complete equational theory on quantum circuits (when all gates are unitary) requires rules acting on an unbounded number of qubits. Finally, we also simplify the complete equational theories for quantum circuits with ancillary qubits and/or qubit discarding.
\end{abstract}

\maketitle

\section{Introduction}

Since its introduction in the 1980s, the quantum circuit model  has become ubiquitous in quantum computing, and a cornerstone of quantum software. Transforming quantum circuits is a central task in the development of the quantum computer, for circuit optimisation  (such as minimizing the number of gates, T-gates, and CNot gates, as well as reducing circuit depth), for ensuring compatibility with hardware constraints, and for enabling fault-tolerant computations.  Such transformations can be achieved through the use of an equational theory, i.e. roughly speaking, a set of rules which allow one to replace a piece of circuit with an equivalent one. An equational theory is complete when any true equation can be derived. In other words, if two circuits represent the same unitary transformation, completeness ensures that they can be transformed into each other using only the rules of the equational theory. 

From a  foundational perspective, a complete equational theory can be regarded as a set of axioms or principles that govern the behaviour of quantum circuits. Therefore, it is important to establish concise and meaningful rules that accurately capture the properties of quantum circuits.

Several equational theories have been demonstrated to be complete for specific, non-universal fragments of quantum circuits, like Clifford circuits \cite{Selinger13}, 1- and 2-qubit Clifford+T circuits  \cite{bian2022generators}, 3-qubit Clifford+CS \cite{bian2023generators}, CNOT-dihedral circuits \cite{Amy_2018}. Recently, a first complete equational theory for arbitrary quantum circuits was introduced \cite{CHMPV}. Derived from a complete equational theory for photonic circuits \cite{Clement2022lov} through an elaborate completion procedure, this original complete equational theory   for quantum circuits  contained a few cumbersome rules, some of them have been shown to be derivable in \cite{CDPV}. There remained however a family of equations acting on an unbounded number of qubits, and involving a dozen parameters with non trivial relations among them (see \Cref{euler3d} in \cref{fig:qcold-axioms}), leaving open the question of whether such a family of intricate equations is necessary. 

We introduce a concise and meaningful equational theory for quantum circuits, presented in its simplest form in \cref{fig:qcsf-axioms}, where quantum circuits are considered up to global phases. The main novelty, in terms of rules, is that the intricate  rule \eqref{euler3d} is replaced by the following equation:
\begin{align*}\tag{I}
  \tikzfigM{./qc-axioms/mctrlP2pi-}\;=\;\tikzfigM{./qc-axioms/multi-id}
\end{align*}

\begin{figure*}[h]
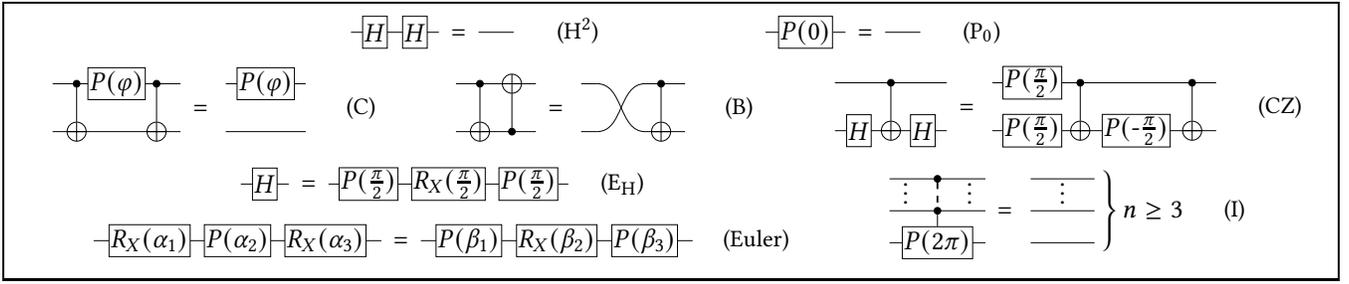

  \fbox{\begin{minipage}{.985\textwidth}\begin{center}
    \vspace{-.5em}
    \hspace{-1.5em}\begin{subfigure}{0.22\textwidth}
      \begin{align}\label{HH-ugp}\tag{H$^2$}\tikzfigM{./qc-axioms/HH}=\tikzfigM{./qc-axioms/Id}\end{align}
    \end{subfigure}\hspace{5em}
    \begin{subfigure}{0.21\textwidth}
      \begin{align}\label{P0-ugp}\tag{P$_0$}\tikzfigM{./qc-axioms/P0}=\tikzfigM{./qc-axioms/Id}\end{align}
    \end{subfigure}

    \hspace{-1.5em}\begin{subfigure}{0.28\textwidth}
      \begin{align}\label{CNOTPCNOT-ugp}\tag{C}\tikzfigM{./qc-axioms/CNOTPphiCNOT}=\tikzfigM{./qc-axioms/PphiId}\end{align}
    \end{subfigure}\hspace{1.2em}
    \begin{subfigure}{0.26\textwidth}
      \begin{align}\label{bigebre-ugp}\tag{B}\tikzfigM{./qc-axioms/CNOTNOTC}=\tikzfigM{./qc-axioms/SWAPCNOT}\end{align}
    \end{subfigure}\hspace{1.2em}
    \begin{subfigure}{0.39\textwidth}
      \begin{align}\label{CZ-ugp}\tag{CZ}\tikzfigM{./qc-axioms/H2CNOTH2}=\tikzfigM{./qc-axioms/CZ}\end{align}
    \end{subfigure}

    \hspace{-1.5em}\begin{subfigure}{0.55\textwidth}
      \begin{center}\begin{subfigure}{0.60\textwidth}\begin{align}\label{eulerH-ugp}\tag{E$_{\textup{H}}$}\tikzfigM{./qc-axioms/H}=\tikzfigM{./qc-axioms/eulerH}\end{align}\end{subfigure}\end{center}
      \vspace{0em}\begin{align}\label{euler-ugp}\tag{Euler}\tikzfigM{./qc-axioms/euler-left}=\tikzfigM{./qc-axioms/euler-right-}\end{align}
    \end{subfigure}\hspace{1.2em}
    \begin{subfigure}{0.32\textwidth}
      \begin{align}\label{ctrl2pi-ugp}\tag{I}\tikzfigM{./qc-axioms/mctrlP2pi-}=\tikzfigM{./qc-axioms/multi-id-braket}\end{align}
    \end{subfigure}

    \vspace{.5em}
  \end{center}\end{minipage}}
  \caption{\normalfont Minimal and complete equational theory for vanilla quantum circuits up to global phases.\label{fig:qcsf-axioms}}
\end{figure*}
 
Semantically, \eqref{ctrl2pi-ugp} is trivial: a $2\pi$ $Z$-rotation is nothing but the identity, hence its controlled version is also the identity whatever the number of control qubits is. Syntactically, the multi-control gate is defined inductively (see \Cref{mctrlPdef} in \cref{fig:shortcutcircuits}).

Each equation of \cref{fig:qcsf-axioms} has a simple and meaningful interpretation: \eqref{HH-ugp} means that $H$ is self inverse; \eqref{P0-ugp} that a rotation of angle $0$ is the identity; \eqref{CNOTPCNOT-ugp} that CNot is self inverse (when $\varphi=0$), and also that $P(\varphi)$ and CNot commute on the control qubit; \eqref{bigebre-ugp} essentially that composing 3 CNots is a swap; \eqref{CZ-ugp} that a Control-Z can be implemented in two ways using either one or two CNots; \eqref{eulerH-ugp} is the Euler decomposition of $H$; and finally \eqref{euler-ugp} relates two possible Euler decompositions into $Z$- and $X$-rotations. 

Our main result is to prove that the equational theory we introduce in \cref{fig:qcsf-axioms} is complete and minimal. While completeness ensures that any valid equation involving quantum circuits can be derived from these rules, minimality guarantees that none of the equations in \cref{fig:qcsf-axioms} can be derived from the other equations.

In particular, for any $n_0\ge2$, the instance of \eqref{ctrl2pi-ugp} with $n_0$ control qubits cannot be derived from the other instances of \eqref{ctrl2pi-ugp}  together with the other rules of \cref{fig:qcsf-axioms}. Beyond the minimality of this particular equational theory, one of our main contributions is to show that there is no complete equational theories acting on a bounded number of qubits for vanilla\footnote{By \emph{vanilla} quantum circuits we mean that all gates are unitary, in particular there is no qubit initialisations, ancillary qubits or discarding.}  quantum circuits.  

Minimality does not imply uniqueness. Depending on the context, it might be relevant to consider alternative equational theories. For instance, we show that one can replace the Equations \eqref{eulerH-ugp} and \eqref{euler-ugp} with Equations \eqref{Paddition-prime} and \eqref{eulerprime} leading to an equational theory which is also complete and minimal. Equation \eqref{Paddition-prime} stands for $P(\varphi_2)\circ P(\varphi_1)= P(\varphi_1+\varphi_2)$ and \eqref{eulerprime} -- introduced for the first time, up to our knowledge, in the context of the ZX-calculus \cite{coecke2018zx,vilmart2018nearoptimal} -- is an alternative formulation of the Euler decomposition with only two parameters on the LHS of the equation. 

We provide in this paper the first \emph{minimal} equational theory for quantum circuits. Indeed, the question of the minimality is still open for the complete equational theories equipping non-universal fragments of quantum circuits, like Clifford \cite{Selinger13}, 2-qubit Clifford+T \cite{bian2022generators}, and 3-qubit Clifford+CS \cite{bian2023generators}. More broadly, this is one of the first minimality results for a graphical language for quantum computing. Indeed, whereas the first completeness  results for universal graphical quantum languages have been obtained through the ZX-calculus \cite{DBLP:conf/lics/JeandelPV18,jeandel2018diagrammatic,hadzihasanovic2018two,Jeandel2020completeness}, and despite a great effort and significant progresses in the recent years, only \emph{nearly minimal}\footnote{Here \emph{nearly minimal} means that  a majority of the rules, but not all,  have been proved to be underivable form the other ones.} equational theories have been introduced for the ZX-calculus \cite{BackensPW16,BackensPW20,shi2018towards,vilmart2018nearoptimal} and its variants like the ZH-calculus \cite{van2019completeness,van2019investigating}.  In all these cases, the minimality of the provided complete equational theories is still open. Only the PBS-calculus is equipped with complete and minimal equational theories \cite{clement:hal-02929291,clement2022minimising}, notice however that the PBS-calculus focuses on coherent control and can only represent some specific oracle-based evolutions called superpositions of linear maps.  In other words, the PBS-calculus can be seen as a construction to provide coherent control capabilities to arbitrary (graphical) quantum language.

Beyond vanilla quantum circuits, one can define equational theories for quantum circuits with ancilla and/or qubit discarding (or trace out). Various constructions exist to transport a complete equational theory for vanilla quantum circuits to these settings \cite{Huot2019universal,Carette2021completeness}. In \cite{CDPV}, it has been shown that the intricate \Cref{euler3d} can be derived from its 2-qubit case in the presence of ancillary qubits and/or discarding. We show that  \Cref{euler3d} is not necessary at all. Indeed we derive from the equational theory of \cref{fig:qcsf-axioms} simple complete equational theories, acting on a bounded number of qubits (namely at most $3$), for quantum circuits with ancillary qubits and/or discarding.   

The paper is structured as follows. First,  we provide in \cref{sec:prop} the basic definitions of quantum circuits in the prop formalism. In \cref{sec:complete}, we introduce our simplified complete equational theory for vanilla quantum circuits including global phases. In \cref{sec:min}, we show that each rule of the equational theory cannot be derived from the other rules, leading to the minimality of the equational theory. An alternative equational theory is proved to be complete and minimal in \Cref{sec:alt}. And finally, we consider various extensions of vanilla quantum circuits in \cref{sec:extensions}: we introduce a complete and minimal equational theory for vanilla quantum circuits up to global phases; and complete equational theories acting on a bounded number of qubits for quantum circuits with ancilla and for quantum circuits with qubit discarding.

\section{Quantum circuits as a graphical language}\label{sec:prop}

Quantum circuits can be represented rigorously in the combinatorial structure called \emph{prop} \cite{maclane1965categorical,zanasi2018interacting,carette2020recipe}. 

\begin{definition}\label{def:prop}
A prop $\mathbf C$ of circuits  is a collection of sets $\mathbf C[n,m]$ indexed by  $n,m \in \mathbb N$. An element $C\in \mathbf C[n,m]$, denoted $C:n\to m$, is a circuit with $n$ input and $m$ output wires and is depicted:
\begin{equation*}
  \tikzfigS{./category/C}
\end{equation*}

\noindent A prop is equipped with two compositions: two circuits $C_1:n_1\to m_1$ and $C_2:n_2\to m_2$ can be composed
\begin{itemize}
\item sequentially with the associative operator $\circ$ when $m_1=n_2$, leading to $C_2 \circ C_1:n_1\to m_2$, which is graphically depicted as
\begin{gather*}
  \tikzfigS{./category/C2}\circ\tikzfigS{./category/C1}=\tikzfigS{./category/C1C2}
\end{gather*}
\item in parallel with the associative operator $\otimes$, leading to $C_1\otimes C_2:n_1+n_2\to m_1+m_2$, which is graphically depicted as
\begin{gather*}
    \tikzfigS{./category/C1}\otimes\tikzfigS{./category/C2}=\tikzfigS{./category/C1-C2}
\end{gather*}
\end{itemize}

\noindent Moreover, the compositions satisfy $$(C_3\otimes C_4)\circ (C_1\otimes C_2) = (C_3\circ C_1)\otimes (C_4\circ C_2)$$ whenever $C_3\circ C_1$ and $C_4\circ C_2$ are defined. This equation, together with the associativity of  the compositions, guarantee that the graphical representation is not ambiguous, indeed both sides of the equation are graphically depicted as $$\tikzfigS{./category/C1-C2-C3}$$

\noindent Finally, a prop also comes with three particular circuits: 
\begin{itemize}
\item The empty circuit $\gempty:0\to 0$ that satisfies $C\otimes\gempty=C=\gempty\otimes C$ for any circuit $C:n\to m$.

\item The identity circuit $\gI:1\to 1$, that can be extended to the identity  $\gI^{\otimes n}:n\to n$ on any number $n$ of wires, inductively defined as $\gI^{\otimes 0}\defeq\gempty$ and $\gI^{\otimes k+1}\defeq\gI\otimes\gI^{\otimes k}$, and that satisfies $C\circ\gI^{\otimes n}=C=\gI^{\otimes m}\circ C$ for any circuit $C:n\to m$.

\item The swap circuit $\gSWAP:2\to 2$ that satisfies $\scalebox{.69}{\tikzfig{./category/SWAPSWAP}}=\scalebox{.69}{\tikzfig{./category/II}}$, and from which we can inductively build circuits $\sigma_{k}$ that swap several wires with $\sigma_0\defeq\gI$ and $\sigma_{k+1}\defeq(\gSWAP\otimes\gI^{\otimes k})\circ(\gI\otimes\sigma_{k})$ and which satisfy $\sigma_m\circ(C\otimes\gI)=(\gI\otimes C)\circ\sigma_n$ for any $C:n\to m$. Graphically,
\begin{gather*}
   \tikzfigS{./category/naturality-left}=\tikzfigS{./category/naturality-right}
\end{gather*}
This last equation implies that the circuits are defined up to deformation. 
\end{itemize}
\end{definition}

In the language of category theory, a prop is a strict symmetric monoidal category whose objects are generated by a single object, or equivalently, generated by $(\N,+)$ as a monoid of objects.

We consider the vanilla model of quantum circuits generated by Hadamard, Phase gates, and CNot, together with global phases.

\begin{definition}[Vanilla quantum circuits]
  Let $\propQC$ be the prop of vanilla quantum circuits generated by $\gH:1\to 1$, $\gP:1\to 1$, $\gCNOT:2\to 2$ and $\gs:0\to 0$ for any $\varphi\in\R$.
\end{definition}

At this point,  circuits are essentially graphical structures. Any circuit $C:n\to n$ can be interpreted as a unitary evolution $\interp C$ on the 
Hilbert space $\C^{\{0,1\}^n}=span(\ket x, x\in \{0,1\}^n)$.

\begin{definition}[Semantics]\label{def:QCsem}
  For any $n$-qubit $\propQC$-circuit $C$, let $\interp{C}: \C^{\{0,1\}^n} \to \C^{\{0,1\}^n}$ be the \emph{semantics} of $C$ inductively defined as
  \begin{align*}
    \interp{C_2\circ C_1} &= \interp{C_2}\circ\interp{C_1}\\
    \interp{C_1\otimes C_2} &= \interp{C_1}\otimes\interp{C_2}\\
    \interp{\gempty} &= 1\mapsto 1\\
    \interp{\gI} &= \ket{x}\mapsto \ket{x}\\
    \interp{\!\gSWAP} &= \ket{x,y}\mapsto \ket{y,x}\\
    \interp{\gs} &= 1\mapsto e^{i\varphi}\\
    \interp{\gH} &= \ket x \mapsto \frac{\ket{0}+(-1)^x\ket{1}}{\sqrt{2}}\\
    \interp{\gP} &= \ket x\mapsto e^{ix\varphi}\ket{x}\\
    \interp{\!\gCNOT} &= \ket{x,y}\mapsto  \ket{x,x\oplus y}
  \end{align*}
\end{definition}

Notice that the interpretation $\interp.$ preserves the structure of the prop, i.e. if two circuits are equivalent according to the equations of props (\cref{def:prop}), they have the same interpretation\footnote{In category theoretic terms, $\interp{\cdot}$ has to be a prop functor.}. 

It is well-known that any unitary map can be represented by such a vanilla quantum circuit:
\begin{proposition}[Universality~\cite{Barenco1995gates}]\label{prop:universalityQC}
  $\propQC$ is universal, i.e.~for any unitary $U:\C^{\{0,1\}^n} \to \C^{\{0,1\}^n}$ there exists a $\propQC$-circuit $C:n\to n$ such that $\interp{C}=U$.
\end{proposition}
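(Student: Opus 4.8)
The statement to prove is Proposition~\ref{prop:universalityQC}: universality of $\propQC$, i.e. every unitary $U : \C^{\{0,1\}^n} \to \C^{\{0,1\}^n}$ is the semantics of some vanilla quantum circuit built from $H$, $P(\varphi)$, $\mathrm{CNot}$, and global phases $\gs$.

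The plan is to reduce an arbitrary unitary to a product of gates from the generating set by a sequence of standard decompositions, in three stages. First I would reduce to the two‑level case: using a Givens‑rotation argument, any $2^n\times 2^n$ unitary $U$ can be written as a finite product of unitaries each of which acts nontrivially only on a two‑dimensional subspace $\mathrm{span}(\ket{x},\ket{y})$ for some pair of basis states $x,y\in\{0,1\}^n$ (and as the identity on the orthogonal complement). This is the classical Reck/Gaussian‑elimination decomposition; one zeroes out the entries of $U$ below and to the right of the diagonal one at a time, each elimination step being a single two‑level unitary. Second, I would further arrange, by conjugating with basis permutations, that each two‑level block acts on a pair $x,y$ differing in exactly one bit (a pair of ``adjacent'' strings); composing such single‑bit‑flip permutations realizes any transposition of basis states, and each basis permutation is itself implementable by $\mathrm{CNot}$s (and $X = HP(\pi)H$ up to phase, or more simply $X$ obtained from $H,P$). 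Third, a two‑level unitary acting on a pair differing in bit $k$, conditioned on the other $n-1$ bits taking a fixed value, is exactly a multiply‑controlled single‑qubit unitary on qubit $k$; I would implement the single‑qubit unitary via its Euler / $ZXZ$ decomposition into $P$‑ and $H$‑conjugated‑$P$ rotations plus a global phase, and then implement the multiple controls by the standard Barenco et al.\ construction (controlled gates built from $\mathrm{CNot}$s and controlled rotations, which in turn decompose into $\mathrm{CNot}$s and single‑qubit rotations, using at most one ancilla‑free ``$V$ with $V^2=U$'' trick recursively). Putting together $\interp{C_2\circ C_1}=\interp{C_2}\circ\interp{C_1}$, $\interp{C_1\otimes C_2}=\interp{C_1}\otimes\interp{C_2}$, and the naturality of swaps (all from \cref{def:prop,def:QCsem}), the resulting circuit has the prescribed semantics.

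Concretely the key steps in order are: (1) prove every unitary is a product of two‑level unitaries; (2) show every two‑level unitary is, up to pre‑ and post‑composition with basis‑permutation circuits over $\{H,P,\mathrm{CNot}\}$, a fully‑controlled single‑qubit unitary; (3) show every single‑qubit unitary is realizable from $H$, $P(\varphi)$, and a global phase via Euler angles — explicitly $U = e^{i\delta} R_Z(\alpha)R_X(\beta)R_Z(\gamma)$ with $R_Z(\theta)$ a phase‑gate conjugate (since $P(\theta)$ differs from $R_Z(\theta)$ only by a global phase, which we have as a generator) and $R_X(\theta) = H R_Z(\theta) H$; (4) show every multiply‑controlled single‑qubit unitary decomposes into $\mathrm{CNot}$s and singly‑controlled rotations, and every singly‑controlled rotation into $\mathrm{CNot}$s and single‑qubit gates (the $ABC$ decomposition of Barenco et al.). Each arrow uses only the prop equations and the semantic clauses of \cref{def:QCsem}, so composability is automatic.

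The main obstacle — and the only genuinely nontrivial input — is step (4), the controlled‑gate constructions: implementing an $(n-1)$‑controlled one‑qubit gate without ancillas requires the recursive $V$/$V^\dagger$ decomposition, and one must check the control count stays bounded at each recursion level so that the whole thing really lives in $\propQC$ (which has no ancillas). Everything else — Gaussian elimination into two‑level unitaries, Euler decomposition of $SU(2)$, realizing basis permutations by $\mathrm{CNot}$s — is routine. In practice I would simply cite \cite{Barenco1995gates} for steps (1) and (4) (as the paper already does in the statement), and note that the single‑qubit completeness of $\{H,P(\varphi)\}$ up to global phase, together with the explicitly available global‑phase generator $\gs$, upgrades this to exact universality on the nose rather than up to phase.
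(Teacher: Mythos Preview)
Your proof sketch is correct and follows the standard route (two-level decomposition via Givens rotations, reduction to multiply-controlled single-qubit gates, Euler decomposition for the single-qubit target, and the Barenco et al.\ recursive construction for the controls). However, the paper does not give its own proof of this proposition at all: it is stated as a known result with a bare citation to \cite{Barenco1995gates}, and the body of the paper proceeds to use universality as a black box. So there is nothing to compare against; your outline is essentially a reconstruction of the argument in the cited reference, which is exactly what the paper is deferring to.
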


Quantum circuits, as defined above, only have four different kinds of generators, however, it is often convenient to use other gates that can be defined by combining them. For instance, following \cite{CHMPV,CDPV}, Pauli gates, X-rotations, and multi-controlled gates are defined in \Cref{fig:shortcutcircuits}.

Moreover, although the CNot gate acts on two consecutive qubits, it is useful to define shortcut notations for CNot gates acting on any pair of qubits as
\begin{gather*}
  \tikzfigS{./shortcut/bigCNOT-left}\defeq\tikzfigS{./shortcut/bigCNOT-right}
  \quad \textup{and} \quad \tikzfigS{./shortcut/bigNOTC-left}\defeq\tikzfigS{./shortcut/bigNOTC-right}
\end{gather*}

We use the standard bullet-based notation for multi-controlled gates. For instance 
$$\tikzfigS{./shortcut/3-Pphi}$$
denotes the application of a phase gate $\gP$ on the third qubit controlled by the first two qubits. With a slight abuse of notations, we use dashed lines for arbitrary number of control qubits, e.g. 
$$\tikzfigS{./shortcut/0+Pphi}:n+1\to n+1 \quad\textup{or simply}\quad \tikzfigS{./shortcut/0+1Pphi}:n+1\to n+1$$
have $n\ge 0$ control qubits (possibly zero), whereas 
$$\tikzfigS{./shortcut/1+Pphi}:n+2\to n+2 \quad\textup{and}\quad \tikzfigS{./shortcut/+1Pphi}:1+n+1\to 1+n+1$$
have at least one control qubit. Finally, it is often useful to define shortcut notations for multi-controlled gates with controls possibly below the target qubit.

\begin{gather*}
  \tikzfigS{./shortcut/0+Pphi+0+0}\defeq\tikzfigS{./shortcut/swaps0+0+0+Pphiswaps}
\end{gather*}

An equational theory $\Gamma$ for a prop is a collection of equations over circuits. We write $\Gamma\vdash C_1=C_2$ when $C_1$ can be transformed into $C_2$ using only the equations of $\Gamma$ together with the deformation rules of the prop. We say that an equational theory $\Gamma$ is \emph{sound} if any derivable equation preserves the interpretation, i.e.~for any circuits $C_1,C_2$, if $\Gamma\vdash C_1=C_2$ then $\interp{C_1}=\interp{C_2}$. We say that an equational theory $\Gamma$ is \emph{complete} if any true equation can be derived, i.e.~for any circuits $C_1,C_2$, if $\interp{C_1}=\interp{C_2}$ then $\Gamma\vdash C_1=C_2$.

\section{Completeness}\label{sec:complete}

The first complete equational theory for quantum circuits has been introduced in \cite{CHMPV}. It contained some simple and commonly used equations such that
\begin{gather*}
  \tikzfigS{./qc-axioms/HH}=\tikzfigS{./qc-axioms/Id} \qquad \textup{or} \qquad \tikzfigS{./qcoriginal-axioms/SWAP_00}=\tikzfigS{./qcoriginal-axioms/SWAP_04}
\end{gather*}
together with some more intricate rules. It has then been proved in \cite{CDPV} that some of those rules can actually be derived from the others, leading to a simpler equational theory $\QCold$ depicted in \Cref{fig:qcold-axioms}. Nevertheless, $\QCold$ still contains the most intricate rule, namely \Cref{euler3d}. This equation is not desirable for mainly two reasons: (1) it involves many parameters, all linked together with non-linear relations; (2) it acts on an unbounded number of qubits, i.e.~$\QCold$ contains an instance of the equation acting on $n$ qubits for any $n\ge2$.

In this section, we simplify the equational theory again, essentially by replacing \Cref{euler3d} by
\begin{equation}\tag{I}\tikzfigS{./qc-axioms/mctrlP2pi}\;=\;\tikzfigS{./qc-axioms/Id}^{\otimes n}\;,\;n\ge 3\end{equation}

This new equation still acts on an unbounded number of qubits but is significantly simpler as it does not involve any variable parameter and it is very intuitive: it states that the multi-controlled phase gate with parameter $2\pi$ is the identity. This leads to the sound equational theory $\QC$ depicted in \Cref{fig:qc-axioms}. Notice that $\QC$ has other differences compared to $\QCold$: \Cref{bigebrebis} is replaced by \Cref{bigebre}, \Cref{S0} is replaced by \Cref{gphaseempty}, \Cref{3CNOTstarget} is removed and \Cref{eulerconditioned} is replaced by \Cref{euler}. Notice that this last modification only changes the relations between the parameters $\alpha_i$ and the parameters $\beta_j$: \Cref{eulerconditioned} is presented with uniqueness conditions (cf.~caption of \Cref{fig:qcold-axioms}), whereas \Cref{euler} is equipped with functions to compute the parameters $\beta_j$ from the parameters $\alpha_i$ (cf.~\Cref{sec:discussionEuler} and caption of \Cref{fig:qc-axioms}).

To prove the completeness of the new equational theory $\QC$, we simply derive all the rules of the former complete equational theory $\QCold$.

\begin{figure*}
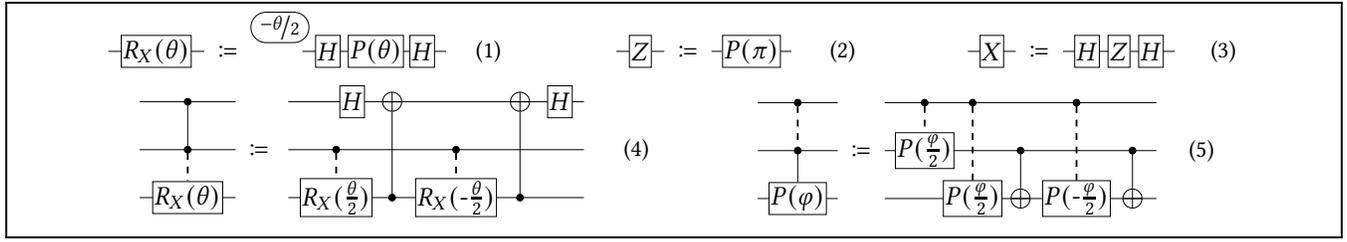

  \fbox{\begin{minipage}{.985\textwidth}\begin{center}
    \vspace{-.5em}
    \hspace{-1.5em}\begin{subfigure}{0.32\textwidth}
      \begin{align}\label{RXdef}\tikzfigM{./gates/RXtheta}\defeq\tikzfigM{./shortcut/HPthetaH}\end{align}
    \end{subfigure}\hspace{3em}
    \begin{subfigure}{0.21\textwidth}
      \begin{align}\label{Zdef}\tikzfigM{./gates/Z}\defeq\tikzfigM{./shortcut/Ppi}\end{align}
    \end{subfigure}\hspace{3em}
    \begin{subfigure}{0.23\textwidth}
      \begin{align}\label{Xdef}\tikzfigM{./gates/X}\defeq\tikzfigM{./shortcut/HZH}\end{align}
    \end{subfigure}

    \hspace{-1.5em}\begin{subfigure}{0.42\textwidth}
      \begin{align}\label{mctrlRXdef}\tikzfigM{./shortcut/mctrlRXtheta}\defeq\tikzfigM{./shortcut/mctrlRXthetadef}\end{align}
    \end{subfigure}\hspace{3em}
    \begin{subfigure}{0.37\textwidth}
      \begin{align}\label{mctrlPdef}\tikzfigM{./shortcut/mctrlPphi-}\defeq\tikzfigM{./shortcut/mctrlPphidef-}\end{align}
    \end{subfigure}
    \vspace{.5em}
  \end{center}\end{minipage}}
  \caption{\normalfont Shortcut notations for usual gates defined for any $\varphi,\theta\in\R$. \Cref{RXdef} defines $X$-rotations while Equations \eqref{Zdef} and \eqref{Xdef} define Pauli gates. Equations \eqref{mctrlRXdef} and \eqref{mctrlPdef} are inductive definitions of multi-controlled gates.\label{fig:shortcutcircuits}}
\end{figure*}

\begin{figure*}
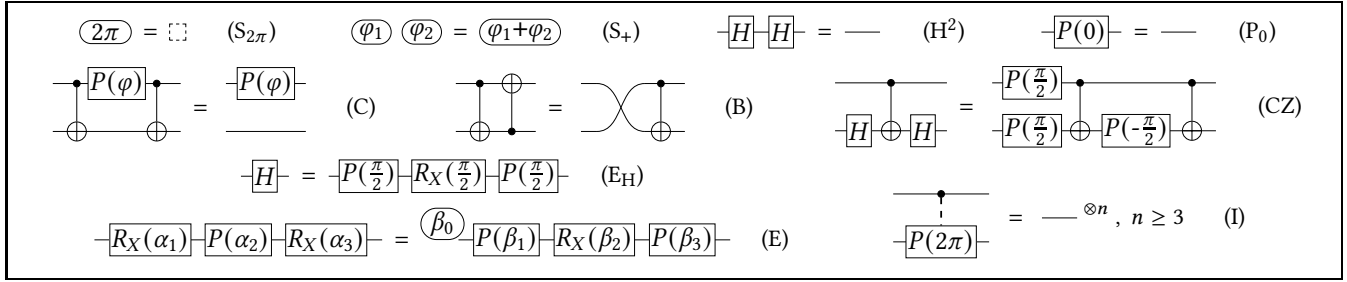

  \fbox{\begin{minipage}{.985\textwidth}\begin{center}
    \vspace{-.5em}
    \hspace{-1.5em}\begin{subfigure}{0.18\textwidth}
    \begin{align}\label{gphaseempty}\tag{S$_{2\pi}$}\tikzfigM{./qc-axioms/s2pi}=\tikzfigM{./qc-axioms/empty}\end{align}
    \end{subfigure}\hspace{1.2em}
    \begin{subfigure}{0.25\textwidth}
      \begin{align}\label{gphaseaddition}\tag{S$_+$}\tikzfigM{./qc-axioms/sphi1}\tikzfigM{./qc-axioms/sphi2}=\tikzfigM{./qc-axioms/sphi1plusphi2}\end{align}
    \end{subfigure}\hspace{1.2em}
    \begin{subfigure}{0.22\textwidth}
      \begin{align}\label{HH}\tag{H$^2$}\tikzfigM{./qc-axioms/HH}=\tikzfigM{./qc-axioms/Id}\end{align}
    \end{subfigure}\hspace{1.2em}
    \begin{subfigure}{0.21\textwidth}
      \begin{align}\label{P0}\tag{P$_0$}\tikzfigM{./qc-axioms/P0}=\tikzfigM{./qc-axioms/Id}\end{align}
    \end{subfigure}

    \hspace{-1.5em}\begin{subfigure}{0.28\textwidth}
      \begin{align}\label{CNOTPCNOT}\tag{C}\tikzfigM{./qc-axioms/CNOTPphiCNOT}=\tikzfigM{./qc-axioms/PphiId}\end{align}
    \end{subfigure}\hspace{1.2em}
    \begin{subfigure}{0.26\textwidth}
      \begin{align}\label{bigebre}\tag{B}\tikzfigM{./qc-axioms/CNOTNOTC}=\tikzfigM{./qc-axioms/SWAPCNOT}\end{align}
    \end{subfigure}\hspace{1.2em}
    \begin{subfigure}{0.39\textwidth}
      \begin{align}\label{CZ}\tag{CZ}\tikzfigM{./qc-axioms/H2CNOTH2}=\tikzfigM{./qc-axioms/CZ}\end{align}
    \end{subfigure}
    \vspace{-.3em}

    \hspace{-1.5em}\begin{subfigure}{0.55\textwidth}
      \begin{center}\begin{subfigure}{0.60\textwidth}\begin{align}\label{eulerH}\tag{E$_{\textup{H}}$}\tikzfigM{./qc-axioms/H}=\tikzfigM{./qc-axioms/eulerH}\end{align}\end{subfigure}\end{center}
      \vspace{-.2em}\begin{align}\label{euler}\tag{E}\tikzfigM{./qc-axioms/euler-left}=\tikzfigM{./qc-axioms/euler-right}\end{align}
    \end{subfigure}\hspace{1.2em}
    \begin{subfigure}{0.32\textwidth}
      \begin{align}\label{ctrl2pi}\tag{I}\tikzfigM{./qc-axioms/mctrlP2pi}\;=\;\tikzfigM{./qc-axioms/Id}^{\otimes n}\;,\;n\ge 3\end{align}
    \end{subfigure}
    \vspace{.5em}
  \end{center}\end{minipage}}
  \caption{\normalfont Equational theory $\QC$. Equations \eqref{gphaseaddition} and \eqref{CNOTPCNOT} are defined for any $\varphi,\varphi_1,\varphi_2\in\R$. In Equation \eqref{euler} the angles $\alpha_1,\alpha_2,\alpha_3\in\R$ are arbitrary parameters, whereas the angles $\beta_0,\beta_1,\beta_2,\beta_3$ are restricted to $[0,2\pi)$ and are computed as follows: $z\defeq \cos\left(\frac{\alpha_2}{2}\right)\cos\left(\frac{\alpha_1+\alpha_3}{2}\right)+i\sin\left(\frac{\alpha_2}{2}\right)\cos\left(\frac{\alpha_1-\alpha_3}{2}\right)$, $z'\defeq \cos\left(\frac{\alpha_2}{2}\right)\sin\left(\frac{\alpha_1+\alpha_3}{2}\right)-i\sin\left(\frac{\alpha_2}{2}\right)\sin\left(\frac{\alpha_1-\alpha_3}{2}\right)$. If $z'=0$ then $\beta_0\defeq\frac{\alpha_2}{2}-\arg(z)$, $\beta_1\defeq2\arg(z)$, $\beta_2\defeq0$ and $\beta_3\defeq0$. If $z=0$ then $\beta_0\defeq\frac{\alpha_2}{2}-\arg(z')$, $\beta_1\defeq2\arg(z')$, $\beta_2\defeq\pi$ and $\beta_3\defeq0$. Otherwise $\beta_0\defeq\frac{\alpha_2}{2}-\arg(z)$, $\beta_1\defeq\arg(z)+\arg(z')$, $\beta_2\defeq2\arg\left(i+\left\lvert\frac{z}{z'}\right\rvert\right)$ and $\beta_3\defeq\arg(z)-\arg(z')$. \Cref{ctrl2pi} states that the control phase gate of angle $2\pi$ is the identity for any $n\ge 3$ qubits. The equation is actually sound for any number of qubits, but derivable from the other rules for 1 qubit and 2 qubits. \label{fig:qc-axioms}}
\end{figure*}

\begin{figure*}
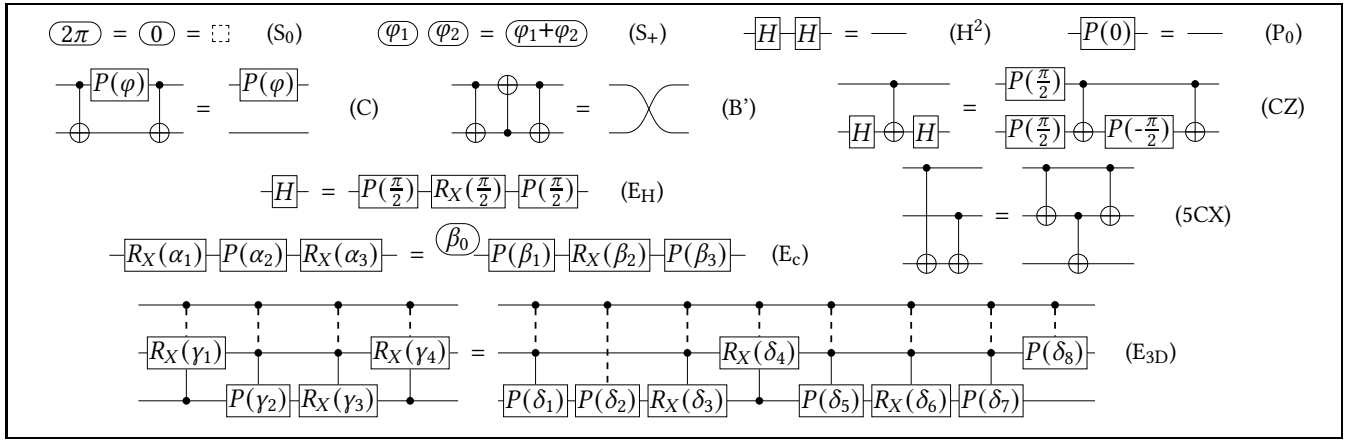

  \fbox{\begin{minipage}{.985\textwidth}\begin{center}
    \vspace{-.5em}
    \hspace{-1.5em}\begin{subfigure}{0.22\textwidth}
    \begin{align}\label{S0}\tag{S$_0$}\tikzfigM{./qc-axioms/s2pi}=\tikzfigM{./qcoriginal-axioms/s0}=\tikzfigM{./qc-axioms/empty}\end{align}
    \end{subfigure}\hspace{1.2em}
    \begin{subfigure}{0.25\textwidth}
      \begin{align}\tag{S$_+$}\tikzfigM{./qc-axioms/sphi1}\tikzfigM{./qc-axioms/sphi2}=\tikzfigM{./qc-axioms/sphi1plusphi2}\end{align}
    \end{subfigure}\hspace{1.2em}
    \begin{subfigure}{0.22\textwidth}
      \begin{align}\tag{H$^2$}\tikzfigM{./qc-axioms/HH}=\tikzfigM{./qc-axioms/Id}\end{align}
    \end{subfigure}\hspace{1.2em}
    \begin{subfigure}{0.21\textwidth}
      \begin{align}\tag{P$_0$}\tikzfigM{./qc-axioms/P0}=\tikzfigM{./qc-axioms/Id}\end{align}
    \end{subfigure}

    \hspace{-1.5em}\begin{subfigure}{0.28\textwidth}
      \begin{align}\tag{C}\tikzfigM{./qc-axioms/CNOTPphiCNOT}=\tikzfigM{./qc-axioms/PphiId}\end{align}
    \end{subfigure}\hspace{1.2em}
    \begin{subfigure}{0.26\textwidth}
      \begin{align}\label{bigebrebis}\tag{B'}\tikzfigM{./qcoriginal-axioms/SWAP_00}=\tikzfigM{./qcoriginal-axioms/SWAP_04}\end{align}
    \end{subfigure}\hspace{1.2em}
    \begin{subfigure}{0.39\textwidth}
      \begin{align}\tag{CZ}\tikzfigM{./qc-axioms/H2CNOTH2}=\tikzfigM{./qc-axioms/CZ}\end{align}
    \end{subfigure}

    \hspace{-1.5em}\begin{subfigure}{0.55\textwidth}
      \begin{center}\begin{subfigure}{0.60\textwidth}\begin{align}\tag{E$_{\textup{H}}$}\tikzfigM{./qc-axioms/H}=\tikzfigM{./qc-axioms/eulerH}\end{align}\end{subfigure}\end{center}
      \vspace{-.2em}\begin{align}\label{eulerconditioned}\tag{E$_\textup{c}$}\tikzfigM{./qc-axioms/euler-left}=\tikzfigM{./qc-axioms/euler-right}\end{align}
    \end{subfigure}\hspace{2em}
    \begin{subfigure}{0.28\textwidth}
      \begin{align}\label{3CNOTstarget}\tag{5CX}\tikzfigM{./qcoriginal-axioms/3CNOTstarget-step-0}=\tikzfigM{./qcoriginal-axioms/3CNOTstarget-step-8}\end{align}
    \end{subfigure}

    \vspace{.5em}
    \hspace{-2.5em}\begin{subfigure}{.81\textwidth}
      \begin{align}\label{euler3d}\tag{E$_{\textup{3D}}$}\tikzfigM{./qcoriginal-axioms/euler3d-left}=\tikzfigM{./qcoriginal-axioms/euler3d-right}\end{align}
    \end{subfigure}
    \vspace{.5em}
  \end{center}\end{minipage}}
  \caption{\normalfont Old complete equational theory $\QCold$ from \cite{CDPV} for vanilla quantum circuits. Equations \eqref{gphaseaddition} and \eqref{CNOTPCNOT} are defined for any $\varphi,\varphi_1,\varphi_2\in\R$. In Equations \eqref{eulerconditioned} and \eqref{euler3d}, the LHS circuit has arbitrary parameters which uniquely determine the parameters of the RHS circuit. Thus, for any $\alpha_i\in \mathbb R$, there exist $\beta_j\in \mathbb R$ such that \Cref{eulerconditioned} is sound. We make the angles $\beta_j$ unique by assuming that $\beta_1 \in [0,\pi)$, $\beta_0,\beta_2,\beta_3\in[0,2\pi)$ and if $\beta_2\in\{0,\pi\}$ then $\beta_1=0$. \Cref{euler3d} reads as follows: the equation is defined for any $n\ge 2$ input qubits, in such a way that all gates are controlled by the first $n-2$ qubits. Similarly to \Cref{eulerconditioned}, for any $\gamma_i\in\mathbb R$, there exist $\delta_j\in\mathbb R$ such that \Cref{euler3d} is sound. We ensure that the angles $\delta_j$ are uniquely determined by assuming that $\delta_1,\delta_2,\delta_5\in[0,\pi)$, $\delta_3,\delta_6,\delta_7,\delta_8\in[0,2\pi)$, $\delta_4\in[0,4\pi)$, if $\delta_3=0$ and $\delta_6\neq0$ then $\delta_2=0$, if $\delta_3=\pi$ then $\delta_1=0$, if $\delta_4\in\{0,2\pi\}$ then $\delta_1=\delta_3=0$, if $\delta_4\in\{\pi,3\pi\}$ then $\delta_2=0$, if $\delta_4\in\{\pi,3\pi\}$ and $\delta_3=0$ then $\delta_1=0$, and if $\delta_6\in\{0, \pi\}$ then $\delta_5=0$.\label{fig:qcold-axioms}}
\end{figure*}

\subsection{Euler-decomposition rule and  $1$-qubit completeness}\label{sec:discussionEuler}

In this subsection, we discuss one particular rule of the equational theory, namely \Cref{euler}, which is a cornerstone of the completeness result for quantum circuits.
\begin{equation}\tag{E}
  \tikzfigS{./qc-axioms/euler-left}=\tikzfigS{./qc-axioms/euler-right}
\end{equation}

\Cref{euler} follows from the well-known Euler-decomposition  which states that any unitary can be decomposed, up to a global phase, into basic X- and Z-rotations. Note that, in its previous version \cite{CHMPV,CDPV}, the Euler-decomposition rule was presented with conditions ensuring the uniqueness of the angles $\beta_0,\beta_1,\beta_2,\beta_3$ (see \Cref{eulerconditioned}). In the present paper, following \cite{vilmart2018nearoptimal,coecke2018zx}, we provide explicit functions: the angles $\alpha_1,\alpha_2,\alpha_3\in\R$ are arbitrary parameters, whereas the angles $\beta_0,\beta_1,\beta_2,\beta_3$ are restricted to $[0,2\pi)$ and are computed as follows, using intermediate complex numbers $z,z'$: 
\begin{gather*}
  z\defeq \cos\left(\frac{\alpha_2}{2}\right)\cos\left(\frac{\alpha_1+\alpha_3}{2}\right)+i\sin\left(\frac{\alpha_2}{2}\right)\cos\left(\frac{\alpha_1-\alpha_3}{2}\right) \\
  z'\defeq \cos\left(\frac{\alpha_2}{2}\right)\sin\left(\frac{\alpha_1+\alpha_3}{2}\right)-i\sin\left(\frac{\alpha_2}{2}\right)\sin\left(\frac{\alpha_1-\alpha_3}{2}\right)
\end{gather*}
\begin{itemize}
  \item If $z'=0$ then $\beta_0\defeq\frac{\alpha_2}{2}-\arg(z)$, $\beta_1\defeq2\arg(z)$, $\beta_2\defeq0$ and $\beta_3\defeq0$.
  \item If $z=0$ then $\beta_0\defeq\frac{\alpha_2}{2}-\arg(z')$, $\beta_1\defeq2\arg(z')$, $\beta_2\defeq\pi$ and $\beta_3\defeq0$.
  \item Otherwise $\beta_0\defeq\frac{\alpha_2}{2}-\arg(z)$, $\beta_1\defeq\arg(z)+\arg(z')$, $\beta_2\defeq2\arg\left(i+\left\lvert\frac{z}{z'}\right\rvert\right)$ and $\beta_3\defeq\arg(z)-\arg(z')$.
\end{itemize}
In the above definition of the $\beta_i$, we use implicit modulo $2\pi$ so that every $\beta_i$ is  in  $[0, 2\pi)$, note that $\beta_2$ is then actually in $[0,\pi]$.

We illustrate how $\QC$ can be used to derive basic but essential properties by showing, in particular, that the phase gates form a group and their parameters are $2\pi$-periodic:

\begin{restatable}{proposition}{simplederivableequations}\label{prop:simplederivableequations}
  For any $\varphi_1,\varphi_2\in \mathbb R$, the following equations can be derived in $\QC$: 
 \begin{gather}
    \label{P2pi}\tag{P$_{2\pi}$}\tikzfigS{./qcprime-completeness/P2pi_00}=\tikzfigS{./qcprime-completeness/P2pi_04}\\
    \label{Paddition}\tag{P$_+$}\tikzfigS{./qcoriginal-axioms/Pphi1Pphi2}\;=\;\tikzfigS{./qcoriginal-axioms/Pphi1phi2}\\
    \label{XPX}\tag{P$_-$}\tikzfigS{./qcprime-completeness/XPX_00}=\tikzfigS{./qcprime-completeness/XPX_06}
  \end{gather}
\end{restatable}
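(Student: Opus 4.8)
The plan is to derive all three equations from the Euler rule~\eqref{euler}, which is the only rule of $\QC$ that relates phase gates of \emph{a priori} unrelated angles, together with the bookkeeping axioms \eqref{P0}, \eqref{HH}, \eqref{gphaseempty}, \eqref{gphaseaddition} and the shorthand definitions of \Cref{fig:shortcutcircuits}. First I would record three elementary derived facts used throughout: $R_X(0)=\gI$ (unfold \eqref{RXdef}, then apply \eqref{P0} and \eqref{HH}); $H\circ R_X(\theta)\circ H = P(\theta)$ for every $\theta\in\R$ (unfold \eqref{RXdef}, then apply \eqref{HH} twice); and that the global phase gate of parameter $0$ equals the empty circuit (from \eqref{gphaseempty} and \eqref{gphaseaddition}). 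I would prove \eqref{Paddition} first, then \eqref{P2pi}, and finally \eqref{XPX}, which uses both.

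The key observation for \eqref{Paddition} is that when $\alpha_2 = 0$ the intermediate complex numbers of \eqref{euler} collapse to $z = \cos\!\bigl(\tfrac{\alpha_1+\alpha_3}{2}\bigr)$ and $z' = \sin\!\bigl(\tfrac{\alpha_1+\alpha_3}{2}\bigr)$ — the terms involving $\alpha_1-\alpha_3$ vanish, as they carry a factor $\sin(\tfrac{\alpha_2}{2})=0$ — so that the whole right-hand side of \eqref{euler}, the angles $\beta_j$ included, depends on $(\alpha_1,\alpha_2,\alpha_3)$ only through $\alpha_1+\alpha_3$. Instantiating \eqref{euler} once at $(\alpha_1,\alpha_2,\alpha_3)=(\varphi_1,0,\varphi_2)$ and once at $(\varphi_1+\varphi_2,0,0)$ therefore yields two equations with \emph{syntactically the same} right-hand circuit, whose left-hand sides simplify (using $R_X(0)=\gI$ and \eqref{P0}) to $P(\varphi_2)\circ P(\varphi_1)$ and to $P(\varphi_1+\varphi_2)$ respectively. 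Composing the two derivations gives \eqref{Paddition} for all real $\varphi_1,\varphi_2$ without ever evaluating the $\beta_j$.

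For \eqref{P2pi} I would instantiate \eqref{euler} at $(\alpha_1,\alpha_2,\alpha_3)=(0,2\pi,0)$: the left-hand side simplifies by \eqref{P0} to $R_X(2\pi)$, and the right-hand side collapses to $\gI$ once one checks that the $\beta$-functions all vanish at these parameters (the instance lies in the $z'=0$ branch with $z=-1$, and in the formula for $\beta_0$ the term $\tfrac{\alpha_2}{2}=\pi$ cancels $\arg(z)=\pi$), using the derived facts above to eliminate $R_X(0)$ and the trivial global phase. Hence $R_X(2\pi)=\gI$; pre- and post-composing with $H$ and using $H\circ R_X(2\pi)\circ H = P(2\pi)$ yields $P(2\pi)=\gI$. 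For \eqref{XPX} I would rewrite its left-hand side using \eqref{Xdef} and \eqref{Zdef} ($X = H\circ P(\pi)\circ H$) and \eqref{RXdef} ($H\circ P(\varphi)\circ H = R_X(\varphi)$) as $X\circ P(\varphi)\circ X = H\circ\bigl(P(\pi)\circ R_X(\varphi)\circ P(\pi)\bigr)\circ H$; the parenthesised circuit is the left-hand side of \eqref{euler} at $(\pi,\varphi,\pi)$, so replacing it by the right-hand side, commuting the global phase out, and absorbing the outer $H$'s (again via $H\circ R_X(\theta)\circ H = P(\theta)$ and \eqref{HH}) rewrites $X\circ P(\varphi)\circ X$ as $G\circ P(\beta_3)\circ R_X(\beta_2)\circ P(\beta_1)$ with $G$ a global phase. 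Evaluating the $\beta$-functions at $(\pi,\varphi,\pi)$ (one finds $\beta_2=0$ and $\beta_1,\beta_3,G$ in terms of $\varphi$), merging the two remaining $P$-gates with \eqref{Paddition}, and normalising stray multiples of $2\pi$ with \eqref{P2pi}, \eqref{gphaseempty} and \eqref{gphaseaddition}, gives $X\circ P(\varphi)\circ X$ equal to the global phase of parameter $\varphi$ followed by $P(-\varphi)$.

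The one step requiring genuine care — the \emph{routine} part of these derivations — is the explicit evaluation of the $\beta$-functions of \eqref{euler} at the tuples $(\varphi_1+\varphi_2,0,0)$, $(0,2\pi,0)$ and $(\pi,\varphi,\pi)$, determining for each which of the three branches of the definition ($z=0$; $z'=0$; both nonzero) applies, and then the bookkeeping that reconciles the $[0,2\pi)$-normalised outputs $\beta_j$ with the true angles by means of \eqref{P2pi}, \eqref{gphaseempty} and \eqref{gphaseaddition}. For \eqref{Paddition} this evaluation is sidestepped entirely by the symmetry argument, which is the reason to establish it first and reuse it in the proof of \eqref{XPX}.
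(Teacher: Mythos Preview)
Your strategy --- instantiate \eqref{euler} at carefully chosen parameters, and for \eqref{Paddition} exploit that when $\alpha_2=0$ the right-hand side depends only on $\alpha_1+\alpha_3$ --- is exactly the paper's. However, you have the shape of \eqref{euler} backwards. Its left-hand side is the $X$--$Z$--$X$ pattern $R_X(\alpha_3)\circ P(\alpha_2)\circ R_X(\alpha_1)$, not the $Z$--$X$--$Z$ pattern $P(\alpha_3)\circ R_X(\alpha_2)\circ P(\alpha_1)$ you implicitly assume; this can be read off from the matrix of the normal form in \Cref{lem:unicityNormalFormEuler}, or from the description of the normalisation procedure in \Cref{lem:procedureNormalFormEuler}, which uses \eqref{euler} to \emph{reduce} the number of $R_X$ gates. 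This single confusion breaks each of your concrete derivations as written, though each is easily repaired.

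For \eqref{P2pi}, at $(0,2\pi,0)$ the left-hand side of \eqref{euler} is $R_X(0)\circ P(2\pi)\circ R_X(0)=P(2\pi)$, not $R_X(2\pi)$; the conclusion is thus \emph{more} direct than you claim (no final $H$-conjugation needed), and this is precisely the paper's argument. For \eqref{Paddition}, the instances at $(\varphi_1,0,\varphi_2)$ and $(\varphi_1+\varphi_2,0,0)$ have left-hand sides $R_X(\varphi_2)\circ R_X(\varphi_1)$ and $R_X(\varphi_1+\varphi_2)$, not the $P$-gates you wrote; the paper fixes this by first inserting Hadamard pairs via \eqref{HH} so that the two phase gates of $P(\varphi_2)\circ P(\varphi_1)$ become $R_X$-gates, then running your symmetry argument inside the Hadamard sandwich, then unwinding. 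For \eqref{XPX}, the bracket $P(\pi)\circ R_X(\varphi)\circ P(\pi)$ is \emph{not} the left-hand side of \eqref{euler}; group instead $X\circ P(\varphi)\circ X=(HP(\pi)H)\circ P(\varphi)\circ(HP(\pi)H)$, recognise each $HP(\pi)H$ as $R_X(\pi)$ up to a global phase, and apply \eqref{euler} to the resulting $R_X(\pi)\circ P(\varphi)\circ R_X(\pi)$ --- essentially the paper's route (the paper additionally turns one $P(\pi)$ into $P(-\pi)$ via the already-proved \eqref{Paddition} and \eqref{P2pi} before converting to $R_X$, so that its instance of \eqref{euler} has $\alpha_1=\pi$, $\alpha_2=\varphi$, $\alpha_3=-\pi$).
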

\begin{proof}
  The idea of the derivations is to apply \cref{euler} by introducing new gates. The details are given in \cref{appendix:proofsimplederivableequations}.
\end{proof}

We introduce an Euler-based normal form for 1-qubit circuits, which corresponds to the RHS of \eqref{euler}:

\begin{definition}[Normal form]\label{def:normalform}
  A normal form is a $\propQC$-circuit of the form $$\tikzfigS{qc-axioms/euler-right}$$ with parameters satisfying $\beta_0,\beta_1,\beta_3\in[0,2\pi)$, $\beta_2\in[0,\pi]$, and if $\beta_2\in\{0,\pi\}$ then $\beta_3=0$.
\end{definition}
 
Any $1$-qubit circuit can be turned into an Euler-based normal form.

\begin{restatable}{lemma}{procedureNormalFormEuler}\label{lem:procedureNormalFormEuler}
  Any $1$-qubit $\propQC$-circuit can be  put in normal form using the axioms of $\QC$.
\end{restatable}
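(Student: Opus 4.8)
The plan is to reduce an arbitrary $1$-qubit $\propQC$-circuit to a word built only from $\gH$, phase gates and global phases, and then to bring the number of $\gH$ gates down to two using the Euler rules \eqref{euler} and \eqref{eulerH}. First, a straightforward wire-counting argument (every generator of $\propQC$ has equal input and output arity, hence so does every circuit, so a $2\to 2$ gate cannot occur inside a $1\to 1$ circuit) shows that a $1$-qubit circuit contains no $\gCNOT$; up to the equations of props it is therefore a sequential composition of $\gH$ gates, phase gates $P(\varphi)$ and global phases. Global phases are central --- a $0\to 0$ morphism commutes with everything in a symmetric monoidal category --- so I gather all of them, fuse them with \eqref{gphaseaddition} and reduce the total modulo $2\pi$ with \eqref{gphaseempty}, which leaves a single global phase $\gs$ with parameter $\beta_0\in[0,2\pi)$ in front of a word $w$ in $\gH$ and $P$.

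Next I cancel every factor $\gH\gH$ with \eqref{HH} and fuse consecutive phase gates and delete $P(0)$ with \eqref{Paddition} (derived in \Cref{prop:simplederivableequations}) and \eqref{P0}, so that $w$ becomes an alternating word $P(a_0)\,\gH\,P(a_1)\,\gH\cdots \gH\,P(a_k)$. The \emph{crux} is to show that whenever $k\ge 3$ one can strictly decrease $k$: isolating a factor $\gH\,P(a)\,\gH\,P(b)\,\gH$ (inserting a $P(0)$ with \eqref{P0} if two $\gH$'s are adjacent), this circuit equals, up to a global phase, a short circuit ending in a single $\gH$; re-expanding that $\gH$ via \eqref{eulerH} produces an $X$-rotation--$Z$-rotation--$X$-rotation chain followed by a $Z$-rotation, of which \eqref{euler} rewrites the first three as a $Z$-rotation--$X$-rotation--$Z$-rotation chain, and fusing the resulting adjacent phase gates with \eqref{Paddition} leaves a circuit with two $\gH$ gates where there were three. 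Iterating terminates with $k\le 2$. This reduction step is the \emph{main obstacle}: one must check that \eqref{euler} and \eqref{eulerH} genuinely allow trading three $\gH$ gates for two while keeping exact track of the phase gates that are created and absorbed, and that the process terminates.

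When $k\le 2$ I force $w$ into the shape $P(\beta_3)\,\gH\,P(\beta_2)\,\gH\,P(\beta_1)$: if $k=2$ it already has this shape, if $k=1$ I replace the sole $\gH$ by its decomposition \eqref{eulerH}, and if $k=0$ I write $P(a_0)=P(0)\,\gH\,P(0)\,\gH\,P(a_0)$ using \eqref{P0} and \eqref{HH}. Routing the last of these normalisations through one more use of \eqref{euler} makes $\beta_2\in[0,\pi]$, and I reduce $\beta_0,\beta_1,\beta_3$ modulo $2\pi$ with \eqref{gphaseempty} and \eqref{P2pi}. It remains to enforce the constraint ``$\beta_2\in\{0,\pi\}\Rightarrow\beta_3=0$'' of \Cref{def:normalform}: in that degenerate case $\gH\,P(\beta_2)\,\gH$ reduces either to $\gI$ (when $\beta_2=0$, by \eqref{P0} and \eqref{HH}) or to $X$ (when $\beta_2=\pi$, by \eqref{Zdef} and \eqref{Xdef}), so the $P(\beta_3)$ can be merged away --- with \eqref{Paddition} in the first case and \eqref{XPX} in the second --- after which I rewrite $\gI$ or $X$ back as $\gH\,P(\beta_2)\,\gH$ with $\beta_3=0$. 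This last bookkeeping is routine but cannot be skipped, since the normal form is defined with those constraints; the result is a normal form in the sense of \Cref{def:normalform}.
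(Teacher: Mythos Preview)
Your argument is correct and follows essentially the same route as the paper's: both use \eqref{eulerH} and \eqref{euler} to collapse the word down to a single $P\,R_X\,P$ block, then adjust the parameters and handle the degenerate values $\beta_2\in\{0,\pi\}$ separately. The only cosmetic differences are that the paper first replaces every $\gH$ by $P(\tfrac\pi2)\,R_X(\tfrac\pi2)\,P(\tfrac\pi2)$ via \eqref{eulerH} and then iterates \eqref{euler} to shrink the number of $R_X$ gates (rather than working directly with the $H$-alternating form as you do), and that for $\beta_2=\pi$ it invokes a specific instance of \eqref{euler} with $\alpha_1=\pi$, $\alpha_3=-\pi$ rather than the derived rule \eqref{XPX}.
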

\begin{proof} We introduce a simple procedure that turns any $1$-qubit circuit into a circuit in normal form. Starting with a circuit made of $\gH$, $\gP$, and $\gs$, we  remove all $\gH$ using \eqref{eulerH}, leading to a  circuit that can be described using only $\gRX$, $\gP$ and $\gs$, and satisfying the following properties: (i) each $R_X$ gate has a parameter in $[0,\pi]$, and (ii) each pair of $R_X$ gates is separated with at least one phase gate. We repeatedly  use  \eqref{euler} and \eqref{Paddition} to decrease respectively the number of $R_X$ gates and the total number of gates, while preserving  properties (i) and (ii). This leads to a circuit of the form $\tikzfigS{qc-axioms/euler-right--}$ up to global phases (and possibly some missing gates that can be trivially added with a $0$ parameter). If $\gamma_2=0$, one can remove the $R_X$ gate, merge the two phase gates using \eqref{Paddition} and put back a $R_X(0)$ gate and an $P(0)$ gate. If $\gamma_2=\pi$, one can introduce a $R_X(-\pi)$ and $R_X(\pi)$ gates at the end of the circuit (which is possible using the axioms of $\QC$) and apply $\eqref{euler}$ with parameters $\alpha_1 = \pi$, $\alpha_2 = \gamma_3$ and $\alpha_3=-\pi$. In this particular case $z'=0$ so $\beta_2=0$ and $\beta_3=0$. To finish the proof it is sufficient to apply  \eqref{Paddition} and  \eqref{P2pi} to satisfy $\beta_1, \beta_3 \in [0,2\pi)$ ; and  \eqref{gphaseaddition} and \eqref{gphaseempty} to get a single global phase parametrised with $\beta_0\in [0,2\pi)$. Notice that the numbers of steps of the entire procedure is linear in the size of the original circuit except the last stage which consists in putting the parameters $\beta_0, \beta_1$ and $\beta_3$ in the appropriate interval which depends on the actual value of the parameters. This last stage can actually be done in a constant number of steps using \eqref{euler}.
\end{proof}

\begin{restatable}{lemma}{unicityNormalFormEuler}\label{lem:unicityNormalFormEuler}
  For any $1$-qubit unitary there exists a unique circuit $C$ in normal form which implements $U$, i.e. $\interp C = U$.
\end{restatable}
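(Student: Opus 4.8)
The statement has two halves, existence and uniqueness, and I would dispatch existence first since it is essentially free from the material already in place. By universality (\Cref{prop:universalityQC}) there is a $1$-qubit $\propQC$-circuit $C_0$ with $\interp{C_0}=U$, and \Cref{lem:procedureNormalFormEuler} rewrites $C_0$ into a circuit $C$ in normal form using only the axioms of $\QC$; since $\QC$ is sound (and $\interp{\cdot}$ is a prop functor, so deformations preserve the semantics) we get $\interp{C}=\interp{C_0}=U$. So the real content is the uniqueness part.

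For uniqueness, the plan is to make the parametrisation of normal forms completely explicit and then invert it by hand. Unfolding \Cref{RXdef} and the semantics of \Cref{def:QCsem}, a circuit in normal form with parameters $(\beta_0,\beta_1,\beta_2,\beta_3)$ has semantics
\[
  e^{i(\beta_0+\beta_2/2)}\begin{pmatrix} \cos(\beta_2/2) & -i\,e^{i\beta_1}\sin(\beta_2/2)\\ -i\,e^{i\beta_3}\sin(\beta_2/2) & e^{i(\beta_1+\beta_3)}\cos(\beta_2/2) \end{pmatrix}.
\]
Given two normal forms $(\beta_0,\beta_1,\beta_2,\beta_3)$ and $(\beta_0',\beta_1',\beta_2',\beta_3')$ with the same image $U$, I would first read off $\cos(\beta_2/2)=|U_{0,0}|=\cos(\beta_2'/2)$; since $\beta_2/2,\beta_2'/2\in[0,\pi/2]$ and $\cos$ is injective there, this forces $\beta_2=\beta_2'$. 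Then I would split on whether $\beta_2\in\{0,\pi\}$. If not, all four entries of $U$ are nonzero, so $\arg(U_{0,0})$ determines $\beta_0$ (using $\beta_0\in[0,2\pi)$ and that $\beta_2$ is now known), then $\arg(U_{0,1})$ determines $\beta_1$ and $\arg(U_{1,0})$ determines $\beta_3$, each uniquely because they are confined to $[0,2\pi)$. If $\beta_2\in\{0,\pi\}$, the normal-form condition forces $\beta_3=\beta_3'=0$, and $U$ is diagonal when $\beta_2=0$ (with $U_{0,0}=e^{i\beta_0}$ and $U_{1,1}=e^{i(\beta_0+\beta_1)}$) and antidiagonal when $\beta_2=\pi$ (with $U_{1,0}=e^{i\beta_0}$ and $U_{0,1}=e^{i(\beta_0+\beta_1)}$), so $\beta_0,\beta_1$ are again uniquely determined by $U$. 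In every case the two parameter tuples coincide, hence so do the circuits.

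The main obstacle is not conceptual but lies precisely in the degenerate regime $\cos(\beta_2/2)=0$ or $\sin(\beta_2/2)=0$, where one of the two phase gates becomes invisible in the product and $\beta_1,\beta_3$ can no longer both be recovered independently; the delicate point will be to verify that the side condition ``if $\beta_2\in\{0,\pi\}$ then $\beta_3=0$'' in \Cref{def:normalform} removes exactly this ambiguity, neither too little nor too much. The rest is pure bookkeeping: handling the $\arg$ computations modulo $2\pi$ and checking the boundary values $\beta_2/2\in\{0,\pi/2\}$.
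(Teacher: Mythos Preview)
Your proof is correct and follows the same line as the paper's: write the normal-form semantics as an explicit $2\times2$ matrix and recover each $\beta_i$ from the entries. Your global-phase prefactor $e^{i(\beta_0+\beta_2/2)}$ differs from the paper's $e^{i\beta_0}$ (the paper's $R_X$ in \eqref{RXdef} carries a compensating $s(-\beta_2/2)$), but this is immaterial to the argument; and your explicit handling of the boundary cases $\beta_2\in\{0,\pi\}$, where the side condition $\beta_3=0$ is actually needed, is in fact more careful than the paper's own proof, which glosses over them.
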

\begin{proof}
  The existence can be seen as a direct consequence of  universality (\Cref{prop:universalityQC}) together with \Cref{lem:procedureNormalFormEuler}. It remains to prove the uniqueness. Let $\beta_0,\beta_1,\beta_3, \beta'_0,\beta'_1,\beta'_3,\in [0,2\pi)$ and $\beta_2,\beta_2'\in [0,\pi]$ satisfying the conditions of the normal form and such that $$\interp{\tikzfigS{./qc-axioms/euler-right}}=\interp{\tikzfigS{./qcprime-axioms/eulerprimebis-right}}$$
  The corresponding unitaries are of the form
  \begin{equation*}
    e^{i\beta_0}\begin{pmatrix}
      \cos(\frac{\beta_2}{2}) & -ie^{i\beta_1}\sin(\frac{\beta_2}{2}) \\
      -ie^{i\beta_3}\sin(\frac{\beta_2}{2}) & e^{i(\beta_1+\beta_3)}\cos(\frac{\beta_2}{2})
    \end{pmatrix}
  \end{equation*}
  which implies the equations
  \begin{equation*}
    \begin{cases}
      e^{i\beta_0}\cos(\frac{\beta_2}{2})=e^{i\beta_0'}\cos(\frac{\beta_2'}{2})\\
      e^{i(\beta_0+\beta_1)}\sin(\frac{\beta_2}{2})=e^{i(\beta_0'+\beta_1')}\sin(\frac{\beta_2'}{2})\\
      e^{i(\beta_0+\beta_3)}\sin(\frac{\beta_2}{2})=e^{i(\beta_0'+\beta_3')}\sin(\frac{\beta_2'}{2})
    \end{cases}
  \end{equation*}
  The first equation gives $\cos(\frac{\beta_2}{2})=\pm\cos(\frac{\beta_2'}{2})$. By hypothesis, $\beta_2,\beta_2'\in[0,\pi]$, which implies that $\cos(\frac{\beta_2}{2}),\cos(\frac{\beta_2'}{2})\geq0$, so that $\cos(\frac{\beta_2}{2})=\cos(\frac{\beta_2'}{2})$. Therefore, $\frac{\beta_2}{2}=\pm\frac{\beta_2'}{2}\bmod{2\pi}$, and since $\beta_2,\beta_2'\in[0,\pi]$, it must be the case that $\beta_2=\beta_2'$.
  Again, the first equation gives $\beta_0=\beta_0'\bmod{2\pi}$ and because $\beta_0,\beta_0'\in[0,2\pi)$ it must be the case that $\beta_0=\beta_0'$. Similarly, the second and third equations give $\beta_1=\beta_1'$ and $\beta_3=\beta_3'$ respectively.
\end{proof}

The existence and uniqueness of the normal form ensure the completeness for $1$-qubit $\propQC$-circuits:
\begin{proposition}[1-qubit Completeness]
  For any $1$-qubit circuits $C_1, C_2$ s.t. $\interp {C_1} = \interp{C_2}$, we have $\QC\vdash C_1 = C_2$. 
\end{proposition}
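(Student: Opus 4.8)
The plan is to reduce the completeness statement for arbitrary $1$-qubit circuits to the normal-form machinery already established. The key inputs are \Cref{lem:procedureNormalFormEuler}, which says every $1$-qubit $\propQC$-circuit can be rewritten, using only the axioms of $\QC$, into an Euler-based normal form (\Cref{def:normalform}), and \Cref{lem:unicityNormalFormEuler}, which says the normal form implementing a given $1$-qubit unitary is unique.

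The argument itself is then short. First I would take two $1$-qubit circuits $C_1, C_2$ with $\interp{C_1} = \interp{C_2}$. Applying \Cref{lem:procedureNormalFormEuler} to each, I obtain normal forms $N_1, N_2$ with $\QC \vdash C_1 = N_1$ and $\QC \vdash C_2 = N_2$. Since $\QC$ is sound (it is a sound equational theory, as noted when it was introduced), derivability preserves semantics, so $\interp{N_1} = \interp{C_1} = \interp{C_2} = \interp{N_2}$. Now both $N_1$ and $N_2$ are normal forms implementing the same $1$-qubit unitary $U \defeq \interp{C_1}$, so by the uniqueness part of \Cref{lem:unicityNormalFormEuler} they are literally the same circuit, $N_1 = N_2$. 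Chaining the derivations gives $\QC \vdash C_1 = N_1 = N_2 = C_2$, using transitivity (and symmetry) of derivable equality, which completes the proof.

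There is essentially no obstacle in this final step: all the work has been front-loaded into the two lemmas. If anything required care, it would be making sure the normal form output by the procedure of \Cref{lem:procedureNormalFormEuler} genuinely meets all the side conditions of \Cref{def:normalform} (the constraints $\beta_0,\beta_1,\beta_3\in[0,2\pi)$, $\beta_2\in[0,\pi]$, and $\beta_3=0$ whenever $\beta_2\in\{0,\pi\}$), since the uniqueness argument in \Cref{lem:unicityNormalFormEuler} relies on exactly those constraints — but the proof of \Cref{lem:procedureNormalFormEuler} already handles the degenerate cases $\gamma_2 = 0$ and $\gamma_2 = \pi$ explicitly, so this is taken care of. Hence the proposition follows immediately by combining existence and uniqueness of the normal form.
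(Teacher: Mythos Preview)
Your proposal is correct and follows exactly the same approach as the paper: put each circuit into normal form via \Cref{lem:procedureNormalFormEuler}, use soundness to conclude the normal forms have equal semantics, and invoke the uniqueness of \Cref{lem:unicityNormalFormEuler} to identify them. The paper's proof is just a terser version of what you wrote.
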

\begin{proof}
  Given some $1$-qubit circuits $C_1, C_2$ s.t. $\interp {C_1} = \interp{C_2}$, both circuits can be put in normal forms $C_1'$ and $C_2'$ respectively. By soundness $\interp {C'_1} = \interp{C'_2}$, so, according to \Cref{lem:unicityNormalFormEuler}, $C'_1=C'_2$. 
\end{proof}

Thanks to the 1-qubit completeness, we can  prove in particular  the following \Cref{eulerstar} which is a generalisation of both \Cref{eulerconditioned} and \Cref{euler} where the parameters $\beta_j$ are not restricted by functions (like \Cref{euler}) neither by conditions (like \Cref{eulerconditioned}). In practice, \Cref{eulerstar} is  more convenient to use than \Cref{euler} as one does not have to verify some relations between the parameters.

\begin{corollary}
  The following equation
  \begin{equation}\label{eulerstar}\tag{E$^*$}
    \tikzfigS{./qc-axioms/euler-left}=\tikzfigS{./qc-axioms/euler-right}
  \end{equation}
  is derivable in $\QC$ whenever the parameters $\alpha_i$ and $\beta_j$ make the equation sound.
\end{corollary}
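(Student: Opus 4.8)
The goal is to derive $\eqref{eulerstar}$ for \emph{any} sound choice of parameters, given that $\eqref{euler}$ is available only with the specific functional values of the $\beta_j$.

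The plan is to reduce $\eqref{eulerstar}$ to the already-established completeness for $1$-qubit circuits. First, note that both sides of $\eqref{eulerstar}$ are $1$-qubit $\propQC$-circuits: the LHS is $P(\alpha_3)\circ R_X(\alpha_2)\circ P(\alpha_1)$ and the RHS is the global phase $e^{i\beta_0}$ together with $P(\beta_3)\circ R_X(\beta_2)\circ P(\beta_1)$, both of which are perfectly good circuits built from $\gH$, $\gP$ and $\gs$ via the definitions in \cref{fig:shortcutcircuits}. If the parameters make the equation sound, then by definition $\interp{\text{LHS}} = \interp{\text{RHS}}$. By the $1$-qubit completeness proposition just proved (which rests on \Cref{lem:procedureNormalFormEuler} and \Cref{lem:unicityNormalFormEuler}, themselves proved using only the axioms of $\QC$), we immediately get $\QC \vdash \text{LHS} = \text{RHS}$, which is exactly $\eqref{eulerstar}$.

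Concretely, the only thing to check is that both sides genuinely are $\propQC$-circuits so that $1$-qubit completeness applies verbatim. The LHS is manifestly such a circuit. For the RHS, $R_X(\beta_2)$ unfolds by \eqref{RXdef} to $\gH\circ P(\beta_2)\circ\gH$, and the standalone global phase $e^{i\beta_0}$ is $\gs$ with $\varphi=\beta_0$; composing these with $P(\beta_1)$ and $P(\beta_3)$ gives a circuit over the generators. Hence the $1$-qubit completeness proposition applies directly, and no additional argument is needed.

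There is essentially no obstacle here: the corollary is a near-immediate consequence of $1$-qubit completeness, the point being simply that the restrictive functional form of the $\beta_j$ in $\eqref{euler}$ was only needed to establish completeness in the first place, and once completeness is in hand the restriction can be dropped. The only minor subtlety worth a sentence is that $\eqref{eulerstar}$ must be stated with the soundness hypothesis, since for arbitrary unrelated $\alpha_i,\beta_j$ the two circuits need not be equal; but under that hypothesis the derivation is automatic.

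\begin{proof}
  Both sides of \eqref{eulerstar} are $1$-qubit $\propQC$-circuits: the left-hand side is built directly from $\gP$, and the right-hand side is built from $\gP$, $\gs$, and $\gRX$, the latter itself a shorthand for $\gH\circ\gP\circ\gH$ (see \eqref{RXdef}). If the parameters $\alpha_i$ and $\beta_j$ make the equation sound, then the two circuits have the same semantics. By $1$-qubit completeness, they are therefore related in $\QC$.
\end{proof}
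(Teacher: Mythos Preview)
Your proof is correct and takes essentially the same approach as the paper, which simply notes that \eqref{eulerstar} follows from the $1$-qubit completeness proposition. One small wording fix: in your final proof you say the left-hand side is ``built directly from $\gP$'', but of course it also contains an $R_X$ gate (hence $\gH$'s); your earlier discussion had this right.
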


\subsection{Proving Equation \eqref{euler3d}}\label{sec:proofEuler3D}

In order to prove \Cref{euler3d}, we need a significant amount of intermediate circuit identities. For instance, the following (derivable) equations
\begin{equation*}
  \tikzfigS{./identities/HHCNOTHH_00}=\tikzfigS{./identities/NOTC} \quad \textup{and} \quad
  \tikzfigS{./identities/Pphasegadget-step-0}=\tikzfigS{./identities/Pphasegadget-step-3}
\end{equation*}
are not axioms of the equational theory $\QC$, but it is very useful to use them as intermediate steps in a bigger derivation without having to derive them each time. Moreover, it is useful to have some equations involving multi-controlled gates. For instance,
\begin{gather*}
  \tikzfigS{./examples/cPc}=\tikzfigS{./examples/ccP}=\tikzfigS{./examples/SWAPccPSWAP}
\end{gather*}
or
\begin{gather*} \tikzfigS{./examples/mctrlPphi1mctrlPphi2}=\tikzfigS{./examples/mctrlPphi1phi2}
\end{gather*}

Most of the required identities have already been proved in \cite{CHMPV,CDPV} with the former equational theories. To avoid having to prove again each of these identities in $\QC$ individually, we state the following lemma, where $\QC_k$ denotes the equational theory composed of all the equations of $\QC$ acting on at most $k$ qubits.

\begin{restatable}{lemma}{allAxiomExeptEulerthreeD}\label{lem:allAxiomExeptEuler3D}
  All the equations of $\QCold$, except \Cref{euler3d}, are derivable in $\QC_3$.
\end{restatable}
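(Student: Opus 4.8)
The plan is to go through the equations of $\QCold$ listed in \Cref{fig:qcold-axioms} one by one and check that each of them, except \eqref{euler3d}, is an instance of an equation of $\QC$ acting on at most $3$ qubits, or can be derived using only such instances. Many of the axioms are literally shared between the two theories: \eqref{HH}, \eqref{P0}, \eqref{CNOTPCNOT}, \eqref{CZ}, \eqref{eulerH}, \eqref{gphaseaddition} all appear verbatim in $\QC$ and act on at most $2$ qubits, so there is nothing to do for them. Equation \eqref{S0} of $\QCold$ splits as $\gs[2\pi]=\gempty$ and $\gs[0]=\gempty$; the first is exactly \eqref{gphaseempty}, and $\gs[0]=\gempty$ follows from \eqref{gphaseaddition} applied with $\varphi_1=\varphi_2=0$ together with \eqref{gphaseempty} (all on $0$ qubits). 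For \eqref{eulerconditioned}, we invoke the corollary on \eqref{eulerstar} (itself a consequence of $1$-qubit completeness, hence of $\QC_1$): since \eqref{eulerconditioned} is a sound instance of the shape of \eqref{eulerstar}, it is derivable, and the derivation only ever uses $1$-qubit equations.

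The two substantive cases are \eqref{bigebrebis} and \eqref{3CNOTstarget}. For \eqref{bigebrebis} — three CNots composing to a SWAP-and-CNot on $2$ qubits — we derive it from \eqref{bigebre} (which is a $2$-qubit equation of $\QC$) possibly together with \eqref{HH}, \eqref{CZ}, \eqref{CNOTPCNOT} and prop deformations, conjugating by Hadamards as needed to flip control/target; this is the standard manipulation and stays within $\QC_3$. For \eqref{3CNOTstarget}, the $5$-CNot identity on $3$ qubits, I would reproduce the derivation already given in \cite{CDPV} but inspect it to confirm that every rule it invokes is an equation of $\QC$ (or a consequence thereof established above) restricted to at most $3$ qubits; the identity and all intermediate circuits involve exactly $3$ qubits, so this is a matter of bookkeeping rather than a new idea.

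The main obstacle is the last part: \eqref{euler3d} is explicitly excluded, but the original derivations in \cite{CHMPV,CDPV} of the remaining $\QCold$-axioms — in particular of \eqref{3CNOTstarget} — may themselves have used rules that are no longer axioms of $\QC$ (e.g. \eqref{bigebrebis} rather than \eqref{bigebre}, or \eqref{S0}, or \eqref{eulerconditioned}) or, worse, instances acting on more than $3$ qubits. So the real work is to re-derive each such auxiliary identity from scratch inside $\QC_3$, carefully tracking the number of qubits each step touches and replacing any use of a dropped $\QCold$-rule by the $\QC_3$-derivation of it constructed in the earlier cases. Once one checks that no step of any of these derivations ever exceeds $3$ qubits and never secretly depends on \eqref{euler3d}, the lemma follows. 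I expect the qubit-count bound to be the delicate point, since controlled versions of $1$- and $2$-qubit identities naturally tend to creep up to $3$ qubits, and one must verify they do not go further.
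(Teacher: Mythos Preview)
Your overall plan is sound and matches the paper's approach, but there is a genuine gap in the treatment of \eqref{3CNOTstarget}. You propose to ``reproduce the derivation already given in \cite{CDPV}'', but \eqref{3CNOTstarget} is an \emph{axiom} of $\QCold$ in \cite{CDPV}, not a derived equation---there is no derivation there to reproduce. A fresh derivation is needed, and the key ingredient you are missing is the new axiom \eqref{ctrl2pi} on $3$ qubits: the paper's argument rewrites the $5$-CNot pattern (via \eqref{CZ}, \eqref{HHCNOTHH}, and phase-gadget identities, all derivable in $\QC_2$) into a circuit containing a doubly-controlled $P(2\pi)$ gate, which is then eliminated by the $3$-qubit instance of \eqref{ctrl2pi}. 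Without invoking \eqref{ctrl2pi} you cannot close this case; indeed, the whole point of the lemma is that \eqref{ctrl2pi} on $3$ qubits is what replaces \eqref{3CNOTstarget}.

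Two minor points. First, your derivation of $\gs[0]=\gempty$ via \eqref{gphaseaddition} with $\varphi_1=\varphi_2=0$ only yields the idempotency $s(0)\cdot s(0)=s(0)$; you need instead something like $s(0)=s(0)\cdot s(2\pi)=s(2\pi)=\gempty$, using \eqref{gphaseempty} and \eqref{gphaseaddition} with $\varphi_2=2\pi$. Second, for \eqref{bigebrebis} the paper's route is simpler than conjugating by Hadamards: apply \eqref{bigebre} once, then cancel the remaining $\mathrm{CNot}\cdot\mathrm{CNot}$ via \eqref{P0} and \eqref{CNOTPCNOT} (insert $P(0)$, apply \eqref{CNOTPCNOT}, remove $P(0)$). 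No \eqref{HH} or \eqref{CZ} is needed.
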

\begin{proof}
  The proof of \eqref{eulerconditioned} is discussed in \Cref{sec:discussionEuler}. It is then sufficient to derive Equations \eqref{S0}, \eqref{bigebrebis} and \eqref{3CNOTstarget} in $\QC$ (the other equations are directly in $\QC$). The details are given in \Cref{appendix:proofallAxiomExeptEuler3D}.
\end{proof}

As a consequence of this lemma, all the results that have been proved with $\QCold$ in \cite{CDPV} without using \Cref{euler3d} are also provable with $\QC$. Moreover, all the axioms of the original complete equational theory introduced in \cite{CHMPV}, except \Cref{euler3d}\footnote{Actually, the original complete equational theory introduced in \cite{CHMPV} does not contain \Cref{euler3d} but a slightly different equation with one more parameter. It has then been proved in \cite{CDPV} that this parameter can be removed. For simplicity, we denote both version of this intricate equation by \Cref{euler3d}.} have been proved with $\QCold$ in \cite{CDPV} without using \Cref{euler3d}. Hence, the results that have been proved in \cite{CHMPV} without using \Cref{euler3d} are also provable in $\QC$. We state and sum up all these results in \Cref{appendix:intermediateresults} together with some new useful equations.

To prove \Cref{euler3d}, a key idea is to break down this equation into two simpler (although still non-trivial) equations that, when considered together, enable us to find a derivation of \Cref{euler3d}. \Cref{Euler2dmulticontrolled} is a generalisation of the Euler \Cref{eulerstar} to multi-controlled gates acting on any number of qubits. \Cref{Euler3dsansphases-multicontrolled} is a variant of \Cref{euler3d} where the multi-controlled phase gates have been removed.

\begin{lemma}\label{propEuler2dmulticontrolled}
  The following equation is a consequence of the axioms of $\QC$:
  \begin{equation}\label{Euler2dmulticontrolled}\tag{\ref*{euler}$^*_n$}\tikzfigS{./Preuve-Euler2d/Euler2dleft-multicontrolled-compact}=\tikzfigS{./Preuve-Euler2d/Euler2dright-multicontrolled-compact}\end{equation}
  where the angles are as in \Cref{eulerstar}, $n\geq1$ is the number of qubits, and $\tikzfigS{./Preuve-Euler2d/mctrlphasebeta0}$ denotes either $\tikzfigS{./Preuve-Euler2d/phasebeta0}$ on $0$ qubits or $\tikzfigS{./Preuve-Euler2d/mctrlPbeta0}$ on one or more qubits.
\end{lemma}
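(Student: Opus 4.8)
The plan is to prove the equation by induction on the number $n$ of qubits. The base case $n=1$ is exactly the Euler rule \eqref{eulerstar}, which is derivable in $\QC$ by the preceding corollary; here the "multi-controlled'' phase gate $\tikzfigS{./Preuve-Euler2d/mctrlphasebeta0}$ degenerates to a plain global phase $\tikzfigS{./Preuve-Euler2d/phasebeta0}$, and the $R_X$ and $P$ gates are uncontrolled, so there is nothing more to do. For $n=2$ I expect the key step to be to "absorb'' one control qubit: each of the gates $R_X(\alpha_i)$ and $P(\beta_j)$ in \eqref{eulerstar}, and the global phase, gets a single control, and I would use the standard controlled-gate identities collected via \Cref{lem:allAxiomExeptEuler3D} (together with the intermediate results imported from \cite{CHMPV,CDPV}) — in particular the fact that a controlled-$R_X$ can be realised with two CNots sandwiching a controlled phase, that controlled global phases are controlled phase gates on the control qubit, and that the defining equation \eqref{mctrlPdef} lets one pass between the inductive presentation and the CNot-conjugated form. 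Concretely, I would conjugate the whole $1$-qubit instance \eqref{eulerstar} by CNots controlled on the new top qubit, push the CNots through using \eqref{CNOTPCNOT} and the big(ebre) rule \eqref{bigebre}/\eqref{CZ}, and recognise the result as the controlled version of both sides.

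For the inductive step $n \to n+1$, the strategy is the same "add one control'' move, but now the gates on both sides are already multi-controlled. The cleanest route is to use the inductive definition \eqref{mctrlPdef} of the multi-controlled phase gate together with the imported identities that let a multi-controlled gate controlled by $k$ qubits be written as a pair of CNots (on the fresh control line) conjugating a multi-controlled gate on the remaining $k-1$ controls — i.e. the "peeling'' identity that underlies \eqref{mctrlPdef} and its $R_X$ analogue \eqref{mctrlRXdef}. Applying this peeling uniformly to every gate in \eqref{Euler2dmulticontrolled} at level $n$ reduces the $(n+1)$-qubit equation, after commuting the fresh CNots past everything via \eqref{CNOTPCNOT}, \eqref{bigebre}, \eqref{CZ} and the derived identities of \Cref{appendix:intermediateresults}, to the $n$-qubit instance, which holds by the induction hypothesis. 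Care is needed that the CNot bookkeeping closes up exactly — that the CNots introduced by peeling the left-hand side cancel against those from the right-hand side after using the $n$-qubit identity — and that the angle functions $\beta_j(\alpha_i)$ are the same at every level (they are, since they come from the fixed $1$-qubit Euler decomposition and do not depend on the controls).

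The main obstacle I anticipate is not conceptual but bookkeeping: making the CNot conjugations line up so that the induction hypothesis applies verbatim, and handling the degenerate cases of the angles (the cases $z'=0$ and $z=0$ in the definition of the $\beta_j$, where $\beta_2\in\{0,\pi\}$ and some gates vanish) so that "missing'' gates with $0$ parameters can be freely inserted via \eqref{P0} and \eqref{RXdef}. A secondary subtlety is the placement of the global-phase factor: at level $n$ it has become the multi-controlled phase $\tikzfigS{./Preuve-Euler2d/mctrlPbeta0}$ with parameter $\beta_0$, and one must check that peeling a control off it (via \eqref{mctrlPdef}) is compatible with peeling controls off the genuine $R_X$ and $P$ gates, i.e. that all these controlled gates share the same control structure so the CNots introduced are literally the same. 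Once the commutations are organised — which is routine given the battery of derived identities available — the induction goes through.
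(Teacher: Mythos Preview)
Your inductive skeleton is right, and the base case is fine, but the inductive step has a genuine gap at exactly the point you flag as ``routine''. The peeling identities \eqref{mctrlRXdef} and \eqref{mctrlPdef} do not simply wrap an $(n{-}1)$-qubit copy of the same equation in CNots: they \emph{halve the angles}, producing gates with parameters $\pm\alpha_i/2$ and $\pm\varphi/2$ interleaved with CNots. After you rearrange these and apply the induction hypothesis, you do not get back the same $\alpha_i$ and $\beta_j$; you get fresh parameters on each side (call them $\psi_k$ on the LHS and $\theta_k$ on the RHS) that are the Euler-outputs of certain combinations of the halved angles. Your assertion that ``the angle functions $\beta_j(\alpha_i)$ are the same at every level'' is therefore false: the Euler map $(\alpha_1,\alpha_2,\alpha_3)\mapsto(\beta_0,\beta_1,\beta_2,\beta_3)$ is highly nonlinear, so there is no reason for the $\psi_k$ and $\theta_k$ to agree, and in general they do not. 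Consequently the CNot bookkeeping does \emph{not} close up, and the $(n{-}1)$-qubit hypothesis does not directly equate the two peeled circuits.

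This is precisely why the paper's proof is much heavier than your sketch suggests. It follows the same outline---induction, peel one control, apply \eqref{Euler2dmulticontrolled} at level $n{-}1$ on both sides---but then, to bridge the gap between the two resulting circuits, it performs a \emph{semantic} analysis: it evaluates both circuits on $\ket{1\cdots10}$ and $\ket{1\cdots11}$, extracts a $2\times 2$ unitary $AC^\dagger$ relating the two parameter sets, and does a three-way case split according to whether $AC^\dagger$ has all nonzero entries, is diagonal, or is anti-diagonal. In each case one deduces congruences between the $\psi_k$ and $\theta_k$ (modulo $\pi$ or $2\pi$), and only then can one finish syntactically using \eqref{ctrl2pi}, \eqref{mctrlPaddition}, \eqref{mctrlPopfullctrlRX} and further applications of the induction hypothesis. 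The substantive content of the lemma lives in this case analysis, not in the commutation bookkeeping.
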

\begin{proof}
  The proof is by induction on the number of qubits. Intuitively, after unfolding the definition of the multi-controlled gates given by \cref{mctrlRXdef,mctrlPdef}, we use the induction hypothesis, together with elementary auxiliary equations, to put both sides of \cref{Euler2dmulticontrolled} in a kind of normal form. Then, using the equality of the semantics of the two sides, we show that the two circuits obtained are equal up to simple transformations. The details are given in~\Cref{preuveEuler2dmulticontrolled}.
\end{proof}

\begin{lemma}\label{propEuler3dsansphases-multicontrolled}
  The following equation is a consequence of the axioms of $\QC$:
  \begin{equation}\label{Euler3dsansphases-multicontrolled}\tag{$\textup{E}^{\textup{phase-free}}_{\textup{3D}}$}
    \tikzfigS{./Preuve-Euler-sans-phases/Euler3dsansphasesleft-multicontrolled}=\tikzfigS{./Preuve-Euler-sans-phases/Euler3dsansphasesright-multicontrolled}
  \end{equation}
  where $\gamma_1=2\alpha_1$, $\gamma_3=-2\alpha_2$, $\gamma_4=2\alpha_3$, $\delta_3=-2\beta_1$, $\delta_4=2\beta_2$, and $\delta_6=-2\beta_3$, with $\alpha_i$ and $\beta_i$ being parameters of any instance of \Cref{eulerstar}.
\end{lemma}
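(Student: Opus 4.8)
The plan is to prove \eqref{Euler3dsansphases-multicontrolled} by induction on the number $n$ of qubits, following the same scheme as the proof of the companion \Cref{propEuler2dmulticontrolled} and using \eqref{Euler2dmulticontrolled} itself as a subroutine whenever a multi-controlled single-qubit rotation on one of the two target wires has to be re-decomposed. Throughout we may use every equation of $\QCold$ except \eqref{euler3d} (all derivable in $\QC_3$ by \Cref{lem:allAxiomExeptEuler3D}), the phase-gate group laws of \Cref{prop:simplederivableequations}, \eqref{Euler2dmulticontrolled}, and the auxiliary derivable circuit identities collected for this kind of argument (\Cref{appendix:intermediateresults}) -- in particular those expressing how a CNot conjugates an $R_X$ gate and how CNots commute past multi-controlled gates.

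\textbf{Base case.} For $n=2$ there are no control wires and \eqref{Euler3dsansphases-multicontrolled} is an equation between the two explicit $2$-qubit circuits obtained from the sides of \eqref{euler3d} by deleting its multi-controlled phase gates. I would first use the CNot-conjugation identities to isolate, inside the left-hand side, a single-wire sub-circuit realising $R_X(\alpha_3)\,P(\alpha_2)\,R_X(\alpha_1)$ -- the angle doubling and alternating signs in the lemma statement being exactly the footprint of this conjugation -- and then rewrite that sub-circuit by \eqref{eulerstar}, turning it into $P(\beta_3)\,R_X(\beta_2)\,P(\beta_1)$ together with a global-phase gate of angle $\beta_0$. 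Since the two sides of the target equation are semantically equal, this stray global phase must eventually cancel: one arranges the rewriting so that it is matched by a global-phase gate of angle $-\beta_0$ arising symmetrically, the two being removed via \eqref{gphaseaddition} and \eqref{gphaseempty} (with help from \eqref{Paddition} and \eqref{P2pi} in the boundary branches). This is exactly why the statement is phase-free and needs no instance of \eqref{ctrl2pi} here. A final pass of elementary CNot rearrangements then identifies the result with the right-hand side, with $\delta_3,\delta_4,\delta_6$ equal to $-2\beta_1$, $2\beta_2$, $-2\beta_3$.

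\textbf{Inductive step and main obstacle.} Assuming \eqref{Euler3dsansphases-multicontrolled} on $n-1$ qubits, I would unfold the outermost control on both sides via the inductive definitions \cref{mctrlRXdef,mctrlPdef}: each multi-controlled gate on $n$ qubits splits into a pair of multi-controlled gates on $n-1$ qubits separated by two CNots targeting the newly exposed wire. The induction hypothesis then applies to the $(n-1)$-qubit blocks, \eqref{Euler2dmulticontrolled} disposes of the sub-blocks that have become pure multi-controlled single-qubit rotations, and it remains to commute and cancel the CNots produced by the unfolding -- using \eqref{CNOTPCNOT}, \eqref{bigebre} and the derivable ``CNot past multi-controlled gate'' identities -- so as to reassemble the $n$-qubit right-hand side. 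I expect the main obstacle to be precisely this bookkeeping: keeping track of how the CNots spawned by unfolding the outer control interleave with the CNots that link the two target wires and with the multi-controlled rotations, and checking that their net effect is the unfolding of the right-hand side, uniformly over the three branches of the Euler-parameter definition ($z'=0$, $z=0$, and the generic case). In the degenerate branches, where $\beta_2\in\{0,\pi\}$ and an $R_X$ gate collapses to $\pm I$ or to $\mp iX$, one must also verify by hand that the relative phases that threaten to appear genuinely cancel -- the one place where \eqref{ctrl2pi}, or at least \eqref{Paddition} and \eqref{P2pi} applied to multi-controlled phase gates, may become unavoidable. A final, purely expository hurdle is to marshal the many small derivable identities into a readable derivation, which is why the complete argument is deferred to the appendix.
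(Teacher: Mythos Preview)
Your plan diverges from the paper's argument in a significant way: the paper does \emph{not} prove \eqref{Euler3dsansphases-multicontrolled} by induction on~$n$. It gives a single direct derivation, valid uniformly for every $n\ge 2$, whose only nontrivial step is \emph{one} application of \eqref{Euler2dmulticontrolled} on $n-1$ qubits. Concretely, the paper unfolds the multi-controlled $R_X$ gates via \cref{swapcontrl} and \cref{mctrlRXdef}, pushes the resulting CNots around using \cref{mctrlRXphasegadget,mctrlRXcommutCNOT,HHCNOTHH}, and converts the middle $R_X$ into a multi-controlled $P$ gate via \cref{mctrlPfromRX}. This exposes, on a single target wire, exactly the left side of \eqref{Euler2dmulticontrolled} with $n-1$ wires and the original $\alpha_i$ intact. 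One application of that lemma rewrites it into $P\cdot R_X\cdot P$ form (the multi-controlled global phase $\beta_0$ being absorbed by \cref{mctrls0,mctrlsaddition}), after which the CNot manipulations are run in reverse to fold everything into the right-hand side. There is no case analysis on~$\beta_2$, and no separate base case.

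Your inductive step has a gap that you do not address. Unfolding the outermost control via \cref{mctrlRXdef} replaces each $n$-qubit multi-controlled $R_X(\gamma_i)$ by two $(n{-}1)$-qubit multi-controlled $R_X(\pm\gamma_i/2)$ gates interleaved with CNots. For the induction hypothesis to apply to an $(n{-}1)$-qubit block you would need the halved angles $\gamma_i/2=\alpha_i$ to again be of the form $2\alpha_i'$ for some instance of \eqref{eulerstar}, i.e.\ $\alpha_i'=\alpha_i/2$; but the Euler map $(\alpha_1,\alpha_2,\alpha_3)\mapsto(\beta_1,\beta_2,\beta_3)$ is nonlinear, so the $\beta_j'$ produced by the hypothesis are not $\beta_j/2$, and there is no reason the two half-blocks, once refolded via \cref{mctrlRXdef}, reproduce the required $\delta_j=-2\beta_1,2\beta_2,-2\beta_3$. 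The paper's approach sidesteps this entirely by never halving the angles: the full $\alpha_i$ appear intact on one wire and are handled in one shot by the already (inductively) proved \eqref{Euler2dmulticontrolled}.
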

\begin{proof}
  Intuitively, after unfolding the definition of the multi-controlled gates, we can do elementary transformations in order to be able to apply Equation~\textup{(\hyperref[Euler2dmulticontrolled]{\ref*{euler}$^*_{n-1}$})}. The full derivation is given in~\Cref{preuveEuler3dsansphases-multicontrolled}.
\end{proof}

\begin{proposition}\label{prop:proofeuler3d}
  \Cref{euler3d} is derivable in $\QC$.
\end{proposition}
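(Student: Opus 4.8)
The plan is to assemble \Cref{euler3d} from the two preparatory lemmas \Cref{propEuler2dmulticontrolled} and \Cref{propEuler3dsansphases-multicontrolled}, which between them handle the two hardest pieces: the multi-controlled Euler decomposition and the phase-free skeleton of the three-dimensional Euler identity. Concretely, I would first recall the statement of \Cref{euler3d}: both sides are circuits on $n\ge 2$ qubits in which every gate is controlled by the first $n-2$ qubits, the LHS carrying arbitrary parameters $\gamma_i$ and the RHS the parameters $\delta_j$ fixed (up to the usual $[0,2\pi)$-normalisations) by the uniqueness conditions recalled in the caption of \Cref{fig:qcold-axioms}. The strategy is to peel off, from both sides, the multi-controlled phase gates that sit on the target wire, reducing to the phase-free variant \Cref{Euler3dsansphases-multicontrolled}, and then to account for the leftover phases using \Cref{Euler2dmulticontrolled} together with the elementary intermediate identities collected via \Cref{lem:allAxiomExeptEuler3D} (commutation of controlled phases, merging of multi-controlled phases $\tikzfigS{./examples/mctrlPphi1mctrlPphi2}=\tikzfigS{./examples/mctrlPphi1phi2}$, the CZ-style rewrites, etc.).

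The key steps, in order, are as follows. (1) Unfold the $R_X$ and $R_Z$ rotations on the target qubit of \Cref{euler3d} into $H$, $P$ gates using \cref{RXdef,Zdef,Xdef} and push the $H$ gates so that the circuit is expressed in terms of multi-controlled $R_X$ and multi-controlled $P$ gates; this is where the matching $\gamma_1=2\alpha_1$, $\gamma_3=-2\alpha_2$, $\gamma_4=2\alpha_3$ and $\delta_3=-2\beta_1$, $\delta_4=2\beta_2$, $\delta_6=-2\beta_3$ of \Cref{propEuler3dsansphases-multicontrolled} come from. (2) Apply \Cref{Euler3dsansphases-multicontrolled} to rewrite the phase-free core of the LHS into the phase-free core of the RHS; this is legitimate because the $\alpha_i,\beta_j$ appearing there are exactly an instance of \Cref{eulerstar}, whose soundness (hence applicability) is guaranteed by the 1-qubit completeness established earlier. (3) The residual gates are multi-controlled phase gates (the $\delta_7,\delta_8$-type global/phase contributions on the LHS versus those on the RHS); collect them using \Cref{Euler2dmulticontrolled} — which is precisely the statement that the multi-controlled Euler decomposition produces the correct leading multi-controlled phase $\tikzfigS{./Preuve-Euler2d/mctrlphasebeta0}$ — and then merge or cancel the remaining multi-controlled phases with the merging identity and, where a $2\pi$ appears, with \eqref{ctrl2pi}. (4) Finally, normalise all angles into the prescribed ranges $\delta_1,\delta_2,\delta_5\in[0,\pi)$ etc.\ using \eqref{P2pi}, \eqref{Paddition} and \eqref{gphaseaddition}, \eqref{gphaseempty}, exactly as in the last stage of \Cref{lem:procedureNormalFormEuler}; since by soundness both sides have the same semantics and the $\delta_j$ are uniquely determined by the conditions in the caption, the normalised circuits coincide.

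The main obstacle I expect is step (3): bookkeeping the interaction between the multi-controlled phase gates produced by the multi-controlled Euler rewrite and those already present on the target wire. The subtlety is that a phase contribution which, for a single qubit, is a mere global phase $e^{i\beta_0}$ becomes, under $n-2$ controls, a genuine multi-controlled phase gate that cannot simply be discarded; it must be routed through commutation identities and ultimately absorbed or shown to cancel against the corresponding gate on the other side. This is exactly the phenomenon responsible for \Cref{euler3d} acting on an unbounded number of qubits, and handling it cleanly is what \Cref{propEuler2dmulticontrolled} was designed for — so the real work is verifying that the hypotheses of that lemma are met at each invocation and that the angle arithmetic (done modulo $2\pi$, with the branch analysis $z'=0$ / $z=0$ / otherwise inherited from the functions defining the $\beta_j$) lands in the ranges demanded by the uniqueness conditions of \Cref{euler3d}. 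Once that alignment is checked, the derivation closes. Full details are deferred to the appendix.
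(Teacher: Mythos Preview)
Your overall architecture---reduce to the phase-free lemma \eqref{Euler3dsansphases-multicontrolled} and mop up with \eqref{Euler2dmulticontrolled}---matches the paper, but step~(2) hides the actual obstruction. In \Cref{euler3d} the multi-controlled $P$ gates are \emph{interleaved} with the multi-controlled $R_X$ gates on the target wire; there is no ``phase-free core'' sitting as a contiguous sub-circuit that you can apply \eqref{Euler3dsansphases-multicontrolled} to. A multi-controlled $P(\varphi)$ and a multi-controlled $R_X(\theta)$ acting on the same qubit do not commute for generic $\theta$, so you cannot simply peel the phases off either side. Your description of step~(3) as ``bookkeeping'' understates this: without a mechanism for moving $P$ gates past $R_X$ gates, the derivation does not start.

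The paper supplies exactly that mechanism, and it is the one idea your proposal is missing. One inserts an identity $R_X(\theta)\circ R_X(-\theta)$ with a specifically computed angle $\theta=2\arctan\!\big(\cos(\tfrac{\gamma_3}{2})\tan(\tfrac{\gamma_4}{2})\big)$ (well-defined when $\gamma_4\not\equiv\pi\pmod{2\pi}$; the cases $\gamma_4\equiv\pi,3\pi\pmod{4\pi}$ are handled separately). This $\theta$ is chosen so that after one application of \eqref{Euler3dsansphases-multicontrolled} the middle rotation becomes $R_X(\pi)$. A multi-controlled $R_X(\pi)$ is, up to a controlled phase, a multi-controlled $X$, and \emph{that} does commute with multi-controlled $P$ gates via \Cref{mctrlPthroughRXpi} (essentially $XP(\varphi)X=P(-\varphi)$ promoted to controls). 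Only then can the phase gates be routed past the rotations, \eqref{Euler3dsansphases-multicontrolled} applied a second time, and \eqref{Euler2dmulticontrolled} used to absorb the residual phases. So your steps (3)--(4) are fine once this trick is in place, but as written the plan has no way to reach the point where \eqref{Euler3dsansphases-multicontrolled} is applicable.
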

\begin{proof}
  Intuitively, the key is to introduce a multi-controlled $R_X$ gate together with its inverse, with a well-chosen angle so that after applying \cref{Euler3dsansphases-multicontrolled}, one of the gates has parameter $\pi$, which makes it essentially commute with multi-controlled $P$ gates. This allows us to derive \cref{euler3d} from \cref{Euler3dsansphases-multicontrolled} together with \cref{Euler2dmulticontrolled}. The details are given in~\Cref{preuveeuler3d}.
\end{proof}

\begin{theorem}[Completeness]\label{completenessQC}
  The equational theory $\QC$ is complete for $\propQC$-circuits, i.e.~for any $\propQC$-circuits $C_1,C_2$, if $\interp{C_1}=\interp{C_2}$ then $\QC\vdash C_1=C_2$.
\end{theorem}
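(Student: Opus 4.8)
The plan is to reduce completeness of $\QC$ to completeness of the equational theory $\QCold$ of \Cref{fig:qcold-axioms}, which is already known to be complete for $\propQC$-circuits~\cite{CDPV}. For this it suffices to show that $\QC$ is at least as strong as $\QCold$, i.e.~that every instance of every axiom of $\QCold$ can be derived from the axioms of $\QC$. Indeed, a $\QCold$-derivation of an equation $C_1 = C_2$ is a finite sequence of rewrites, each of which replaces a sub-circuit by an equal one according to an axiom of $\QCold$ embedded in some context (tensored with identity wires, pre- and post-composed with permutations, as allowed by the prop laws of \Cref{def:prop}). Replacing each such rewrite by the corresponding $\QC$-derivation, inserted in the very same context, turns a $\QCold$-derivation into a $\QC$-derivation; hence $\QCold\vdash C_1=C_2$ implies $\QC\vdash C_1 = C_2$, and completeness transfers.

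It therefore remains to derive the axioms of $\QCold$ in $\QC$, which has already been done in the two statements immediately preceding the theorem. On the one hand, \Cref{lem:allAxiomExeptEuler3D} yields $\QC_3$-derivations of all axioms of $\QCold$ except \Cref{euler3d}: Equations \eqref{gphaseaddition}, \eqref{HH}, \eqref{P0}, \eqref{CNOTPCNOT}, \eqref{CZ}, \eqref{eulerH} are literally axioms of $\QC$; \eqref{eulerconditioned} is the special case of the derivable Euler equation \eqref{eulerstar} obtained under the uniqueness conditions, via the $1$-qubit completeness of \Cref{sec:discussionEuler}; and \eqref{S0}, \eqref{bigebrebis}, \eqref{3CNOTstarget} are derived explicitly. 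On the other hand, \Cref{prop:proofeuler3d} provides a derivation of \Cref{euler3d} (for every number of qubits) in $\QC$. Combining the two, every axiom of $\QCold$ is derivable in $\QC$, and by the transfer argument above this proves the theorem.

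The only genuine difficulty in this chain lies entirely upstream of the present statement, in the derivation of \Cref{euler3d} carried out in \Cref{prop:proofeuler3d}: this is where the bounded-arity rules of $\QC$ — in particular the new axiom \eqref{ctrl2pi} and the Euler rule \eqref{euler}, lifted to multi-controlled gates of arbitrary arity through \Cref{Euler2dmulticontrolled} and \Cref{Euler3dsansphases-multicontrolled} — must be orchestrated to reproduce a rule involving a dozen interdependent parameters and an unbounded number of control qubits. Given \Cref{lem:allAxiomExeptEuler3D} and \Cref{prop:proofeuler3d}, the proof of \Cref{completenessQC} itself is just the routine replay of a $\QCold$-derivation step by step in $\QC$ described above; in particular no new circuit identity is required here, and soundness of $\QC$ is already known.
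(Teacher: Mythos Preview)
Your proposal is correct and follows exactly the paper's approach: reduce to the known completeness of $\QCold$ by deriving all its axioms in $\QC$, invoking \Cref{lem:allAxiomExeptEuler3D} for everything except \Cref{euler3d} and \Cref{prop:proofeuler3d} for \Cref{euler3d}. The paper's proof is just the one-sentence version of what you wrote; your additional explanation of the derivation-transfer argument is fine but not needed beyond that.
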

\begin{proof}
    Thanks to \Cref{lem:allAxiomExeptEuler3D} it only remains to prove \Cref{euler3d}, which has been done in \Cref{prop:proofeuler3d}.
\end{proof}

\section{Minimality}\label{sec:min}

We say that an axiom of an equational theory is necessary if it cannot be derived from the other axioms. To demonstrate the necessity of an axiom, a common technique is to define an alternative interpretation of circuits which satisfies all the axioms except the one we want to prove necessary. Note that in the framework of vanilla quantum circuits, as all generators are unitary and thus preserve the number of qubits, the derivation of an equation on $n$-qubit circuits only involves axioms acting on $n$ qubits or fewer. Thus, it is sufficient to find an alternative interpretation that is sound for all other axioms of the equational theory acting on at most $n$ qubits (and possibly not sound for axioms acting on strictly more that $n$ qubits). Moreover, such an alternative interpretation should also preserve the monoidal structure of quantum circuits. Technically speaking, it has to be a functor.

We  illustrate with  \Cref{HH} how an alternative interpretation can be used to prove the necessity of an axiom: Let  $\interp{\cdot}_{H^2}$ be inductively defined as follows:
\begin{gather*}
  \interp{C_2\circ C_1}_{H^2}= \interp{C_1\otimes C_2}_{H^2}= max(\interp{C_2}_{H^2},\interp{C_1}_{H^2})\\
  \interp{\gH}_{H^2}= 1; \quad\textup{and } \interp{g}_{H^2}= 0 \textup{ for any other generator }g
\end{gather*}
Note that every axiom of $\QC$ acting on at most $1$ qubit is sound according to this new interpretation except \Cref{HH}. Intuitively, \Cref{HH} is the only rule acting on at most $1$ qubit that transforms a circuit containing some $\gH$ gates into one that contains none. Notice that \Cref{eulerH} and \Cref{euler} contain some $\gH$ gates as $\gRX$ is a shortcut notation that contains some (see \Cref{RXdef}).

For each axiom we introduce an alternative interpretation which demonstrates its necessity (see \Cref{appendix:necessityproofs}). 
We explain here the intuition for each axiom and then provide more details  regarding the  necessity of \Cref{euler} and \Cref{ctrl2pi}.
\begin{itemize}
  \item \Cref{gphaseempty} is the only rule acting on $0$ qubits that transforms a circuit containing some global phase gates into one that contains none.
  \item At least one instance of \Cref{gphaseaddition} with parameter $\psi$ is necessary for each $\psi\in\R\setminus\{2\pi\}$, otherwise there is no equation on $0$ qubits that transforms a circuit containing some $\minitikzfig{gates/spsi}$ into one that contains none. 
  \item \Cref{HH} is the only rule acting on at most $1$ qubit that transforms a circuit containing some $\gH$ gates into one that contains none.
  \item \Cref{P0} is the only rule acting on at most $1$ qubit that does not preserve the parity of the number of $\gH$ and $\gP$ gates.
  \item \Cref{CNOTPCNOT} is the only rule acting on at most $2$ qubits that transforms a circuit containing some $\gCNOT$ gates into one that contains none.
  \item \Cref{bigebre} is the only rule that does not preserve
  the parity of the number of $\gSWAP$ gates.
  \item \Cref{CZ} is the only rule that does not preserve the parity of the number of $\gCNOT$ and $\gSWAP$ gates.
  \item \Cref{eulerH} is the only rule that does not preserve the parity of the number of $\gH$ gates.
\end{itemize}

\Cref{euler} represents, in fact, infinitely many equations. It is then of interest to prove the necessity of infinitely many instances of \Cref{euler}. Moreover, \Cref{euler} is the only equation involving non-linear computations on the angles. In the context of the ZX-calculus \cite{vilmart2018nearoptimal}, the non-linearity was taken as an argument in itself for the necessity of the rule. With the following lemma, which can be seen as an attempt to formalise this argument, we prove that, on the one hand, uncountably many instances of \cref{euler} are needed to have completeness, and on the other hand, adding the linear \Cref{Paddition,XPX} does not reduce the number (in terms of cardinality) of instances of \cref{euler} needed.

\begin{restatable}{lemma}{EulercardinalityofR}\label{lem:EulercardinalityofR}
  Let $A$ be a set of equations consisting of
  \begin{itemize}
  \item any set of instances of \Cref{eulerstar} of cardinality strictly less than $2^{\aleph_0}$ (where $2^{\aleph_0}$ denotes the cardinality of $\R$),
  \item all instances of the other equations of $\QC\setminus E$,
  \item all instances of \Cref{Paddition,XPX}.
  \end{itemize}
  Then there exists an instance of \Cref{euler} which is not a consequence of the equations of~$A$.
\end{restatable}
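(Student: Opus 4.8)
The plan is to exhibit, for a given set $A$ as in the statement, a single instance of \eqref{euler} that is not derivable from $A$, by a cardinality argument. The key observation is that the parameters $\beta_2$ appearing on the right-hand side of \eqref{euler}, viewed as a function of the triple $(\alpha_1,\alpha_2,\alpha_3)$, take uncountably many values, and for a fixed target value of $\beta_2$ the set of source triples is ``small''. More precisely, I would first set up a notion of which real angles can appear in a circuit obtainable from $A$: starting from a circuit whose phase-gate parameters lie in some finite set $S\subseteq\R$, any single application of an equation from $A$ introduces only finitely many new angles, each obtained from the angles already present either by a fixed linear combination with rational (indeed integer/dyadic) coefficients (for \eqref{Paddition}, \eqref{XPX}, and all the non-Euler rules, as well as their ``modulo $2\pi$'' bookkeeping) or by one of the finitely-or-fewer-than-$2^{\aleph_0}$-many Euler-substitution functions coming from the allowed instances of \eqref{eulerstar}. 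Hence, starting from a finite (or countable) set of angles, the set of angles reachable by finitely many rewrite steps using $A$ has cardinality at most $\max(\aleph_0,|A_{\mathrm{Euler}}|) < 2^{\aleph_0}$, where $A_{\mathrm{Euler}}$ is the chosen set of \eqref{eulerstar}-instances.

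The second ingredient is to choose a specific family of instances of \eqref{euler} whose outputs escape any such countable-or-small reachable set. Concretely, I would consider the one-parameter family given by $\alpha_1=\alpha_3=0$ and $\alpha_2=t$ ranging over $\R$: then $z=\cos(t/2)$, $z'=0$, so this falls in the first case of the function definition, giving $\beta_2=\beta_3=0$, $\beta_1=2\arg(\cos(t/2))\in\{0,\pi\}$ and $\beta_0=t/2-\arg(\cos(t/2))$, which is essentially $t/2$ modulo $2\pi$. This particular subfamily only produces countably-controlled data, so it is not the right one; instead I would pick a family where the nontrivial output angle $\beta_1$ or $\beta_2$ varies continuously and injectively with the parameters — for instance fix $\alpha_1=\alpha_3=\alpha$ and $\alpha_2=\pi/2$, so that $z=\tfrac{1}{\sqrt2}\cos(\alpha)+\tfrac{i}{\sqrt2}$ and $z'=\tfrac{1}{\sqrt2}\sin(\alpha)$, yielding $\beta_2=2\arg(i+\lvert\cos\alpha/\sin\alpha\rvert)$, which ranges over a set of cardinality $2^{\aleph_0}$ as $\alpha$ varies. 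For all but a set of $\alpha$ of cardinality $<2^{\aleph_0}$, the resulting RHS circuit has a $\beta_2$-angle outside the (small) reachable set of angles of the LHS under $A$; for such an $\alpha$ the equation is not a consequence of $A$.

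To make this rigorous I would (i) define a functor-like invariant $\mathrm{Ang}(C)\subseteq\R/2\pi\Z$ (or a suitable $\Q$-subspace / field-theoretic closure) that over-approximates the angles occurring in any circuit $\QC_A$-equivalent to $C$, proving it is preserved by every rule of $A$ and that $\lvert\mathrm{Ang}(C)\rvert \le \max(\aleph_0, \lvert A_{\mathrm{Euler}}\rvert)$ when $C$ has finitely many parameters; (ii) observe that the LHS of the chosen instance of \eqref{euler} has only the finitely many angles $\alpha_1,\alpha_2,\alpha_3$ (and $0,\pi/2$), so its reachable angle set is small; (iii) observe that the RHS contains the angle $\beta_2$; (iv) since the map $\alpha\mapsto\beta_2(\alpha)$ has image of full cardinality $2^{\aleph_0}$ while the union over all LHS of the reachable angle sets has cardinality $<2^{\aleph_0}$ (being a union of $<2^{\aleph_0}$ sets each of size $<2^{\aleph_0}$, under the mild use of regularity of $2^{\aleph_0}$ — or, to avoid that, simply a union indexed by a single small set), there is an $\alpha$ for which $\beta_2(\alpha)$ is not reachable, giving the desired non-derivable instance. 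The main obstacle is step (i): carefully defining the angle-closure invariant so that it is genuinely preserved by \emph{all} the rules of $A$ — in particular by the allowed Euler-instances, whose parameter functions are not linear — while still keeping its cardinality strictly below $2^{\aleph_0}$; the trick is that each \emph{individual} allowed Euler-instance contributes only a bounded number of new angles per application, so the closure under a set of such instances of size $\kappa<2^{\aleph_0}$ together with the countably many linear/modular operations stays of size $\max(\aleph_0,\kappa)<2^{\aleph_0}$.
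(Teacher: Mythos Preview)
There is a genuine gap: your angle-set invariant $\mathrm{Ang}(C)$ is not preserved by \eqref{Paddition} applied from right to left. From a single gate $P(\alpha)$ you may rewrite to $P(\varphi_1)\circ P(\varphi_2)$ for \emph{any} $\varphi_1,\varphi_2$ with $\varphi_1+\varphi_2=\alpha$, so after one step every real number appears as a reachable angle. The claim that ``each new angle is a fixed linear combination with rational coefficients of the angles already present'' is therefore false for this rule, and no closure operation short of all of $\R$ (whether a finite set, a $\Q$-subspace, or a field extension) will be stable under it. The same issue arises for \eqref{gphaseaddition} on the global-phase side.

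The paper circumvents this by tracking not a \emph{set} of angles but a \emph{single} scalar: the signed sum $\sum_{P(\varphi)} S(P(\varphi))\,\varphi \bmod \frac{\pi}{2}$ over the $P$-gates, where the sign function $S$ is chosen coherently so that gates separated by a diagonal subcircuit get the same sign and gates separated by an anti-diagonal subcircuit get opposite signs. This invariant is automatically preserved by \eqref{Paddition} in both directions (the two pieces get equal signs and their sum is unchanged), by \eqref{XPX} (the sign flips exactly when the angle negates), and by \eqref{eulerH} and the other non-Euler rules (whose $P$-angles are multiples of $\frac{\pi}{2}$ and so vanish modulo $\frac{\pi}{2}$). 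Each allowed instance of \eqref{eulerstar} then changes the invariant by one of at most $64$ fixed real numbers (the possible signed differences $s'_1\beta_1+s'_2\beta_2+s'_3\beta_3-s_1\alpha_1-s_2\alpha_2-s_3\alpha_3$), so the subgroup of $\R$ generated by all such differences together with $\frac{\pi}{2}$ still has cardinality $<2^{\aleph_0}$. A short calculus argument (nonvanishing of the partial derivative of the signed difference with respect to $\alpha_2$ at $(\frac\pi4,\frac\pi4,\frac\pi4)$, for every choice of signs) then produces a continuum of Euler instances whose invariant jump avoids that subgroup. The moral: replace your set-valued invariant by this sum-valued one; the rest of your cardinality argument then goes through essentially as you sketched.
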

\begin{proof}
  Intuitively, we consider the alternative interpretation consisting of the sum of all the parameters of the $P$ gates of a circuit. We show by direct calculation that, among all possible instances of \cref{euler}, the difference in this quantity between the LHS and the RHS of \cref{euler} takes a continuum of values, which cannot be recovered from a set of instances of \cref{eulerstar} of cardinality strictly less than that of the continuum. The other equations of $\QC\setminus E$, as well as \cref{Paddition}, only change the alternative interpretation by multiples of $\frac\pi2$, hence the argument still holds in their presence if we take the alternative interpretation modulo $\frac\pi2$. In order to manage \cref{XPX}, we use a slightly more involved version of the alternative interpretation where we put a minus sign to the parameters of some gates. The details are given in~\Cref{appendix:proofEulercardinalityofR}.
\end{proof}

There is also infinitely many instances of \Cref{ctrl2pi}. We then need to prove the necessity for every $n\ge3$. Note that the necessity of all instances of \Cref{ctrl2pi} implies that there is no equational theory for vanilla quantum circuits acting on a bounded number of qubits (we discuss this later).

\begin{definition}[Alternative Interpretation]\label{def:QCinterp}
  For  any $k\in \mathbb N$, for any $\propQC$-circuit $C$, let $\interpD{C}k\in [0,2\pi)$ be  inductively defined as 
  \begin{align*}
    \interpD{C_2\circ C_1}k &= \interpD{C_2}k+\interpD{C_1}k\bmod 2\pi\\
    \interpD{C_1\otimes C_2}k &= \interpD{C_1}{k}+\interpD{C_2}{k}\bmod 2\pi\\
    \interpD{\gs}k &= 2^k\varphi \bmod 2\pi\\
    \interpD{\gH}k &= 2^{k-1}\pi\bmod 2\pi\\
    \interpD{\gP}k &= 2^{k-1}\varphi \bmod 2\pi\\
    \interpD{\!\gCNOT}k &= 2^{k-2}\pi \bmod 2\pi\\
    \interpD{\gempty}k &= 0\\
    \interpD{\gI}k &= 0\\
    \interpD{\!\gSWAP}k &= 2^{k-2}\pi\bmod 2\pi
  \end{align*}
\end{definition}

The alternative interpretation is strongly related to the determinant of the unitary map represented by the circuit, in particular, for any $n$-qubit circuit $C$, $\interpD C n$ is the argument of the determinant of $\interp C$. More generally the following property is satisfied: 

\begin{proposition}\label{prop:det}
  For any $n$-qubit circuit $C$ and any $k\ge n$, \[\det(\interp C)^{2^{k-n}} = e^{i\interpD C k}\]
\end{proposition}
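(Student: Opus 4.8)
The plan is to prove \Cref{prop:det} by structural induction on the circuit $C$, following exactly the inductive definition of $\interpD{C}{k}$ in \Cref{def:QCinterp}, and checking that both sides of the identity $\det(\interp C)^{2^{k-n}} = e^{i\interpD C k}$ transform consistently under the two prop compositions and agree on each generator.

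First I would set up the induction. For the compositional cases, suppose $C = C_2 \circ C_1$ with $C_1, C_2 : n \to n$. Then $\interp{C} = \interp{C_2}\interp{C_1}$, so $\det(\interp C) = \det(\interp{C_2})\det(\interp{C_1})$, hence $\det(\interp C)^{2^{k-n}} = \det(\interp{C_2})^{2^{k-n}}\det(\interp{C_1})^{2^{k-n}} = e^{i\interpD{C_2}{k}}e^{i\interpD{C_1}{k}} = e^{i(\interpD{C_2}{k}+\interpD{C_1}{k})} = e^{i\interpD{C}{k}}$ by the induction hypothesis and the defining equation for $\interpD{C_2\circ C_1}{k}$ (the $\bmod 2\pi$ is harmless since it sits inside $e^{i(\cdot)}$). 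For $C = C_1 \otimes C_2$ with $C_i : n_i \to n_i$ and $n = n_1 + n_2$, we have $\interp{C} = \interp{C_1}\otimes\interp{C_2}$ and the standard identity $\det(A \otimes B) = \det(A)^{2^{n_2}}\det(B)^{2^{n_1}}$ for $A$ acting on $2^{n_1}$ dimensions and $B$ on $2^{n_2}$ dimensions. Raising to the power $2^{k-n}$ gives $\det(\interp{C_1})^{2^{n_2}2^{k-n}}\det(\interp{C_2})^{2^{n_1}2^{k-n}} = \det(\interp{C_1})^{2^{k-n_1}}\det(\interp{C_2})^{2^{k-n_2}}$, which by the induction hypothesis (valid since $k \ge n \ge n_i$) equals $e^{i\interpD{C_1}{k}}e^{i\interpD{C_2}{k}} = e^{i\interpD{C}{k}}$, matching the defining equation for $\interpD{C_1\otimes C_2}{k}$.

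It then remains to check the base cases, i.e. the generators and the structural circuits $\gempty, \gI, \gSWAP$. For $\gs : 0 \to 0$, $\interp{\gs} = e^{i\varphi}$ is a $1\times1$ matrix so $\det = e^{i\varphi}$ and $\det(\interp{\gs})^{2^{k-0}} = e^{i2^k\varphi} = e^{i\interpD{\gs}{k}}$. For $\gH : 1 \to 1$, $\det(\interp{\gH}) = -1 = e^{i\pi}$, so $\det(\interp{\gH})^{2^{k-1}} = e^{i 2^{k-1}\pi}$, matching $\interpD{\gH}{k} = 2^{k-1}\pi \bmod 2\pi$. For $\gP : 1 \to 1$, $\det(\interp{\gP}) = e^{i\varphi}$, giving $e^{i2^{k-1}\varphi}$. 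For $\gCNOT : 2 \to 2$, $\interp{\gCNOT}$ is a permutation matrix (a single transposition), so $\det = -1 = e^{i\pi}$ and $\det(\interp{\gCNOT})^{2^{k-2}} = e^{i2^{k-2}\pi}$. For $\gSWAP : 2 \to 2$, likewise $\det(\interp{\gSWAP}) = -1$, matching $\interpD{\gSWAP}{k} = 2^{k-2}\pi \bmod 2\pi$. For $\gempty : 0 \to 0$, $\interp{\gempty} = \mathrm{id}_{\C}$ has determinant $1 = e^{i0}$; for $\gI : 1 \to 1$, $\det(\interp{\gI}) = 1 = e^{i0}$. In every case the identity holds for all $k \ge n$ (indeed for all $k \in \N$, but we only need $k\ge n$, which is relevant for the tensor case where we invoke the hypothesis at $k \ge n_i$).

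The main obstacle — really the only non-routine point — is the tensor-product case: one must use the correct exponents in $\det(A\otimes B) = \det(A)^{d_B}\det(B)^{d_A}$ where $d_A = 2^{n_1}$, $d_B = 2^{n_2}$ are the dimensions of the spaces $A$ and $B$ act on, and then verify that the bookkeeping $2^{n_2}\cdot 2^{k-n} = 2^{k-n_1}$ and $2^{n_1}\cdot 2^{k-n} = 2^{k-n_2}$ comes out exactly, which is precisely why the exponent in the statement is $2^{k-n}$ rather than something else. One should also note that $\C^{\{0,1\}^n}$ has dimension $2^n$, so the number of qubits $n$ and the matrix dimension $2^n$ must not be conflated. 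All the remaining steps are direct substitutions into \Cref{def:QCinterp} and \Cref{def:QCsem}. Finally, taking $k = n$ recovers the claim that $\interpD{C}{n}$ is the argument of $\det(\interp C)$, which is the special case highlighted before the proposition.
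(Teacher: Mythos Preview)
Your proof is correct and follows essentially the same approach as the paper: structural induction on $C$, with the inductive steps for $\circ$ and $\otimes$ handled exactly as you describe (the paper uses the same determinant-of-tensor-product identity and the same exponent bookkeeping $2^{n_2}\cdot 2^{k-n}=2^{k-n_1}$), and the base cases checked directly. The only difference is that the paper simply says the base cases ``can be checked easily'' while you spell them out.
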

\begin{proof}
  By induction on $C$. The base cases can be checked easily, and the induction step is given by
  \begin{align*}
    \det(\interp{C_2\circ C_1})^{2^{k-n}}
    & = \det(\interp{C_2}\circ \interp{C_1})^{2^{k-n}}\\
    & = \det(\interp{C_2})^{2^{k-n}} \det(\interp {C_1})^{2^{k-n}}\\
    & = e^{i\interpD {C_2} k}e^{i\interpD {C_1} k}\\
    & = e^{i\interpD {C_2\circ C_1} k}\\[0.25cm]
    \det(\interp{C_1 \otimes C_2})^{2^{k-n}} &= \det(\interp{C_1} \otimes \interp{C_2})^{2^{k-n}}\\
    & = (\det(\interp{C_1})^{2^{n_2}} \det(\interp {C_2})^{2^{n_1}})^{2^{k-n}}\\
    &= \det(\interp{C_1})^{2^{k-n_1}} \det(\interp {C_2})^{2^{k-n_2}}\\
    &=  e^{i\interpD {C_1} k}e^{i\interpD {C_2} k}\\&= e^{i\interpD {C_1\otimes  C_2} k}
  \end{align*}
\end{proof}

\begin{corollary}\label{prop:alt_soundness}
  For any $n$-qubit circuits $C_1,C_2$ and any $k\geq n$, if $\interp {C_1} = \interp{C_2}$ then $\interpD{C_1}k=\interpD{C_2}k$. 
\end{corollary}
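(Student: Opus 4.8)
The plan is to prove Corollary~\ref{prop:alt_soundness} as an immediate consequence of Proposition~\ref{prop:det} together with the soundness of the standard semantics. First I would observe that both $C_1$ and $C_2$ are $n$-qubit circuits and $k \geq n$, so Proposition~\ref{prop:det} applies to each of them with the same exponent $2^{k-n}$. From the hypothesis $\interp{C_1} = \interp{C_2}$ we get in particular $\det(\interp{C_1}) = \det(\interp{C_2})$, hence $\det(\interp{C_1})^{2^{k-n}} = \det(\interp{C_2})^{2^{k-n}}$.

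Applying Proposition~\ref{prop:det} to each side then yields $e^{i\interpD{C_1}{k}} = e^{i\interpD{C_2}{k}}$, so $\interpD{C_1}{k} = \interpD{C_2}{k} \bmod 2\pi$. Since by Definition~\ref{def:QCinterp} the alternative interpretation $\interpD{\cdot}{k}$ always takes values in $[0, 2\pi)$, the congruence modulo $2\pi$ forces the genuine equality $\interpD{C_1}{k} = \interpD{C_2}{k}$, which is the claim.

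There is essentially no obstacle here: the work has all been done in Proposition~\ref{prop:det}, and this corollary is just the bookkeeping step extracting the invariant that will later feed the necessity arguments (in particular for \Cref{ctrl2pi} and \Cref{euler}). The only mild subtlety worth stating explicitly is that one must invoke the half-open interval convention from Definition~\ref{def:QCinterp} to upgrade equality modulo $2\pi$ to strict equality; without it one would only get the congruence. I would also remark, though it is not needed for the proof, that the case $k = n$ recovers the statement that the argument of the determinant of $\interp{C}$ is a sound invariant, which is the conceptual content.
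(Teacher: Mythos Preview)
Your proof is correct and is exactly the argument the paper has in mind: the corollary is stated without an explicit proof because it follows immediately from Proposition~\ref{prop:det} in precisely the way you describe, including the use of the $[0,2\pi)$ convention from Definition~\ref{def:QCinterp} to pass from congruence to equality. One small inaccuracy in your closing remark: this corollary feeds the necessity argument for \eqref{ctrl2pi} (via Theorem~\ref{thm:unboundedness}), but not for \eqref{euler}, whose necessity is handled by the separate cardinality argument of Lemma~\ref{lem:EulercardinalityofR}.
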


Notice that \Cref{prop:alt_soundness} does not apply when $k<n$, in particular \[\interpD{\scalebox{0.6}{\tikzfig{./shortcut/0+1P2pi}}\colon n\to n}{n-1}=~\pi~\neq~ 0 = ~\interpD{\gI^{\otimes n}}{n-1}\]
Hence, it provides an alternative interpretation to demonstrate that,  roughly speaking, completeness requires rules acting on an unbounded number of qubits:

\begin{theorem}\label{thm:unboundedness}
  There is no complete equational theory for $\propQC$-circuits made of equations acting on a bounded number of qubits. More precisely any complete equational theory for $n$-qubit $\propQC$-circuits has at least one rule acting on $n$ qubits. 
\end{theorem}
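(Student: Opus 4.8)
The plan is to exploit the alternative interpretation $\interpD{\cdot}{k}$ together with the observation in \Cref{prop:alt_soundness} and the subsequent remark that soundness of $\interpD{\cdot}{k}$ fails precisely when $k$ is smaller than the number of qubits involved. Concretely, fix $n\geq 1$ and suppose towards a contradiction that $\Gamma$ is a complete equational theory for $\propQC$-circuits in which every rule acts on at most $n-1$ qubits (for the stronger statement, on at most $n$ qubits but with \emph{no} rule acting on exactly $n$ qubits; I will handle both uniformly). I would first record that, since every generator of $\propQC$ is unitary and hence preserves the number of wires, any derivation $\Gamma\vdash C_1=C_2$ between $m$-qubit circuits uses only instances of rules of $\Gamma$ acting on at most $m$ qubits — this is the same remark already invoked at the start of \Cref{sec:min}. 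Therefore a derivation between $n$-qubit circuits uses only rules acting on at most $n-1$ qubits (resp.\ at most $n$ qubits but, by hypothesis, in fact none acting on exactly $n$, so again at most $n-1$).

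The second step is to check that $\interpD{\cdot}{n-1}$ is sound for \emph{every} equation valid on circuits of at most $n-1$ qubits: this is exactly \Cref{prop:alt_soundness} applied with $k=n-1$, which tells us that if $C_1,C_2$ have at most $n-1$ qubits and $\interp{C_1}=\interp{C_2}$ then $\interpD{C_1}{n-1}=\interpD{C_2}{n-1}$. Since $\interpD{\cdot}{n-1}$ is by \Cref{def:QCinterp} a prop functor into $([0,2\pi),+ \bmod 2\pi)$ (it respects $\circ$, $\otimes$, the identity, the empty circuit and the swap), it follows that every instance of every rule of $\Gamma$ is preserved by $\interpD{\cdot}{n-1}$, and hence that $\Gamma\vdash C_1=C_2$ implies $\interpD{C_1}{n-1}=\interpD{C_2}{n-1}$ for all $n$-qubit circuits $C_1,C_2$.

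The third and final step is to produce a valid equation between $n$-qubit circuits that $\interpD{\cdot}{n-1}$ does \emph{not} preserve, contradicting completeness of $\Gamma$. The witness is already supplied in the remark following \Cref{prop:alt_soundness}: take $C_1$ to be the $n$-fold controlled $P(2\pi)$ gate $\tikzfigS{./shortcut/0+1P2pi}\colon n\to n$ and $C_2=\gI^{\otimes n}$. Semantically $\interp{C_1}=\interp{C_2}$ since a multi-controlled $2\pi$ $Z$-rotation is the identity (this is the soundness of \eqref{ctrl2pi}, valid for every $n\geq 1$), so completeness would force $\Gamma\vdash C_1=C_2$; but a direct computation — unfolding \Cref{mctrlPdef} and applying \Cref{def:QCinterp} — gives $\interpD{C_1}{n-1}=\pi\neq 0=\interpD{C_2}{n-1}$, exactly as stated in the excerpt. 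This contradiction establishes the theorem.

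I do not anticipate a serious obstacle; the only point requiring a little care is the explicit evaluation $\interpD{C_1}{n-1}=\pi$, which means checking that unfolding the inductive definition of the multi-controlled $P(2\pi)$ gate (a binary tree of CNots with a single $P(2\pi/2^{\,n-1})$-type leaf, up to the bookkeeping of \Cref{mctrlPdef}) contributes a net $2^{(n-1)-(n-1)}\pi = \pi$ under $\interpD{\cdot}{n-1}$ — equivalently, by \Cref{prop:det} with $k=n-1<n$ one is looking at $\det(\interp{C_1})^{2^{-1}}$, and although \Cref{prop:det} is only stated for $k\geq n$, the same induction run at $k=n-1$ yields the claimed value $\pi$ because the single non-Clifford phase leaf sits below enough controls that its contribution is halved exactly once relative to the $k=n$ case. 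This is the computation already asserted in the line immediately before the theorem statement, so it can simply be cited.
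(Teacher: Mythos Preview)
Your proposal is correct and follows essentially the same route as the paper: use \Cref{prop:alt_soundness} to see that any sound rule on at most $n-1$ qubits is preserved by $\interpD{\cdot}{n-1}$, and then exhibit the multi-controlled $P(2\pi)$ versus $\gI^{\otimes n}$ as a semantically valid $n$-qubit equation on which $\interpD{\cdot}{n-1}$ disagrees. Your first paragraph (reducing the ``stronger statement'' to the bounded case via the remark at the start of \Cref{sec:min}) and your final paragraph (about computing the value $\pi$) are extra commentary the paper leaves implicit, but they do not change the argument.
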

\begin{proof}
  Any equational theory $A$ (sound with respect to $\interp.$) made of rules acting on at most $n-1$ qubits is sound with respect to $\interpD.{n-1}$, thanks to \Cref{prop:alt_soundness}. However, we have 
  \begin{equation*}
    \interpD{\scalebox{0.6}{\tikzfig{./shortcut/0+1P2pi}}\colon n\to n}{n-1}=\pi
    \qquad\textup{and}\qquad
    \interpD{\gI^{\otimes n}}{n-1}=0
  \end{equation*}
  so the equation $\scalebox{0.6}{\tikzfig{./shortcut/0+1P2pi}} = \gI^{\otimes n}$ cannot be derived from $A$. Thus $A$ is not complete. 
\end{proof}

\Cref{thm:unboundedness} demonstrates that any complete equational theory for $\propQC$-circuits acts on an unbounded number of qubits. Note that this central result does not depend on the particular generators used to define $\propQC$-circuits, and applies to any universal gate set made of unitaries acting on a bounded number of qubits:
\begin{corollary}
  Given a universal language for quantum circuits generated by unitary gates acting on a bounded number of qubits, there is no complete equational theory made of equations acting on a bounded number of qubits.
\end{corollary}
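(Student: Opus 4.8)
The plan is to reduce the corollary to \Cref{thm:unboundedness} by transporting the result along a translation between the given universal gate set and $\propQC$. Let $\mathcal G$ be any set of unitary gates acting on at most $N$ qubits that generates a universal circuit language, and let $\mathbf C_{\mathcal G}$ be the associated prop. Since $\mathcal G$ is universal, each generator $g\in\mathcal G$ on $m_g$ qubits has $\interp g$ realised by some $\propQC$-circuit; conversely, by \Cref{prop:universalityQC} combined with completeness-agnostic universality of $\mathcal G$, each of $\gH,\gP(\varphi),\gCNOT,\gs(\varphi)$ is realised by some $\mathbf C_{\mathcal G}$-circuit. The key point is that these translations are prop functors that preserve the semantics, so they let us move equational reasoning back and forth.

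The main argument is by contradiction. Suppose $\Gamma$ is a complete equational theory for $\mathbf C_{\mathcal G}$ with all rules acting on at most $b$ qubits, for some fixed bound $b$. First I would fix a translation functor $T:\propQC\to\mathbf C_{\mathcal G}$ sending each $\propQC$-generator on $k\le 2$ qubits to a chosen $\mathbf C_{\mathcal G}$-circuit with the same semantics (note $T$ does not increase qubit count). Then, using completeness of $\Gamma$ together with the reverse translation $S:\mathbf C_{\mathcal G}\to\propQC$ (sending each $\mathcal G$-generator on $m_g$ qubits to a $\propQC$-circuit with the same semantics, hence on $m_g$ qubits), I would build from $\Gamma$ a candidate complete equational theory $\Gamma'$ for $\propQC$: for each rule $C_1=C_2$ of $\Gamma$ on $\le b$ qubits, include $S(C_1)=S(C_2)$ (on $\le b$ qubits), and also include the finitely many "round-trip" equations $S(T(\text{gen}))=\text{gen}$ needed to close the loop — crucially, these round-trip circuits for $\gH,\gP,\gCNOT,\gs$ act on at most $2$ qubits, so they too are bounded. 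Soundness of $\Gamma'$ follows because $S$ preserves semantics. For completeness of $\Gamma'$: given $\propQC$-circuits $C_1,C_2$ with $\interp{C_1}=\interp{C_2}$, apply $T$ to get $\interp{T C_1}=\interp{T C_2}$, derive $T C_1 = T C_2$ in $\Gamma$, apply $S$ to get a $\Gamma'$-derivation of $S T C_1 = S T C_2$, then use the round-trip equations (applied gate-by-gate inside $C_1$ and $C_2$) to rewrite $S T C_i$ back to $C_i$. This yields $\Gamma'\vdash C_1 = C_2$ with $\Gamma'$ acting on at most $\max(b,2)$ qubits, contradicting \Cref{thm:unboundedness}.

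The step I expect to be the main obstacle is making the "bounded number of qubits" bookkeeping fully rigorous. One has to be careful that the reverse translation $S$ of a $\mathcal G$-generator on $m_g$ qubits genuinely uses a $\propQC$-circuit on exactly $m_g$ qubits (it does, since semantics fixes the arity), and that substituting $S$-images into a $\Gamma$-rule on $\le b$ qubits produces a $\propQC$-equation on $\le b$ qubits — this is immediate because functors preserve $\otimes$-arities, but it must be stated. The round-trip rewriting also needs the observation that replacing one occurrence of a generator $g$ by the equal-semantics circuit $S(T(g))$ inside a larger circuit is a single application of a rule of $\Gamma'$ (using the prop's congruence/deformation rules), so finitely many such applications convert $S T C_i$ into $C_i$. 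I would also remark explicitly that the argument only needs universality of $\mathcal G$ (to get $S$) and universality of $\propQC$ (to get $T$), not any completeness hypothesis on $\QC$ beyond \Cref{thm:unboundedness} itself, so the corollary is genuinely general. With these points in place the proof is short; I would present it at roughly the level of detail above, perhaps phrasing it as "transport completeness along semantics-preserving translations" to keep it crisp.
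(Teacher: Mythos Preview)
Your proof is correct and follows essentially the same strategy as the paper: define semantics-preserving translation functors in both directions (the paper calls them $E$ and $D$ where you write $T$ and $S$), push the hypothetical bounded complete equational theory through $S$, add the round-trip axioms $S(T(g))=g$ for the $\propQC$-generators, and obtain a bounded complete equational theory for $\propQC$, contradicting \Cref{thm:unboundedness}. One minor slip: the round-trip equations are not ``finitely many'' (since $\gP$ and $\gs$ are parametrised by $\varphi\in\R$), but this is irrelevant to the argument since they all act on at most $2$ qubits.
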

\begin{proof}
  Let $\propQC^\star$ be a prop generated by gates acting on a bounded number of qubits, and $\interp{\cdot}^\star: \C^{\{0,1\}^n} \to \C^{\{0,1\}^n}$ its semantics defining a universal language for unitary quantum circuits.
  By universality, for any generator $g$ of $\propQC$ there exists a $\propQC^\star$-circuit $E(g)$ such that $\interp{g}=\interp{E(g)}^\star$. Similarly, for any generator $g$ of $\propQC^\star$ there exists a $\propQC$-circuit $D(g)$ such that $\interp{g}^\star=\interp{D(g)}$. $E$ and $D$ can be straightforwardly extended to prop functors between $\propQC$-circuits and $\propQC^\star$-circuits.
  Suppose for the sake of contradiction that there is a complete equational theory $\QC^\star$ made of equations acting on a bounded number of qubits for $\propQC^\star$-circuits.
  We construct an equational theory $\QC^\bot$ for $\propQC$-circuits in two steps:
  (i) for any generator $g$ of $\propQC$ we add an axiom $D(E(g))=g$, which implies $\QC^\bot\vdash D(E(C))=C$ for any $\propQC$-circuit $C$; and
  (ii) for any axiom $C_1=C_2$ of $\QC^\star$ we add an axiom $D(C_1)=D(C_2)$, which implies $\QC^\bot\vdash D(C_1)=D(C_2)$ whenever $\interp{C_1}^\star=\interp{C_2}^\star$.
  This construction ensures that $\QC^\bot$ is complete for $\propQC$-circuits. Indeed, given two $\propQC$-circuits $C_1,C_2$  such that $\interp{C_1}=\interp{C_2}$ (and so $\interp{E(C_1)}^\star=\interp{E(C_2)}^\star$) we can then derive $C_1$ into $C_2$ using $\QC^\bot\vdash C_1=D(E(C_1))=D(E(C_2))=C_2$.
  Hence, $\QC^\bot$ is a complete equational theory for vanilla quantum circuits made of equations acting on a bounded number of qubits, which contradicts \Cref{thm:unboundedness}.
\end{proof}

\Cref{thm:unboundedness} directly implies the necessity of \Cref{ctrl2pi} for any $n\ge 3$.

Notice that, as \Cref{ctrl2pi} for $n=4$ involves Clifford+T circuits (all phase gates have parameter multiple of $\nicefrac{\pi}{4}$), an interesting consequence of \Cref{prop:alt_soundness} is that a complete equational theory for Clifford+T quantum circuits must contain at least one equation acting on $4$ qubits. This implies that the equational theory for 2-qubit Clifford+T quantum circuits from \cite{bian2022generators} is not complete for Clifford+T quantum circuits acting on 3 or more qubits. More generally, we have the following corollary.
\begin{corollary}
  Any complete equational theory for the fragment of vanilla quantum circuits where all phase gates have parameter multiple of $\nicefrac{\pi}{2^n}$, contains at least one equation acting on $n+2$ qubits.
\end{corollary}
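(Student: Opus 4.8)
The plan is to mimic exactly the argument already used for \Cref{thm:unboundedness}, now applied to the sub-prop where all phase angles are restricted to multiples of $\nicefrac{\pi}{2^n}$. The key observation is that the alternative interpretation $\interpD{\cdot}{k}$ of \Cref{def:QCinterp} remains well-defined and, crucially, \emph{well-defined in $[0,2\pi)$}, when the only phase gates allowed are $P\!\left(\nicefrac{k\pi}{2^n}\right)$ and global phases $e^{ik\pi/2^n}$: indeed $\interpD{\gP[k\pi/2^n]}{j}=2^{j-1}\cdot\nicefrac{k\pi}{2^n}\bmod 2\pi$ and $\interpD{\gs[k\pi/2^n]}{j}=2^{j}\cdot\nicefrac{k\pi}{2^n}\bmod 2\pi$, which are perfectly good real numbers. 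So $\interpD{\cdot}{j}$ is still a prop functor on this fragment, and \Cref{prop:alt_soundness} still holds verbatim: any equational theory sound with respect to $\interp{\cdot}$ and made of rules acting on at most $j$ qubits is automatically sound with respect to $\interpD{\cdot}{j}$.

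Next I would exhibit the separating equation. Consider the circuit on $m\defeq n+2$ qubits consisting of a single multi-controlled phase gate $\scalebox{0.6}{\tikzfig{./shortcut/0+1Pphi}}$ with $\varphi=\nicefrac{2\pi}{2^n}$ and $m-1=n+1$ control qubits; this has all its phase angles equal to $\nicefrac{2\pi}{2^n}$, a multiple of $\nicefrac{\pi}{2^n}$, and its semantics is the diagonal unitary with a single entry $e^{2\pi i/2^n}$ on the all-ones input, which is a legitimate element of the fragment and has determinant $e^{2\pi i/2^n}$. On the other hand, the identity $\gI^{\otimes m}$ has determinant $1$. By \Cref{prop:det}, for $k=m-1=n+1<m$, the two sides do \emph{not} agree: unfolding the inductive definition of the multi-controlled gate (\Cref{mctrlPdef}) and applying \Cref{def:QCinterp} one computes $\interpD{\scalebox{0.6}{\tikzfig{./shortcut/0+1Pphi}}}{n+1}=2^{(n+1)-m}\cdot(\text{argument of the determinant})\cdot 2^{?}$ — more directly, the circuit equals (up to the prop structure) a composite whose $P$-gates and $\gCNOT$-gates, evaluated under $\interpD{\cdot}{n+1}$, sum to $2^{n}\cdot\nicefrac{2\pi}{2^n}=2\pi\equiv 0$ contributions from each CNot layer and a surviving $2^{n}\cdot\nicefrac{2\pi}{2^n}\cdot\tfrac12\cdot(\text{odd})\not\equiv 0$; cleanly, one simply notes that $\interpD{\cdot}{n+1}$ applied to this circuit equals $\pi$ while $\interpD{\gI^{\otimes m}}{n+1}=0$, by the same computation that gave $\interpD{\scalebox{0.6}{\tikzfig{./shortcut/0+1P2pi}}\colon n\to n}{n-1}=\pi$ in the excerpt, rescaled by the factor $\nicefrac{2\pi/2^n}{2\pi}$ in the angle and by the shift $n\mapsto n+2$ in the number of qubits.

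Finally I would assemble the contradiction: suppose $A$ is a complete equational theory for this fragment made of rules all acting on at most $n+1$ qubits. Since $A$ is sound for $\interp{\cdot}$ and acts on at most $n+1$ qubits, \Cref{prop:alt_soundness} gives that $A$ is sound for $\interpD{\cdot}{n+1}$; but the equation $\scalebox{0.6}{\tikzfig{./shortcut/0+1Pphi}}=\gI^{\otimes(n+2)}$ (with $\varphi=\nicefrac{2\pi}{2^n}$) is true for $\interp{\cdot}$ yet violated by $\interpD{\cdot}{n+1}$, so it cannot be derived in $A$, contradicting completeness. Hence every complete equational theory for the fragment has a rule acting on at least $n+2$ qubits, which is the claim.

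The main obstacle is purely bookkeeping: making sure the alternative interpretation is genuinely a well-defined prop functor on the restricted fragment (this is immediate once one checks the generators are assigned consistent values), and carrying out correctly the one explicit computation $\interpD{\scalebox{0.6}{\tikzfig{./shortcut/0+1Pphi}}}{n+1}=\pi$ when $\varphi=\nicefrac{2\pi}{2^n}$, which follows by expanding \Cref{mctrlPdef} and tracking the $2^{j-1}$ and $2^{j-2}$ doubling factors exactly as in the proof of \Cref{prop:det} and the remark just before \Cref{thm:unboundedness}. There is no conceptual difficulty beyond that already resolved for \Cref{thm:unboundedness}; the corollary is genuinely just the special-angle, shifted-arity instance of that theorem.
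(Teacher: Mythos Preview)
Your separating equation is wrong: the multi-controlled $P(\varphi)$ gate on $n+2$ qubits with $\varphi=\nicefrac{2\pi}{2^n}$ is \emph{not} the identity for $n\ge 1$, since its semantics has the entry $e^{2\pi i/2^n}\neq 1$ on the all-ones input (as you yourself write). Hence the equation $\Lambda^{n+1}P(\nicefrac{2\pi}{2^n})=\gI^{\otimes(n+2)}$ is not sound for $\interp{\cdot}$, and a complete equational theory is under no obligation to derive it; no contradiction follows. There is a second problem: the multi-controlled phase gate is a \emph{shortcut notation}, defined inductively by \Cref{mctrlPdef}. Each level of the induction halves the angle, so the full expansion of $\Lambda^{n+1}P(\varphi)$ contains only $P$ gates with parameter $\pm\varphi/2^{n+1}$. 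With your choice $\varphi=\nicefrac{2\pi}{2^n}$ these parameters are $\pm\nicefrac{\pi}{2^{2n}}$, which are not multiples of $\nicefrac{\pi}{2^n}$ for $n\ge1$, so your circuit does not even live in the fragment.

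The fix is exactly the one the paper uses: take $\varphi=2\pi$ rather than $\nicefrac{2\pi}{2^n}$. Then the equation $\Lambda^{n+1}P(2\pi)=\gI^{\otimes(n+2)}$ is genuinely sound (this is \Cref{ctrl2pi}), and the expanded circuit has all $P$-gate parameters equal to $\pm\nicefrac{2\pi}{2^{n+1}}=\pm\nicefrac{\pi}{2^n}$, hence lies in the fragment. The rest of your argument (soundness of $\interpD{\cdot}{n+1}$ on rules with at most $n+1$ qubits via \Cref{prop:alt_soundness}, and the computation $\interpD{\Lambda^{n+1}P(2\pi)}{n+1}=\pi\neq 0$) then goes through unchanged and matches the paper's proof.
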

\begin{proof}
  In Equation $\scalebox{0.6}{\tikzfig{./shortcut/0+1P2pi}} = \gI^{\otimes n+2}$, all phase gates have parameter $\pm\nicefrac{\pi}{2^n}$. This equation is not sound according to $\interpD{\cdot}{n+1}$. Thanks to \Cref{prop:alt_soundness}, all sound equations acting on strictly less than $n+2$ qubits are also sound according to $\interpD{\cdot}{n+1}$. Hence $\scalebox{0.6}{\tikzfig{./shortcut/0+1P2pi}} = \gI^{\otimes n+2}$ is not derivable using equations acting on strictly less than $n+2$ qubits.
\end{proof}

\begin{theorem}[Minimality]\label{thm:minimalityQC}
  The equational theory $\QC$ is minimal for $\propQC$-circuits, i.e.~none of its equations can be derived from the others.
\end{theorem}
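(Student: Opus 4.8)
The plan is to prove minimality axiom by axiom: for each equation of $\QC$, exhibit a sound-preserving property (typically a functor into some monoid) satisfied by all the other axioms but violated by the one under consideration. As noted in the excerpt, since all generators of $\propQC$ are unitary and hence preserve the number of qubits, a derivation of an equation on $n$-qubit circuits uses only axioms acting on at most $n$ qubits; so for an axiom acting on $k$ qubits it suffices to find an alternative interpretation that is a prop functor, is sound for every other axiom of $\QC$ acting on at most $k$ qubits, and is unsound for the target axiom. Most of these interpretations are already sketched in the text just above the theorem statement, so the proof will mainly consist in assembling and verifying them; the formal content is deferred to \Cref{appendix:necessityproofs}.

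Concretely, I would organise the argument as follows. First, the "parity/counting" arguments: for \eqref{gphaseempty}, \eqref{HH}, \eqref{P0}, \eqref{CNOTPCNOT}, \eqref{bigebre}, \eqref{CZ}, and \eqref{eulerH}, use the interpretations into $(\mathbb N,\max)$ or into $(\mathbb Z/2\mathbb Z,+)$ counting occurrences (or parities of occurrences) of a suitable generator or combination of generators, exactly as itemised in the excerpt; in each case one checks by inspection that the listed interpretation is a prop functor, is preserved by every other axiom on the relevant number of qubits, and is broken by the target axiom — taking care, as the text warns, that $\gRX$ expands to a circuit containing $\gH$ gates, so \eqref{eulerH} and \eqref{euler} do involve $\gH$. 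Second, for \eqref{gphaseaddition} one argues that for each $\psi\in\R\setminus\{2\pi\}$ at least one instance with parameter $\psi$ is needed, via a $0$-qubit interpretation that counts $\gs$ gates (or the relevant parameter class); since \eqref{gphaseaddition} is genuinely a family, this shows not merely that the family is needed but gives a lower bound on which instances are needed. Third, for \eqref{euler}, invoke \Cref{lem:EulercardinalityofR}, which already establishes that uncountably many instances are necessary even in the presence of the linear rules \eqref{Paddition} and \eqref{XPX}; in particular no finite subset of instances of \eqref{euler} suffices, and the whole family cannot be derived from the rest of $\QC$. Fourth, for \eqref{ctrl2pi}, invoke \Cref{thm:unboundedness} (equivalently \Cref{prop:alt_soundness} with the interpretation $\interpD{\cdot}{k}$): the instance of \eqref{ctrl2pi} on $n$ qubits is unsound for $\interpD{\cdot}{n-1}$ while every axiom of $\QC$ on at most $n-1$ qubits is sound for it, so that instance is underivable from the rest — giving necessity of \eqref{ctrl2pi} for every $n\ge 3$.

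Assembling these per-axiom non-derivability statements yields that no equation of $\QC$ is a consequence of the others, which is exactly minimality. The main obstacle is not the high-level structure — it is the case-by-case verification that each proposed alternative interpretation is genuinely a prop functor (i.e. respects $\circ$, $\otimes$, the interchange law, and the unit/swap equations of \Cref{def:prop}) and is sound for all the other axioms at the relevant arity, including the ones hidden inside the $\gRX$, $\gRX$-controlled, and multi-controlled shortcut notations; the subtlest points are \eqref{euler} (where one must control the non-linear dependence of the $\beta_j$ on the $\alpha_i$ and handle \eqref{XPX} via a signed variant of the "sum of $P$-parameters" interpretation, as in \Cref{lem:EulercardinalityofR}) and \eqref{CZ} versus \eqref{bigebre} (where one must choose the counted combination of $\gCNOT$ and $\gSWAP$ gates so that exactly one of the two axioms is distinguished). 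All of this detailed checking is carried out in \Cref{appendix:necessityproofs}.
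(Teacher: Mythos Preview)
Your proposal is correct and follows essentially the same approach as the paper: per-axiom alternative interpretations (the counting/parity functors described in the text and detailed in \Cref{appendix:necessityproofs}) for the elementary rules, \Cref{lem:EulercardinalityofR} for \eqref{euler}, and \Cref{thm:unboundedness} via $\interpD{\cdot}{n-1}$ for each instance of \eqref{ctrl2pi}. The paper's own proof is precisely this assembly, with the case-by-case checks deferred to the appendix.
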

\begin{proof}
  The necessity of \Cref{euler} is a direct consequence of \Cref{lem:EulercardinalityofR}. The necessity of \Cref{ctrl2pi} is stated as \Cref{thm:unboundedness}. The necessity of the remaining axioms is proved in \Cref{appendix:necessityproofs}.
\end{proof}

\section{An alternative minimal equational theory}\label{sec:alt}

In \cite{vilmart2018nearoptimal}, the author introduced in the context of the ZX-calculus a variant of the Euler decomposition rule, which can be stated as \Cref{eulerprime} depicted below in the quantum circuit formalism. He showed,  in the context of the ZX-calculus, that the standard Euler decomposition rule as well as the Euler decomposition of Hadamard can be derived from this new rule. Following this approach, we consider the alternative equational theory $\QCprime$ for vanilla quantum circuits where the two equations
\begin{gather}
  \tikzfigS{./qc-axioms/H}=\tikzfigS{./qc-axioms/eulerH}\tag{E$_{\textup{H}}$}\\
  \tikzfigS{./qc-axioms/euler-left}=\tikzfigS{./qc-axioms/euler-right}\tag{E}
\end{gather}
of $\QC$ are replaced by
\begin{gather}
  \tikzfigS{./qcoriginal-axioms/Pphi1Pphi2}=\tikzfigS{./qcoriginal-axioms/Pphi1phi2}\label{Paddition-prime}\tag{P$_+$}\\
  \tikzfigS{./qcprime-axioms/eulerprimebis-left}=\tikzfigS{./qcprime-axioms/eulerprimebis-right}\label{eulerprime}\tag{E'}
\end{gather}
and show that this also leads to a complete and minimal equational theory for vanilla quantum circuits.

Interestingly, the new version of the Euler-decomposition rule (\Cref{eulerprime}) involves one less parameter, and thus the angle calculation is simpler.

For any angles $\alpha_1',\alpha_3'\in\R$ the angles $\beta_0',\beta_1',\beta_2',\beta_3'$ are restricted to $[0,2\pi)$ and are computed by taking $\alpha_1=\alpha_1'+\frac{\pi}{2}$, $\alpha_2=\frac{\pi}{2}$, $\alpha_3=\alpha_3'+\frac{\pi}{2}$, $\beta_i'=\beta_i$ for $i\in\{1,2,3\}$ and $\beta_0'=\beta_0+\frac{\pi}{4}$ in \Cref{euler}, which leads to
\begin{gather*}
  z\defeq -\sin\left(\frac{\alpha_1'+\alpha_3'}{2}\right)+i\cos\left(\frac{\alpha_1'-\alpha_3'}{2}\right) \\
  z'\defeq \cos\left(\frac{\alpha_1'+\alpha_3'}{2}\right)-i\sin\left(\frac{\alpha_1'-\alpha_3'}{2}\right)
\end{gather*}
\begin{itemize}
  \item If $z'=0$ then $\beta_0'\defeq\frac{\pi}{2}-\arg(z)$, $\beta_1'\defeq2\arg(z)$, $\beta_2'\defeq0$ and $\beta_3'\defeq0$.
  \item If $z=0$ then $\beta_0'\defeq\frac{\pi}{2}-\arg(z')$, $\beta_1'\defeq2\arg(z')$, $\beta_2'\defeq\pi$ and $\beta_3'\defeq0$.
  \item Otherwise $\beta_0'\defeq\frac{\pi}{2}-\arg(z)$, $\beta_1'\defeq\arg(z)+\arg(z')$, $\beta_2'\defeq2\arg\left(i+\left\lvert\frac{z}{z'}\right\rvert\right)$ and $\beta_3'\defeq\arg(z)-\arg(z')$.
\end{itemize}

We show  that the Clifford fragment (restricting the parameters of the phase gates to be multiples of $\frac \pi 2$) is closed under \Cref{eulerprime}, which is not the case for \Cref{euler}.

\begin{restatable}{proposition}{eulerprimeclifford}\label{prop:eulerprimeclifford}
  The Clifford fragment is closed under \Cref{eulerprime}, i.e.~$\alpha'_i=0\pmod{\frac{\pi}{2}}$ for $i\in\{1,3\}$ if and only if $\beta_i'=0\pmod{\frac{\pi}{2}}$ for $i\in\{1,2,3\}$.
\end{restatable}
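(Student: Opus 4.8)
The plan is to use the explicit angle computation that relates $\eqref{eulerprime}$ to $\eqref{euler}$, namely the substitution $\alpha_1 = \alpha_1' + \frac\pi2$, $\alpha_2 = \frac\pi2$, $\alpha_3 = \alpha_3' + \frac\pi2$, and to trace through the case analysis defining the $\beta_i'$ in terms of $z, z'$. Since $\beta_i' = \beta_i$ for $i\in\{1,2,3\}$, and the formulas give the $\beta_i$ as $\mathbb Z$-linear combinations of $\arg(z)$, $\arg(z')$ (in the first two cases) or additionally $\arg\left(i + \left\lvert\frac{z}{z'}\right\rvert\right)$ (in the third case), the whole argument reduces to understanding when these arguments are multiples of $\frac\pi2$.

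First I would record that, with the substitution above, $z = -\sin\left(\frac{\alpha_1'+\alpha_3'}{2}\right) + i\cos\left(\frac{\alpha_1'-\alpha_3'}{2}\right)$ and $z' = \cos\left(\frac{\alpha_1'+\alpha_3'}{2}\right) - i\sin\left(\frac{\alpha_1'-\alpha_3'}{2}\right)$, exactly as displayed in the excerpt. Set $s \defeq \frac{\alpha_1'+\alpha_3'}{2}$ and $d \defeq \frac{\alpha_1'-\alpha_3'}{2}$. For the "only if" direction, assume $\alpha_1', \alpha_3' \equiv 0 \pmod{\frac\pi2}$, so $s, d \equiv 0 \pmod{\frac\pi4}$; I would split into the subcase $s, d \equiv 0 \pmod{\frac\pi2}$ and the subcase where both are odd multiples of $\frac\pi4$ (note $s \equiv d \pmod{\frac\pi2}$ always, since $s - d = \alpha_3' \equiv 0 \pmod{\frac\pi2}$). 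In the first subcase each of $\cos s, \sin s, \cos d, \sin d \in \{0, \pm 1\}$, so $z, z' \in \{0, \pm 1, \pm i\}$ and whichever of the three branches applies, $\arg(z), \arg(z')$ (and, in the third branch, $\left\lvert\frac{z}{z'}\right\rvert \in \{0,1,\infty\}$, whence $\arg\left(i+\left\lvert\frac{z}{z'}\right\rvert\right) \in \{\frac\pi4, \frac\pi2\}$ — one must check the third branch is in fact not reached here, or that $\beta_2' = 2\arg(\cdots)$ lands on a multiple of $\frac\pi2$) are multiples of $\frac\pi2$, giving $\beta_i' \equiv 0 \pmod{\frac\pi2}$. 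In the second subcase $\cos s = \pm\frac{1}{\sqrt2} = \pm\sin s$ etc., so $z$ and $z'$ are (nonzero) Gaussian-rational multiples of $1\pm i$ up to the $\frac{1}{\sqrt2}$ factor, hence $\arg(z), \arg(z') \equiv \frac\pi4 \pmod{\frac\pi2}$ and $\left\lvert\frac{z}{z'}\right\rvert = 1$, so $\arg(z)+\arg(z') \equiv 0$ or $\frac\pi2$, $\arg(z)-\arg(z') \equiv 0$ or $\frac\pi2$, and $\beta_2' = 2\arg(i+1) = \frac\pi2$; all $\beta_i'$ are multiples of $\frac\pi2$.

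For the "if" (contrapositive) direction, assume $\alpha_1' \not\equiv 0$ or $\alpha_3' \not\equiv 0 \pmod{\frac\pi2}$ and I would show some $\beta_i' \not\equiv 0 \pmod{\frac\pi2}$; the cleanest route is probably to argue at the level of the semantics rather than the formulas: the RHS of $\eqref{eulerprime}$ is, up to global phase, the matrix written in the proof of \Cref{lem:unicityNormalFormEuler} with the $\beta_i'$, while the LHS is the product $R_Z(\alpha_3')\,R_X(\frac\pi2)\,R_Z(\alpha_1')$ conjugation-style composite (with the extra $\frac\pi2$'s absorbed), and one checks directly that if all of $\beta_1', \beta_2', \beta_3'$ were multiples of $\frac\pi2$ then the resulting unitary is Clifford, forcing — by inverting the (bijective) angle correspondence, or by a short direct computation of the entries $\cos\left(\frac{\beta_2'}{2}\right)$, $e^{i\beta_1'}\sin\left(\frac{\beta_2'}{2}\right)$, $e^{i\beta_3'}\sin\left(\frac{\beta_2'}{2}\right)$ and matching them against $R_X(\frac\pi2)$ sandwiched between $Z$-rotations — that $\alpha_1', \alpha_3'$ must be multiples of $\frac\pi2$, a contradiction. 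Concretely: the $(1,1)$ entry of the LHS unitary up to phase is $\cos\left(\frac\pi4\right)e^{-i(\alpha_1'+\alpha_3')/2}\cdot(\text{phase})$ type expression whose modulus is constant, so $\cos\left(\frac{\beta_2'}{2}\right) = \frac{1}{\sqrt2}$, i.e. $\beta_2' = \frac\pi2$, already; then the off-diagonal phases $\beta_1', \beta_3'$ are forced to be $-(\alpha_1'+\alpha_3')$ and $-(\alpha_1'-\alpha_3')$ (mod $2\pi$, up to constants), so $\beta_1', \beta_3' \equiv 0 \pmod{\frac\pi2}$ iff $\alpha_1'+\alpha_3'$ and $\alpha_1'-\alpha_3' \equiv 0 \pmod{\frac\pi2}$ iff $\alpha_1', \alpha_3' \equiv 0 \pmod{\frac\pi2}$.

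The main obstacle, I expect, is the bookkeeping in the third branch ($z \neq 0 \neq z'$): one must verify that $\beta_2' = 2\arg\left(i + \left\lvert\frac{z}{z'}\right\rvert\right)$ is a multiple of $\frac\pi2$ under the Clifford hypothesis — which needs $\left\lvert\frac{z}{z'}\right\rvert \in \{0, 1, \infty\}$ — and this forces a careful case split showing that for Clifford inputs the ratio $|z/z'|$ is indeed always $0$, $1$, or $\infty$, never something like $\frac{1}{\sqrt2}$ or $\sqrt2$. Handling this via the semantic argument (forcing $\beta_2' = \frac\pi2$ directly from the constant modulus of the matrix entries of the LHS) sidesteps the worst of it, which is why I would lead with that reformulation and only fall back on the explicit formulas where convenient. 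The rest is routine trigonometry on angles in $\frac\pi4\mathbb Z$.
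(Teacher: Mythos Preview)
Your forward direction ($\alpha_i'\equiv 0\Rightarrow\beta_i'\equiv 0\pmod{\pi/2}$) is essentially the paper's argument: both split on the parity of $s=(\alpha_1'+\alpha_3')/2$ and $d=(\alpha_1'-\alpha_3')/2$ modulo $\pi/2$, check that $\arg(z),\arg(z')$ land in the right residue classes, and verify $|z|=|z'|$ in the third branch so that $\beta_2'=\pi/2$. That part is fine.

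The reverse direction, however, contains a genuine error. Your ``concrete'' claim that the $(1,1)$ entry of the LHS has constant modulus $\cos(\pi/4)=1/\sqrt2$, forcing $\beta_2'=\pi/2$ for \emph{all} $\alpha_1',\alpha_3'$, is false. The LHS of \eqref{eulerprime} is $R_X(\alpha_3')\,H\,R_X(\alpha_1')$, not $R_Z(\alpha_3')\,R_X(\pi/2)\,R_Z(\alpha_1')$; the two are related by conjugation by $H$ (plus a shift of the $\alpha_i'$ by $\pi/2$), but conjugation by $H$ scrambles the $(1,1)$ entry. Concretely, take $\alpha_1'=\alpha_3'=\pi/2$: then $z'=0$ and $\beta_2'=0$, so $|(\text{LHS})_{11}|=\cos(0)=1$; take $\alpha_1'=\pi/2,\ \alpha_3'=-\pi/2$: then $z=0$ and $\beta_2'=\pi$, so $|(\text{LHS})_{11}|=0$. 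Your subsequent identification of $\beta_1',\beta_3'$ with $-(\alpha_1'\pm\alpha_3')$ therefore does not go through either. Your higher-level idea---``if all $\beta_i'$ are multiples of $\pi/2$ the unitary is Clifford, hence $\alpha_i'$ must be too''---can be made to work (e.g.\ by checking that $P(\alpha_3')HP(\alpha_1')$ conjugates $Z$ to $\cos\alpha_3'\,X+\sin\alpha_3'\,Y$, which is a Pauli only when $\alpha_3'\in\frac\pi2\mathbb Z$, and symmetrically for $\alpha_1'$), but you have not actually carried it out, and the concrete route you offer instead is broken. The paper proceeds quite differently: it keeps the three-way case split on $(z,z')$, and in each case writes out the matrix equality implied by soundness of \eqref{eulerprime} with the known values of $\beta_2',\beta_3'$, extracting the constraints on $\alpha_1',\alpha_3'$ by comparing entries (in the third case, first using the formula for $\beta_2'$ to force $\sin^2 d=\sin^2 s$, then reading off $\alpha_1'$ or $\alpha_3'$ from $\beta_1'$ or $\beta_3'$).
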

\begin{proof}
  The proof is given in \Cref{appendix:minimalityQCprime}.
\end{proof}

To prove the completeness of $\QCprime$ for vanilla quantum circuits, we simply derive all equations of $\QC$.

\begin{theorem}\label{thm:completenessQCprime}
  The equational theory $\QCprime$ is complete for $\propQC$-circuits.
\end{theorem}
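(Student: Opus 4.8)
The plan is to prove completeness of $\QCprime$ by deriving every axiom of $\QC$ from the axioms of $\QCprime$, since $\QCprime$ differs from $\QC$ only by replacing \eqref{eulerH} and \eqref{euler} with \eqref{Paddition-prime} and \eqref{eulerprime}. All other axioms of $\QC$ are already axioms of $\QCprime$, so the work reduces to showing $\QCprime\vdash\eqref{eulerH}$ and $\QCprime\vdash\eqref{euler}$. Once these two derivations are in hand, completeness of $\QCprime$ follows immediately from \Cref{completenessQC}: any true equation $\interp{C_1}=\interp{C_2}$ is derivable in $\QC$, hence in $\QCprime$ after substituting the $\QCprime$-derivations for each use of \eqref{eulerH} or \eqref{euler}.

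First I would derive \eqref{euler} from \eqref{eulerprime} together with the linear rules \eqref{Paddition-prime}, \eqref{P0}, \eqref{P2pi}, \eqref{gphaseaddition} and \eqref{gphaseempty}. The key observation is the one already recorded in the excerpt just after the statement of \Cref{prop:eulerprimeclifford}: an instance of \eqref{euler} with parameters $\alpha_1,\alpha_2,\alpha_3$ and an instance of \eqref{eulerprime} with $\alpha_1'=\alpha_1-\frac\pi2$, $\alpha_3'=\alpha_3-\frac\pi2$ have, up to inserting and re-merging phase gates of angle $\frac\pi2$ on both sides (and absorbing a global phase $\frac\pi4$), the same left- and right-hand sides. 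Concretely, on the LHS of \eqref{eulerprime} the middle $R_X(\frac\pi2)$ is flanked by two phase gates; using \eqref{Paddition-prime} one splits these into the $P(\alpha_1'),P(\alpha_3')$ of the general Euler LHS plus two extra $P(\frac\pi2)$ factors, which is exactly the LHS of \eqref{euler} with a general middle angle $\alpha_2$ — except \eqref{eulerprime} fixes $\alpha_2=\frac\pi2$. So this route only directly yields the $\alpha_2=\frac\pi2$ slice of \eqref{euler}. To get arbitrary $\alpha_2$, I would instead run the argument the other way: take the general instance of \eqref{euler} I want to prove, and first use the already-available machinery of \Cref{sec:complete} — in particular \Cref{lem:procedureNormalFormEuler} and the fact that $\QC$ without \eqref{eulerH}/\eqref{euler} is available — to reduce an arbitrary $1$-qubit circuit to the normal form, so that it suffices to show $\QCprime$ can bring any $1$-qubit circuit to a unique normal form. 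For that I show \eqref{eulerH} is derivable in $\QCprime$ (so the $H$-elimination step of \Cref{lem:procedureNormalFormEuler} still works) and that \eqref{eulerprime} suffices to do the $R_X$-count reduction.

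Deriving \eqref{eulerH} from \eqref{eulerprime} is the step I expect to mirror Vilmart's argument in \cite{vilmart2018nearoptimal} for the ZX-calculus: one picks a specific instance of \eqref{eulerprime} — the one corresponding to the Euler angles of $H$ itself, i.e. $\alpha_1'=\alpha_3'=0$ giving $z=i$, $z'=1$, hence $\beta_0'=\frac\pi2-\arg(i)=0$... — and then recognizes that the resulting identity, after using \eqref{Paddition-prime}, \eqref{P0}, \eqref{P2pi} and a global-phase bookkeeping step, is precisely $H=P(\beta_1')R_X(\beta_2')P(\beta_3')$ up to phase, which is \eqref{eulerH}. The main obstacle, and where care is needed, is the angle bookkeeping: \eqref{euler} in $\QC$ comes with the explicit three-case formula for $\beta_0,\dots,\beta_3$ in terms of $\alpha_1,\alpha_2,\alpha_3$, and \eqref{eulerprime} comes with its own (derived) formula in $\alpha_1',\alpha_3'$; I must check that the substitution $\alpha_1=\alpha_1'+\frac\pi2$, $\alpha_2=\frac\pi2$, $\alpha_3=\alpha_3'+\frac\pi2$, $\beta_0'=\beta_0+\frac\pi4$, $\beta_i'=\beta_i$ genuinely makes the two semantic statements coincide — this is the verification underlying \Cref{prop:eulerprimeclifford} — and that for the general-$\alpha_2$ case the normal-form uniqueness of \Cref{lem:unicityNormalFormEuler} can be invoked in $\QCprime$ rather than $\QC$. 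I therefore expect the actual proof to read: "We derive \eqref{eulerH} and \eqref{euler} in $\QCprime$ (details in the appendix); completeness then follows from \Cref{completenessQC}," with the appendix doing the angle computation and the normal-form reduction, and with \Cref{prop:eulerprimeclifford} supplying the compatibility of the two angle formulas.
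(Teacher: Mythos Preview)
Your overall strategy is exactly the paper's: derive \eqref{eulerH} and \eqref{euler} in $\QCprime$ and then invoke \Cref{completenessQC}. Your derivation of \eqref{eulerH} from the $\alpha_1'=\alpha_3'=0$ instance of \eqref{eulerprime} also matches the paper.

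The gap is in your treatment of \eqref{euler}. You correctly observe that the substitution $\alpha_i=\alpha_i'+\tfrac\pi2$, $\alpha_2=\tfrac\pi2$ only recovers the $\alpha_2=\tfrac\pi2$ slice, and then propose to ``redo the normal-form procedure in $\QCprime$'' with \eqref{eulerprime} doing the $R_X$-count reduction. But you do not say \emph{how} \eqref{eulerprime}, whose left-hand side has a fixed $H$ in the middle, can reduce a general pattern $R_X(\alpha_1)P(\alpha_2)R_X(\alpha_3)$ with arbitrary $\alpha_2$. Invoking \Cref{lem:procedureNormalFormEuler} here is circular: that lemma's reduction step is precisely an application of \eqref{euler}. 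The paper (following \cite{vilmart2018nearoptimal}) supplies the missing idea: use \eqref{Paddition-prime} to split the phase gates with a \emph{free parameter} $x$, so that after one application of \eqref{eulerprime} the two adjacent phases between the remaining $R_X$ gates have combined angle $\gamma_3(x)+\delta_1(x)$; one then proves (this is the nontrivial analytic step) that $x$ can be chosen so that $\gamma_3(x)+\delta_1(x)\equiv 0\pmod\pi$. In the $0\bmod 2\pi$ case the middle phase disappears via \eqref{P0} and the two $R_X$'s merge; in the $\pi\bmod 2\pi$ case one uses $Z R_X(\theta) Z = R_X(-\theta)$ (derived in $\QCprime$ via \eqref{XPX}) to the same effect. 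Without this free-parameter trick your reduction does not go through.

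A minor point: \Cref{prop:eulerprimeclifford} is about Clifford closure of \eqref{eulerprime} and is used for the \emph{minimality} argument (necessity of \eqref{Paddition-prime}), not for completeness; it plays no role in deriving \eqref{euler}.
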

\begin{proof}
  It is sufficient to prove \Cref{eulerH} and \Cref{euler} in $\QCprime$. The other equations of $\QC$ are also in $\QCprime$. \Cref{eulerH} can be easily proved by taking $\alpha_1'=\alpha_3'=0$ in \Cref{eulerprime}. The proof of \Cref{euler} needs a more precise analysis, which was first done in \cite{vilmart2018nearoptimal}. The details are given in \Cref{appendix:completenessQCprime}.
\end{proof}

\begin{proposition}\label{prop:necessityeulerprime}
  An infinite number of instances of \Cref{eulerprime} are necessary in $\QCprime$.
\end{proposition}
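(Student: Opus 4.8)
The plan is to adapt the proof of \Cref{lem:EulercardinalityofR}, using ``the sum of the parameters of the phase gates'' as an alternative interpretation, now valued in the group $\R/\tfrac\pi2\mathbb{Z}$. For a $\propQC$-circuit $C$, let $\Sigma(C)\in\R/\tfrac\pi2\mathbb{Z}$ be the sum of the parameters of all the $\gP$ gates occurring in $C$, taken modulo $\tfrac\pi2$. The map $\Sigma$ is additive for both $\circ$ and $\otimes$ and vanishes on $\gempty$, $\gI$, $\gSWAP$, $\gH$, $\gCNOT$ and $\gs$, so it is a prop functor, in particular invariant under the deformation rules. Moreover, for every axiom of $\QCprime$ other than \eqref{eulerprime} the two sides have the same image under $\Sigma$: this is immediate for \eqref{Paddition-prime} (both sides sum to $\varphi_1+\varphi_2$), and for \eqref{gphaseempty}, \eqref{gphaseaddition}, \eqref{HH}, \eqref{P0}, \eqref{CNOTPCNOT}, \eqref{bigebre}, \eqref{CZ} and \eqref{ctrl2pi} it is exactly the observation already used in the proof of \Cref{lem:EulercardinalityofR}, namely that these rules alter the sum of phase-gate parameters only by a multiple of $\tfrac\pi2$. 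Hence, for any set $S$ of instances of \eqref{eulerprime}, writing $\Gamma_S$ for $\QCprime$ with the schema \eqref{eulerprime} replaced by $S$, a straightforward induction on derivations shows that $\Gamma_S\vdash C_1=C_2$ implies $\Sigma(C_1)-\Sigma(C_2)\in G_S$, where $G_S\le\R/\tfrac\pi2\mathbb{Z}$ is the subgroup generated by the ``defects'' $\Sigma(L)-\Sigma(R)$ of the instances $(L=R)\in S$.

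Next I would compute the defect $\delta(\alpha'_1,\alpha'_3):=\Sigma(L)-\Sigma(R)$ of the instance of \eqref{eulerprime} with left-hand side $L$ and right-hand side $R$. The circuit $L$ is built from $P(\alpha'_1)$, $P(\alpha'_3)$ and a fixed number of $R_X(\tfrac\pi2)$ gates, each of which contributes one $P(\tfrac\pi2)$ by \eqref{RXdef}, so $\Sigma(L)=\alpha'_1+\alpha'_3\bmod\tfrac\pi2$ (the exact multiplicity of $\alpha'_1$ and $\alpha'_3$ is irrelevant to what follows). The circuit $R$ is the Euler normal form with parameters $\beta'_0,\beta'_1,\beta'_2,\beta'_3$, so, again by \eqref{RXdef}, $\Sigma(R)=\beta'_1+\beta'_2+\beta'_3\bmod\tfrac\pi2$. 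Restricting to the slice $\alpha'_1=\alpha'_3=u$, the complex numbers of the caption of \Cref{fig:qc-axioms} become $z=-\sin u+i$ and $z'=\cos u$, both non-zero for $u$ near $0$, so the ``otherwise'' branch of the formulas for the $\beta'_j$ applies there and gives $\beta'_1+\beta'_2+\beta'_3=2\arg(z)+2\arg(i+\lvert z/z'\rvert)=\pi+2\arctan(\sin u)+2\arg(i+\sqrt{1+\sin^2 u}/\lvert\cos u\rvert)$. Consequently, on this slice and before reduction modulo $\tfrac\pi2$, the defect is a continuous function of $u$ equal to an affine function of $u$ minus $\pi+2\arctan(\sin u)+2\arg(i+\sqrt{1+\sin^2 u}/\lvert\cos u\rvert)$; since $u\mapsto\arctan(\sin u)$ is bounded and non-constant, this function is not affine, hence not constant on any interval. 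Restricting $u$ to a sufficiently small interval $[0,\varepsilon]$, its image is a non-degenerate interval of length $<\tfrac\pi2$, which therefore remains uncountable after reduction modulo $\tfrac\pi2$; in particular the set $\{\delta(\alpha'_1,\alpha'_3):\alpha'_1,\alpha'_3\in\R\}$ of all defects of instances of \eqref{eulerprime} is uncountable.

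Finally I would conclude: for any \emph{finite} set $F$ of instances of \eqref{eulerprime}, the subgroup $G_F$ is finitely generated, hence countable, so there is an instance $(L^*=R^*)$ of \eqref{eulerprime} with $\Sigma(L^*)-\Sigma(R^*)\notin G_F$. By the implication of the first paragraph, $\Gamma_F$ cannot derive $L^*=R^*$, even though this equation is sound; thus no finite set of instances of \eqref{eulerprime} suffices, i.e.\ infinitely many instances of \eqref{eulerprime} are necessary in $\QCprime$.

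The main obstacle is the middle paragraph: reading off $\Sigma$ of the two sides of \eqref{eulerprime} from the case-split formulas defining the $\beta'_j$, and checking that the resulting defect is genuinely non-constant. Choosing the slice $\alpha'_1=\alpha'_3=u$ near $0$ keeps the computation inside the single ``otherwise'' branch and reduces the non-constancy to the elementary fact that $\arctan\circ\sin$ is not affine; it also makes the precise syntactic form of the left-hand side of \eqref{eulerprime} irrelevant, since only its affineness in the parameters is used.
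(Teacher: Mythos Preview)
Your strategy is sound and genuinely different from the paper's. The paper does not redo the alternative-interpretation argument for $\QCprime$; it simply observes that each instance of \eqref{eulerprime} is derivable from a single instance of \eqref{euler} (with $\alpha_1=\alpha'_1+\tfrac\pi2$, $\alpha_2=\tfrac\pi2$, $\alpha_3=\alpha'_3+\tfrac\pi2$), so any set $S$ of \eqref{eulerprime}-instances can be replaced by an equicardinal set of \eqref{eulerstar}-instances, and then \Cref{lem:EulercardinalityofR} applies verbatim. Your direct approach is in some sense cleaner for $\QCprime$: because $\QCprime$ contains \eqref{Paddition-prime} and not \eqref{eulerH} or \eqref{XPX}, the unsigned sum $\Sigma$ already works, so you avoid the sign-assignation machinery of \Cref{lem:EulercardinalityofR}.

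Two points to fix. First, you write that the invariance of $\Sigma$ under \eqref{CNOTPCNOT}, \eqref{bigebre}, \eqref{CZ}, \eqref{ctrl2pi} ``is exactly the observation already used in the proof of \Cref{lem:EulercardinalityofR}''. It is not: that proof explicitly restricts to rules on at most one qubit and never treats these equations. You do not need them either---since \eqref{eulerprime} acts on one qubit, only the $\le 1$-qubit axioms of $\QCprime$ (namely \eqref{gphaseempty}, \eqref{gphaseaddition}, \eqref{HH}, \eqref{P0}, \eqref{Paddition-prime}) are relevant, and these are trivially $\Sigma$-preserving. Just say that, rather than claiming $\Sigma$-invariance for \eqref{ctrl2pi}, which you have not checked.

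Second, and more importantly, the step ``since $u\mapsto\arctan(\sin u)$ is bounded and non-constant, this function is not affine'' is a non-sequitur: the defect also contains the bounded term $2\arg(i+\lvert z/z'\rvert)$, and nothing you wrote excludes that the two transcendental pieces sum to an affine function. In fact on your slice $\alpha'_1=\alpha'_3=u$ one computes $\tfrac{d}{du}\bigl(\beta'_1+\beta'_2+\beta'_3\bigr)\big|_{u=0}=2$, which exactly matches the slope of $\Sigma(L)=2u+\text{const}$, so the first derivative of the defect vanishes at $0$. You must go to second order: one finds $\tfrac{d^2}{du^2}\bigl(\beta'_1+\beta'_2+\beta'_3\bigr)\big|_{u=0}=-2$ (the contribution comes entirely from $\beta'_2$), while $\Sigma(L)$ has zero second derivative, so the defect is not constant near $0$. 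Alternatively, evaluate the defect at two nearby values of $u$. Either way, the gap is easily filled, but it must be filled.
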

\begin{proof}
Similarly to \Cref{euler} in $\QC$, one needs a set of instances of \Cref{eulerprime} of the cardinality of $\R$ to have completeness. Indeed, this is a consequence of \Cref{lem:EulercardinalityofR}, and of the fact that each instance of \Cref{eulerprime} can be derived from a single instance of \Cref{euler} (by taking $\alpha_1=\alpha_1'+\nicefrac{\pi}{2}$, $\alpha_2=\nicefrac{\pi}{2}$ and $\alpha_3=\alpha_3'+\nicefrac{\pi}{2}$).
\end{proof}

This gives the necessity of \Cref{eulerprime}. All the other axioms can also be proved necessary, leading to the minimality of $\QCprime$.
\begin{theorem}[Minimality]\label{thm:minimalityQCprime}
  The equational theory $\QCprime$ is minimal for $\propQC$-circuits.
\end{theorem}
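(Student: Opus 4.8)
The plan is to mirror the minimality proof for $\QC$ (\Cref{thm:minimalityQC}), replacing the arguments for \Cref{eulerH} and \Cref{euler} by arguments for \Cref{Paddition-prime} and \Cref{eulerprime}, while reusing verbatim the alternative interpretations already constructed in \Cref{appendix:necessityproofs} for all the axioms $\QC$ and $\QCprime$ have in common, namely \eqref{gphaseempty}, \eqref{gphaseaddition}, \eqref{HH}, \eqref{P0}, \eqref{CNOTPCNOT}, \eqref{bigebre}, \eqref{CZ} and \eqref{ctrl2pi}. For each such axiom, the alternative interpretation witnessing its necessity in $\QC$ must be checked to be sound for \eqref{Paddition-prime} and \eqref{eulerprime}; since \eqref{Paddition-prime} only adds phase-gate parameters and \eqref{eulerprime} is a particular sound equation on one qubit built from $\gH$ and $\gP$ gates (and derivable from one instance of \eqref{euler}, as noted in the proof of \Cref{prop:necessityeulerprime}), these interpretations — which were designed to be insensitive to sound one-qubit rephasings, or to track parities of gate counts — remain sound, so essentially no new work is needed there. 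The necessity of \Cref{ctrl2pi} in $\QCprime$ is immediate from \Cref{thm:unboundedness}, which does not depend on the choice of Euler rules at all.

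Next I would establish the necessity of \Cref{eulerprime}: this is precisely \Cref{prop:necessityeulerprime}, which shows that uncountably many instances are needed, hence at least one is not derivable from the rest of $\QCprime$. The remaining and genuinely new point is the necessity of \Cref{Paddition-prime}. The natural witness is an alternative interpretation that counts the number of $\gP$ gates (or their parity), since \eqref{Paddition-prime} is the unique axiom of $\QCprime$ that merges two $\gP$ gates into one: I would define $\interp{\cdot}_{P}$ on one qubit by $\interp{\gP}_{P} = 1$, $\interp{\gH}_{P} = \interp{\gs}_{P} = 0$, with $\interp{C_2\circ C_1}_{P} = \interp{C_1\otimes C_2}_{P}$ equal to the sum (or sum modulo $2$) of the two, and check that every other axiom of $\QCprime$ acting on at most one qubit — namely \eqref{P0}, \eqref{HH}, \eqref{gphaseempty}, \eqref{gphaseaddition}, \eqref{eulerprime} — preserves this count or its parity, whereas \eqref{Paddition-prime} does not. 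Here \eqref{P0} removes one $\gP$ gate, so a plain count fails; this is why one works modulo $2$, and then one must double-check \eqref{eulerprime}: its LHS has two $\gP$ gates (inside the $R_X$-free part) plus those hidden inside the $\gRX$'s via \eqref{RXdef}, and its RHS has three $\gP$ gates plus the hidden ones — so one needs to count $\gP$ gates \emph{including} those generated by unfolding $\gRX$, verify the parities match on \eqref{eulerprime}, and confirm they differ on \eqref{Paddition-prime}.

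I expect the main obstacle to be exactly this bookkeeping for \Cref{Paddition-prime}: choosing the right quantity (count, parity, or a $\Z$-linear combination of $\gH$- and $\gP$-counts and global-phase angles) so that it is invariant under \emph{all} one-qubit axioms of $\QCprime$ except \eqref{Paddition-prime}. Since \eqref{eulerprime} changes the number of $\gP$ gates by an odd amount (two on the LHS versus three on the RHS, modulo the $\gRX$ expansions which contribute equally on both sides), a parity-of-$\gP$-count argument alone will fail unless it is combined with the parity of something else, so the safe route is to design an interpretation valued in an abelian group on which \eqref{P0}, \eqref{eulerprime} and the global-phase rules all act trivially while \eqref{Paddition-prime} does not — for instance tracking $(\#\gP \bmod 2)$ together with a correction from $\#\gH$, exactly as in the necessity proof of \eqref{P0} in $\QC$ but with roles adjusted. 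Once a working invariant is fixed, everything else follows by the template of \Cref{thm:minimalityQC}, so I would state the details in \Cref{appendix:minimalityQCprime} and keep the main-text proof to a sentence indicating that the common axioms are handled as for $\QC$, \eqref{eulerprime} by \Cref{prop:necessityeulerprime}, and \eqref{Paddition-prime} by the new interpretation.
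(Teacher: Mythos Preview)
There are two genuine gaps in your proposal, both stemming from the same structural fact: in $\QCprime$, the axioms \eqref{P0} and \eqref{Paddition-prime} are indistinguishable by any gate-count invariant, since each removes exactly one $\gP$ gate and leaves the numbers of $\gH$ and $\gs$ gates unchanged.

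First, your claim that the necessity argument for \eqref{P0} carries over from $\QC$ is incorrect. In $\QC$ that argument uses the parity of $\#\gH + \#\gP$, but in $\QCprime$ the new axiom \eqref{Paddition-prime} also flips this parity (two $\gP$ gates on the left, one on the right), so the interpretation $\interpEq{\cdot}{P_0}$ from \Cref{appendix:necessityproofs} is \emph{not} sound for \eqref{Paddition-prime}. The paper does not list \eqref{P0} among the axioms whose argument carries over; instead it observes that \eqref{P0} is the only $\le 1$-qubit axiom of $\QCprime$ whose two sides differ in whether they contain any $\gP$ gate at all (the identity has none, while both sides of every other $\le 1$-qubit axiom contain at least one $\gP$ after unfolding $\gRX$).

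Second, and for the same reason, your proposed approach for \eqref{Paddition-prime} cannot succeed. Any functional in the gate counts $(\#\gP,\#\gH,\#\gs)$ that is invariant under \eqref{P0} (which shifts the counts by $(-1,0,0)$) is automatically invariant under \eqref{Paddition-prime} (same shift), so no parity or $\Z$-linear combination of these counts can separate them. The paper's idea, which you are missing, is to use a \emph{semantic} invariant rather than a syntactic one: the predicate ``all $\gP$-parameters lie in $\frac\pi2\Z$'' (i.e.\ the circuit is Clifford). This is preserved by every $\le 1$-qubit axiom of $\QCprime$ --- trivially for all except \eqref{eulerprime}, and for \eqref{eulerprime} precisely by \Cref{prop:eulerprimeclifford}, which is the real work here --- whereas the instance $P(\frac\pi4)\circ P(\frac\pi4)=P(\frac\pi2)$ of \eqref{Paddition-prime} has a non-Clifford left-hand side and a Clifford right-hand side.
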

\begin{proof}
  The necessity arguments for Equations \eqref{gphaseempty}, \eqref{gphaseaddition}, \eqref{HH}, \eqref{CNOTPCNOT}, \eqref{bigebre}, \eqref{CZ} and \eqref{ctrl2pi} in $\QC$ remain true in $\QCprime$. \Cref{P0} is the only axiom acting on at most one qubit that does not preserve the presence of phase gates. \Cref{Paddition-prime} is the only rule of $\QCprime$ that does not remains in the Clifford fragment (see \Cref{prop:eulerprimeclifford}). The necessity of \Cref{eulerprime} is proved in \Cref{prop:necessityeulerprime}.
\end{proof}

\section{Quantum circuits extensions}\label{sec:extensions}

\subsection{Vanilla quantum circuits up to global phases}
The proof of the necessity of axioms acting on an unbounded number of qubits (\Cref{thm:unboundedness}) in $\QC$ is closely related to the determinant of $\interp{C}$, which is sensitive to the global phase. So, one can wonder whether a complete equational theory acting on a bounded number of qubits exists for  quantum circuits up to global phases, a standard model of quantum computing. 
In this section, we answer this question in the negative by strengthening \Cref{thm:unboundedness}.

\begin{definition}[Vanilla quantum circuits up to global phases]
  Let $\propQCugp$ be the prop of quantum circuits up to global phases generated by $\gH:1\to 1$, $\gP:1\to 1$ for any $\varphi\in\R$ and $\gCNOT:2\to 2$.
\end{definition}

One could interpret $\propQCugp$-circuits as unitaries up to global phases by defining equivalence classes. Instead, we equivalently  interpret $\propQCugp$-circuits 
as CPTP maps, which are not sensitive to global phases.
  
\begin{definition}[Semantics]
  For any $n$-qubit $\propQCugp$-circuit $C$, let $\CPTP{C}: \mathcal{M}_{2^n}(\C) \to  \mathcal{M}_{2^n}(\C)$ be the \emph{semantics} of $C$ inductively defined as the linear map $\CPTP{C_2\circ C_1} = \CPTP{C_2}\circ\CPTP{C_1}$; $\CPTP{C_1\otimes C_2} = \CPTP{C_1}\otimes\CPTP{C_2}$; and $\CPTP{g}=\rho\mapsto\interp{g}\rho\interp{g}^\dagger$ for every generator $g$, where $M^\dagger$ is the adjoint of the matrix $M$.
\end{definition}

We consider the equational theory $\QCugp$, depicted in \Cref{fig:qcsf-axioms}, obtained  from 
$\QC$ by applying $(\cdot)^{\textup{gp-free}}$ which consists of removing the phase gates ($\gs$), from every circuit involved. Notice that \Cref{gphaseempty,gphaseaddition} are then trivialised. With a slight abuse of notation we use $\gRX$ to denote  $(\gRX)^{\textup{gp-free}}=\scalebox{0.8}{\tikzfigM{./shortcut/HPthetaH-}}$. The equational theory $\QCugp$ is sound with respect to $\CPTP{\cdot}$. 

\begin{corollary}[Completeness]
  The equational theory $\QCugp$ is complete for $\propQCugp$-circuits.
\end{corollary}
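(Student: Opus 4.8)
The plan is to transfer the completeness of $\QC$ (\Cref{completenessQC}) along the phase-erasing map $(\cdot)^{\textup{gp-free}}$. Concretely, given two $\propQCugp$-circuits $C_1,C_2$ with $\CPTP{C_1}=\CPTP{C_2}$, I want to lift them to $\propQC$-circuits, apply $\QC$-completeness, and then project the resulting derivation back down to a $\QCugp$-derivation. The first step is to observe that any $\propQCugp$-circuit $C$ is literally a $\propQC$-circuit (one that happens to use no $\gs$ generator), so it makes sense to speak of $\interp{C}$, and that $\CPTP{C} = \rho \mapsto \interp{C}\rho\interp{C}^\dagger$ by a trivial induction on $C$ using the inductive clauses for $\CPTP{\cdot}$ and $\interp{\cdot}$. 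Hence $\CPTP{C_1}=\CPTP{C_2}$ is equivalent to $\interp{C_1} = e^{i\varphi}\interp{C_2}$ for some $\varphi\in\R$.

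Next I would bridge the global phase. Since $\propQC$ is universal (\Cref{prop:universalityQC}) — or more elementarily since a single $\gs$ gate in parallel realises a global phase — there is a $\propQC$-circuit, namely $C_2' \defeq (\gs[\varphi]\otimes\gI^{\otimes(n-1)})\circ C_2$ if $C_2:n\to n$ with $n\ge 1$ (and the $0$-qubit case is handled separately and trivially), such that $\interp{C_2'} = e^{i\varphi}\interp{C_2} = \interp{C_1}$. Now $\QC$-completeness (\Cref{completenessQC}) yields $\QC \vdash C_1 = C_2'$. The crucial step is then to apply the functor $(\cdot)^{\textup{gp-free}}$ to every circuit appearing in this derivation: because $(\cdot)^{\textup{gp-free}}$ is a prop functor, it sends the deformation rules of the prop to deformation rules, and it sends each instance of an axiom of $\QC$ to the corresponding instance of an axiom of $\QCugp$ (the axioms of $\QCugp$ being defined precisely as the images of those of $\QC$, with \Cref{gphaseempty,gphaseaddition} collapsing to trivial identities that are anyway derivable). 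Therefore the image of the derivation is a valid $\QCugp$-derivation of $(C_1)^{\textup{gp-free}} = (C_2')^{\textup{gp-free}}$. Since $C_1$ already contains no $\gs$, we have $(C_1)^{\textup{gp-free}} = C_1$, and similarly $(C_2')^{\textup{gp-free}} = (C_2)^{\textup{gp-free}} = C_2$ because $(\gs[\varphi])^{\textup{gp-free}}$ is the empty circuit. This gives $\QCugp \vdash C_1 = C_2$, as required.

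The main obstacle — really the only point needing care — is verifying that $(\cdot)^{\textup{gp-free}}$ genuinely transports derivations, i.e. that for every axiom $L = R$ of $\QC$ (including the $\gs$-free but $\gRX$-containing ones such as \eqref{eulerH} and \eqref{euler}, where one must track that $(\gRX)^{\textup{gp-free}}$ is the stated gate and that the parameter functions $\beta_j(\alpha_i)$ are unaffected since erasing $\gs$ does not touch $\gP$), the pair $(L^{\textup{gp-free}}, R^{\textup{gp-free}})$ is an instance of a $\QCugp$-axiom or a prop identity. This is a finite check over the eight families of axioms, entirely routine once one notes that $(\cdot)^{\textup{gp-free}}$ commutes with $\circ$, $\otimes$, and the definitions in \Cref{fig:shortcutcircuits}. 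A subsidiary remark is that one should confirm $\QCugp$ is sound with respect to $\CPTP{\cdot}$ — stated in the text — which follows from soundness of $\QC$ together with the fact that $(\cdot)^{\textup{gp-free}}$ only discards information that $\CPTP{\cdot}$ is blind to; soundness is needed only to know the theory is not vacuously "complete", but completeness itself is exactly the argument above.
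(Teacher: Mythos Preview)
Your proposal is correct and follows essentially the same approach as the paper: lift both circuits to $\propQC$, insert a global phase $\gs$ to fix the discrepancy, invoke completeness of $\QC$, and then push the derivation through $(\cdot)^{\textup{gp-free}}$ (the paper phrases the last step as ``replacing every application of a rule of $\QC$ by its corresponding one in $\QCugp$ and removing the trivialised applications of \eqref{gphaseempty} and \eqref{gphaseaddition}''). Your extra care about $(\gRX)^{\textup{gp-free}}$ and the unchanged parameter functions in \eqref{euler} is exactly the content of the paper's ``slight abuse of notation'' remark, and your handling of the $0$-qubit case separately is fine though the paper's use of $\gs\otimes C_2$ avoids even that case split.
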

\begin{proof}
  Let $C_1,C_2$ be $\propQCugp$-circuits such that $\CPTP{C_1}=\CPTP{C_2}$. Then, there exists $\varphi\in\R$ such that $\interp{C_1}=\interp{\gs\otimes C_2}$, considering $C_1$ and $C_2$ as $\propQC$-circuits. By completeness of $\QC$ we have a derivation of $C_1=\gs\otimes C_2$ in $\QC$. We turn it into a derivation of $C_1=C_2$  in $\QCugp$ by replacing every application of a rule of $\QC$ by its corresponding one in $\QCugp$ and by removing the trivialised applications of Equations \eqref{gphaseempty} and \eqref{gphaseaddition}.
\end{proof}

\begin{corollary}[Minimality]
  The equational theory $\QCugp$ is minimal for $\propQCugp$-circuits.
\end{corollary}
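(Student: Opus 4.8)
The plan is to transfer the minimality proof of $\QC$ (\Cref{thm:minimalityQC}) essentially verbatim, the only genuinely new ingredient being the global-phase-insensitive strengthening of \Cref{thm:unboundedness} obtained for $\propQCugp$ in this section. First recall that the axioms of $\QCugp$ are exactly the images under $(\cdot)^{\textup{gp-free}}$ of \Cref{HH,P0,CNOTPCNOT,bigebre,CZ,eulerH,euler,ctrl2pi} — \Cref{gphaseempty,gphaseaddition} being trivialised — and that, as observed at the start of \Cref{sec:min}, an axiom acting on $k$ qubits is necessary as soon as one produces a prop functor that is sound for all the \emph{other} axioms acting on $k$ qubits or fewer but not for it.

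For \Cref{HH,CNOTPCNOT,bigebre,CZ,eulerH}, the alternative interpretations used in \Cref{appendix:necessityproofs} only record the presence, number, or parity of the $\gH$, $\gP$, $\gCNOT$ and $\gSWAP$ gates, so as functors they all send the global-phase generator $\gs$ to the neutral element; each therefore descends to a functor on $\propQCugp$, and since passing from a $\QC$-axiom to the corresponding $\QCugp$-axiom only deletes $\gs$-gates (which are sent to the neutral element), the descended functor is still sound for precisely the same axioms — giving the necessity of \Cref{HH,CNOTPCNOT,bigebre,CZ,eulerH} in $\QCugp$. For \Cref{P0} I would take the parity of the total number of $\gH$ and $\gP$ gates, valued in $\mathbb{Z}/2\mathbb{Z}$: unfolding $\gRX$ shows that the gp-free versions of \Cref{eulerH,euler} keep an odd number of $\gH$ and $\gP$ gates on each side, so this functor is sound for the three other axioms of $\QCugp$ acting on at most one qubit and is broken only by \Cref{P0}. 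For \Cref{euler} I would rerun \Cref{lem:EulercardinalityofR}: the alternative interpretation there — a sum of parameters of the $\gP$ gates taken modulo $\frac{\pi}{2}$, the sign twist merely absorbing \Cref{XPX}, which is absent from $\QCugp$ — never mentions $\gs$ and is unchanged by $(\cdot)^{\textup{gp-free}}$, so it descends, is sound for \Cref{HH,P0,eulerH}, and by the same continuum computation is violated by uncountably many instances of \Cref{euler}, in particular by \Cref{euler} itself.

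The remaining, and only delicate, case is \Cref{ctrl2pi}. Its necessity in $\QC$ rested on \Cref{thm:unboundedness}, whose proof used $\interpD{\cdot}{n-1}$; but $\interpD{\cdot}{n-1}$ is in essence the argument of a determinant, hence sensitive to the global phase and \emph{not} sound with respect to $\CPTP{\cdot}$, and it cannot be repaired by quotienting out the phases produced by scalar circuits since $\propQCugp$-circuits already realise $e^{i\theta}\gI$ for every $\theta$. I would therefore appeal instead to the $\propQCugp$-analogue of \Cref{thm:unboundedness} established above — no complete equational theory for $\propQCugp$-circuits consists of rules acting on a bounded number of qubits. A derivation of the $n$-qubit equation identifying the multi-controlled $P(2\pi)$ gate with $\gI^{\otimes n}$ uses only rules on at most $n$ qubits, and for $n\ge 3$ the sole rule of $\QCugp$ acting on exactly $n$ qubits is the $n$-qubit instance of \Cref{ctrl2pi} itself; removing it leaves only rules on at most $n-1$ qubits, which the strengthened theorem forbids from deriving that equation. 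Hence every instance of \Cref{ctrl2pi} is necessary, and $\QCugp$ is minimal. The main obstacle is precisely that last ingredient: one must build a prop functor sound with respect to $\CPTP{\cdot}$ on circuits of bounded arity that nonetheless distinguishes the multi-controlled $P(2\pi)$ gate from the identity — a genuinely global-phase-invariant replacement for $\interpD{\cdot}{n-1}$; everything else in the proof is a mechanical transfer of the $\QC$ arguments, resting on the easy observation that none of them used $\gs$ in an essential way.
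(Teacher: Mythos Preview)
Your approach is correct in spirit but takes a genuinely different route from the paper, and in doing so you misidentify the key ingredient in the delicate case.

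The paper does \emph{not} re-examine each alternative interpretation. It gives a single uniform argument: suppose some axiom $(a_{\textup{gp-free}})$ were derivable in $\QCugp\setminus(a_{\textup{gp-free}})$. Take that derivation and lift it step by step back to $\propQC$: each application of a $\QCugp$-axiom is replaced by the corresponding $\QC$-axiom, inserting the missing global phases $\gs$ together with their inverses via \eqref{gphaseempty} and \eqref{gphaseaddition}; the spurious phases left at the end are eliminated by the $0$-qubit completeness of \eqref{gphaseempty} and \eqref{gphaseaddition}. This yields a derivation of $(a)$ in $\QC\setminus(a)$, contradicting \Cref{thm:minimalityQC}. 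No per-axiom case analysis, no descent of functors, no new invariant.

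Your route works for \eqref{HH-ugp}, \eqref{P0-ugp}, \eqref{CNOTPCNOT-ugp}, \eqref{bigebre-ugp}, \eqref{CZ-ugp}, \eqref{eulerH-ugp}, \eqref{euler-ugp} exactly as you say, since those alternative interpretations already ignore $\gs$. The real divergence is \eqref{ctrl2pi-ugp}. You correctly observe that $\interpD{\cdot}{n-1}$ does not descend, and you propose invoking ``the $\propQCugp$-analogue of \Cref{thm:unboundedness} established above''. Two remarks. First, in the paper that proposition appears \emph{after} the minimality corollary, not before. Second, and more importantly, its proof is \emph{not} via a phase-invariant replacement for $\interpD{\cdot}{n-1}$ --- no such functor is ever built --- but via the very same lifting-to-$\propQC$ trick. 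So the ``main obstacle'' you flag (building a $\CPTP{\cdot}$-sound functor distinguishing the multi-controlled $P(2\pi)$ from the identity) is a red herring: once you see that derivations can be lifted by reinserting phases, you get minimality directly and uniformly, without any detour through an unboundedness statement and without the per-$n$ strengthening you implicitly need (the proposition as stated only says ``no bounded complete theory'', not ``rules on $\le n{-}1$ qubits cannot derive the $n$-qubit instance'').
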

\begin{proof}
  Suppose for the sake of contradiction that there is an axiom $(a)$ of the form $C_1=C_2$ in $\QC\setminus \{\eqref{gphaseempty}, \eqref{gphaseaddition}\} $ s.t. the corresponding axiom $(a_{\textup{gp-free}})$ $C_1^{\textup{gp-free}} = C_2^{\textup{gp-free}}$ is derivable in $\QCugp\setminus (a_{\textup{gp-free}})$. We construct from this derivation a derivation of $C_1=C_2$ in $\QC\setminus (a)$, by applying the corresponding axioms when it is possible. Otherwise we introduce, thanks to \eqref{gphaseempty} and \eqref{gphaseaddition}, the missing phases of the form $\gs$ together with their dual \scalebox{0.7}{\tikzfig{/gates/sphi-}}. There might be some extra phases in the final circuit that can be removed by completeness of  \eqref{gphaseempty} and \eqref{gphaseaddition} for $0$-qubit circuits. 
\end{proof}

\begin{proposition}
  There is no complete equational theory for quantum circuits up to global phases made of equations acting on a bounded number of qubits.
\end{proposition}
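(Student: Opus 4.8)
The plan is to reduce to \Cref{thm:unboundedness}. Suppose for contradiction that there is a sound and complete equational theory $\QCugp^\star$ for $\propQCugp$-circuits all of whose equations act on at most $N$ qubits. From it I will manufacture a sound and complete equational theory $\mathcal E$ for $\propQC$-circuits whose equations still act on at most $N$ qubits, contradicting \Cref{thm:unboundedness}.

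The construction rests on the fact that every $\propQCugp$-circuit is in particular a $\propQC$-circuit, hence has a well-defined unitary semantics $\interp{\cdot}$, and that $\CPTP{C_1}=\CPTP{C_2}$ holds exactly when $\interp{C_1}=e^{i\mu}\interp{C_2}$ for some $\mu\in\R$. Therefore, for each axiom $D_1=D_2$ of $\QCugp^\star$ there is an angle $\mu$, unique modulo $2\pi$, with $\interp{D_1}=e^{i\mu}\interp{D_2}$; I lift this axiom to the $\propQC$-equation $D_1 = \gs\otimes D_2$ with that parameter $\mu$ on the global phase gate, which is sound with respect to $\interp{\cdot}$ and still acts on at most $N$ qubits. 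To these lifted axioms I add Equations \eqref{gphaseempty} and \eqref{gphaseaddition}, which act on $0$ qubits. Call the resulting equational theory $\mathcal E$; it is sound by construction and acts on at most $N$ qubits.

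To prove $\mathcal E$ complete for $\propQC$, take $C_1,C_2$ with $\interp{C_1}=\interp{C_2}$. Using \eqref{gphaseempty}, \eqref{gphaseaddition} and the prop axioms — which let the $0\to0$ global phase gates float to the side, be combined, and be reduced modulo $2\pi$ — rewrite each $C_i$ as $\gs\otimes C_i'$ with $C_i'$ a $\propQCugp$-circuit and global phase angle $\psi_i\in[0,2\pi)$. The hypothesis becomes $e^{i\psi_1}\interp{C_1'}=e^{i\psi_2}\interp{C_2'}$, so $\CPTP{C_1'}=\CPTP{C_2'}$ and completeness of $\QCugp^\star$ gives a derivation of $C_1'=C_2'$. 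Replaying this derivation step by step in $\mathcal E$, using the lifted axiom in place of each original one (and creating cancelling pairs of global phase gates via \eqref{gphaseempty} and \eqref{gphaseaddition} whenever an axiom is used from right to left), yields a derivation $\mathcal E\vdash C_1'=\gs\otimes C_2'$ for some accumulated global phase. Soundness of $\mathcal E$ with respect to $\interp{\cdot}$, together with $e^{i\psi_1}\interp{C_1'}=e^{i\psi_2}\interp{C_2'}$, forces this accumulated phase to equal $\psi_2-\psi_1$ modulo $2\pi$. Hence, merging the global phase gates with \eqref{gphaseaddition} and \eqref{gphaseempty}, $\mathcal E\vdash C_1=\gs\otimes C_1'=\gs\otimes\gs\otimes C_2'=\gs\otimes C_2'=C_2$, so $\mathcal E$ is complete — contradicting \Cref{thm:unboundedness}.

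The routine points are that the lifted axioms are sound, that global phase gates can be moved around, combined, and created in cancelling pairs using only \eqref{gphaseempty}, \eqref{gphaseaddition} and prop deformations, and that an equational derivation can be replayed with substituted axioms. The one genuinely delicate point, which I would state carefully, is the bookkeeping of the global phase: a naive lifting that merely copies the $\QCugp^\star$-derivation is unsound for $\propQC$, since a $\CPTP{\cdot}$-preserving rewrite need not preserve $\interp{\cdot}$; this is exactly why each axiom must be lifted with its own correcting angle $\mu$, and why one must then verify that these corrections sum to precisely $\psi_2-\psi_1$ — an equality that is pinned down by soundness of $\mathcal E$ and needs no extra computation. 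It is worth remarking, as motivation, that a direct approach via an alternative interpretation in the style of \Cref{def:QCinterp} does not go through here: the interpretation $\interpD{\cdot}{k}$ is not sound for $\CPTP{\cdot}$ already on $1$-qubit circuits, because a global phase $e^{i\alpha}$ (realisable by a $\propQCugp$-circuit) shifts it by $2^k\alpha\bmod 2\pi$, which ranges over all of $[0,2\pi)$.
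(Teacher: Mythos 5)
Your proposal is correct and follows essentially the same route as the paper: assume a bounded complete theory for $\propQCugp$-circuits, lift each of its axioms to a sound $\propQC$-equation by inserting a correcting global phase $\gs$, add Equations \eqref{gphaseempty} and \eqref{gphaseaddition}, and derive a contradiction with \Cref{thm:unboundedness}. You merely spell out the phase bookkeeping that the paper leaves implicit.
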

\begin{proof}
  Suppose for the sake of contradiction that there is a complete equational theory $\QCugp^\bot$ for $\propQCugp$-circuits made of equations acting on a bounded number of qubits. Then we could construct one for $\propQC$-circuits. The construction goes as follows: for each axiom $C_1=C_2$ of $\QCugp^\bot$, make the equation sound for $\interp{\cdot}$ by adding a global phase $\gs$ to $C_2$, leading to axiom $C_1=\gs\otimes C_2$. Then we add Equations \eqref{gphaseempty} and \eqref{gphaseaddition}, which are complete for $\propQC$-circuits acting on zero qubit. This would lead to a complete equational theory for $\propQC$-circuits made of equations acting on a bounded number of qubits, which is impossible according to \Cref{thm:unboundedness}.
\end{proof}

\subsection{Quantum circuits with ancillae}

\begin{figure*}
  \fbox{\begin{minipage}{.985\textwidth}\begin{center}
    \vspace{-.5em}
    \hspace{-1.5em}\begin{subfigure}{0.18\textwidth}
      \begin{align}\label{gphaseempty-ancilla}\tag{S$_{2\pi}$}\tikzfigM{./qc-axioms/s2pi}=\tikzfigM{./qc-axioms/empty}\end{align}
    \end{subfigure}\hspace{1.2em}
    \begin{subfigure}{0.22\textwidth}
      \begin{align}\label{HH-ancilla}\tag{H$^2$}\tikzfigM{./qc-axioms/HH}=\tikzfigM{./qc-axioms/Id}\end{align}
    \end{subfigure}\hspace{1.2em}
    \begin{subfigure}{0.22\textwidth}
      \begin{align}\label{initP-ancilla}\tag{AP}\tikzfigM{./qcancilla-axioms/initPphi}=\tikzfigM{./qcancilla-axioms/initId}\end{align}
    \end{subfigure}\hspace{1.2em}
    \begin{subfigure}{0.18\textwidth}
      \begin{align}\label{initdest-ancilla}\tag{A}\tikzfigM{./qcancilla-axioms/initdest}=\tikzfigM{./qcancilla-axioms/empty}\end{align}
    \end{subfigure}

    \hspace{-1.5em}\begin{subfigure}{0.23\textwidth}
      \begin{align}\label{initCNOT-ancilla}\tag{ACX}\tikzfigM{./qcancilla-axioms/initCNOT}=\tikzfigM{./qcancilla-axioms/initIdId}\end{align}
    \end{subfigure}\hspace{5em}
    \begin{subfigure}{0.28\textwidth}
      \begin{align}\label{5CX-ancilla}\tag{5CX}\tikzfigM{./qcancilla-axioms/3CNOT-left}=\tikzfigM{./qcancilla-axioms/3CNOT-right}\end{align}
    \end{subfigure}

    \hspace{-1.5em}\begin{subfigure}{0.28\textwidth}
      \begin{align}\label{CNOTPCNOT-ancilla}\tag{C}\tikzfigM{./qc-axioms/CNOTPphiCNOT}=\tikzfigM{./qc-axioms/PphiId}\end{align}
    \end{subfigure}\hspace{1.2em}
    \begin{subfigure}{0.26\textwidth}
      \begin{align}\label{bigebre-ancilla}\tag{B}\tikzfigM{./qc-axioms/CNOTNOTC}=\tikzfigM{./qc-axioms/SWAPCNOT}\end{align}
    \end{subfigure}\hspace{1.2em}
    \begin{subfigure}{0.39\textwidth}
      \begin{align}\label{CZ-ancilla}\tag{CZ}\tikzfigM{./qc-axioms/H2CNOTH2}=\tikzfigM{./qc-axioms/CZ}\end{align}
    \end{subfigure}
    \vspace{.5em}
  \end{center}\end{minipage}}
  \vspace{.5em}

  \ovalbox{\begin{minipage}{.985\textwidth}\begin{center}
    \vspace{-.5em}
    \hspace{-1.5em}\begin{subfigure}{0.21\textwidth}
      \begin{align}\label{P0-ancilla}\tag{P$_0$}\tikzfigM{./qc-axioms/P0}=\tikzfigM{./qc-axioms/Id}\end{align}
    \end{subfigure}\hspace{5em}
    \begin{subfigure}{0.38\textwidth}\begin{align}\label{eulerH-ancilla}\tag{E$_{\textup{H}}$}\tikzfigM{./qc-axioms/H}=\tikzfigM{./qc-axioms/eulerH}\end{align}\end{subfigure}
    \hspace{-1.5em}\begin{subfigure}{0.60\textwidth}
      \begin{align}\label{euler-ancilla}\tag{E}\tikzfigM{./qc-axioms/euler-left}=\tikzfigM{./qc-axioms/euler-right}\end{align}
    \end{subfigure}
    \vspace{.5em}
  \end{center}\end{minipage}}
  \vspace{.5em}

  \ovalbox{\begin{minipage}{.985\textwidth}\begin{center}
    \vspace{-.5em}
    \hspace{-1.5em}\begin{subfigure}{0.25\textwidth}
        \begin{align}\label{gphaseaddition-ancilla}\tag{S$_+$}\tikzfigM{./qc-axioms/sphi1}\tikzfigM{./qc-axioms/sphi2}=\tikzfigM{./qc-axioms/sphi1plusphi2}\end{align}
      \end{subfigure}\hspace{5em}
    \begin{subfigure}{0.37\textwidth}\begin{align*}\label{Paddition-ancilla}\tag{P$_+$}\tikzfigM{./qcoriginal-axioms/Pphi1Pphi2}=\tikzfigM{./qcoriginal-axioms/Pphi1phi2}\end{align*}\end{subfigure}
    \hspace{-1.5em}\begin{subfigure}{0.58\textwidth}
      \begin{align}\tag{E'}\tikzfigM{./qcprime-axioms/eulerprimebis-left}=\tikzfigM{./qcprime-axioms/eulerprimebis-right}\label{euler-prime-ancilla}\end{align}
    \end{subfigure}
    \vspace{.5em}
  \end{center}\end{minipage}}
  \caption{\normalfont The equational theory $\QCancilla$ (resp. $\QCancillaprime$) is depicted by the top box together with the first (resp. second) rounded box.\label{fig:qcancilla}}
\end{figure*}

In this section we consider quantum circuits implementing isometries\footnote{An isometry is a linear map $V$ s.t. $V^\dagger \circ V$ is the identity.} using ancillary qubits, a.k.a. ancillae, as additional work space. This extension allows one to initialise fresh qubits in the $\ket{0}$-state and to release qubits when they become useless. Note that to guarantee that the overall evolution is an isometry, one can only release a qubit in the $\ket{0}$-state.

Following \cite{CDPV}, we define quantum circuits with ancillae as the circuits generated by $\gH$, $\gP$, $\gCNOT$ and $\gs$ together with two new generators $\ginit$ and $\gdest$, respectively denoting qubit initialisation and qubit termination. Because of the constraint that removed qubits must be in the $\ket{0}$-state, we define the language of quantum circuits with ancillae in two steps.

\begin{definition}
  Let $\propQCpreancilla$ be the prop generated by $\gs:0\to 0$, $\gH:1\to 1$, $\gP:1\to 1$, $\gCNOT:2\to 2$, $\ginit:0\to 1$ and $\gdest:1\to 0$ for any $\varphi\in\R$.
\end{definition}

\begin{definition}[Semantics]
The semantics $\interp \cdot $ of vanilla quantum circuits (see \Cref{def:QCsem}) is extended with $\interp \ginit = \ket 0$ and $\interp \gdest = \bra 0$.
\end{definition}

Notice that the semantics of a $\propQCpreancilla$-circuit is not necessarily an isometry as $\interp \gdest$ is not isometric. As a consequence, $\propQCancilla$ is defined as the subclass of $\propQCpreancilla$-circuits with an isometric semantics.

\begin{definition}[Quantum circuits with ancillae]
  Let $\propQCancilla$ be the sub-prop of $\propQCpreancilla$-circuits $C$ such that $\interp C$ is an isometry. 
\end{definition}

The three following sound equations reveal the behaviour of the new generators.
\begin{align}
  \tag{A}\tikzfigS{./qcancilla-axioms/initdest}&=\tikzfigS{./qcancilla-axioms/empty}\\[0cm]
  \tag{AP}\tikzfigS{./qcancilla-axioms/initPphi}&=\tikzfigS{./qcancilla-axioms/initId}\\[0cm]
  \tag{ACX}\tikzfigS{./qcancilla-axioms/initCNOT}&=\tikzfigS{./qcancilla-axioms/initIdId}
\end{align}

It as been shown in \cite{CDPV} that the addition of those three equations is sufficient to get the completeness for quantum circuits with ancillae.
\begin{theorem}[Completeness \cite{CDPV}]
  Adding Equations \eqref{initdest-ancilla}, \eqref{initP-ancilla} and \eqref{initCNOT-ancilla} to a complete equational theory for vanilla quantum circuits leads to a complete equational theory for quantum circuits with ancillae.
\end{theorem}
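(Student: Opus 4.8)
\emph{Proof strategy.} The plan is to reduce completeness for quantum circuits with ancillae to completeness of the vanilla theory (\Cref{completenessQC}) in two steps: first put every circuit into a \emph{deferred-ancilla normal form}, then use the isometry constraint to cancel the extra initialisations and terminations that this normal form introduces. The normal form uses only the deformation rules of the prop: in any $\propQCpreancilla$-circuit the wire created by an $\ginit$ does not exist above that generator, so $\ginit$ slides up past every gate above it, and symmetrically a $\gdest$ slides down past everything below it; performing this for all of them shows that every $\propQCpreancilla$-circuit $C\colon n\to m$ equals one of the form $(\gI^{\otimes m}\otimes\gdest^{\otimes q})\circ D\circ(\gI^{\otimes n}\otimes\ginit^{\otimes p})$ with $D$ a vanilla circuit on $n+p=m+q$ wires. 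Given $\propQCancilla$-circuits $C_1,C_2\colon n\to m$ with $\interp{C_1}=\interp{C_2}=V$ (an isometry, so $m\ge n$), I would normalise both in this way and, inserting fresh $\ginit;\gdest$ pairs, which equal the empty circuit by \eqref{initdest-ancilla}, arrange that both normal forms use the same $p,q$, with vanilla middles $D_1,D_2\colon n+p\to n+p$.

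Next the isometry hypothesis enters. Since $(\gI^{\otimes m}\otimes\bra{0}^{\otimes q})\,\interp{D_i}\,(\gI^{\otimes n}\otimes\ket{0}^{\otimes p})=V$ is an isometry, the images of the orthonormal vectors $\ket{x}\otimes\ket{0}^{\otimes p}$ under the unitary $\interp{D_i}$ cannot lose norm to the projector $\bra{0}^{\otimes q}$, so $\interp{D_i}$ sends $\ket{x}\otimes\ket{0}^{\otimes p}$ to exactly $(V\ket{x})\otimes\ket{0}^{\otimes q}$, for both $i$. Hence $\interp{D_1}$ and $\interp{D_2}$ agree on $\mathcal H_0 := \operatorname{span}\{\ket{x}\otimes\ket{0}^{\otimes p}\}$, so $W := \interp{D_2}^{\dagger}\interp{D_1}$ fixes $\mathcal H_0$ pointwise. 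Choosing a vanilla circuit $D_W$ with $\interp{D_W}=W$ and observing that $D_1$ and $D_2\circ D_W$ have the same semantics (whence $\QC\vdash D_1=D_2\circ D_W$ by \Cref{completenessQC}), the goal $\QCancilla\vdash C_1=C_2$ reduces to the key lemma: \emph{if $D$ is a vanilla circuit whose semantics fixes $\mathcal H_0$ pointwise, then $\QCancilla\vdash D\circ(\gI^{\otimes n}\otimes\ginit^{\otimes p})=\gI^{\otimes n}\otimes\ginit^{\otimes p}$.}

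For the key lemma, the idea is to rewrite $D$ — freely, using \Cref{completenessQC} — into a form from which the $p$ freshly-initialised wires can be progressively absorbed: a composition of (multi-)controlled and phase gates, so that each such gate touching a fresh wire can be eliminated by \eqref{initP-ancilla}, \eqref{initCNOT-ancilla}, or their multi-controlled generalisations (a multi-controlled gate one of whose controls is freshly initialised to $\ket{0}$ is the identity — a routine consequence of \eqref{initCNOT-ancilla} and the vanilla theory). Since a unitary fixing $\mathcal H_0$ pointwise is not automatically of this restricted shape, auxiliary ancillae are introduced via \eqref{initdest-ancilla} and used to reversibly compute, and later uncompute, the disjunction of the control bits, which is legitimate because the nontrivial part of $\interp{D}$ preserves the complement of $\mathcal H_0$. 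Once all gates on the fresh wires have been removed, the leftover circuit acts on the first $n$ wires with identity semantics and reduces to $\gI^{\otimes n}$ by \Cref{completenessQC}; discharging the auxiliary ancillae with \eqref{initdest-ancilla} finishes the lemma, and with it the theorem.

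The main obstacle is this key lemma. The normal-form step is pure deformation, and the reduction to the key lemma is bookkeeping plus a single appeal to vanilla completeness; by contrast, the key lemma is precisely where the three new equations \eqref{initdest-ancilla}, \eqref{initP-ancilla}, \eqref{initCNOT-ancilla} do genuine work. The delicate points there are choosing the right vanilla form for $D$, handling multi-controlled gates (whose inductive definitions must be unfolded and massaged with many derived identities), and making the auxiliary-ancilla compute--uncompute step interact cleanly with the freedom granted by vanilla completeness.
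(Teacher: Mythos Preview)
The paper does not itself prove this theorem; it is quoted from \cite{CDPV}, so there is no in-paper proof to compare against. Your high-level strategy --- deferred-ancilla normal form, then reduction via vanilla completeness to the ``key lemma'' that any vanilla $D$ with $\interp D$ fixing $\mathcal H_0$ pointwise satisfies $D\circ(\gI^{\otimes n}\otimes\ginit^{\otimes p})=\gI^{\otimes n}\otimes\ginit^{\otimes p}$ in $\QCancilla$ --- is the standard shape of the argument (essentially that of \cite{CDPV}, itself building on \cite{Huot2019universal,Carette2021completeness}); your norm argument showing that $W=\interp{D_2}^\dagger\interp{D_1}$ fixes $\mathcal H_0$ is correct.

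The gap is in your mechanism for the key lemma. The OR-ancilla trick does not close it: the rewrite you implicitly need, $W\otimes I=(\text{un-OR})\circ(\Lambda_a W)\circ(\text{OR})$, is \emph{not} an equality of unitaries --- it only holds on the subspace where the auxiliary ancillae are $\ket 0$ --- so vanilla completeness cannot produce it. Your justification that $\interp D$ preserves $\mathcal H_0^\perp$ guarantees that the OR \emph{bit} is stable under $W$, but not the intermediate garbage of a reversible OR computation; even with a Bennett-style garbage-free OR, the conjugated unitary still differs from $W\otimes I$ off the zero-ancilla subspace, so you have only traded one instance of the key lemma for another. What does go through --- and what the first sentence of your key-lemma paragraph nearly says --- is a \emph{direct} factorisation of $W$, with no auxiliary ancillae: any unitary fixing $\mathcal H_0$ is a product of unitaries each (black-)controlled on a single one of the $p$ ancilla qubits. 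To see this, reduce $W$ to two-level unitaries supported on $\mathcal H_0^\perp$; a two-level unitary between $\ket{x,y}$ and $\ket{x',y'}$ with $y,y'\neq 0$ either has $y_i=y'_i=1$ for some $i$ (hence is already controlled on qubit $n{+}i$), or can be routed through an intermediate $\ket{x'',y\vee y'}$ so that every factor is. By vanilla completeness $D$ is then provably equal to the corresponding product of singly-controlled circuits, and each factor collapses against its fresh $\ginit$ control via \eqref{initCNOT-ancilla}, \eqref{initP-ancilla}, and the vanilla rules.
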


It turns out that all instances of \Cref{ctrl2pi} for $n>3$ are derivable from the other equations within this new model. This result has been proved in \cite{CDPV}\footnote{More precisely, the inductive proof of \cref{euler3d} from its $3$-qubit case, given in Appendix E.3 of \cite{CDPV}, can be trivially adapted into a proof of \cref{ctrl2pi}.} and is a consequence of the fact that, using ancillae, multi-controlled operations can be provably\footnote{The proof of such equations is done using $\QCold$. However, the proof also holds using $\QC_3$ as \Cref{euler3d} is not used and all the other equations of $\QCold$ are provable in $\QC_3$.} expressed as follows.
\begin{equation*}\label{Paltdef}
  \tikzfigS{./qcancilla-completeness/initmctrlP}=\tikzfigS{./qcancilla-completeness/initmctrlPdef}
\end{equation*}

Moreover the remaining instance of \Cref{ctrl2pi} acting on $n=3$ qubits can be replaced by the following (more elementary) equation (cf. \Cref{appendix:proof5CXancilla}).
\begin{equation}\tag{5CX}
  \tikzfigS{./qcancilla-axioms/3CNOT-left}=\tikzfigS{./qcancilla-axioms/3CNOT-right}
\end{equation}

This leads to the complete equational theory $\QCancilla$ (resp. $\QCancillaprime$) for quantum circuits with ancillae depicted in \cref{fig:qcancilla}. It is remarkable that, in the presence of ancillae, all equations act on a bounded number of qubits while in the case of vanilla quantum circuits, an equation acting on $n$ qubits is necessary for any $n\in\N$ (see \Cref{thm:unboundedness}).

It turns out that, in these more expressive settings, most of the necessity arguments given for $\QC$ (cf. \Cref{thm:minimalityQC}) do not hold anymore, especially because of the fact that one can create new wires using \Cref{initdest-ancilla} and then apply equations acting on more qubits. In fact, some rules of the equational theories $\QCancilla$ and $\QCancillaprime$ can even be derived from the others. We leave open the question of whether $\QCancilla$ and $\QCancillaprime$ are minimal or not after these additional simplifications.

\begin{proposition}\label{prop:proofP0QCancillaprime}
  \Cref{P0-ancilla} can be derived from the other equations of $\QCancillaprime$.
\end{proposition}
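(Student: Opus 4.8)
The plan is to reduce \eqref{P0-ancilla} --- the statement that the phase gate of parameter $0$ equals the identity wire --- to the single-qubit instance of \eqref{ctrl2pi}, namely the equation $P(2\pi)=\gI$. Once the latter is available, \eqref{Paddition-ancilla} applied with $\varphi_1=0$ and $\varphi_2=2\pi$ rewrites the circuit ``$P(0)$ then $P(2\pi)$'' into $P(2\pi)$; chaining this with the identity law of the prop and with $P(2\pi)=\gI$ gives $P(0)=P(0)$ then $P(2\pi)=P(2\pi)=\gI$, which is \eqref{P0-ancilla}. So the whole content of the proposition lies in deriving $P(2\pi)=\gI$ inside $\QCancillaprime$, i.e. without \eqref{P0-ancilla}.

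For that sub-derivation I would adapt the argument of \cite{CDPV} showing that, in the presence of ancillae, \emph{all} instances of \eqref{ctrl2pi} --- including those on fewer than three qubits --- follow from the remaining rules: one unfolds the ancilla-based expression of a multi-controlled phase gate, uses \eqref{5CX-ancilla} together with \eqref{initCNOT-ancilla} and \eqref{initdest-ancilla} to collapse a multi-controlled $P(2\pi)$ to the identity, and then fixes its control qubits --- each created and discarded via \eqref{initdest-ancilla} --- so as to land on the one-qubit statement $P(2\pi)=\gI$. Observe that ancillae are genuinely needed here: by the minimality of $\QCprime$ (\Cref{thm:minimalityQCprime}) together with \eqref{Paddition-ancilla}, the equation $P(2\pi)=\gI$ cannot be obtained from the purely vanilla rules of $\QCancillaprime$ alone, so at least one of \eqref{initP-ancilla}, \eqref{initdest-ancilla}, \eqref{initCNOT-ancilla}, \eqref{5CX-ancilla} must be used essentially.

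The step I expect to be the real obstacle is keeping \eqref{P0-ancilla} out of this sub-derivation. The naive way of discarding the ancillary wires introduced along the way goes through $\gCNOT\circ\gCNOT=\gI$; but that identity is obtained from \eqref{CNOTPCNOT-ancilla} only once $P(0)=\gI$ is already known, so relying on it would make the argument circular. Consequently the ancillae have to be eliminated through \eqref{5CX-ancilla} and \eqref{initCNOT-ancilla} rather than by cancelling opposite CNots, and the bulk of the proof is the careful bookkeeping needed to carry this out; the complete derivation is deferred to the appendix.
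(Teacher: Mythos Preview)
Your plan takes a long detour and leaves the hard part undone. The paper's proof is a short, explicit eight-step derivation that never touches \eqref{5CX-ancilla}, multi-controlled gates, or the reduction of \eqref{ctrl2pi} to smaller instances. It goes:
\[
\gI \;\eqeqref{HH-ancilla}\; H\,H \;\eqeqref{initdest-ancilla}\; \text{(add ancilla)} \;\eqeqref{initCNOT-ancilla}\; \text{(insert a CNOT controlled by the ancilla between the two $H$'s)} \;\eqeqref{CZ-ancilla}\; \cdots
\]
At this point the \eqref{CZ-ancilla} rule rewrites the $H$--CNOT--$H$ pattern into the two-CNOT implementation of CZ, which carries $P(\tfrac\pi2)$ and $P(-\tfrac\pi2)$ gates. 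Because the control wire is the $\ket 0$-ancilla, \eqref{initP-ancilla} and \eqref{initCNOT-ancilla} strip every gate touching the ancilla; \eqref{initdest-ancilla} then removes the ancilla, leaving $P(\tfrac\pi2)\circ P(-\tfrac\pi2)$ on the single wire, which \eqref{Paddition-ancilla} turns into $P(0)$. That is the whole proof, and it uses only \eqref{HH-ancilla}, \eqref{initdest-ancilla}, \eqref{initCNOT-ancilla}, \eqref{CZ-ancilla}, \eqref{initP-ancilla}, and \eqref{Paddition-ancilla}.

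By contrast, your route through $P(2\pi)=\gI$ has a real gap. You correctly identify that $\gCNOT\circ\gCNOT=\gI$ is off-limits (its usual proof goes through \eqref{P0-ancilla}), but the same problem contaminates the derivation of the $3$-qubit instance of \eqref{ctrl2pi} from \eqref{5CX-ancilla} given in \Cref{appendix:proof5CXancilla}: that derivation uses both \eqref{CNOTCNOT} and \eqref{ctrlPminuspi}, whose standard proofs invoke \eqref{P0-ancilla}. Your proposed fix (``eliminate ancillae through \eqref{5CX-ancilla} and \eqref{initCNOT-ancilla} rather than by cancelling opposite CNots'') is not obviously sufficient, and the further step of descending from the $3$-qubit instance to the $1$-qubit statement $P(2\pi)=\gI$ by ``fixing control qubits via ancillae'' is not the content of the \cite{CDPV} argument you cite (which goes \emph{up} in qubit number, not down) and would itself require unfolding multi-controlled definitions without ever using \eqref{P0-ancilla}. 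None of this is carried out; the ``deferred to appendix'' sentence is precisely where the proof would have to happen.
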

\begin{proof}
  \begin{gather*}
    \tikzfigS{./qc-axioms/Id}
    \eqeqref{HH-ancilla}\tikzfigS{./qc-axioms/HH}
    \eqeqref{initdest-ancilla}\tikzfigS{./identitiesancilla/initdestsurHH}
    \eqeqref{initCNOT-ancilla}\tikzfigS{./identitiesancilla/initH2CNOTH2dest}\\[0.15cm]
    \eqeqref{CZ-ancilla}\tikzfigS{./identitiesancilla/initCZdest}
    \eqdeuxeqref{initP-ancilla}{initCNOT-ancilla}\tikzfigS{./identitiesancilla/initdestsurPpi2Pmoinspi2}\\[0.15cm]
    \eqeqref{initdest-ancilla}\tikzfigS{./identitiesancilla/Ppi2Pmoinspi2}
    \eqeqref{Paddition-ancilla}\tikzfigS{./qc-axioms/P0}
  \end{gather*}
\end{proof}

\begin{proposition}\label{prop:proofgphaseadditionQCancilla}
  \Cref{gphaseaddition-ancilla} can be derived from the other equations of $\QCancilla$.
\end{proposition}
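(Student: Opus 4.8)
The plan is to realise every global phase by routing it through an ancilla held in the state $\ket1$, where it behaves exactly like a phase gate $P(\varphi)$; the addition of two global phases then reduces to the addition of two phase gates, and $P$-addition — unlike global-phase addition — introduces no new global phase. The one nontrivial tool is a \emph{crossing identity} that lets a phase gate pass an $X$ gate at the price of a global phase. Concretely, the instance of \eqref{euler-ancilla} with $\alpha_1=\pi$, $\alpha_3=0$ and $\alpha_2=\varphi$ has $\gRX$ at angle $\pi$ (i.e.\ $X$) and $\gRX$ at angle $0$ (i.e.\ $\gI$) on its left-hand side, and the angle formulas turn its right-hand side into a global phase of parameter $\varphi$ together with $P(-\varphi)$ and an $X$; after deleting the trivial gates this reads, as circuits, ``$X$ then $P(\varphi)$'' $=$ ``global phase $\varphi$, then $P(-\varphi)$, then $X$''.

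I would first collect some routine derivations in $\QCancilla$. The trivial instance $\alpha_1=\alpha_2=\alpha_3=0$ of \eqref{euler-ancilla}, cleaned up with \eqref{P0-ancilla} and \eqref{HH-ancilla}, gives on a single wire that a floating global phase of parameter $0$ is the bare wire, hence — composing with $\ginit$ and $\gdest$ and using \eqref{initdest-ancilla} — that the global phase of parameter $0$ equals the empty circuit; with this in hand one re-runs the vanilla derivations of $P$-addition (as in \Cref{prop:simplederivableequations}), of the $2\pi$-periodicity of $P$, and of the self-inverseness of $X$ (via \eqref{HH-ancilla} and $P$-addition). Combining the crossing identity with \eqref{initP-ancilla} (which absorbs a phase gate placed immediately after an $\ginit$) yields the \emph{interchange lemma}: prefixing any circuit by ``$\ginit$ then $X$ then $P(\varphi)$'' equals prefixing it by ``global phase $\varphi$, then $\ginit$ then $X$'' — on a $\ket1$-ancilla a phase gate and the matching global phase may be swapped. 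Closing this with an $X$, \eqref{initdest-ancilla} and the self-inverseness of $X$ gives the realisation ``global phase $\varphi$'' $=\gdest\circ X\circ P(\varphi)\circ X\circ\ginit$. The proof then assembles: start from the product of two global phases of parameters $\varphi_1,\varphi_2$; rewrite it trivially, via \eqref{initdest-ancilla} and the prop axioms, as $\gdest$ applied to ``$X$, $X$, then the two floating phases'' applied to $\ginit$ (using $X\circ X=\gI$); slide both floating phases between the two $X$'s so that they sit on a $\ket1$-ancilla; use the interchange lemma twice to replace them by $P(\varphi_1)$ and $P(\varphi_2)$ side by side; merge these by $P$-addition into $P(\varphi_1+\varphi_2)$; and fold the ancilla back up by the realisation, obtaining the single global phase of parameter $\varphi_1+\varphi_2$.

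The step I expect to be the main obstacle is steering clear of circularity. Realising the two global phases as two \emph{separate} ancilla gadgets would leave two global phases to be merged, which is precisely \Cref{gphaseaddition-ancilla}; the entire shape of the argument — funnelling both phases onto one $\ket1$-ancilla and converting them to phase gates \emph{before} combining — is dictated by the requirement that the only combining step be $P$-addition and that the final global phase issue from a single use of the crossing identity, i.e.\ of \eqref{euler-ancilla}. A secondary nuisance is $2\pi$-bookkeeping: the \eqref{euler-ancilla}-instances only ever output parameters in $[0,2\pi)$, so one also needs the $2\pi$-periodicity of $P$ (from the preliminary step) and of the global phase (which follows from $P$-periodicity through the realisation above) in order to reach the real-parameter form stated in \Cref{gphaseaddition-ancilla}. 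Once this is arranged the ancilla plumbing is routine and closely parallels the derivation of \eqref{P0-ancilla} in \Cref{prop:proofP0QCancillaprime}.
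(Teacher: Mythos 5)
Your overall strategy is the paper's: the only axiom that can trade a floating global phase for a phase gate is \eqref{euler-ancilla}, so one funnels the two phases onto an ancilla, merges them there, and extracts the sum. The paper's derivation, however, never detaches the given phases from the circuit: it inserts $P(-2\varphi_i)$ after $\ginit$ via \eqref{initP-ancilla}, pads with \eqref{HH-ancilla} and \eqref{P0-ancilla}, and then \emph{folds the given} $\gs$ \emph{gates into} $R_X(-2\varphi_i)$ by the purely definitional rewrite \eqref{RXdef}; two applications of \eqref{euler-ancilla} then do all of the angle arithmetic, and a single final \eqref{gphaseempty-ancilla} cleans up. In particular it never needs \eqref{Paddition}. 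Your version instead manufactures a standalone ``crossing identity'' with literal $X$ gates and then merges with $P$-addition, and both of these moves are blocked in $\QCancilla\setminus\eqref{gphaseaddition-ancilla}$.

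Concretely: the instance of \eqref{euler-ancilla} you invoke has $R_X(\pi)$ and $R_X(0)$ on its left-hand side, and by \eqref{RXdef} these carry hidden global phases $\gs[-\nicefrac\pi2]$ and $\gs[0]$ (I write $\gs[\psi]$ for the phase generator of parameter $\psi$). To fire this rule on a circuit containing a bare $X=HP(\pi)H$ you must first conjure $\gs[-\nicefrac\pi2]$ and $\gs[0]$ out of nothing and later cancel them against their negatives --- which is exactly an instance of \eqref{gphaseaddition-ancilla}. Your proposed escape, deriving ``$\gs[0]=$ empty'' from the $\alpha_1=\alpha_2=\alpha_3=0$ instance of \eqref{euler-ancilla}, does not work: unfolding \eqref{RXdef}, that instance reads $\gs[0]\otimes\gs[0]\otimes(HP(0)HP(0)HP(0)H)=\gs[0]\otimes\gs[0]\otimes(P(0)HP(0)HP(0))$, i.e.\ both sides reduce to $\gs[0]\otimes\gs[0]$ and the instance is vacuous. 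Indeed, counting $\gs$ generators (including the one hidden in each $R_X$), every axiom of $\QCancilla$ except \eqref{gphaseempty-ancilla} preserves this count modulo $2$ --- \eqref{euler-ancilla} maps two hidden phases to one explicit plus one hidden, \eqref{eulerH-ancilla} changes the count by two --- so neither ``$\gs[0]=$ empty'' nor your crossing identity (in the literal-$X$ form, which is off by one $\gs$) can be derived without threading the count through \eqref{gphaseempty-ancilla} in a way your argument does not provide. The same obstruction hits $R_X(\pi)\circ R_X(\pi)$ versus $X\circ X=\gI$, and it also undermines your appeal to ``re-running the vanilla derivation'' of \eqref{Paddition}: the proof of \Cref{prop:simplederivableequations} explicitly uses \eqref{gphaseaddition} and \eqref{gphaseempty} to introduce and cancel exactly these hidden $R_X$ phases. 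So while the semantic content of your crossing identity is correct and the architecture is sound in spirit, the derivation as written is circular at the point where global phases must be created ex nihilo; the fix is the paper's folding trick, which only ever consumes the phases you were given.
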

\begin{proof}
  \begin{gather*}
    \tikzfigS{./identitiesancilla/sphi1sphi2}
    \eqeqref{initdest-ancilla}\tikzfigS{./identitiesancilla/sphi1sphi2surinitdest}
    \eqeqref{initP-ancilla}\tikzfigS{./identitiesancilla/sphi1sphi2sum2}\\[0.15cm]
    \eqdeuxeqref{HH-ancilla}{P0-ancilla}\tikzfigS{./identitiesancilla/sphi1sphi2sum3}\\[0.4cm]
    \eqeqref{RXdef}\tikzfigS{./identitiesancilla/sphi1sphi2sum4}\\[0.15cm]
    \eqeqref{euler-ancilla}\tikzfigS{./identitiesancilla/sphi1sphi2sum5}\\[0.15cm]
    \eqeqref{euler-ancilla}\tikzfigS{./identitiesancilla/sphi1sphi2sum6}\\[0.15cm]
    \eqeqref{RXdef}\tikzfigS{./identitiesancilla/sphi1sphi2sum7}\\[0.15cm]
    \eqdeuxeqref{HH-ancilla}{P0-ancilla}\tikzfigS{./identitiesancilla/sphi1sphi2sum8}
    \eqdeuxeqref{initP-ancilla}{initdest-ancilla}\tikzfigS{./identitiesancilla/sphi1sphi2sum9}
    \eqeqref{gphaseempty-ancilla}\tikzfigS{./qc-axioms/sphi1plusphi2}
  \end{gather*}
\end{proof}

\subsection{Quantum circuits with discard}

\begin{figure*}
  \fbox{\begin{minipage}{.985\textwidth}\begin{center}
    \vspace{-.5em}
    \hspace{-1.5em}\begin{subfigure}{0.22\textwidth}
      \begin{align}\label{HH-discard}\tag{H$^2$}\tikzfigM{./qc-axioms/HH}=\tikzfigM{./qc-axioms/Id}\end{align}
    \end{subfigure}\hspace{3em}
    \begin{subfigure}{0.22\textwidth}
      \begin{align}\label{initP-discard}\tag{AP}\tikzfigM{./qcancilla-axioms/initPphi}=\tikzfigM{./qcancilla-axioms/initId}\end{align}
    \end{subfigure}\hspace{3em}
    \begin{subfigure}{0.18\textwidth}
      \begin{align}\label{initdiscard-discard}\tag{D}\tikzfigM{./qcground-axioms/initdiscard}=\tikzfigM{./gates/empty}\end{align}
    \end{subfigure}
    
    \hspace{-1.5em}\begin{subfigure}{0.22\textwidth}
      \begin{align}\label{Hdiscard-discard}\tag{HD}\tikzfigM{./qcground-axioms/Hdiscard}=\tikzfigM{./qcground-axioms/Iddiscard}\end{align}
    \end{subfigure}\hspace{5em}
    \begin{subfigure}{0.22\textwidth}
      \begin{align}\label{Pdiscard-discard}\tag{PD}\tikzfigM{./qcground-axioms/Pphidiscard}=\tikzfigM{./qcground-axioms/Iddiscard}\end{align}
    \end{subfigure}\

    \hspace{-1.5em}\begin{subfigure}{0.23\textwidth}
      \begin{align}\label{initCNOT-discard}\tag{ACX}\tikzfigM{./qcancilla-axioms/initCNOT}=\tikzfigM{./qcancilla-axioms/initIdId}\end{align}
    \end{subfigure}\hspace{1.2em}
    \begin{subfigure}{0.23\textwidth}
      \begin{align}\label{CNOTdiscard-discard}\tag{CXD}\tikzfigM{./qcground-axioms/CNOTdiscard}=\tikzfigM{./qcground-axioms/IdIddiscarddiscard}\end{align}
    \end{subfigure}\hspace{1.2em}
    \begin{subfigure}{0.28\textwidth}
      \begin{align}\label{5CX-discard}\tag{5CX}\tikzfigM{./qcancilla-axioms/3CNOT-left}=\tikzfigM{./qcancilla-axioms/3CNOT-right}\end{align}
    \end{subfigure}

    \hspace{-1.5em}\begin{subfigure}{0.28\textwidth}
      \begin{align}\label{CNOTPCNOT-discard}\tag{C}\tikzfigM{./qc-axioms/CNOTPphiCNOT}=\tikzfigM{./qc-axioms/PphiId}\end{align}
    \end{subfigure}\hspace{1.2em}
    \begin{subfigure}{0.26\textwidth}
      \begin{align}\label{bigebre-discard}\tag{B}\tikzfigM{./qc-axioms/CNOTNOTC}=\tikzfigM{./qc-axioms/SWAPCNOT}\end{align}
    \end{subfigure}\hspace{1.2em}
    \begin{subfigure}{0.39\textwidth}
      \begin{align}\label{CZ-discard}\tag{CZ}\tikzfigM{./qc-axioms/H2CNOTH2}=\tikzfigM{./qc-axioms/CZ}\end{align}
    \end{subfigure}
    \vspace{.5em}
  \end{center}\end{minipage}}
  \vspace{.5em}

  \ovalbox{\begin{minipage}{.985\textwidth}\begin{center}
    \vspace{-.5em}
    \hspace{-1.5em}\begin{subfigure}{0.21\textwidth}
      \begin{align}\label{P0-discard}\tag{P$_0$}\tikzfigM{./qc-axioms/P0}=\tikzfigM{./qc-axioms/Id}\end{align}
    \end{subfigure}\hspace{5em}
    \begin{subfigure}{0.37\textwidth}\begin{align}\label{eulerH-discard}\tag{E$_{\textup{H}}$}\tikzfigM{./qc-axioms/H}=\tikzfigM{./qc-axioms/eulerH}\end{align}\end{subfigure}
    \hspace{-1.5em}\begin{subfigure}{0.58\textwidth}
      \begin{align}\label{euler-discard}\tag{Euler}\tikzfigM{./qcground-axioms/euler-left}=\tikzfigM{./qcground-axioms/euler-right}\end{align}
    \end{subfigure}
    \vspace{.5em}
  \end{center}\end{minipage}}
  \vspace{.5em}

  \ovalbox{\begin{minipage}{.985\textwidth}\begin{center}
    \vspace{-.5em}
    \hspace{-1.5em}\begin{subfigure}{0.38\textwidth}\begin{align}\label{Paddition-discard}\tag{P$_+$}\tikzfigM{./qcoriginal-axioms/Pphi1Pphi2}=\tikzfigM{./qcoriginal-axioms/Pphi1phi2}\end{align}\end{subfigure}\hspace{1.2em}
    \begin{subfigure}{0.54\textwidth}
      \begin{align}\label{euler-prime-discard}\tag{E'}\tikzfigM{./qcground-axioms/eulerprimebis-left}=\tikzfigM{./qcground-axioms/eulerprimebis-right}\end{align}
    \end{subfigure}
    \vspace{.5em}
  \end{center}\end{minipage}}
  \caption{\normalfont The equational theory $\QCground$ (resp. $\QCgroundprime$) is depicted by the top box together with the first (resp. second) rounded box.\label{fig:qcdiscard}}
\end{figure*}

In this section, we consider quantum circuits with discard, which are universal for CPTP maps. This extension enables one to trace out qubits. Contrary to quantum circuits with ancillae, any qubit can be discarded whatever its state is.

Following \cite{CDPV}, we define quantum circuits with discard as the circuits generated by $\gH$, $\gP$ and $\gCNOT$ together with two additional generators $\ginit$ and $\gdiscard$, respectively denoting qubit initialisation and qubit discarding.
\begin{definition}[Quantum circuits with discard]
  Let $\propQCground$ be the prop of quantum circuits with discard generated by ${\gH:1\to 1}$, $\gP:1\to 1$, $\gCNOT:2\to 2$, $\ginit:0\to 1$ and $\gdiscard:1\to 0$ for any $\varphi\in\R$.
\end{definition}

The ability to discard qubits implies that the evolution represented by such a circuit is not pure anymore. As a consequence, the semantics is a completely positive trace-preserving (CPTP) map acting on density matrices (trace-1 positive semi-definite Hermitian matrices). Formally, the new semantics is defined as follows.
\begin{definition}[Semantics]
  For any quantum $\propQCground$-circuit $C:n\to m$, let $\CPTP{C}: \mathcal M_{2^n,2^n}(\C) \to  \mathcal M_{2^m,2^m}(\C) $ be the \emph{semantics} of $C$ inductively defined as the linear map $\CPTP{C_2\circ C_1} = \CPTP{C_2}\circ\CPTP{C_1}$; $\CPTP{C_1\otimes C_2} = \CPTP{C_1}\otimes\CPTP{C_2}$; $\CPTP{\gdiscard} = \rho \mapsto tr(\rho)$ and for any other generator $g$, $\CPTP g = \rho \mapsto \interp g \rho \interp g^\dagger$, where $tr(M)$ is the trace of the matrix $M$ and $M^\dagger$ its adjoint. 
\end{definition}

Notice that the global phase generator $\gs$ is not part of the prop anymore. If it were, its interpretation would be $\CPTP{\gs} = \rho \mapsto \interp{\gs} \rho \interp{\gs}^\dagger = e^{i\varphi}\rho e^{-i\varphi}=\rho$, which is the same as that of the empty circuit. Thus, similarly to quantum circuits up to global phases, the X-rotation is defined as $\tikzfigS{./gates/RXtheta}\defeq\tikzfigS{./shortcut/HPthetaH-nogphase}$ (the same definition as \cref{fig:shortcutcircuits} but without the global phase).

There exist constructions \cite{Huot2019universal,Carette2021completeness} to turn any complete equational theory for quantum circuits into a complete equational theory for circuits with discard (see \cite{CDPV} for details).

Concretely, we consider the six following sound equations that  capture the behaviour
of the new generators:
\begin{align}
  \tag{D}\tikzfigS{./qcground-axioms/initdiscard}&=\tikzfigS{./gates/empty}\\[0cm]
  \tag{AP}\tikzfigS{./qcground-axioms/initPphi}&=\tikzfigS{./qcground-axioms/initId}\\[0cm]
  \tag{ACX}\tikzfigS{./qciso-axioms/initCNOT}&=\tikzfigS{./qciso-axioms/initIdId}\\[0cm]
  \tag{HD}\tikzfigS{./qcground-axioms/Hdiscard}&=\tikzfigS{./qcground-axioms/Iddiscard}\\[0cm]
  \tag{PD}\tikzfigS{./qcground-axioms/Pphidiscard}&=\tikzfigS{./qcground-axioms/Iddiscard}\\[0cm]
  \tag{CXD}\tikzfigS{./qcground-axioms/CNOTdiscard}&=\tikzfigS{./qcground-axioms/IdIddiscarddiscard}
\end{align}

The addition of those six equations is sufficient to get the completeness for quantum circuits with discard:
\begin{theorem}[Completeness \cite{CDPV}]
  Adding Equations \eqref{initdiscard-discard}, \eqref{initP-discard}, \eqref{initCNOT-discard}, \eqref{Hdiscard-discard}, \eqref{Pdiscard-discard} and \eqref{CNOTdiscard-discard} to a complete equational theory for vanilla quantum circuits up to global phases leads to a complete equational theory for quantum circuits with discard. 
\end{theorem}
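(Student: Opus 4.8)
The plan is to prove completeness by \emph{purification}: reduce every circuit with discard to a canonical shape consisting of a block of initialisations, then a vanilla (discard-free) circuit, then a block of discards, and match two such circuits having the same $\CPTP$-semantics via a Stinespring-type uniqueness argument. Since discarding is insensitive to global phases, the whole argument can be run modulo global phase, which is exactly why the assumed vanilla completeness is only needed up to global phases. (One could also derive the statement from the abstract discard constructions of \cite{Huot2019universal,Carette2021completeness}; the concrete route below makes the role of each new equation explicit.) \textbf{Step 1 (purified normal form).} By the deformation rules of the prop alone, every $\propQCground$-circuit $C:n\to m$ equals $(\mathrm{id}_m\otimes\gdiscard^{\otimes p})\circ U\circ(\mathrm{id}_n\otimes\ginit^{\otimes q})$ for some $p,q\ge0$ and some vanilla circuit $U:n+q\to m+p$: each $\ginit$ begins a wire with nothing to its left on that wire, hence slides to the top (reordering wires by swaps), and dually each $\gdiscard$ ends a wire and slides to the bottom. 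Given $C_1,C_2$ with $\CPTP{C_1}=\CPTP{C_2}$, put both in this form, obtaining $U_i:n+q_i\to m+p_i$; since $U_i$ is unitary, $n+q_i=m+p_i$, so $q_1-p_1=q_2-p_2$, and by inserting --- via $\eqref{initdiscard-discard}$ read right-to-left --- extra ancilla wires that traverse the vanilla block as identity strands, one may assume $q_1=q_2=:q$ and $p_1=p_2=:p$, so that $U_1,U_2:n+q\to m+p$ have the same type (and we may take the $p$ discarded outputs and the $q$ initialised inputs to be the last wires).

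\textbf{Step 2 (extracting the environment unitary).} The operators $K^{(i)}_e:=(\mathrm{id}_m\otimes\bra{e})\,\interp{U_i}\,(\mathrm{id}_n\otimes\ket{0}^{\otimes q})$ for $e\in\{0,1\}^p$ form a Kraus family of $\CPTP{C_i}$, both of size $2^p$; by the unitary freedom of Kraus representations there is a unitary $W$ on the $p$ discarded wires with $K^{(1)}_e=\sum_f W_{ef}K^{(2)}_f$, equivalently $\interp{U_1}(\mathrm{id}_n\otimes\ket{0}^{\otimes q})=(\mathrm{id}_m\otimes W)\,\interp{U_2}(\mathrm{id}_n\otimes\ket{0}^{\otimes q})$. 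By universality (\Cref{prop:universalityQC}) fix a vanilla circuit $D:p\to p$ with $\interp{D}$ equal to $W$ up to a global phase. Then the init-only circuits $U_1\circ(\mathrm{id}_n\otimes\ginit^{\otimes q})$ and $(\mathrm{id}_m\otimes D)\circ U_2\circ(\mathrm{id}_n\otimes\ginit^{\otimes q})$ --- which use $\ginit$ but no $\gdiscard$ --- have the same semantics up to a global phase.

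\textbf{Step 3 (matching the pure parts and absorbing the environment).} Invoke completeness of the init-only (isometry) fragment \emph{up to global phases}, which follows from the assumed complete theory for vanilla circuits up to global phases together with $\eqref{initP-discard}$ and $\eqref{initCNOT-discard}$ by the up-to-global-phase variant of the ancilla-completeness argument of \cite{CDPV}. This gives a derivation of $U_1\circ(\mathrm{id}_n\otimes\ginit^{\otimes q})=(\mathrm{id}_m\otimes D)\circ U_2\circ(\mathrm{id}_n\otimes\ginit^{\otimes q})$. Post-composing with $\mathrm{id}_m\otimes\gdiscard^{\otimes p}$ yields $C_1=(\mathrm{id}_m\otimes(\gdiscard^{\otimes p}\circ D))\circ U_2\circ(\mathrm{id}_n\otimes\ginit^{\otimes q})$. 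Finally, since $D$ is built only from $\gH$, $\gP$, $\gCNOT$ (and swaps) acting on wires that are \emph{all} discarded, peel $D$ off gate by gate starting from its last gate: each step is an instance of $\eqref{Hdiscard-discard}$, $\eqref{Pdiscard-discard}$ or $\eqref{CNOTdiscard-discard}$ (or a deformation for swaps), proving $\gdiscard^{\otimes p}\circ D=\gdiscard^{\otimes p}$. Hence $C_1=(\mathrm{id}_m\otimes\gdiscard^{\otimes p})\circ U_2\circ(\mathrm{id}_n\otimes\ginit^{\otimes q})=C_2$.

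\textbf{Main obstacle.} The conceptual heart is Step 2: converting the equality $\CPTP{C_1}=\CPTP{C_2}$ of CPTP maps into the environment unitary $W$, and in particular arranging, via $\eqref{initdiscard-discard}$, that the two dilations have the \emph{same} dimension so that $W$ is an honest unitary supported precisely on the discarded wires rather than a rectangular isometry. The delicate syntactic point is the absorption in Step 3, where it is essential that every wire touched by $D$ is discarded --- this is exactly why the strong forms $\eqref{Hdiscard-discard}$, $\eqref{Pdiscard-discard}$, $\eqref{CNOTdiscard-discard}$ are needed and suffice (a $\gCNOT$ with only one output discarded is not the identity on the other). Everything else --- the purely topological normal form of Step 1 and the init-only completeness used in Step 3 --- is bookkeeping or a direct import of the ancilla result.
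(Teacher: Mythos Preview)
The paper does not prove this theorem; it is stated with a bare citation to \cite{CDPV} (and the abstract discard constructions of \cite{Huot2019universal,Carette2021completeness}). Your purification/Stinespring argument is exactly the standard concrete route underlying those references, so you are reconstructing the intended proof rather than diverging from it.

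One step deserves tightening. In Step~3 you invoke completeness of the init-only fragment from vanilla-up-to-phase completeness plus \eqref{initP-discard} and \eqref{initCNOT-discard}, citing ``the up-to-global-phase variant of the ancilla-completeness argument of \cite{CDPV}''. The ancilla theorem quoted earlier in this paper is for the full ancilla language and uses \eqref{initdest-ancilla} as well; dropping that axiom and restricting to init-only circuits is correct (the number of inits is forced by the type, so no padding is needed there), but it is a separate lemma. What you actually need is: if a vanilla circuit $R$ satisfies $\interp{R}(\mathrm{id}_n\otimes\ket{0}^{\otimes q})=\mathrm{id}_n\otimes\ket{0}^{\otimes q}$ up to phase, then $R\circ(\mathrm{id}_n\otimes\ginit^{\otimes q})=\mathrm{id}_n\otimes\ginit^{\otimes q}$ is derivable from vanilla-up-to-phase completeness together with \eqref{initP-discard} and \eqref{initCNOT-discard}. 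The proof is to use vanilla completeness to rewrite $R$ as a product of gates each carrying at least one positive control on an ancilla wire, then absorb each such gate via its CNOT/phase decomposition (the CNOTs are eaten by \eqref{initCNOT-discard}, the phase on the control wire by \eqref{initP-discard}, and the leftover on the target wires collapses to the identity by vanilla completeness). Once you state this lemma, your Steps~1--3 give a complete and correct proof; the remaining points (padding via \eqref{initdiscard-discard}, unitary freedom of Kraus families, and the gate-by-gate absorption of $D$ into $\gdiscard^{\otimes p}$ using \eqref{Hdiscard-discard}, \eqref{Pdiscard-discard}, \eqref{CNOTdiscard-discard}) are all sound.
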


The derivations done with $\propQCancilla$-circuits are also true for $\propQCground$-circuits (just replace the instances of \Cref{initdest-ancilla} by \Cref{initdiscard-discard}). Thus, \Cref{ctrl2pi} can be completely removed if we add \Cref{5CX-discard}. This leads to the complete equational theory $\QCground$ (resp. $\QCgroundprime$) for quantum circuits with discard depicted in \cref{fig:qcdiscard}. Again, in these more expressive settings, most of the necessity arguments given for $\QC$ (cf. \Cref{thm:minimalityQC}) do not hold anymore and some additional rules can be derived from the others (cf. \Cref{prop:proofgphaseadditionQCancilla} and \Cref{prop:proofP0QCancillaprime}). The question of minimality is similarly still open.

\section*{Acknowledgments}
The authors want to thank Nathan Claudet and Renaud Vilmart for fruitful discussions. 
This work is supported by the the \emph{Plan France 2030} through the PEPR integrated project EPiQ ANR-22-PETQ-0007 and the HQI platform ANR-22-PNCQ-0002;  and by the European projects NEASQC and HPCQS.

\bibliography{ref}

\begin{thebibliography}{10}

\bibitem{Amy_2018}
Matthew Amy, Jianxin Chen, and Neil~J. Ross.
\newblock A finite presentation of {CNOT}-dihedral operators.
\newblock {\em Electronic Proceedings in Theoretical Computer Science}, 266:84--97, February 2018.
\newblock URL: \url{https://doi.org/10.4204%2Feptcs.266.5}, \href {https://doi.org/10.4204/eptcs.266.5} {\path{doi:10.4204/eptcs.266.5}}.

\bibitem{BackensPW16}
Miriam Backens, Simon Perdrix, and Quanlong Wang.
\newblock A simplified stabilizer {ZX}-calculus.
\newblock In Ross Duncan and Chris Heunen, editors, {\em Proceedings 13th International Conference on Quantum Physics and Logic, {QPL} 2016, Glasgow, Scotland, 6-10 June 2016}, volume 236 of {\em {EPTCS}}, pages 1--20, 2016.
\newblock \href {https://doi.org/10.4204/EPTCS.236.1} {\path{doi:10.4204/EPTCS.236.1}}.

\bibitem{BackensPW20}
Miriam Backens, Simon Perdrix, and Quanlong Wang.
\newblock Towards a minimal stabilizer {ZX}-calculus.
\newblock {\em Log. Methods Comput. Sci.}, 16(4), 2020.
\newblock \href {https://doi.org/10.23638/LMCS-16(4:19)2020} {\path{doi:10.23638/LMCS-16(4:19)2020}}.

\bibitem{Barenco1995gates}
Adriano Barenco, Charles~H. Bennett, Richard Cleve, David~P. DiVincenzo, Norman Margolus, Peter Shor, Tycho Sleator, John~A. Smolin, and Harald Weinfurter.
\newblock Elementary gates for quantum computation.
\newblock {\em Physical Review A}, 52(5):3457--3467, November 1995.
\newblock \href {https://arxiv.org/abs/quant-ph/9503016} {\path{arXiv:quant-ph/9503016}}, \href {https://doi.org/10.1103/physreva.52.3457} {\path{doi:10.1103/physreva.52.3457}}.

\bibitem{bian2022generators}
Xiaoning Bian and Peter Selinger.
\newblock Generators and relations for 2-qubit {C}lifford+{T} operators.
\newblock In Stefano Gogioso and Matty Hoban, editors, {\em Proceedings of the 19th International Conference on Quantum Physics and Logic, {QPL} 2022}, volume 394 of {\em {EPTCS}}, pages 13--28, 2023.
\newblock \href {https://doi.org/10.4204/eptcs.394.2} {\path{doi:10.4204/eptcs.394.2}}.

\bibitem{bian2023generators}
Xiaoning Bian and Peter Selinger.
\newblock Generators and relations for 3-qubit {C}lifford+{CS} operators.
\newblock In Shane Mansfield, Benoît Valiron, and Vladimir Zamdzhiev, editors, {\em Proceedings of the Twentieth International Conference on Quantum Physics and Logic, {QPL} 2023}, volume 384 of {\em {EPTCS}}, pages 114--126, 2023.
\newblock \href {https://doi.org/10.4204/EPTCS.384.7} {\path{doi:10.4204/EPTCS.384.7}}.

\bibitem{carette2020recipe}
Titouan Carette and Emmanuel Jeandel.
\newblock A recipe for quantum graphical languages.
\newblock In Artur Czumaj, Anuj Dawar, and Emanuela Merelli, editors, {\em 47th International Colloquium on Automata, Languages, and Programming (ICALP 2020)}, volume 168 of {\em Leibniz International Proceedings in Informatics (LIPIcs)}, pages 118:1--118:17, Dagstuhl, Germany, 2020. Schloss Dagstuhl--Leibniz-Zentrum f{\"u}r Informatik.
\newblock URL: \url{https://drops.dagstuhl.de/opus/volltexte/2020/12525}, \href {https://doi.org/10.4230/LIPIcs.ICALP.2020.118} {\path{doi:10.4230/LIPIcs.ICALP.2020.118}}.

\bibitem{Carette2021completeness}
Titouan Carette, Emmanuel Jeandel, Simon Perdrix, and Renaud Vilmart.
\newblock Completeness of graphical languages for mixed state quantum mechanics.
\newblock {\em ACM Transactions on Quantum Computing}, 2(4), December 2021.
\newblock \href {https://doi.org/10.1145/3464693} {\path{doi:10.1145/3464693}}.

\bibitem{CDPV}
Alexandre Cl{\'e}ment, No{\'e} Delorme, Simon Perdrix, and Renaud Vilmart.
\newblock Quantum circuit completeness: Extensions and simplifications.
\newblock In {\em {32nd EACSL Annual Conference on Computer Science Logic (CSL 2024)}}, Naples, Italy, February 2024.
\newblock URL: \url{https://hal.science/hal-04016498}, \href {https://arxiv.org/abs/2303.03117} {\path{arXiv:2303.03117}}, \href {https://doi.org/10.4230/LIPIcs.CSL.2024.20} {\path{doi:10.4230/LIPIcs.CSL.2024.20}}.

\bibitem{Clement2022lov}
Alexandre Cl\'{e}ment, Nicolas Heurtel, Shane Mansfield, Simon Perdrix, and Beno\^{i}t Valiron.
\newblock {LO$_v$}-calculus: A graphical language for linear optical quantum circuits.
\newblock In Stefan Szeider, Robert Ganian, and Alexandra Silva, editors, {\em 47th International Symposium on Mathematical Foundations of Computer Science (MFCS 2022)}, volume 241 of {\em Leibniz International Proceedings in Informatics (LIPIcs)}, pages 35:1--35:16, Dagstuhl, Germany, August 2022. Schloss Dagstuhl -- Leibniz-Zentrum f{\"u}r Informatik.
\newblock URL: \url{https://hal.science/hal-03926660}, \href {https://doi.org/10.4230/LIPIcs.MFCS.2022.35} {\path{doi:10.4230/LIPIcs.MFCS.2022.35}}.

\bibitem{CHMPV}
Alexandre Cl{\'e}ment, Nicolas Heurtel, Shane Mansfield, Simon Perdrix, and Beno{\^i}t Valiron.
\newblock A complete equational theory for quantum circuits.
\newblock In {\em 2023 38th Annual ACM/IEEE Symposium on Logic in Computer Science (LICS)}, pages 1--13, Boston, United States, June 2023. {IEEE}.
\newblock URL: \url{https://hal.science/hal-03926757}, \href {https://arxiv.org/abs/2206.10577} {\path{arXiv:2206.10577}}, \href {https://doi.org/10.1109/LICS56636.2023.10175801} {\path{doi:10.1109/LICS56636.2023.10175801}}.

\bibitem{clement:hal-02929291}
Alexandre Cl{\'e}ment and Simon Perdrix.
\newblock {PBS}-calculus: A graphical language for coherent control of quantum computations.
\newblock In Javier Esparza and Daniel Kr{\'a}{\v l}, editors, {\em {45th International Symposium on Mathematical Foundations of Computer Science (MFCS 2020)}}, volume 170 of {\em Leibniz International Proceedings in Informatics (LIPIcs)}, pages 24:1--24:14, Prague, Czech Republic, August 2020.
\newblock URL: \url{https://hal.archives-ouvertes.fr/hal-02929291}, \href {https://doi.org/10.4230/LIPIcs.MFCS.2020.24} {\path{doi:10.4230/LIPIcs.MFCS.2020.24}}.

\bibitem{clement2022minimising}
Alexandre Cl\'{e}ment and Simon Perdrix.
\newblock Resource optimisation of coherently controlled quantum computations with the {PBS}-calculus.
\newblock In Stefan Szeider, Robert Ganian, and Alexandra Silva, editors, {\em 47th International Symposium on Mathematical Foundations of Computer Science (MFCS 2022)}, volume 241 of {\em Leibniz International Proceedings in Informatics (LIPIcs)}, pages 36:1--36:15, Dagstuhl, Germany, August 2022. Schloss Dagstuhl -- Leibniz-Zentrum f{\"u}r Informatik.
\newblock URL: \url{https://hal.science/hal-03926639}, \href {https://arxiv.org/abs/2202.05260} {\path{arXiv:2202.05260}}, \href {https://doi.org/10.4230/LIPIcs.MFCS.2022.36} {\path{doi:10.4230/LIPIcs.MFCS.2022.36}}.

\bibitem{coecke2018zx}
Bob Coecke and Quanlong Wang.
\newblock {ZX}-rules for 2-qubit {C}lifford+{T} quantum circuits.
\newblock In {\em International Conference on Reversible Computation}, pages 144--161. Springer, 2018.

\bibitem{hadzihasanovic2018two}
Amar Hadzihasanovic, Kang~Feng Ng, and Quanlong Wang.
\newblock Two complete axiomatisations of pure-state qubit quantum computing.
\newblock In Anuj Dawar and Erich Gr{\"{a}}del, editors, {\em Proceedings of the 33rd Annual ACM/IEEE Symposium on Logic in Computer Science}, pages 502--511. {ACM}, 2018.
\newblock \href {https://doi.org/10.1145/3209108.3209128} {\path{doi:10.1145/3209108.3209128}}.

\bibitem{Huot2019universal}
Mathieu Huot and Sam Staton.
\newblock Universal properties in quantum theory.
\newblock In Peter Selinger and Giulio Chiribella, editors, {\em {\textrm Proceedings of the 15th International Conference on} Quantum Physics and Logic, {\textrm Halifax, Canada, 3-7th June 2018}}, volume 287 of {\em Electronic Proceedings in Theoretical Computer Science}, pages 213--223, 2019.
\newblock \href {https://doi.org/10.4204/EPTCS.287.12} {\path{doi:10.4204/EPTCS.287.12}}.

\bibitem{DBLP:conf/lics/JeandelPV18}
Emmanuel Jeandel, Simon Perdrix, and Renaud Vilmart.
\newblock A complete axiomatisation of the {ZX}-calculus for {C}lifford{+T} quantum mechanics.
\newblock In Anuj Dawar and Erich Gr{\"{a}}del, editors, {\em Proceedings of the 33rd Annual {ACM/IEEE} Symposium on Logic in Computer Science, {LICS} 2018, Oxford, UK, July 09-12, 2018}, pages 559--568. {ACM}, 2018.
\newblock \href {https://doi.org/10.1145/3209108.3209131} {\path{doi:10.1145/3209108.3209131}}.

\bibitem{jeandel2018diagrammatic}
Emmanuel Jeandel, Simon Perdrix, and Renaud Vilmart.
\newblock Diagrammatic reasoning beyond {Clifford+$T$} quantum mechanics.
\newblock In {\em Proceedings of the 33rd Annual ACM/IEEE Symposium on Logic in Computer Science}, pages 569--578, 2018.

\bibitem{Jeandel2020completeness}
Emmanuel Jeandel, Simon Perdrix, and Renaud Vilmart.
\newblock {Completeness of the {ZX}-Calculus}.
\newblock {\em {Logical Methods in Computer Science}}, {Volume 16, Issue 2}, June 2020.
\newblock URL: \url{https://lmcs.episciences.org/6532}, \href {https://doi.org/10.23638/LMCS-16(2:11)2020} {\path{doi:10.23638/LMCS-16(2:11)2020}}.

\bibitem{maclane1965categorical}
Saunders MacLane.
\newblock Categorical algebra.
\newblock {\em Bulletin of the American Mathematical Society}, 71(1):40--106, 1965.

\bibitem{Selinger13}
Peter Selinger.
\newblock Generators and relations for n-qubit {Clifford} operators.
\newblock {\em Log. Methods Comput. Sci.}, 11(2), 2015.
\newblock \href {https://doi.org/10.2168/LMCS-11(2:10)2015} {\path{doi:10.2168/LMCS-11(2:10)2015}}.

\bibitem{shi2018towards}
Borun Shi.
\newblock Towards minimality of {C}lifford+{T} {ZX}-calculus.
\newblock {\em Master's thesis University of Oxford}, 2018.

\bibitem{van2019completeness}
John van~de Wetering and Sal Wolffs.
\newblock Completeness of the phase-free {ZH}-calculus.
\newblock {\em arXiv preprint arXiv:1904.07545}, 2019.

\bibitem{van2019investigating}
Thomas van Ouwerkerk, Aleks Kissinger, and Freek Wiedijk.
\newblock Investigating the minimality of the {ZH}-calculus.
\newblock 2019.

\bibitem{vilmart2018nearoptimal}
Renaud Vilmart.
\newblock A near-minimal axiomatisation of {ZX}-calculus for pure qubit quantum mechanics.
\newblock In {\em 2019 34th Annual ACM/IEEE Symposium on Logic in Computer Science (LICS)}, pages 1--10, 2019.
\newblock \href {https://doi.org/10.1109/LICS.2019.8785765} {\path{doi:10.1109/LICS.2019.8785765}}.

\bibitem{zanasi2018interacting}
Fabio Zanasi.
\newblock {\em Interacting Hopf Algebras: the theory of linear systems}.
\newblock PhD thesis, ENS Lyon, (arXiv preprint arXiv:1805.03032), 2018.

\end{thebibliography}

\onecolumn
\appendix

\crefalias{section}{appendix}
\crefalias{subsection}{appendix}
\crefalias{subsubsection}{appendix}
\crefalias{paragraph}{appendix}
\crefalias{subparagraph}{appendix}

\section*{Appendix content}

\startcontents[sections]
\printcontents[sections]{l}{1}{\setcounter{tocdepth}{2}}

\section{Useful intermediate circuit equations}\label{appendix:identities}

\subsection{Proof of \cref{prop:simplederivableequations}}\label{appendix:proofsimplederivableequations}
\simplederivableequations*
\begin{proof} 
  To prove \Cref{P2pi}, we first introduce a $R_X(0)$ gate on both sides of the $P(2\pi)$ gate, we then use \eqref{euler} leading to a circuit  where all parameters are $0$,  equivalent to the identity: \begin{eqnarray*}
    \tikzfigS{./qc-completeness/P2pi_00}&
    \eqquatreeqref{HH}{P0}{gphaseempty}{gphaseaddition}&\tikzfigS{./qc-completeness/P2pi_01}\\
    &\eqeqref{euler}&\tikzfigS{./qc-completeness/P2pi_02}\\
    &\eqquatreeqref{HH}{P0}{gphaseempty}{gphaseaddition}&\tikzfigS{./qc-completeness/P2pi_03}
  \end{eqnarray*}
  To prove \Cref{Paddition},  we first introduce 3 pairs of Hadamard gates, a $P(0)$ gate and the appropriate global phases, to get a circuit $C_1$ such  that  \eqref{euler} can be applied with $\alpha_1=\varphi_1$, $\alpha_2=0$ and $\alpha_3= \varphi_2$. We obtain a circuit $C_2$ with some angles $\beta_i$ and notice that since \eqref{euler} has been applied with $\alpha_2=0$ the angles $\beta_i$ only depend on $\varphi_1+\varphi_2$. Thus  \eqref{euler}  also leads to $C_2$ when applied on a variant of $C_1$ where $\varphi_1$ is replaced by $\varphi_1+\varphi_2$, and $\varphi_2$ is replaced by $0$, which essentially terminates the proof: 
  \begin{eqnarray*}
    \!\tikzfigS{./qc-completeness/Paddition_00}\!\!\!\!\!\!\!\!\!\!\!\!
    &\eqeqref{HH}&\!\!\!\tikzfigS{./qc-completeness/Paddition_01}\\
    &\eqtroiseqref{RXdef}{gphaseempty}{gphaseaddition}&\!\!\!\tikzfigS{./qc-completeness/Paddition_02}\\
    &\eqeqref{P0}&\!\!\!\tikzfigS{./qc-completeness/Paddition_03}\\
    &\eqeqref{euler}&\!\!\!\tikzfigS{./qc-completeness/Paddition_04}\\
    &\eqeqref{euler}&\!\!\!\tikzfigS{./qc-completeness/Paddition_05}\\
    &\eqtroiseqref{HH}{P0}{gphaseaddition}&\!\!\!\tikzfigS{./qc-completeness/Paddition_06}\\
    &\eqtroiseqref{RXdef}{gphaseempty}{gphaseaddition}&\!\!\!\tikzfigS{./qc-completeness/Paddition_07}
  \end{eqnarray*}
  Similarly, we can prove \Cref{XPX} as follows
  \begin{eqnarray*}
    \!\tikzfigS{./qc-completeness/XPX_00}\!\!\!\!\!\!\!\!
    &\eqdeuxeqref{Xdef}{Zdef}&\!\!\!\!\!\tikzfigS{./qc-completeness/XPX_01}\\
    &\eqdeuxeqref{Paddition}{P2pi}&\!\!\!\!\!\tikzfigS{./qc-completeness/XPX_015}\\
    &\eqtroiseqref{RXdef}{gphaseempty}{gphaseaddition}&\!\!\!\!\!\tikzfigS{./qc-completeness/XPX_02}\\[.5em]
    &\eqeqref{euler}&\!\!\!\!\!\tikzfigS{./qc-completeness/XPX_03}\\[.5em]
    &\eqquatreeqref{P2pi}{Paddition}{gphaseempty}{gphaseaddition}&\!\!\!\!\!\tikzfigS{./qc-completeness/XPX_04}\\
    &\eqquatreeqref{HH}{P0}{gphaseaddition}{gphaseempty}&\!\!\!\!\!\tikzfigS{./qc-completeness/XPX_05}
  \end{eqnarray*}
  where the application of \Cref{euler} leads to $\beta_0=\varphi-\pi\bmod{2\pi}$ and $\beta_1=-\varphi\bmod{2\pi}$ because of the restriction $\beta_0,\beta_1\in[0,2\pi)$. Fortunately, this is not an issue as we can then use  \Cref{gphaseempty} and \Cref{gphaseaddition} for $\beta_0$, and \Cref{P2pi} and \Cref{Paddition} for $\beta_1$.
\end{proof}

\subsection{Proof of \cref{lem:allAxiomExeptEuler3D}}\label{appendix:proofallAxiomExeptEuler3D}
\allAxiomExeptEulerthreeD*
\begin{proof}
The proof that \Cref{eulerconditioned} can be replaced by \Cref{euler} is given in \Cref{sec:discussionEuler}. The proof of \Cref{S0,bigebrebis} is given below.

\begin{proof}[Proof of \Cref{S0}]
\begin{gather*}
  \tikzfigS{./qcoriginal-axioms/s0}
  \eqeqref{gphaseempty}\tikzfigS{./identities/s0s2pi}
  \eqeqref{gphaseaddition}\tikzfigS{./qc-axioms/s2pi}
  \eqeqref{gphaseempty}\tikzfigS{./qc-axioms/empty}
\end{gather*}
\end{proof}

\begin{proof}[Proof of \Cref{bigebrebis}]
\begin{gather*}
  \tikzfigS{./qcoriginal-axioms/SWAP_00}
  \eqeqref{bigebre}\tikzfigS{./qcoriginal-axioms/SWAP_01}
  \eqeqref{P0}\tikzfigS{./qcoriginal-axioms/SWAP_02}
  \eqeqref{CNOTPCNOT}\tikzfigS{./qcoriginal-axioms/SWAP_03}
  \eqeqref{P0}\tikzfigS{./qcoriginal-axioms/SWAP_04}
\end{gather*}
\end{proof}

The proof of \Cref{3CNOTstarget} is given below together with all needed intermediate equations.

\begin{multicols}{3}
  \noindent\begin{align}\label{CNOTCNOT}\tikzfigS{./qcoriginal-axioms/CNOTCNOT_00}=\tikzfigS{./qcoriginal-axioms/CNOTCNOT_03}\end{align}
  \begin{align}\label{PcommutCNOT}\tikzfigS{./qcoriginal-axioms/PcommutCNOT_00}=\tikzfigS{./qcoriginal-axioms/PcommutCNOT_02}\end{align}
  \begin{align}\label{Pphasegadget}\tikzfigS{./identities/Pphasegadget-step-0}=\tikzfigS{./identities/Pphasegadget-step-3}\end{align}
  \begin{align}\label{HHCNOTHH}\tikzfigS{./identities/HHCNOTHH_00}=\tikzfigS{./identities/HHCNOTHH_06}\end{align}
  \begin{align}\label{ctrlPminuspi}\tikzfigS{./identities/ctrlPminuspi_00}=\tikzfigS{./identities/ctrlPminuspi_04}\end{align}
\end{multicols}

\begin{proof}[Proof of \cref{CNOTCNOT}]
  \begin{gather*}
    \tikzfigS{./qc-completeness/CNOTCNOT}
    \eqeqref{P0}\tikzfigS{./qc-completeness/CNOTP0CNOT}
    \eqeqref{CNOTPCNOT}\tikzfigS{./qc-completeness/P0Id}
    \eqeqref{P0}\tikzfigS{./qc-completeness/IdId}
  \end{gather*}
\end{proof}

\begin{proof}[Proof of \cref{PcommutCNOT}]
  \begin{gather*}
    \tikzfigS{./qc-completeness/PphiCNOT}
    \eqeqref{CNOTCNOT}\tikzfigS{./qc-completeness/CNOTCNOTPphiCNOT}
    \eqeqref{CNOTPCNOT}\tikzfigS{./qc-completeness/CNOTPphi}
  \end{gather*}
\end{proof}

\begin{proof}[Proof of \cref{Pphasegadget}]
  \begin{gather*}
    \tikzfigS{./identities/Pphasegadget-step-0}
    =\tikzfigS{./identities/Pphasegadget-step-1}
    \eqeqref{bigebre}\tikzfigS{./identities/Pphasegadget-step-2}
    \eqeqref{CNOTPCNOT}\tikzfigS{./identities/Pphasegadget-step-3}
  \end{gather*}
\end{proof}

\begin{proof}[Proof of \cref{HHCNOTHH}]
  \begin{gather*}
    \tikzfigS{./identities/HHCNOTHH_00}
    \eqeqref{CZ}\tikzfigS{./identities/HHCNOTHH_01}
    \eqeqref{Pphasegadget}\tikzfigS{./identities/HHCNOTHH_04}
    \eqeqref{CZ}\tikzfigS{./identities/HHCNOTHH_05}
    \eqeqref{HH}\tikzfigS{./identities/HHCNOTHH_06}
  \end{gather*}
\end{proof}

\begin{proof}[Proof of \cref{ctrlPminuspi}]
  \begin{gather*}
    \tikzfigS{./identities/ctrlPminuspi_00}
    \eqtroiseqref{P0}{Paddition}{CNOTCNOT}\tikzfigS{./identities/ctrlPminuspi_01}
    \eqeqref{CZ}\tikzfigS{./identities/ctrlPminuspi_02}\\[0.4cm]
    \eqdeuxeqref{HH}{CNOTCNOT}\tikzfigS{./identities/ctrlPminuspi_03}
    \eqdeuxeqref{PcommutCNOT}{Pphasegadget}\tikzfigS{./identities/ctrlPminuspi_04}
  \end{gather*}
\end{proof}

\begin{proof}[Proof of \Cref{3CNOTstarget}]
  \begin{gather*}
    \tikzfigS{./qcoriginal-axioms/3CNOTstarget_00}
    \eqeqref{CNOTCNOT}\tikzfigS{./qcoriginal-axioms/3CNOTstarget_01}
    =\tikzfigS{./qcoriginal-axioms/3CNOTstarget_02}\\[0.4cm]
    \eqeqref{HHCNOTHH}\tikzfigS{./qcoriginal-axioms/3CNOTstarget_03}
    \eqeqref{HH}\tikzfigS{./qcoriginal-axioms/3CNOTstarget_04}\\[0.4cm]
    \eqdeuxeqref{CZ}{ctrlPminuspi}\tikzfigS{./qcoriginal-axioms/3CNOTstarget_05}\\[0.4cm]
    \eqdeuxeqref{Pphasegadget}{mctrlPdef}\tikzfigS{./qcoriginal-axioms/3CNOTstarget_06-swaps}
    \eqdeuxeqref{ctrl2pi}{HH}\tikzfigS{./qcoriginal-axioms/3CNOTstarget_07}
    =\tikzfigS{./qcoriginal-axioms/3CNOTstarget_08}
  \end{gather*}
\end{proof}

\end{proof}

\subsection{Intermediate provable equations}\label{appendix:intermediateresults}

In order to prove \Cref{euler3d} we require a huge amount of intermediate equations, most of which have already been proved in \cite{CHMPV} and \cite{CDPV} with the old versions of the equational theory. To use these equations as a step in a derivation in $\QC$, one would need to ensure that there is no circular reasoning, i.e.~proving $P$ using $Q$ and then $Q$ using $P$. As the equational theories of \cite{CHMPV} and \cite{CDPV} are not the same as $\QC$, this is not direct and one have to be unambiguous about the dependencies between all the equations.

\Cref{lem:allAxiomExeptEuler3D} ensures that all the results that have previously been proved in \cite{CDPV} and \cite{CHMPV} without using \Cref{euler3d} still hold in $\QC_3$ (and thus in $\QC$). In this section we sum up all such equations that we need and justify that there is no dependency issue. We also prove some new useful equations.

As the following equations have been proved in \cite{CDPV} and \cite{CHMPV} without using \Cref{euler3d}, by \Cref{lem:allAxiomExeptEuler3D}, they are also provable in $\QC_3$.

\begin{multicols}{3}
  \noindent\begin{align}\tag{R$_{\textup{X}2\pi}$}\tikzfigS{./qc-completeness/RX2pi_00}=\tikzfigS{./qc-completeness/RX2pi_02}\end{align}
  \begin{align}\label{XX}\tag{X$^2$}\tikzfigS{./identities/XX-step-0}=\tikzfigS{./identities/XX-step-3}\end{align}
  \begin{align}\label{ZZ}\tag{Z$^2$}\tikzfigS{./identities/ZZ-step-0}=\tikzfigS{./identities/ZZ-step-4}\end{align}
  \begin{align}\label{Zminuspi}\tag{P$_{-\pi}$}\tikzfigS{./identities/Zminuspi-step-0}=\tikzfigS{./identities/Zminuspi-step-4}\end{align}
  \begin{align}\tag{R$_{\textup{X}+}$}\tikzfigS{./identities/RXaddition-step-0}=\tikzfigS{./identities/RXaddition-step-4}\end{align}
  \begin{align}\label{ZRZ}\tag{R$_{\textup{X}-}$}\tikzfigS{./qcprime-completeness/ZRZ_00}=\tikzfigS{./qcprime-completeness/ZRZ_04}\end{align}
  \begin{align}\label{Heulerminus}\tikzfigS{./gates/H}=\tikzfigS{./identities/Heulerminus}\end{align}
  \begin{align}\label{HeulerRPR}\tikzfigS{./gates/H}=\tikzfigS{./identities/HeulerRPR}\end{align}
  \begin{align}\label{RXcommutCNOT}\tikzfigS{./identities/RXcommutCNOT-step-0}=\tikzfigS{./identities/RXcommutCNOT-step-5}\end{align}
  \begin{align}\label{XcommutCNOT}\tikzfigS{./identities/XcommutCNOT-step-0}=\tikzfigS{./identities/XcommutCNOT-step-5}\end{align}
  \begin{align}\label{RXphasegadget}\tikzfigS{./identities/RXphasegadget-step-0}=\tikzfigS{./identities/RXphasegadget-step-5}\end{align}
  \begin{align}\label{XCNOTXX}\tikzfigS{./qcoriginal-axioms/XCNOTXX_00}=\tikzfigS{./qcoriginal-axioms/XCNOTXX_15}\end{align}
  \begin{align}\label{ZCNOTZZ}\tikzfigS{./identities/CNOTZZ-step-0}=\tikzfigS{./identities/CNOTZZ-step-5}\end{align}
  \begin{align}\label{CNOTstargetcommut}\tikzfigS{./identities/CNOTstargetcommut-step-0}=\tikzfigS{./identities/CNOTstargetcommut-step-3}\end{align}
  \begin{align}\label{CNOTscontrolcommut}\tikzfigS{./qcoriginal-axioms/CNOTscontrolcommut_00}=\tikzfigS{./qcoriginal-axioms/CNOTscontrolcommut_08}\end{align}
  \begin{align}\label{3CNOTscontrol}\tikzfigS{./identities/3CNOTscontrol-step-0}=\tikzfigS{./identities/3CNOTscontrol-step-3}\end{align}
\end{multicols}

In \cite{CHMPV,CDPV}, the multi-controlled phase gate was defined as
\begin{equation}\label{mctrlPfromRX}
  \tikzfigS{./shortcut/mctrlPphi}\defeq\tikzfigS{./shortcut/mctrlPphidef}\left(=\tikzfigS{./shortcut/mctrlPphidefwithgphase}\right)
\end{equation}
and not with \Cref{mctrlPdef}. Thus, all the results that we state below are provable with the definition given by \Cref{mctrlPfromRX}. However, this is not an issue as we prove at the the end of this section that both definitions are in fact equivalent (we can derive one into the other in $\QC_3$).

Following \cite{CHMPV}, we use the standard bullet-based graphical notation for multi-controlled gates where the \emph{anti-control} \tikzfigS{./bulletnotation/w} is a shortcut notation for \tikzfigS{./bulletnotation/XbX}. Moreover, we define the more sophisticated notation $\Lambda^x_yG$ where $x\in\{0,1\}^k$ and $y\in\{0,1\}^l$ to denote the gate $G$ with $k$ controls above (where the $i$-th control is black if $x_i=1$ and white if $x_i=0$), and $l$ controls below (where the $j$-th control is black if $y_j=1$ and white if $y_j=0$). For instance, 
\begin{equation*}
  \Lambda^{10}_{1}P(\varphi)\defeq\tikzfigS{./bulletnotation/bwPb}
\end{equation*}

Additionally, we define $\bar\Lambda^x_yG\defeq\prod_{x'\in\{0,1\}^k,\; y'\in\{0,1\}^l,\; xy\ne x'y'}\Lambda^{x'}_{y'}G$ where $x\in\{0,1\}^k$ and $y\in\{0,1\}^l$. For instance,
\begin{equation*}
  \bar\Lambda^{111}_{\epsilon}P(\varphi)\defeq\tikzfigS{./bulletnotation/barlambda111P}
\end{equation*}
Note that the product can be permuted in any way by using \cref{blackwhitecontrolcommut} below (one can check that the proof of \cref{blackwhitecontrolcommut} does not rely on the notation $\bar\Lambda^x_yG$). For this reason, by abuse of notation, we will use the notation $\bar\Lambda^x_yG$ to denote the product taken in any order, and implicitly use \cref{blackwhitecontrolcommut} to reorder it when needed.

Many equations explaining the intuive behaviour of this bullet-based notation have been proved in \cite{CHMPV} without using \cref{euler3d}. Then, by using \Cref{lem:allAxiomExeptEuler3D} all those equations are provable in $\QC_3$. In the following we state all the required results as separated lemmata.

\begin{lemma}\label{swapcontrl}
  It follows from the axioms of $\QC_3$ that all control qubits play a similar role, i.e.~control can be provably swapped. This is illustrated by the following example.
  \begin{equation*}
    \tikzfigS{./bulletnotation/swapbbwPswap}=\tikzfigS{./bulletnotation/bwbP}
  \end{equation*}
\end{lemma}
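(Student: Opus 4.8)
The plan is to reduce the statement to a single "base" swap, namely the fact that two adjacent control qubits sitting immediately above the target can be exchanged, and then bootstrap to arbitrary permutations of the controls by conjugating with $\gSWAP$ gates. First I would recall that, by the inductive definition \eqref{mctrlPdef} of the multi-controlled phase gate, a multi-controlled $P(\varphi)$ with controls $q_1,\dots,q_k$ on target $q_{k+1}$ is built by peeling off the top control: it equals $\gCNOT_{q_1,q_2}$ (the CNot acting on the first two controls) conjugating a $(k-1)$-controlled $R_X$-type gate and a $(k-1)$-controlled $P$, living on $q_2,\dots,q_{k+1}$. So the whole object is a composite of CNot gates and the "inner" multi-controlled gates, and everything is expressed in terms of the generators plus gates living on $\{q_2,\dots,q_{k+1}\}$.

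The key step is to establish the $3$-qubit instance: swapping the two topmost controls of a doubly-controlled $P(\varphi)$ (i.e.\ $\Lambda\Lambda P(\varphi)$) is derivable in $\QC_3$. For this I would unfold \eqref{mctrlPdef} on both sides, obtaining on one side $\gCNOT_{12}$ around a controlled-$R_X$/controlled-$P$ on qubits $2,3$, and on the other side the same with the roles of qubits $1$ and $2$ interchanged; then I would use the already-derived commutation equations — $R_X$ and $X$ commuting through CNot on the control wire, the commutation of phase gates through CNot, and \eqref{bigebre} together with \eqref{CNOTPCNOT} — to slide the CNot gates past each other and re-express one side as the other, up to deformation (the naturality of $\gSWAP$ from \Cref{def:prop}). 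All of these ingredients are available: they are among the intermediate equations listed just before the lemma (\eqref{RXcommutCNOT}, \eqref{XcommutCNOT}, \eqref{PcommutCNOT}, \eqref{CNOTCNOT}, and the axioms \eqref{bigebre}, \eqref{CNOTPCNOT}, \eqref{CZ}), and all act on at most $3$ qubits.

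From the $3$-qubit base case the general statement follows by a standard argument. To swap two arbitrary controls $q_i$ and $q_j$ of a $k$-controlled gate, one conjugates by $\gSWAP$ gates to bring $q_i$ and $q_j$ into the two topmost control positions (this is legitimate because swapping two wires of a multi-controlled $P$ that are \emph{both} controls is just a relabelling, realised syntactically by pushing the $\gSWAP$ through using prop deformation on the identity-carrying wires and the already-available rules); then apply the $3$-qubit base case; then conjugate back. Transpositions of adjacent controls generate the full symmetric group, so arbitrary permutations of the controls are obtained by composing such moves. The details amount to bookkeeping with $\gSWAP$ gates and the shortcut notation for CNot gates acting on non-adjacent wires introduced right after \Cref{prop:universalityQC}.

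The main obstacle I anticipate is purely the base case: carefully unfolding \eqref{mctrlPdef} twice and verifying that the CNot gates and inner controlled gates can be commuted into the right positions without accidentally invoking an equation on more than $3$ qubits. In particular one must be careful that the "controlled-$R_X$" appearing in \eqref{mctrlPdef} interacts correctly with the CNot on the control wire — this is exactly what \eqref{RXcommutCNOT} handles — and that no global-phase bookkeeping is needed beyond \eqref{gphaseempty} and \eqref{gphaseaddition}. Everything else is routine diagram manipulation, so the proof should be short once the $3$-qubit identity is pinned down; in the paper this is presumably deferred to the appendix with the explicit diagrammatic derivation.
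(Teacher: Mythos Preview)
The paper does not re-prove this lemma from scratch: its proof is a one-line reference to Proposition~11 of \cite{CHMPV}, relying on \Cref{lem:allAxiomExeptEuler3D} to transport that result into $\QC_3$. Your proposal instead sketches a direct argument, which is a reasonable thing to attempt, but the bootstrap step contains a circularity.

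To reduce an arbitrary swap of controls $q_i,q_j$ to your $3$-qubit base case, you conjugate by $\gSWAP$ gates ``to bring $q_i$ and $q_j$ into the two topmost control positions'', justifying this as ``just a relabelling, realised syntactically by pushing the $\gSWAP$ through using prop deformation''. But prop naturality only tells you that $\gSWAP$-conjugation of a circuit yields the same circuit with its wire labels permuted in every generator; it does \emph{not} tell you that this relabelled circuit equals the multi-controlled $P$ \emph{shortcut} with the controls in the new order. The multi-controlled $P$ is defined asymmetrically by \eqref{mctrlPdef} (peeling off the top control first), so the relabelled circuit is syntactically different from the shortcut with permuted controls, and showing the two agree is exactly the content of the lemma. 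Your reduction to the base case therefore assumes what it is meant to prove.

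A non-circular version would run by induction on the number of controls, establishing each \emph{adjacent} transposition of controls: unfold the top control via \eqref{mctrlPdef}; either both controls to be swapped lie in the inner gate (induction hypothesis), or one of them is the just-unfolded top control (a genuine $3$-qubit calculation using the commutation equations you list). Arbitrary permutations then follow by composing adjacent transpositions. Your base-case sketch is along the right lines; it is only the general step that needs this inductive structure rather than the $\gSWAP$-conjugation shortcut.
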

\begin{proof}
  This is a direct consequence of the results of \cite{CHMPV}, namely Proposition~11 in \cite{CHMPV}.
\end{proof}

\begin{lemma}\label{mctrlPlift}
  It follows from the axioms of $\QC_3$ that the target qubit of a multi-controlled $P$ gate is actually equivalent to the control qubits. This is illustrated by the following example.
  \begin{equation*}
    \tikzfigS{./bulletnotation/bbbP}=\tikzfigS{./bulletnotation/bPbb}
  \end{equation*}
\end{lemma}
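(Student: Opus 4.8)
The plan is to turn a semantic symmetry into a syntactic derivation. A multi-controlled $P(\varphi)$ gate is the diagonal operator that multiplies a computational basis vector by $e^{i\varphi}$ precisely when every bulleted wire carries its prescribed bit, and such an operator treats the ``target'' wire exactly like a control wire. So I would first reduce the general statement, via \Cref{swapcontrl} (which lets me permute the control wires at will), to the single elementary move that exchanges the role of the target with the control wire immediately adjacent to it; iterating this move, reshuffling controls with \Cref{swapcontrl} in between, yields the full claim, including the displayed example.

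Next I would peel off one layer of the inductive definition \eqref{mctrlPdef} (equivalently \eqref{mctrlPfromRX}, which is the definition in force in this section until the two are shown equivalent): a $k$-controlled $P$ becomes a $(k-1)$-controlled $R_X$ conjugated by Hadamards plus phase-correction gates. The correction gates would be absorbed using the auxiliary identities already collected in this section, such as \eqref{RXcommutCNOT}, \eqref{RXphasegadget}, \eqref{PcommutCNOT}, \eqref{CNOTCNOT}, and the phase-arithmetic equations \eqref{Paddition} and \eqref{P2pi} of \Cref{prop:simplederivableequations}, all of which are $\QC_3$ equations. This leaves the two-qubit base case, namely that the controlled phase gate is symmetric in its two qubits, which I would obtain from \eqref{CZ}, \eqref{HH}, \eqref{CNOTPCNOT} and \eqref{bigebre}.

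Alternatively, and probably most economically, the ``target behaves like a control'' property of multi-controlled phase gates is already established in \cite{CHMPV}; since its proof there never invokes \eqref{euler3d}, \Cref{lem:allAxiomExeptEuler3D} transports the whole derivation verbatim into $\QC_3$. On either route the real work is bookkeeping rather than computation: one has to check that every imported identity is genuinely \eqref{euler3d}-free and acts on at most three qubits, so that the final derivation lives in $\QC_3$ with no hidden circularity. This dependency audit, rather than any individual rewrite, is the part I expect to require the most care.
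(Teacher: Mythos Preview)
Your proposal is correct, and your ``alternative'' route is exactly what the paper does: it simply notes that this is Proposition~12 of \cite{CHMPV}, whose derivation avoids \eqref{euler3d}, and invokes \Cref{lem:allAxiomExeptEuler3D} to transport that derivation into $\QC_3$. Your first route (an explicit induction via \eqref{mctrlPfromRX}, reducing to the two-qubit symmetry of the controlled phase) would also work, but the paper does not bother with it given that the result is already on record.
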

\begin{proof}
  This is a direct consequence of the results of \cite{CHMPV}, namely Proposition~12 in \cite{CHMPV}.
\end{proof}

\begin{lemma}\label{blackwhitecontrolsindependent}
  It follows from the axioms of $\QC_3$ that combining a control and anti-control on the same qubit makes the evolution independent of this qubit. This is illustrated by the following example.
  \begin{equation*}
    \tikzfigS{./bulletnotation/bwPwbbPw}=\tikzfigS{./bulletnotation/bvPw}
  \end{equation*}
\end{lemma}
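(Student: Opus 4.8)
The plan is to proceed exactly as for \Cref{swapcontrl} and \Cref{mctrlPlift}: the identity displayed here is one of the elementary structural properties of the bullet-based notation for multi-controlled gates, and it is established in \cite{CHMPV} without any use of \Cref{euler3d}. Hence, by \Cref{lem:allAxiomExeptEuler3D}, it is already derivable in $\QC_3$, and the proof reduces to pointing at the relevant proposition of \cite{CHMPV} and observing that \Cref{euler3d} is not invoked there.

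For a more self-contained derivation inside $\QC_3$, I would exploit the fact that the content is purely bookkeeping on top of a trivial semantic observation: a multi-controlled $P(\varphi)$ gate is diagonal, so the two factors on the left-hand side commute, and a phase that is applied both when the shared control wire carries a $0$ and when it carries a $1$ is applied regardless of that wire. To realise this syntactically, I would first unfold the anti-control on the shared wire via its definition as a positive control conjugated by a pair of $X$ gates, rewriting the white-controlled factor as an all-positive-controlled multi-controlled $P(\varphi)$ gate sandwiched between two $X$ gates on that wire. Using \Cref{swapcontrl} to bring the shared controls into adjacent positions, and the already-derived commutations of $X$ with multi-controlled phase gates (\cref{XcommutCNOT}, \cref{XCNOTXX}, together with \cref{ctrlPminuspi}-style identities), I would push the two conjugating $X$ gates through the positive-controlled factor until they meet and cancel by \cref{XX}. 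What remains is the product of two multi-controlled $P(\varphi)$ gates that agree on every control except the shared one; these can be fused into a single multi-controlled $P(\varphi)$ on the other wires by a \cref{Pphasegadget}-style absorption combined with \cref{Paddition} and the inductive definition \cref{mctrlPfromRX}.

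The main obstacle is not conceptual but notational: one has to track carefully which controls are shared between the two factors, and which controls sit above versus below the target, while the conjugating $X$ gates are transported across the circuit and the two diagonal blocks are merged. This combinatorial overhead is precisely what \cite{CHMPV} already handles in detail, which is why --- thanks to \Cref{lem:allAxiomExeptEuler3D} --- citing that work is the cleanest route, and I would take it as the primary argument, keeping the direct derivation above only as a sanity check of why the statement holds.
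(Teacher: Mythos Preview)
Your primary argument is exactly the paper's proof: it simply cites the relevant result of \cite{CHMPV} (specifically Proposition~15 there, together with \cref{swapcontrl}) and invokes \Cref{lem:allAxiomExeptEuler3D} to transport it to $\QC_3$. Your additional self-contained sketch is not needed for the paper's purposes, but it is a reasonable sanity check and does not conflict with the intended approach.
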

\begin{proof}
  This is a direct consequence of the results of \cite{CHMPV}, namely Proposition~15 in \cite{CHMPV} (together with \cref{swapcontrl}).
\end{proof}

\begin{lemma}\label{blackwhitecontrolcommut}
  It follows from the axioms of $\QC_3$ that controlled and anti-controlled gates commute (even if the target qubits are not the same in both gates). This is illustrated by the following examples.
  \begin{equation*}
    \tikzfigS{./bulletnotation/bbPwbbXb}=\tikzfigS{./bulletnotation/bbXbbbPw} \qquad\qquad
    \tikzfigS{./bulletnotation/bXbwwbPw}=\tikzfigS{./bulletnotation/wbPwbXbw}
  \end{equation*}
\end{lemma}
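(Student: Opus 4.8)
The plan is to import the commutation statement from \cite{CHMPV} and re-derive it in $\QC_3$, exactly as was done for \cref{swapcontrl,mctrlPlift,blackwhitecontrolsindependent}: the commutation of a controlled gate with an anti-controlled gate was established in \cite{CHMPV} without any use of \cref{euler3d}, so by \cref{lem:allAxiomExeptEuler3D} every equation invoked there is available in $\QC_3$ and the derivation transfers verbatim. For a self-contained argument one can proceed as follows.

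First I would unfold every anti-control using its defining shortcut $\tikzfigS{./bulletnotation/w}=\tikzfigS{./bulletnotation/XbX}$, turning both gates into ordinary (black-only) multi-controlled gates, possibly conjugated by $X$ on the wire $q$ carrying the conflicting control/anti-control. Using \cref{swapcontrl} to permute controls together with the prop deformation rules, I would reduce to the case where $q$ is the outermost control of each gate. Two situations then remain: (a) $q$ is a control of both gates, and (b) $q$ is a control of one gate and the target of the other. In case (a), after the $X$-conjugations the two gates act on the wire $q$ as a controlled $G$ on the $\ket1$-branch and a controlled $G'$ on the $\ket0$-branch; peeling off the outermost control with the inductive definitions \cref{mctrlPdef,mctrlRXdef} and sliding the two branches past each other using \cref{blackwhitecontrolsindependent} (a control together with an anti-control makes the circuit independent of $q$) and the elementary commutations \cref{CNOTscontrolcommut,CNOTstargetcommut,RXcommutCNOT,XcommutCNOT} yields the result by induction on the number of controls. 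In case (b) I would first lift the target of the $P$-gate to a control via \cref{mctrlPlift}, reducing to case (a); when the gate to be moved is not a $P$-gate I would first rewrite it through its Euler/CNot decomposition (all generators are built from $\gH$, $\gP$, $\gRX$ and $\gCNOT$) and propagate the commutation through each factor, using \cref{HHCNOTHH,Pphasegadget,RXphasegadget} where necessary.

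The delicate point is case (b): when a wire is simultaneously a control of one gate and the target of the other, one cannot invoke \cref{mctrlPlift} directly for an arbitrary controlled gate. The way around it is to systematically decompose every controlled gate into controlled $P$, controlled $\gRX$ and $\gCNOT$, for which the target-to-control lift and all needed commutations are already available in $\QC_3$; the remaining work is then a routine but bookkeeping-heavy induction on the number of control qubits.
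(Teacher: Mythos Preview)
Your proposal is correct and takes essentially the same approach as the paper: the paper's proof is nothing more than ``This is a direct consequence of the results of \cite{CHMPV}, namely Proposition~16 and Proposition~17 in \cite{CHMPV}'', i.e.\ precisely the import-and-transfer argument of your first paragraph via \cref{lem:allAxiomExeptEuler3D}. Your additional self-contained sketch goes beyond what the paper provides; it is not needed for the lemma as stated, but its overall shape (unfold anti-controls as $X$-conjugated controls, reduce by \cref{swapcontrl}, then handle the two cases by induction) is a reasonable reconstruction of what happens inside \cite{CHMPV}.
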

\begin{proof}
  This is a direct consequence of the results of \cite{CHMPV}, namely Proposition~16 and Proposition~17 in \cite{CHMPV}.
\end{proof}

\begin{lemma}\label{Palwayscommute}
  It follows from the axioms of $\QC_3$ that two multi-controlled $P$ gates always commute, regardless of the colours and positions of their controls, and of the positions of their targets. This is illustrated by the following example.
  \begin{equation*}
    \tikzfigS{./bulletnotation/wbPhi1bPhi2w}=\tikzfigS{./bulletnotation/bPhi2wwbPhi1}
  \end{equation*}
\end{lemma}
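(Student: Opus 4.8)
The plan is to reduce the statement to the special case of two \emph{purely black-controlled} $P$ gates and then appeal to the phase-gadget machinery already available from \cite{CHMPV}. Semantically there is nothing to prove: every multi-controlled $P$ gate is diagonal in the computational basis, so any two of them commute; the whole content is producing a derivation inside $\QC_3$.

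First I would normalise both gates. By \Cref{mctrlPlift} the target wire of a multi-controlled $P$ gate may be absorbed into its set of controls, so each gate becomes a diagonal gate acting \emph{symmetrically} on a set of qubits carrying black/white labels, with no distinguished target. Then \Cref{swapcontrl} lets me reorder the controls of each gate and \Cref{blackwhitecontrolcommut} lets me slide the controls not shared by the two gates out of the way, so that the two gates act on a common ordered list of qubits. Next I would eliminate white controls: a white control is, by definition, a black control conjugated by two $X$ gates, and unfolding this together with \Cref{blackwhitecontrolsindependent} and the elementary provable equations (\Cref{Paddition}, \Cref{mctrlPdef}, the commutation \Cref{XcommutCNOT}, and the control commutations already established) rewrites each of the two gates as a product of purely black-controlled $P$ gates indexed by subsets of its control set.

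Since the claim is symmetric in the two gates and a product of pairwise-commuting factors commutes factorwise, it then suffices to show that two purely black-controlled $P$ gates $\Lambda_S P(\alpha)$ and $\Lambda_{S'} P(\beta)$ commute for arbitrary finite qubit sets $S,S'$ and angles $\alpha,\beta\in\R$. For this I would put each in phase-gadget form — conjugating a single $P$ gate by a ladder of CNots realises $\Lambda_S P(\alpha)$, which is \Cref{Pphasegadget} (or \Cref{RXphasegadget}) applied inductively exactly as in \cite{CHMPV} — and then push the two gadgets past each other using \Cref{PcommutCNOT}-style commutations of $P$ gates with CNots, together with the trivial commutation of a bare $P$ with a bare $P$ via \Cref{Paddition}. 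The main obstacle is the bookkeeping when the two ladders share qubits: tracking which CNots may be commuted and in which order is delicate. But since \Cref{mctrlPlift} and \Cref{swapcontrl} have already erased the control/target asymmetry, this reduced statement is precisely one of the propositions proved in \cite{CHMPV} without using \Cref{euler3d}; hence, following the style of the preceding lemmata, the cleanest route is to invoke that proposition directly (together with \Cref{mctrlPlift} and \Cref{swapcontrl}) rather than redo the induction.
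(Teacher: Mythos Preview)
Your final route—normalise via \Cref{mctrlPlift}, \Cref{swapcontrl} and \Cref{blackwhitecontrolsindependent}, then invoke the relevant result of \cite{CHMPV}—is exactly the paper's proof, which cites Lemma~54 of \cite{CHMPV} together with those same three auxiliary lemmata. One caution on the sketch you then set aside: conjugating a single $P(\alpha)$ by a ladder of CNots does \emph{not} realise the multi-controlled gate $\Lambda_S P(\alpha)$; it realises a parity phase $\ket{x}\mapsto e^{i\alpha\,\bigoplus_{j\in S}x_j}\ket{x}$, not the conjunction phase $\ket{x}\mapsto e^{i\alpha\,\prod_{j\in S}x_j}\ket{x}$, so had you actually tried to carry out that phase-gadget induction rather than deferring to \cite{CHMPV}, it would have broken down.
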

\begin{proof}
  This is a direct consequence of the results of \cite{CHMPV}, namely of Lemma 54 in \cite{CHMPV} together with Lemmata \ref{swapcontrl}, \ref{mctrlPlift} and \ref{blackwhitecontrolsindependent}.
\end{proof}

\begin{lemma}\label{controlscommut}
  It follows from the axioms of $\QC_3$ that multi-controlled gates commute whenever their target qubits are not control qubits of the other. This is illustrated by the following example.
  \begin{equation*}
    \tikzfigS{./bulletnotation/bbRphi1bbRphi2}=\tikzfigS{./bulletnotation/bbRphi2bbRphi1}
  \end{equation*}
\end{lemma}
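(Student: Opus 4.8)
The plan is to reduce the statement to the already-established commutation of multi-controlled phase gates (\Cref{Palwayscommute}). First, I would normalise the situation using \Cref{swapcontrl} and the prop axioms (\Cref{def:prop}): two gates acting on disjoint sets of qubits commute for free by functoriality of the prop, so the only interesting case is when the two multi-controlled gates share some control qubits. By hypothesis, the target of each gate is not a control of the other, and (as in the depicted example) the two targets are distinct; using \Cref{swapcontrl} and \Cref{blackwhitecontrolcommut} we may moreover reorder the controls -- including possible anti-controls, via the $\Lambda^x_y$ notation -- into any convenient arrangement.

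Next, I would eliminate the $R_X$ rotations in favour of phase gates. By \Cref{RXdef} together with the inductive definitions \Cref{mctrlRXdef,mctrlPdef}, a multi-controlled $R_X(\theta)$ gate is equal to the corresponding multi-controlled $P(\theta)$ gate conjugated, on its (single) target qubit, by two Hadamard gates applied unconditionally; this is proved by induction on the number of control qubits, the key point being that in the ``control off'' branch the two Hadamards cancel by \Cref{HH}. After this rewriting each of the two multi-controlled gates becomes a multi-controlled $P$ gate, possibly surrounded by unconditioned Hadamards on its own target. Since the target of the first gate is neither a control nor the target of the second gate, those Hadamards commute with the entire second gate by functoriality, and symmetrically; what is left is precisely the commutation of two multi-controlled $P$ gates, which is \Cref{Palwayscommute}. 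As with the preceding lemmata, the same conclusion can alternatively be extracted directly from the relevant statement of \cite{CHMPV} combined with \Cref{swapcontrl}, \Cref{mctrlPlift} and \Cref{Palwayscommute}.

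The step I expect to be the main obstacle is the reduction in the previous paragraph: multi-controlled gates do not distribute over sequential composition, so one cannot simply write the multi-controlled version of $H\,P(\theta)\,H$ as a product of multi-controlled gates. The identity ``multi-controlled $R_X(\theta)$ equals multi-controlled $P(\theta)$ conjugated by unconditioned Hadamards on the target'' is correct only because the outer factors are target-only and control-independent, and establishing it carefully for the inductive definition and for every colour and position of the controls is the delicate part; the remainder is routine manipulation in the prop plus the appeal to \Cref{Palwayscommute}.
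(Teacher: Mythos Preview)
Your approach is correct and essentially the same as the paper's. The paper's proof is the single line ``a direct consequence of \cref{mctrlPfromRX} together with \cref{HH}, and \cref{Palwayscommute}''; the relationship you propose to establish by induction---that a multi-controlled $R_X$ is a multi-controlled $P$ conjugated by unconditioned Hadamards on the target, up to a multi-controlled phase---is precisely the content of \cref{mctrlPfromRX}, which the paper already has available at this point (and whose equivalence with \cref{mctrlPdef} is handled separately at the end of the section), so your ``main obstacle'' is already dispatched.
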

\begin{proof}
  This is a direct consequence of \cref{mctrlPfromRX} together with \cref{HH}, and \cref{Palwayscommute}.
\end{proof}

\begin{lemma}\label{PcommuteswithRXcontrols}
  It follows from the axioms of $\QC_3$ that a multi-controlled $P$ gate commutes with a multi-controlled $R_X$ gate whenever the target qubit of the $R_X$ gate is neither the target qubit nor a control qubit of the $P$ gate. This is illustrated by the following example.
  \begin{equation*}
    \tikzfigS{./bulletnotation/bPhi1bwPhi2b}=\tikzfigS{./bulletnotation/wPhi2bbPhi1b}
  \end{equation*}
\end{lemma}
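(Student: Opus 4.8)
The plan is to reduce the statement to \Cref{Palwayscommute} by conjugating the multi-controlled $R_X$ gate with two Hadamard gates placed on its target wire, and then to check that the two outer Hadamards slide freely past the multi-controlled $P$ gate.

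First I would establish, by induction on the number of control qubits (or directly by unfolding the inductive definitions), that a multi-controlled $R_X(\varphi_2)$ gate with target wire $q$ is equal in $\QC_3$ to the composition of a Hadamard on $q$, a multi-controlled $P(\varphi_2)$ gate with the same control qubits — same colours, same positions — and the same target $q$, and a Hadamard on $q$. The base case is exactly \eqref{RXdef}, and the inductive step follows from \eqref{mctrlRXdef} and \eqref{mctrlPdef} (equivalently \Cref{mctrlPfromRX}): the extra control introduced at each step acts on wires other than $q$, so the two Hadamards can be extracted outwards and the pair of Hadamards created in the middle is cancelled by \eqref{HH}. Using \Cref{swapcontrl} one may moreover assume that all control qubits involved are black and lie above their respective target wires, so that this rewriting is available for every instance, including those with anti-controls or with controls below the target.

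Now take the instance of \Cref{PcommuteswithRXcontrols} to be proved, with the $R_X(\varphi_2)$ gate targeting wire $q$. By hypothesis $q$ is neither the target nor a control wire of the $P(\varphi_1)$ gate, so a Hadamard on $q$ acts on a wire disjoint from every wire touched by the multi-controlled $P(\varphi_1)$ gate and therefore commutes with it. Rewriting the multi-controlled $R_X(\varphi_2)$ gate as above, the two outer Hadamards slide past the $P(\varphi_1)$ gate, and what remains is to commute the multi-controlled $P(\varphi_2)$ gate on target $q$ past the multi-controlled $P(\varphi_1)$ gate. This is precisely an instance of \Cref{Palwayscommute}, which says that any two multi-controlled $P$ gates commute regardless of the colours and positions of their controls and of the positions of their targets; in particular it is harmless that the target of the $P(\varphi_1)$ gate may be a control of the $P(\varphi_2)$ gate. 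Sliding the outer Hadamards back yields the claimed equality.

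The only delicate point is the preliminary rewriting of the multi-controlled $R_X$ gate: one must verify that the inductive definitions of the multi-controlled $R_X$ and $P$ gates are \emph{parallel} enough that the Hadamards on the target wire can be pulled out at every step of the induction. I expect this to be routine given \eqref{mctrlRXdef}, \eqref{mctrlPdef} and \eqref{HH}, but it is the step that requires care, since the remaining ingredients — commutation of $H$ with a subcircuit on disjoint wires, and \Cref{Palwayscommute} — are immediate or already available. (Alternatively, one could avoid the rewriting altogether when the target of the $P$ gate is not a control of the $R_X$ gate by invoking \Cref{controlscommut} directly, and use \Cref{mctrlPlift} otherwise, but the Hadamard-conjugation argument is uniform and shorter.)
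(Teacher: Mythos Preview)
Your approach is correct and is essentially the same as the paper's: the paper's proof is one line, citing \cref{mctrlPfromRX} together with \eqref{HH} and \Cref{Palwayscommute}, which is exactly your Hadamard-conjugation reduction. The only difference is that you propose to re-derive the $H$-conjugation identity by induction, whereas in the paper this identity is already available as \cref{mctrlPfromRX} (whose equivalence with the inductive definition \eqref{mctrlPdef} is established separately in \Cref{appendix:intermediateresults}), so no further work is needed there.
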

\begin{proof}
  This is a direct consequence of \cref{mctrlPfromRX} together with \cref{HH}, and \cref{Palwayscommute}.
\end{proof}

Additionally, the following equations have been shown in \cite{CHMPV} without using \Cref{euler3d}. Then, by \Cref{lem:allAxiomExeptEuler3D}, they are also provable in $\QC_3$.

\begin{multicols}{2}
  \noindent\begin{equation}\label{mctrls0}\tikzfigS{identities/mctrls0}=\tikzfigS{identities/Idn}\end{equation}
  \begin{equation}\label{mctrlP0}\tikzfigS{identities/mctrlP0}=\tikzfigS{identities/Idn}\end{equation}
  \begin{equation}\label{mctrlRX0}\tikzfigS{identities/mctrlRX0}=\tikzfigS{identities/Idn}\end{equation}
  \begin{equation}\label{mctrlsaddition}\tikzfigS{identities/mctrlsphi1mctrlsphi2}=\tikzfigS{identities/mctrlsphi12}\end{equation}
  \begin{equation}\label{mctrlPaddition}\tikzfigS{identities/mctrlPphi1Pphi2}=\tikzfigS{identities/mctrlPphi1phi2}\end{equation}
  \begin{equation}\label{mctrlRXaddition}\tikzfigS{identities/mctrlRXtheta1RXtheta2}=\tikzfigS{identities/mctrlRXtheta1theta2}\end{equation}
  \begin{equation}\label{mctrlRXcommutX}\tikzfigS{identities/XmctrlRX}=\tikzfigS{identities/mctrlRXX}\end{equation}
  \begin{equation}\label{mctrlPop}\tikzfigS{identities/XmctrlPphiX}=\tikzfigS{identities/mctrlPphimctrlPminusphi}\end{equation}
  \begin{equation}\label{mctrlPopfullctrlRX}\tikzfigS{identities/mctrlRXpimctrlPphimctrlRXmoinspi}=\tikzfigS{identities/mctrlPphimctrlPminusphi}\end{equation}
  \begin{align}\label{mctrlRXop}\tikzfigS{./identities/ZmctrlRXthetaZ}=\tikzfigS{./identities/mctrlRXminustheta}\end{align}
  \begin{equation}\label{mctrlZRX}\tikzfigS{identities/mctrlPpiRX}=\tikzfigS{identities/mctrlRXmoinsPpi}\end{equation}
  \begin{equation}\label{mctrlPcommutCNOT}\tikzfigS{identities/mctrlPcommutCNOTL}=\tikzfigS{identities/mctrlPcommutCNOTR}\end{equation}
  \begin{equation}\label{mctrlRXcommutCNOT}\tikzfigS{identities/mctrlRXCNOT}=\tikzfigS{identities/CNOTmctrlRX}\end{equation}
  \begin{equation}\label{mctrlRXphasegadget}\tikzfigS{identities/mctrlRXphasegadgetL}=\tikzfigS{identities/mctrlRXphasegadgetR}\end{equation}
  \begin{equation}\label{mctrlPcommutgadget}\tikzfigS{identities/mctrlPcommutgadgetL}=\tikzfigS{identities/mctrlPcommutgadgetR}\end{equation}
  \begin{equation}\label{mctrlRXcommutgadget}\tikzfigS{identities/mctrlRXcommutgadgetL}=\tikzfigS{identities/mctrlRXcommutgadgetR}\end{equation}
\end{multicols}

\begin{proof}
  Equations \eqref{mctrls0}, \eqref{mctrlP0}, \eqref{mctrlRX0}, \eqref{mctrlsaddition}, \eqref{mctrlPaddition} and \eqref{mctrlRXaddition} are proved in Proposition~13 of \cite{CHMPV}. \Cref{mctrlRXcommutX} is a particular case of Lemma~52 of \cite{CHMPV}. \Cref{mctrlPop} follows directly from Lemma~53 of \cite{CHMPV} (together with \cref{Palwayscommute}). The proof of \Cref{mctrlRXop} is given in the proof of Lemma~47 of \cite{CHMPV}. Equations \eqref{mctrlPcommutCNOT} and \eqref{mctrlRXcommutCNOT} can be straightforwardly proved using \cref{mctrlPfromRX} (of the present paper) and Lemma~50 of \cite{CHMPV}. \Cref{mctrlRXphasegadget} is the particular case of Lemma~46 of \cite{CHMPV} with only black controls. \Cref{mctrlRXcommutgadget} is a particular case of Equation~(24) of \cite{CHMPV} and \Cref{mctrlPcommutgadget} can be straightforwardly proved using \cref{mctrlPfromRX} (of the present paper) and \Cref{mctrlRXcommutgadget}. \Cref{mctrlZRX} corresponds to Lemma~58 of \cite{CHMPV} and \Cref{mctrlPopfullctrlRX} can be proved as follows.
  \begin{gather*}
    \tikzfigS{./identities/mctrlRXpimctrlPphimctrlRXmoinspi}
    \eqtroiseqref{XX}{mctrlRX0}{mctrlRXaddition}\tikzfigS{./identities/mctrlPopfullctrlRX_01}
    \overset{\text{\cref{blackwhitecontrolcommut}}}{=}\tikzfigS{./identities/mctrlPopfullctrlRX_02}\\[0.4cm]
    \overset{\text{\cref{blackwhitecontrolsindependent}}}{=}\tikzfigS{./identities/mctrlPopfullctrlRX_03}
    \eqtroiseqref{RXdef}{Xdef}{Zminuspi}\tikzfigS{./identities/mctrlPopfullctrlRX_04}
    \eqdeuxeqref{gphaseaddition}{S0}\tikzfigS{./identities/mctrlPopfullctrlRX_05}
    \eqeqref{mctrlPop}\tikzfigS{./identities/mctrlPphimctrlPminusphi}
  \end{gather*}
\end{proof}

All the previous results have already been proved in \cite{CHMPV} and \cite{CDPV}. This gives a bunch of intermediate equations that we can use. We also prove some new equations in $\QC$.

\begin{multicols}{4}
  \noindent\begin{equation}\label{ctrl2pi2qubits}\tikzfigS{identities/ctrlP2pi}=\tikzfigS{identities/Id2}\end{equation}
  \begin{equation}\label{mctrls2pi}\tikzfigS{identities/mctrls2pi}=\tikzfigS{identities/Idn}\end{equation}
  \begin{equation}\label{mctrlRX2pi}\tikzfigS{identities/mctrlRX2pi}=\tikzfigS{identities/mctrlspisurfil}\end{equation}
  \begin{equation}\label{mctrlRX4pi}\tikzfigS{identities/mctrlRX4pi}=\tikzfigS{identities/Idn}\end{equation}
\end{multicols}
\begin{equation}\label{mctrlPthroughRXpi}\tikzfigS{identities/mctrlRXpimctrlPw}=\tikzfigS{identities/mctrlPmctrlRXpi}\end{equation}

\begin{proof}[Proof of \Cref{ctrl2pi2qubits}]
  \begin{gather*}
    \tikzfigS{./identities/ctrlP2pi}
    \eqeqref{mctrlPdef}\tikzfigS{./identities/ctrlP2pi_04}
    \eqdeuxeqref{Zdef}{Zminuspi}\tikzfigS{./identities/ctrlP2pi_05}
    \eqeqref{ZCNOTZZ}\tikzfigS{./identities/ctrlP2pi_06}
    \eqdeuxeqref{CNOTCNOT}{ZZ}\tikzfigS{./identities/Id2}
  \end{gather*} 
\end{proof}

\begin{proof}[Proof of \Cref{mctrls2pi}]
  This is a direct consequence of Equations \eqref{gphaseempty}, \eqref{P2pi}, \eqref{ctrl2pi2qubits} and \eqref{ctrl2pi}.
\end{proof}

\begin{proof}[Proof of \Cref{mctrlRX2pi}]
  \begin{gather*}
    \tikzfigS{./identities/mctrlRX2pi}
    \eqtroiseqref{HH}{mctrls2pi}{mctrlsaddition}\tikzfigS{./identities/mctrlRX2pi_01}
    \eqeqref{mctrlPfromRX}\tikzfigS{./identities/mctrlRX2pi_02}
    \eqquatreeqref{ctrl2pi}{ctrl2pi2qubits}{P2pi}{HH}\tikzfigS{./identities/mctrlspisurfil}
  \end{gather*} 
\end{proof}
\begin{proof}[Proof of \Cref{mctrlRX4pi}]
  This is a direct consequence of Equations \eqref{mctrlRXaddition}, \eqref{mctrlRX2pi}, \eqref{mctrlsaddition} and \eqref{mctrls2pi}.
\end{proof}

\begin{proof}[Proof of \Cref{mctrlPthroughRXpi}]
  \begin{gather*}
    \tikzfigS{identities/mctrlPmctrlRXpi}
    \eqtroiseqref{XX}{mctrlRX0}{mctrlRXaddition}\tikzfigS{identities/mctrlPthroughRXpi1}
    \overset{\text{\cref{blackwhitecontrolsindependent}}}{=}\tikzfigS{identities/mctrlPthroughRXpi2}\\[0.4cm]
    \overset{\text{\cref{blackwhitecontrolcommut}}}{=}\tikzfigS{identities/mctrlPthroughRXpi3}
    \eqtroiseqref{RXdef}{Zdef}{Xdef}\tikzfigS{identities/mctrlPthroughRXpi4}
    \eqeqref{XX}\tikzfigS{identities/mctrlPthroughRXpi5}\\[0.4cm]
    \eqtroiseqref{RXdef}{Zdef}{Xdef}\tikzfigS{identities/mctrlPthroughRXpi6}
    \overset{\text{\cref{blackwhitecontrolsindependent}}}{=}\tikzfigS{identities/mctrlPthroughRXpi7}
    \eqtroiseqref{mctrlRXaddition}{mctrlRX0}{XX}\tikzfigS{identities/mctrlRXpimctrlPw}
  \end{gather*}
\end{proof}

Finally, all the results given above (except \cref{ctrl2pi2qubits}) are proved using the definition of the multi-controlled phase gate given by \Cref{mctrlPfromRX} instead of \Cref{mctrlPdef}. It was already proved in \cite{CDPV} (without using \cref{euler3d}) that \cref{mctrlPdef} can be derived from \cref{mctrlPfromRX}. To prove that the definitions are actually equivalent, it remains to prove that conversely, \cref{mctrlPfromRX} can be derived from \cref{mctrlPdef}. To do so, we cannot rely on \cref{mctrlPfromRX} itself, and therefore we cannot directly use the results above unless they do not involve multi-controlled phase gates. To circumvent this issue, we proceed by induction on the number of qubits. \Cref{mctrlPfromRX} on $1$ qubit follows directly from \crefnosortnocompress{RXdef,HH,S0,gphaseaddition}. To do the induction step, note that for any $n\geq 1$, to prove the results above on up to $n$ qubits, we only use \cref{mctrlPfromRX} on at most $n$ qubits. Hence, we can use the results above on up to $n$ qubits to prove \cref{mctrlPfromRX} on $n+1$ qubits. Note additionally that the property of \cref{swapcontrl} for multi-controlled $R_X$ gates does not involve multi-controlled $P$ gates, therefore it remains valid for any number of qubits. These two remarks being done, the induction step is as follows:
\begin{gather*}
  \tikzfigS{./identities/mctrlPinducdef_00}
  \overset{\text{\cref{swapcontrl} for $R_X$}}{\eqeqref{mctrlRXdef}}\tikzfigS{./identities/mctrlPinducdef_01}
  \eqeqref{HH}\tikzfigS{./identities/mctrlPinducdef_02}\\[0.4cm]
  \eqeqref{HHCNOTHH}\tikzfigS{./identities/mctrlPinducdef_03}
  \eqdeuxeqref{mctrls0}{mctrlsaddition}\tikzfigS{./identities/mctrlPinducdef_04}\\[0.4cm]
  \overset{\text{\cref{PcommuteswithRXcontrols}}}{=}\tikzfigS{./identities/mctrlPinducdef_05}
  \eqeqref{mctrlPfromRX}\tikzfigS{./identities/mctrlPinducdef_06}
\end{gather*}

\subsection{Proof of Equation \eqref{ctrl2pi} on 3 qubits in $\QCancilla$ and $\QCancillaprime$}\label{appendix:proof5CXancilla}

\begin{gather*}
  \tikzfigS{./qcancilla-completeness/ccP2pi_00-swaps}
  \eqdeuxeqref{mctrlPdef}{Pphasegadget}\tikzfigS{./qcancilla-completeness/ccP2pi_01}
  \eqdeuxeqref{CZ}{ctrlPminuspi}\tikzfigS{./qcancilla-completeness/ccP2pi_02}\\[0.4cm]
  \eqeqref{HH}\tikzfigS{./qcancilla-completeness/ccP2pi_03}
  \eqeqref{HHCNOTHH}\tikzfigS{./qcancilla-completeness/ccP2pi_04}
  =\tikzfigS{./qcancilla-completeness/ccP2pi_05}\\[0.4cm]
  \eqeqref{5CX-ancilla}\tikzfigS{./qcancilla-completeness/ccP2pi_06}
  \eqeqref{CNOTCNOT}\tikzfigS{./qcancilla-completeness/ccP2pi_07}
  \eqeqref{HH}\tikzfigS{./qcancilla-completeness/ccP2pi_08}
\end{gather*}

\section{Completeness of $\QC$}

\subsection{Proof of Equation \eqref{Euler2dmulticontrolled}}\label{preuveEuler2dmulticontrolled}

\newcommand{\eqrefJ}[1]{\textup{(\hyperref[Euler2dmulticontrolled]{\ref*{euler}$^*_{#1}$})}}
\newcommand{\crefJ}[1]{Equation~\textup{(\hyperref[Euler2dmulticontrolled]{\ref*{euler}$^*_{#1}$})}}
\newcommand{\eqrefJinv}[1]{\textup{(\hyperref[Euler2dmulticontrolled]{\ref*{euler}$^{*\textup{inv}}_{#1}$})}}
\newcommand{\crefJinv}[1]{Equation~\textup{(\hyperref[Euler2dmulticontrolled]{\ref*{euler}$^{*\textup{inv}}_{#1}$})}}

For this proof, it is useful to remark that the following equation is sound:
\begin{equation}\label{inverseEuler2dmulticontrolled}\tag{\ref*{euler}$^{*\textup{inv}}_n$}\tikzfigS{./Preuve-Euler2d/inverseEuler2dleft-multicontrolled-compact}=\tikzfigS{./Preuve-Euler2d/inverseEuler2dright-multicontrolled-compact}\end{equation}
where $\beta'_0=\beta_0+\frac{\alpha_1-\alpha_2+\alpha_3+\beta_1-\beta_2+\beta_3}2$, with $\alpha_i$ and $\beta_i$ being parameters of any instance of \Cref{eulerstar}.

The soundness of \cref{inverseEuler2dmulticontrolled} follows from that of \cref{Euler2dmulticontrolled} and of the axioms of $\QC$, thanks to the following derivation:
\begin{eqnarray*}
  \tikzfigS{./Preuve-Euler2d/inverseEuler2dleft-multicontrolled-compact}
  &\eqquatreeqref{mctrlPfromRX}{HH}{mctrls0}{mctrlsaddition}&\tikzfigS{./Preuve-Euler2d/inverseEuler2dmctrl0}\\[0.4cm]
  &\overset{\text{Lemmas \ref{Palwayscommute}}}{\overset{\text{and \ref{PcommuteswithRXcontrols}}}{\eqeqref{mctrlsaddition}}}&\tikzfigS{./Preuve-Euler2d/inverseEuler2dmctrl1}\\[0.4cm]
  &\eqdeuxeqref{Euler2dmulticontrolled}{mctrlsaddition}&\tikzfigS{./Preuve-Euler2d/inverseEuler2dmctrl2}\\[0.4cm]
  &\eqquatreeqref{mctrlPfromRX}{HH}{mctrls0}{mctrlsaddition}&\tikzfigS{./Preuve-Euler2d/inverseEuler2dmctrl3}\\[0.4cm]
  &\overset{\text{Lemmas \ref{Palwayscommute}}}{\overset{\text{and \ref{PcommuteswithRXcontrols}}}{\eqeqref{mctrlsaddition}}}&\tikzfigS{./Preuve-Euler2d/inverseEuler2dright-multicontrolled-compact}
\end{eqnarray*}

Note that once one knows that \cref{inverseEuler2dmulticontrolled} is sound, it actually corresponds to \cref{Euler2dmulticontrolled} applied from right to left together with \cref{mctrls0,mctrlsaddition}. Conversely, \cref{Euler2dmulticontrolled} corresponds to \cref{inverseEuler2dmulticontrolled} applied from right to left together with \cref{mctrls0,mctrlsaddition}. Hence, \cref{Euler2dmulticontrolled} and \cref{inverseEuler2dmulticontrolled} are equivalent in the presence of \cref{mctrls0,mctrlsaddition} (and therefore in the presence of the axioms of $\QC$).\bigskip

To prove that \cref{Euler2dmulticontrolled} follows from the axioms of $\QC$, we proceed by induction on $n$. \crefJ{1} corresponds to \cref{eulerstar}. It remains to derive \crefJ{n} using \crefJ{n-1} and the axioms of $\QC$, for any $n\geq2$. In the rest of the proof, note that whenever we add or remove $\,\minitikzfig[0.6]{shortcut/0+1P2pi}\,$ in a circuit, for simplicity we will justify that by \cref{ctrl2pi} even though for small $n$ that may actually correspond to \crefornosortnocompress{P2pi,ctrl2pi2qubits}.\medskip

One has, on the one hand,
\begin{eqnarray*}
  &&\tikzfigS{./Preuve-Euler2d/Euler2dleft-multicontrolled}\\[0.4cm]
  &\overset{\text{\cref{swapcontrl}}}{\eqdeuxeqref{mctrlRXdef}{mctrlRXcommutgadget}}&\tikzfigS{./Preuve-Euler2d/Euler2dinduction1}\\[0.4cm]
  &\eqeqref{HH}&\tikzfigS{./Preuve-Euler2d/Euler2dinduction2}\\[0.4cm]
  &
  \eqtroiseqref{CZ}{ctrlPminuspi}{mctrlPdef}&\tikzfigS{./Preuve-Euler2d/Euler2dinductioncorr3}\\[0.4cm]
  &\overset{\text{\cref{mctrlPlift}}}{=}&\tikzfigS{./Preuve-Euler2d/Euler2dinductioncorr4}\\[0.4cm]
  &\overset{\text{\Cref{Palwayscommute}}}{\eqdeuxeqref{mctrlPaddition}{mctrlP0}}&\tikzfigS{./Preuve-Euler2d/Euler2dinductioncorr5}\\[0.4cm]
  &\overset{\text{Lemmas \ref{blackwhitecontrolsindependent}}}{\overset{\text{and \ref{blackwhitecontrolcommut}}}{\eqdeuxeqref{mctrlPaddition}{mctrlP0}}}&\tikzfigS{./Preuve-Euler2d/Euler2dinductioncorr6}
\\[0.4cm]
  &\eqeqref{mctrlPdef}&\tikzfigS{./Preuve-Euler2d/Euler2dinductioncorr7}\\[0.4cm]
  &
  \overset{\text{\cref{controlscommut}}}{\eqdeuxeqref{mctrlPcommutCNOT}{mctrlPaddition}}&\tikzfigS{./Preuve-Euler2d/Euler2dinductioncorr8}\\[0.4cm]
  &\eqeqref{mctrlPcommutgadget}&\tikzfigS{./Preuve-Euler2d/Euler2dinductioncorr9}\\[0.4cm]
  &\eqdeuxeqref{mctrlRXcommutCNOT}{CNOTCNOT}&\tikzfigS{./Preuve-Euler2d/Euler2dinductioncorr10}\\[0.4cm]
  &
\overset{\eqrefJinv{n-1}}{=}&\tikzfigS{./Preuve-Euler2d/Euler2dinductioncorr11}\\[0.4cm]
  &\overset{\text{Lemmas \ref{Palwayscommute}}}{\overset{\text{and \ref{PcommuteswithRXcontrols}}}{=}}&\tikzfigS{./Preuve-Euler2d/Euler2dinductioncorr11bis}\\[0.4cm]
  &\eqeqref{mctrlRXcommutCNOT}&\tikzfigS{./Preuve-Euler2d/Euler2dinductioncorr12}\\[0.4cm]
  &\overset{\eqrefJ{n-1}}{\overset{\text{Lemmas \ref{Palwayscommute}}}{\overset{\text{and \ref{PcommuteswithRXcontrols}}}{\overset{\text{\eqref{mctrlsaddition}}}{=}}}}&\tikzfigS{./Preuve-Euler2d/Euler2dinductioncorr13}\\[0.4cm]
  &\eqdeuxeqref{mctrlPcommutgadget}{mctrlPaddition}&\tikzfigS{./Preuve-Euler2d/Euler2dinductioncorr14}\\[0.4cm]
  &\eqdeuxeqref{mctrlRXcommutCNOT}{CNOTCNOT}&\tikzfigS{./Preuve-Euler2d/Euler2dinductioncorr15}\\[0.4cm]
  &
\overset{\eqrefJinv{n-1}}{\overset{\text{Lemmas \ref{Palwayscommute}}}{\overset{\text{and \ref{PcommuteswithRXcontrols}}}{\overset{\text{\eqref{mctrlsaddition}}}{=}}}}&\tikzfigS{./Preuve-Euler2d/Euler2dinductioncorr16}\\[0.4cm]
  &\eqdeuxeqref{mctrlRXcommutCNOT}{mctrlRXaddition}&\tikzfigS{./Preuve-Euler2d/Euler2dinductioncorr17}.
\end{eqnarray*}

On the other hand,
\begin{eqnarray*}
  &&\tikzfigS{./Preuve-Euler2d/Euler2dright-multicontrolled}\\[0.4cm]
  &\overset{\eqref{mctrlPfromRX}\eqref{mctrlPaddition}}{\overset{\text{Lemmas \ref{Palwayscommute}}}{\overset{\text{and \ref{PcommuteswithRXcontrols}}}{=}}}&\tikzfigS{./Preuve-Euler2d/Euler2dright1}\\[0.4cm]
  &\eqdeuxeqref{eulerH}{Heulerminus}&\tikzfigS{./Preuve-Euler2d/Euler2dright2}\\[0.4cm]
  &\overset{\text{Lemmas \ref{blackwhitecontrolsindependent}}}{\overset{\text{and \ref{blackwhitecontrolcommut}}}{\eqdeuxeqref{mctrlPaddition}{mctrlP0}}}&\tikzfigS{./Preuve-Euler2d/Euler2dright3}\\[0.4cm]
  &\overset{\text{similar}}{\overset{\text{derivation}}{\overset{\text{to above}}{=}}}&\tikzfigS{./Preuve-Euler2d/Euler2dright4}\\[0.4cm]
  &
\overset{\eqrefJinv{n-1}}{\overset{\text{Lemmas \ref{Palwayscommute}}}{\overset{\text{and \ref{PcommuteswithRXcontrols}}}{\overset{\text{\eqref{mctrlsaddition}\eqref{mctrlRXaddition}}}{=}}}}&\tikzfigS{./Preuve-Euler2d/Euler2dright5}.
\end{eqnarray*}

Let $A\coloneqq\interp{\tikzfigS{./Preuve-Euler2d/RXPRXL-L}}$, $B\coloneqq\interp{\tikzfigS{./Preuve-Euler2d/RXPRXL-R}}$, $C\coloneqq\interp{\tikzfigS{./Preuve-Euler2d/RXPRXR-L}}$ and $D\coloneqq\interp{\tikzfigS{./Preuve-Euler2d/RXPRXR-R}}$. Let also\footnote{Note that the definition of $M_0$ implies that it is actually equal to the identity. However, this does not play any role in the proof.} $M_0\coloneqq(\bra{1...10}\otimes I_2)\interp{\tikzfigS{./Preuve-Euler2d/Euler2dleft-multicontrolled}}(\ket{1...10}\otimes I_2)$ and\linebreak $M_1\coloneqq(\bra{1...11}\otimes I_2)\interp{\tikzfigS{./Preuve-Euler2d/Euler2dleft-multicontrolled}}(\ket{1...11}\otimes I_2)$\medskip. With the abuse of notation of denoting $P(\varphi)\coloneqq\interp{\tikzfigS{./gates/Pphi}}=\begin{pmatrix}1&0\\0&e^{i\varphi}\end{pmatrix}$ and $X\coloneqq\interp{\tikzfigS{./gates/X}}=\begin{pmatrix}0&1\\1&0\end{pmatrix}$, it follows from the first derivation that
\[M_0=BP(\psi_8)A\]
and
\[M_1\eqspace e^{i\frac{\alpha_2}2}BXP(\psi_8)XA\eqspace e^{i(\frac{\alpha_2}2+\psi_8)}BP(-\psi_8)A.\]
On the other hand, it follows from the second derivation together with the soundness of \Cref{Euler2dmulticontrolled} that
\[M_0=DP(\theta_4)C\]
and
\[M_1\eqspace e^{i\theta_0}DXP(\theta_4)XC\eqspace e^{i(\theta_0+\theta_4)}DP(-\theta_4)C.\]
Therefore, one has
\[M_0\eqspace BP(\psi_8)A\eqspace DP(\theta_4)C\]
so that
\begin{eqnabc}\label{eqM0}B=DP(\theta_4)CA^\dag P(-\psi_8).\end{eqnabc}
One also has
\[M_0^\dag M_1\begin{array}[t]{@{\ }c@{\ }c@{\ }c@{\ }c}=& \bigl(A^\dag P(-\psi_8)B^\dag\bigr) \bigl(e^{i(\frac{\alpha_2}2+\psi_8)}BP(-\psi_8)A\bigr)&=& e^{i(\frac{\alpha_2}2+\psi_8)}A^\dag P(-2\psi_8)A\\[1em] =& \bigl(C^\dag P(-\theta_4)D^\dag\bigr) \bigl(e^{i(\theta_0+\theta_4)}DP(-\theta_4)C\bigr)&=& e^{i(\theta_0+\theta_4)}C^\dag P(-2\theta_4)C\end{array}\]
so that
\[e^{i(\frac{\alpha_2}2+\psi_8)}A^\dag P(-2\psi_8)A\eqspace e^{i(\theta_0+\theta_4)}C^\dag P(-2\theta_4)C\]
from which it follows that
\begin{eqnabc}\label{eqM0M1}e^{i(\frac{\alpha_2}2+\psi_8)} P(-2\psi_8)AC^\dag\eqspace e^{i(\theta_0+\theta_4)}AC^\dag P(-2\theta_4).\end{eqnabc}
Let us denote the entries of $AC^\dag$ by $AC^\dag\eqqcolon\begin{pmatrix}u_{00}&u_{01}\\u_{10}&u_{11}\end{pmatrix}$. \Cref{eqM0M1} becomes
\begin{eqnabc}\label{eqM0M1U}\begin{pmatrix}e^{i(\frac{\alpha_2}2+\psi_8)}u_{00}&e^{i(\frac{\alpha_2}2+\psi_8)}u_{01}\\e^{i(\frac{\alpha_2}2-\psi_8)}u_{10}&e^{i(\frac{\alpha_2}2-\psi_8)}u_{11}\end{pmatrix}=\begin{pmatrix}e^{i(\theta_0+\theta_4)}u_{00}&e^{i(\theta_0-\theta_4)}u_{01}\\e^{i(\theta_0+\theta_4)}u_{10}&e^{i(\theta_0-\theta_4)}u_{11}\end{pmatrix}.\end{eqnabc}
Since $AC^\dag$ is unitary, there are three cases:
\begin{itemize}
\item $u_{00},u_{01},u_{10},u_{11}\neq0$
\item $u_{01}=u_{10}=0$ and $u_{00},u_{11}\neq0$
\item $u_{00}=u_{11}=0$ and $u_{01},u_{10}\neq0$.
\end{itemize}
\bigskip
\begin{itemize}
\item If $u_{00},u_{01},u_{10},u_{11}\neq0$, then \Cref{eqM0M1U} gives $\frac{\alpha_2}2+\psi_8\equiv\theta_0+\theta_4\equiv\frac{\alpha_2}2-\psi_8\pmod {2\pi}$, so that $\psi_8\equiv 0\pmod\pi$; and $\theta_0+\theta_4\equiv\frac{\alpha_2}2+\psi_8\equiv\theta_0-\theta_4\pmod {2\pi}$, so that $\theta_4\equiv0\pmod\pi$. Thus there are four cases:
\begin{itemize}
\item $\psi_8\equiv\theta_4\equiv0\pmod{2\pi}$ and $\frac{\alpha_2}2\equiv\theta_0\pmod{2\pi}$
\item $\psi_8\equiv\theta_4\equiv\pi\pmod{2\pi}$ and $\frac{\alpha_2}2\equiv\theta_0\pmod{2\pi}$
\item $\psi_8\equiv\pi\pmod{2\pi}$, $\theta_4\equiv0\pmod{2\pi}$ and $\frac{\alpha_2}2\equiv\theta_0+\pi\pmod{2\pi}$
\item $\psi_8\equiv0\pmod{2\pi}$, $\theta_4\equiv\pi\pmod{2\pi}$ and $\frac{\alpha_2}2\equiv\theta_0+\pi\pmod{2\pi}$
\end{itemize}
In the first and fourth cases, the circuit obtained from the left-hand side of \Cref{Euler2dmulticontrolled} can be transformed as follows:
\begin{eqnarray*}
  &&\tikzfigS{./Preuve-Euler2d/Euler2dinductioncorr17}\\[0.4cm]
  &\eqtroiseqref{ctrl2pi}{mctrlPaddition}{CNOTCNOT}&\tikzfigS{./Preuve-Euler2d/Euler2dinductioncorr17-2}\\[0.4cm]
  &\overset{\eqrefJ{n-1}}{\overset{\text{Lemmas \ref{Palwayscommute}}}{\overset{\text{and \ref{PcommuteswithRXcontrols}}}{\overset{\text{\eqref{mctrlsaddition}\eqref{mctrlPaddition}\eqref{mctrlRXaddition}}}{=}}}}&\tikzfigS{./Preuve-Euler2d/Euler2dinductioncorr17-3}
\end{eqnarray*}
while in the second and third cases, it can be transformed as follows:
\begin{eqnarray*}
  &&\tikzfigS{./Preuve-Euler2d/Euler2dinductioncorr17}\\[0.4cm]
  &\eqdeuxeqref{ctrl2pi}{mctrlPaddition}&\tikzfigS{./Preuve-Euler2d/Euler2dinductioncorr17-4}\\[0.4cm]
  &\overset{\eqref{mctrlP0}\eqref{mctrlPaddition}}{\overset{\text{Lemmas \ref{Palwayscommute}}}{\overset{\text{and \ref{controlscommut}}}{=}}}&\tikzfigS{./Preuve-Euler2d/Euler2dinductioncorr17-5}\\[0.4cm]
  &\eqeqref{mctrlPdef}&\tikzfigS{./Preuve-Euler2d/Euler2dinductioncorr17-6}\\[0.4cm]
  &\eqeqref{ctrl2pi}&\tikzfigS{./Preuve-Euler2d/Euler2dinductioncorr17-7}\\[0.4cm]
  &\overset{\eqrefJ{n-1}}{\overset{\text{Lemmas \ref{Palwayscommute}}}{\overset{\text{and \ref{PcommuteswithRXcontrols}}}{\overset{\text{\eqref{mctrlsaddition}\eqref{mctrlPaddition}\eqref{mctrlRXaddition}}}{=}}}}&\tikzfigS{./Preuve-Euler2d/Euler2dinductioncorr17-8}.
\end{eqnarray*}
Concerning the circuit obtained from the right-hand side of \Cref{Euler2dmulticontrolled}, in the first and third cases, it can be transformed as follows:
\begin{eqnarray*}
  &&\tikzfigS{./Preuve-Euler2d/Euler2dright5}\\[0.4cm]
  &\eqtroiseqref{ctrl2pi}{mctrlPaddition}{CNOTCNOT}&\tikzfigS{./Preuve-Euler2d/Euler2dright5-2}\\[0.4cm]
  &\overset{\eqrefJ{n-1}}{\overset{\text{Lemmas \ref{Palwayscommute}}}{\overset{\text{and \ref{PcommuteswithRXcontrols}}}{\overset{\text{\eqref{mctrlsaddition}\eqref{mctrlPaddition}\eqref{mctrlRXaddition}}}{=}}}}&\tikzfigS{./Preuve-Euler2d/Euler2dright5-3}
\end{eqnarray*}
while in the second and fourth cases, it can undergo a similar transformation as the left-hand side did in the second and third cases.

In all four cases, the parameters of the two final circuits obtained from the two sides of \cref{Euler2dmulticontrolled} can be made equal as follows. First, in each case, the relation between $\frac{\alpha_2}2$ and $\theta_0$ implies that the parameter of the leftmost multi-controlled $P$ gate differs by a multiple of $2\pi$ between the two circuits. Hence it can be made the same by means of \cref{ctrl2pi,mctrlPaddition}. Then, in both circuits, one can make the parameters of the four other multi-controlled gates to satisfy the conditions of a normal form (see \cref{def:normalform}), by applying first \crefJinv{n-1}, then \crefJ{n-1} with an appropriate RHS (namely, corresponding to the multi-controlled version of a normal form -- such a RHS exists thanks to \cref{lem:unicityNormalFormEuler}), and finally \cref{mctrls2pi,mctrlsaddition}. Since the two circuit have same semantics, the uniqueness in \cref{lem:unicityNormalFormEuler} implies that one gets two identical circuits, which completes the derivation of \crefJ{n}.
\bigskip

\item If $u_{01}=u_{10}=0$ and $u_{00},u_{11}\neq0$, then \Cref{eqM0M1U} gives $\frac{\alpha_2}2+\psi_8\equiv\theta_0+\theta_4\pmod {2\pi}$ and $\frac{\alpha_2}2-\psi_8\equiv\theta_0-\theta_4\pmod {2\pi}$, so that $\psi_8\equiv\theta_4\pmod\pi$. Thus there are two cases:
\begin{itemize}
\item $\psi_8\equiv\theta_4\pmod{2\pi}$ and $\frac{\alpha_2}2\equiv\theta_0\pmod{2\pi}$
\item $\psi_8\equiv\theta_4+\pi\pmod{2\pi}$ and $\frac{\alpha_2}2\equiv\theta_0+\pi\pmod{2\pi}$.
\end{itemize}
Let $\upsilon_0\coloneqq\arg(u_{00})$ and $\upsilon_1\coloneqq\arg(\frac{u_{11}}{u_{00}})$, so that $A=e^{i\upsilon_0}P(\upsilon_1)C$. In the first case, \Cref{eqM0} gives $B=e^{-i\upsilon_0}DP(-\upsilon_1)$, and one has
\begin{eqnarray*}
  &&\tikzfigS{./Preuve-Euler2d/Euler2dright5}\\[0.4cm]
  &\eqdeuxeqref{ctrl2pi}{mctrlPaddition}&\tikzfigS{./Preuve-Euler2d/Euler2dright5-4}\\[0.4cm]
  &\eqtroiseqref{mctrlP0}{mctrlPaddition}{mctrlPcommutgadget}&\tikzfigS{./Preuve-Euler2d/Euler2dright5-5}\\[0.4cm]
  &\overset{\eqrefJ{n-1}}{\overset{\text{Lemmas \ref{Palwayscommute}}}{\overset{\text{and \ref{PcommuteswithRXcontrols}}}{\overset{\text{\eqref{mctrlsaddition}\eqref{mctrlPaddition}}}{=}}}}&\tikzfigS{./Preuve-Euler2d/Euler2dright5-5-1}\\[0.4cm]
  &\overset{\eqrefJinv{n-1}}{\overset{\text{Lemmas \ref{Palwayscommute}}}{\overset{\text{and \ref{PcommuteswithRXcontrols}}}{\overset{\text{\eqref{mctrlsaddition}}}{=}}}}&\tikzfigS{./Preuve-Euler2d/Euler2dinductioncorr17}.
\end{eqnarray*}
The fact that we can get the desired parameters at the last step is justified by the fact that up to a global phase, $\interp{\tikzfigS{./Preuve-Euler2d/diag-case1-L}}=\interp{\tikzfigS{./Preuve-Euler2d/RXPRXL-L-sansphase}}\mathrel{(=}A)$ and $\interp{\tikzfigS{./Preuve-Euler2d/diag-case1-R}}=\interp{\tikzfigS{./Preuve-Euler2d/RXPRXL-R}}\mathrel{(=}B)$, so that there exists an appropriate instance of \crefJinv{n-1} (indeed, all sound instances of \crefJinv{n-1} can be obtained from \crefJ{n-1}).\medskip

In the second case, \Cref{eqM0} gives $B=e^{-i\upsilon_0}DP(-\upsilon_1-\pi)$, and one has
\begin{eqnarray*}
  &&\tikzfigS{./Preuve-Euler2d/Euler2dright5}\\[0.4cm]
  &\eqdeuxeqref{ctrl2pi}{mctrlPaddition}&\tikzfigS{./Preuve-Euler2d/Euler2dright5-6}\\[0.4cm]
  &\eqtroiseqref{mctrlP0}{mctrlPaddition}{mctrlPcommutgadget}&\tikzfigS{./Preuve-Euler2d/Euler2dright5-7}\\[0.4cm]
  &\overset{\eqref{mctrlPaddition}}{\overset{\text{Lemmas \ref{Palwayscommute}}}{\overset{\text{and \ref{PcommuteswithRXcontrols}}}{\eqeqref{CNOTCNOT}}}}&\tikzfigS{./Preuve-Euler2d/Euler2dright5-8}\\[0.4cm]
  &\eqeqref{mctrlPdef}&\tikzfigS{./Preuve-Euler2d/Euler2dright5-9}\\[0.4cm]
  &\eqeqref{ctrl2pi}&\tikzfigS{./Preuve-Euler2d/Euler2dright5-10}\\[0.4cm]
  &\overset{\eqrefJ{n-1}}{\overset{\text{Lemmas \ref{Palwayscommute}}}{\overset{\text{and \ref{PcommuteswithRXcontrols}}}{\overset{\text{\eqref{mctrlsaddition}\eqref{mctrlPaddition}}}{=}}}}&\tikzfigS{./Preuve-Euler2d/Euler2dright5-5-2}\\[0.4cm]
  &\overset{\eqrefJinv{n-1}}{\overset{\text{Lemmas \ref{Palwayscommute}}}{\overset{\text{and \ref{PcommuteswithRXcontrols}}}{\overset{\text{\eqref{mctrlsaddition}}}{=}}}}&\tikzfigS{./Preuve-Euler2d/Euler2dinductioncorr17}.
\end{eqnarray*}
The fact that we can get the desired parameters at the last step is justified by a similar argument to the first case.\bigskip

\item If $u_{00}=u_{11}=0$ and $u_{01},u_{10}\neq0$, then 
\Cref{eqM0M1U} gives $\frac{\alpha_2}2+\psi_8\equiv\theta_0-\theta_4\pmod {2\pi}$ and $\frac{\alpha_2}2-\psi_8\equiv\theta_0+\theta_4\pmod {2\pi}$, so that $\psi_8\equiv-\theta_4\pmod\pi$. Thus there are two cases:
\begin{itemize}
\item $\psi_8\equiv-\theta_4\pmod{2\pi}$ and $\frac{\alpha_2}2\equiv\theta_0\pmod{2\pi}$
\item $\psi_8\equiv-\theta_4+\pi\pmod{2\pi}$ and $\frac{\alpha_2}2\equiv\theta_0+\pi\pmod{2\pi}$
\end{itemize}
Let $\upsilon'_0\coloneqq\arg(u_{10})$ and $\upsilon'_1\coloneqq\arg(\frac{u_{01}}{u_{10}})$, so that $A=e^{i\upsilon'_0}XP(\upsilon'_1)C$. In the first case, \Cref{eqM0} gives $B=e^{-i(\upsilon'_0+\psi_8)}DP(-\upsilon'_1)X$, and one has
\begin{eqnarray*}
  &&\tikzfigS{./Preuve-Euler2d/Euler2dright5}\\[0.4cm]
  &\eqdeuxeqref{ctrl2pi}{mctrlPaddition}&\tikzfigS{./Preuve-Euler2d/Euler2dright5-11}\\[0.4cm]
  &\eqtroiseqref{mctrlP0}{mctrlPaddition}{mctrlPcommutgadget}&\tikzfigS{./Preuve-Euler2d/Euler2dright5-12}\\[0.4cm]
  &\overset{\eqref{mctrls0}\eqref{mctrlsaddition}\eqref{mctrlPopfullctrlRX}}{=}&\tikzfigS{./Preuve-Euler2d/Euler2dright5-13-RX}\\[0.4cm]
  &\overset{
\eqref{mctrlRXcommutCNOT}}{\overset{\text{Lemmas \ref{Palwayscommute}}}{\overset{\text{and \ref{PcommuteswithRXcontrols}}}{\overset{\text{\eqref{mctrlsaddition}}}{=}}}}
&\tikzfigS{./Preuve-Euler2d/Euler2dright5-14-RX}\\[0.4cm]
  &\overset{\eqrefJ{n-1},}{\overset{\text{Lemmas \ref{Palwayscommute}}}{\overset{\text{and \ref{PcommuteswithRXcontrols},}}{\overset{\text{\eqref{mctrlsaddition}\eqref{mctrlPaddition}}}{=}}}}&\tikzfigS{./Preuve-Euler2d/Euler2dright5-5-3}\\[0.4cm]
  &\overset{\eqrefJinv{n-1},}{\overset{\text{Lemmas \ref{Palwayscommute}}}{\overset{\text{and \ref{PcommuteswithRXcontrols},}}{\overset{\text{\eqref{mctrlsaddition}}}{=}}}}&\tikzfigS{./Preuve-Euler2d/Euler2dinductioncorr17}.
\end{eqnarray*}
The fact that we can get the desired parameters at the last step is justified by a similar argument to the two preceding cases (note in particular that $\interp{\minitikzfig{./Preuve-Euler2d/RXpi}}=e^{-i\frac\pi2}X$ and $\interp{\minitikzfig{./Preuve-Euler2d/RXminuspishort}}=e^{i\frac\pi2}X$).\medskip

In the second case, \Cref{eqM0} gives $B=e^{-i(\upsilon'_0+\psi_8)}DP(-\upsilon'_1+\pi)X$, and one has
\begin{eqnarray*}
  &&\tikzfigS{./Preuve-Euler2d/Euler2dright5}\\[0.4cm]
  &\eqdeuxeqref{ctrl2pi}{mctrlPaddition}&\tikzfigS{./Preuve-Euler2d/Euler2dright5-15}\\[0.4cm]
  &\eqtroiseqref{mctrlP0}{mctrlPaddition}{mctrlPcommutgadget}&\tikzfigS{./Preuve-Euler2d/Euler2dright5-16}\\[0.4cm]
  &\overset{\eqref{mctrlPaddition}\eqref{CNOTCNOT}}{\overset{\text{Lemmas \ref{Palwayscommute}}}{\overset{\text{and \ref{PcommuteswithRXcontrols}}}{\eqeqref{mctrlPcommutCNOT}}}}&\tikzfigS{./Preuve-Euler2d/Euler2dright5-17}\\[0.4cm]
  &\eqdeuxeqref{mctrlPcommutgadget}{mctrlPdef}&\tikzfigS{./Preuve-Euler2d/Euler2dright5-18}\\[0.4cm]
  &\eqeqref{ctrl2pi}&\tikzfigS{./Preuve-Euler2d/Euler2dright5-19}\\[0.4cm]
  &\overset{\eqref{mctrls0}\eqref{mctrlsaddition}\eqref{mctrlPopfullctrlRX}}{=}&\tikzfigS{./Preuve-Euler2d/Euler2dright5-20-RX}\\[0.4cm]
  &\overset{
\eqref{mctrlRXcommutCNOT}}{\overset{\text{Lemmas \ref{Palwayscommute}}}{\overset{\text{and \ref{PcommuteswithRXcontrols}}}{\overset{\text{\eqref{mctrlsaddition}}}{=}}}}&\tikzfigS{./Preuve-Euler2d/Euler2dright5-21-RX}\\[0.4cm]
  &\overset{\eqrefJ{n-1},}{\overset{\text{Lemmas \ref{Palwayscommute}}}{\overset{\text{and \ref{PcommuteswithRXcontrols},}}{\overset{\text{\eqref{mctrlsaddition}\eqref{mctrlPaddition}}}{=}}}}&\tikzfigS{./Preuve-Euler2d/Euler2dright5-5-4}\\[0.4cm]
  &\overset{\eqrefJinv{n-1}}{\overset{\text{Lemmas \ref{Palwayscommute}}}{\overset{\text{and \ref{PcommuteswithRXcontrols}}}{\overset{\text{\eqref{mctrlsaddition}}}{=}}}}&\tikzfigS{./Preuve-Euler2d/Euler2dinductioncorr17}
\end{eqnarray*}
The fact that we can get the desired parameters at the last step is justified by a similar argument to the preceding cases.
\end{itemize}

\subsection{Proof of Equation \eqref{Euler3dsansphases-multicontrolled}}\label{preuveEuler3dsansphases-multicontrolled}

\begin{eqnarray*}
  &&\tikzfigS{./Preuve-Euler-sans-phases/Euler3dsansphasesleft-multicontrolled}\\[0.4cm]
  &\overset{\text{\cref{swapcontrl}}}{\eqeqref{mctrlRXdef}}&\tikzfigS{./Preuve-Euler-sans-phases/Eulersansphases1}\\[0.4cm]
  &\eqeqref{mctrlRXphasegadget}&\tikzfigS{./Preuve-Euler-sans-phases/Eulersansphases2}\\[0.4cm]
  &\eqeqref{mctrlRXcommutCNOT}&\tikzfigS{./Preuve-Euler-sans-phases/Eulersansphases3}\\[0.4cm]
  &\eqdeuxeqref{HHCNOTHH}{HH}&\tikzfigS{./Preuve-Euler-sans-phases/Eulersansphases35}\\[0.4cm]
  &\overset{\text{\Cref{controlscommut}}}{\overset{\eqref{CNOTCNOT}}{=}}&\tikzfigS{./Preuve-Euler-sans-phases/Eulersansphases4}\\[0.4cm]
  &\overset{\text{\cref{PcommuteswithRXcontrols}}}{\overset{\eqref{mctrlsaddition}\eqref{mctrls0}}{=}}&\tikzfigS{./Preuve-Euler-sans-phases/Eulersansphases5}\\[0.4cm]
  &\eqeqref{mctrlPfromRX}&\tikzfigS{./Preuve-Euler-sans-phases/Eulersansphases7}\\[0.4cm]
  &\eqeqref{Euler2dmulticontrolled}&\tikzfigS{./Preuve-Euler-sans-phases/Eulersansphases8}\\[0.4cm]
  &\overset{\text{\Cref{Palwayscommute} and \ref{PcommuteswithRXcontrols}}}{\overset{\eqref{mctrlsaddition}\eqref{mctrls0}}{=}}&\tikzfigS{./Preuve-Euler-sans-phases/Eulersansphases10}\\[0.4cm]
  &\overset{\text{\Cref{Palwayscommute} and \ref{PcommuteswithRXcontrols}}}{=}&\tikzfigS{./Preuve-Euler-sans-phases/Eulersansphases11}\\[0.4cm]
  &\eqeqref{mctrlPfromRX}&\tikzfigS{./Preuve-Euler-sans-phases/Eulersansphases13}\\[0.4cm]
  &\overset{\text{\Cref{Palwayscommute} and \ref{PcommuteswithRXcontrols}}}{\overset{\eqref{mctrlsaddition}\eqref{mctrls0}}{=}}&\tikzfigS{./Preuve-Euler-sans-phases/Eulersansphases14}\\[0.4cm]
  &\eqdeuxeqref{HH}{HHCNOTHH}&\tikzfigS{./Preuve-Euler-sans-phases/Eulersansphases15}\\[0.4cm]
  &\eqeqref{CNOTCNOT}&\tikzfigS{./Preuve-Euler-sans-phases/Eulersansphases16}\\[0.4cm]
  &\eqdeuxeqref{HH}{HHCNOTHH}&\tikzfigS{./Preuve-Euler-sans-phases/Eulersansphases17}\\[0.4cm]
  &\eqeqref{mctrlRXcommutCNOT}&\tikzfigS{./Preuve-Euler-sans-phases/Eulersansphases18}\\[0.4cm]
  &\eqeqref{mctrlRXphasegadget}&\tikzfigS{./Preuve-Euler-sans-phases/Eulersansphases19}\\[0.4cm]
  &\overset{\text{\cref{swapcontrl}}}{\eqeqref{mctrlRXdef}}&\tikzfigS{./Preuve-Euler-sans-phases/Eulersansphases20}
\end{eqnarray*}

\subsection{Proof of Equation \eqref{euler3d}}\label{preuveeuler3d}

We actually derive the following equation, which is the original 
version of \Cref{euler3d} given in \cite{CHMPV}, with one more parameter (and similar conditions on the angles):
\begin{equation}\label{euler3doriginal}\tag{E$_{\textup{3D}}^{\text{original}}$}
  \tikzfigS{./qcoriginal-axioms/euler3d-left}\;=\;\tikzfigS{./qcoriginal-axioms/euler3doriginal-right}
\end{equation}

It has been proved in \cite{CDPV} (Appendix C.4) that the respective right-hand sides of \Cref{euler3d} and \cref{euler3doriginal} can be transformed into each other by using the axioms of $\QCold$ without \cref{euler3d} together with \Cref{ctrl2pi}. Hence, by \cref{lem:allAxiomExeptEuler3D}, \cref{euler3d} and \cref{euler3doriginal} are equivalent in the presence of the axioms of $\QC$. In other words, deriving \cref{euler3doriginal} will give us \cref{euler3d}.

Note additionally that in \cite{CHMPV,CDPV}, \cref{euler3doriginal,euler3d} are written with the muti-controlled $P$ gates defined with \cref{mctrlPfromRX} instead of \cref{mctrlPdef}. However, since we have proved in \cref{appendix:intermediateresults} that the two definitions are equivalent, it suffices to derive \cref{euler3doriginal} written with the current definition given by \cref{mctrlPdef}.\bigskip

If $\gamma_4\equiv \pi\pmod{4\pi}$, then one has
\begin{eqnarray*}
  \tikzfigS{./qcoriginal-axioms/euler3d-left}&\eqdeuxeqref{mctrlRX4pi}{mctrlRXaddition}&\tikzfigS{./proof-Mstar/Euleravecphasescaspi1}\\[0.4cm]
  &\overset{\eqref{mctrlP0}\eqref{mctrlPaddition}\eqref{XX}}{\overset{\text{\cref{blackwhitecontrolsindependent}}}{=}}&\tikzfigS{./proof-Mstar/Euleravecphasescaspi2}\\[0.4cm]
  &\overset{\text{Lemmas \ref{PcommuteswithRXcontrols}}}{\overset{\text{and \ref{blackwhitecontrolcommut}}}{=}}&\tikzfigS{./proof-Mstar/Euleravecphasescaspi2etdemi}\\[0.4cm]
  &\eqtroiseqref{mctrlRXcommutX}{mctrlPthroughRXpi}{XX}&\tikzfigS{./proof-Mstar/Euleravecphasescaspi3}\\[0.4cm]
  &\eqeqref{Euler3dsansphases-multicontrolled}&\tikzfigS{./proof-Mstar/Euleravecphasescaspi4}\\[0.4cm]
  &\eqeqref{mctrlP0}&\tikzfigS{./proof-Mstar/Euleravecphasescaspi5}
\end{eqnarray*}

If $\gamma_4\equiv 3\pi\pmod{4\pi}$, then one has
\begin{eqnarray*}
  \tikzfigS{./qcoriginal-axioms/euler3d-left}&\eqdeuxeqref{mctrlRX4pi}{mctrlRXaddition}&\tikzfigS{./proof-Mstar/Euleravecphasescasmoinspi1}\\[0.4cm]
  &\eqeqref{mctrlRXaddition}&\tikzfigS{./proof-Mstar/Euleravecphasescasmoinspi2}\\[0.4cm]
  &\eqeqref{mctrlRX2pi}&\tikzfigS{./proof-Mstar/Euleravecphasescasmoinspi3}\\[0.4cm]
  &\overset{\eqref{mctrlP0}\eqref{mctrlPaddition}\eqref{XX}}{\overset{\text{\cref{blackwhitecontrolsindependent}}}{=}}&\tikzfigS{./proof-Mstar/Euleravecphasescasmoinspi4}\\[0.4cm]
  &\overset{\text{Lemmas \ref{PcommuteswithRXcontrols}}}{\overset{\text{and \ref{blackwhitecontrolcommut}}}{=}}&\tikzfigS{./proof-Mstar/Euleravecphasescasmoinspi5}\\[0.4cm]
  &\eqtroiseqref{mctrlRXcommutX}{mctrlPthroughRXpi}{XX}&\tikzfigS{./proof-Mstar/Euleravecphasescasmoinspi6}\\[0.4cm]
  &\eqeqref{Euler3dsansphases-multicontrolled}&\tikzfigS{./proof-Mstar/Euleravecphasescasmoinspi7}\\[0.4cm]
  &\eqeqref{mctrlP0}&\tikzfigS{./proof-Mstar/Euleravecphasescasmoinspi8}
\end{eqnarray*}

If $\gamma_4\not\equiv\pi\pmod{2\pi}$, then with $\theta=2\arctan\!\left(\cos(\frac{\gamma_3}2)\tan(\frac{\gamma_4}2)\right)$,\smallskip\newline $\beta_2=2\arg\!\left(\cos(\frac{\gamma_4}2)\sqrt{1+\cos^2(\frac{\gamma_3}2)\tan^2(\frac{\gamma_4}2)}+i\sin(\frac{\gamma_3}2)\sin(\frac{\gamma_4}2)\right)$ and\smallskip\newline $\beta_3=2\arg\!\left(\sin(\frac{\gamma_3}2)-i\dfrac{\cos(\frac{\gamma_3}2)}{\cos(\frac{\gamma_4}2)}\right)$, one has

\begin{eqnarray*}
  \tikzfigS{./qcoriginal-axioms/euler3d-left}&\eqdeuxeqref{mctrlRX0}{mctrlRXaddition}&\tikzfigS{./proof-Mstar/Euleravecphases1}\\[0.4cm]
  &\overset{\text{\cref{mctrlPlift}}}{=}&\tikzfigS{./proof-Mstar/Euleravecphases2}\\[0.4cm]
  &\eqeqref{Euler3dsansphases-multicontrolled}&\tikzfigS{./proof-Mstar/Euleravecphases3}\\[0.4cm]
  &\eqeqref{Euler2dmulticontrolled}&\tikzfigS{./proof-Mstar/Euleravecphases4}\\[0.4cm]
  &\eqeqref{mctrlPthroughRXpi}&\tikzfigS{./proof-Mstar/Euleravecphases5}\\[0.4cm]
  &\overset{\text{\cref{blackwhitecontrolcommut}}}{=}&\tikzfigS{./proof-Mstar/Euleravecphases6}\\[0.4cm]
  &\eqeqref{Euler3dsansphases-multicontrolled}&\tikzfigS{./proof-Mstar/Euleravecphases7}\\[0.4cm]
  &=&\tikzfigS{./proof-Mstar/Euleravecphases8}\\[0.4cm]
  &\overset{\eqref{mctrlRXcommutX}}{\overset{\text{\cref{mctrlPlift}}}{=}}&\tikzfigS{./proof-Mstar/Euleravecphases8etdemi}\\[0.4cm]
  &\overset{\eqref{Euler2dmulticontrolled}}{\overset{\text{Lemmas \ref{Palwayscommute}}}{\overset{\text{and \ref{PcommuteswithRXcontrols}}}{=}}}&\tikzfigS{./proof-Mstar/Euleravecphases9}\\[0.4cm]
  &\overset{\eqref{XX}\eqref{mctrlPop}}{\overset{\text{Lemmas \ref{Palwayscommute}}}{\overset{\text{and \ref{PcommuteswithRXcontrols}}}{\overset{\text{\eqref{mctrlPaddition}}}{=}}}}&\tikzfigS{./proof-Mstar/Euleravecphases10}\\[0.4cm]
  &\eqdeuxeqref{mctrlRXcommutX}{XX}&\tikzfigS{./proof-Mstar/Euleravecphases11}\\[0.4cm]
  &\overset{\eqref{mctrlP0}}{\overset{\text{Lemmas \ref{mctrlPlift}}}{\overset{\text{and \ref{Palwayscommute}}}{=}}}&\tikzfigS{./proof-Mstar/Euleravecphases12}
\end{eqnarray*}

Finally, in all cases, 
it has been proved in \cite{CHMPV} (end of Appendix D-F) that the angles can be made to satisfy the conditions of \Cref{euler3doriginal} by using the axioms of the equational theory given in that paper, without using \Cref{euler3doriginal} except in the form of \Cref{ctrl2pi} (and the corollaries of \Cref{ctrl2pi}) and of \Cref{Euler2dmulticontrolled}. The other axioms of the equational theory of \cite{CHMPV} were proved in \cite{CDPV} to follow from the axioms of $\QCold$, and the derivations do not use \Cref{euler3d}. As we have already proved in \cref{lem:allAxiomExeptEuler3D} that the axioms of $\QCold$, except \Cref{euler3d}, follow from those of $\QC$, this concludes the proof.

\section{Minimality of $\QC$}\label{appendix:minimalityQC}

\subsection{Proof of Lemma \ref{lem:EulercardinalityofR}}\label{appendix:proofEulercardinalityofR}
\EulercardinalityofR*
\newcommand{\gPphione}{\scalebox{.69}{$\tikzfig{./proof-indep-Euler/Pphi1}$}}
\newcommand{\gPphitwo}{\scalebox{.69}{$\tikzfig{./proof-indep-Euler/Pphi2}$}}
\newcommand{\gPphithree}{\scalebox{.69}{$\tikzfig{./proof-indep-Euler/Pphi3}$}}
\newcommand{\gPphiprimeone}{\scalebox{.69}{$\tikzfig{./proof-indep-Euler/Pphiprime1}$}}
\newcommand{\gPphiprimetwo}{\scalebox{.69}{$\tikzfig{./proof-indep-Euler/Pphiprime2}$}}
\newcommand{\gPpi}{\scalebox{.69}{$\tikzfig{./shortcut/Ppi}$}}
\newcommand{\simbot}{\overset\bot\sim}
\begin{proof}

Given two $P$ gates $\gPphione$ and $\gPphitwo$ in a $1$-qubit circuit, we write\footnote{Note that we are making an abuse of notation since there might be for instance several gates with parameter $\varphi_1$ in the circuit. This ambiguity can be eliminated by denoting the gates with abstract names such as $g_1,g_2$, however we prefer to keep the abuse of notation to improve the readability of the proof.} $\gPphione\sim\gPphitwo$, if the circuit is of the form $\tikzfigS{proof-indep-Euler/Pphi1andPphi2in1qubitcircuit}$ or $\tikzfigS{proof-indep-Euler/Pphi2andPphi1in1qubitcircuit}$, where $C_1,C_2$ and $C_3$ are arbitrary $1$-qubit circuits, and $\interp{C_2}$ is diagonal. We write $\gPphione\simbot\gPphitwo$, if the circuit is of the same form but with $\interp{C_2}$ anti-diagonal.

One can easily prove that the relation $\sim$ is transitive, and that given any three gates $\gPphione$, $\gPphitwo$ and $\gPphithree$ in a one-qubit circuit, if $\gPphione\simbot\gPphitwo$ and $\gPphitwo\simbot\gPphithree$ then $\gPphione\sim\gPphithree$. Moreover, we make $\sim$ into an equivalence relation by taking $\gP\sim\gP$ for any gate $\gP$ in a $1$-qubit circuit.

Given a $1$-qubit circuit $C$, a \emph{sign assignation} for $C$ is a function $S$ from the set of the $P$ gates of $C$ to $\{-1,1\}$, such that for any two gates $\gPphione$ and $\gPphitwo$ in $C$, if $\gPphione\sim\gPphitwo$ then $S(\gPphione)=S(\gPphitwo)$ and if $\gPphione\simbot\gPphitwo$ then $S(\gPphione)=-S(\gPphitwo)$.

Any $1$-qubit circuit admits at least one sign assignation. Indeed, one can partition the set of its $P$ gates into equivalence classes of $\sim$. Additionally, if two gates are such that $\gPphione\simbot\gPphitwo$, then for any elements $\gPphiprimeone$ and $\gPphiprimetwo$ of their respective equivalence classes, one also has $\gPphiprimeone\simbot\gPphiprimetwo$. Note also that since having $\gPphione\simbot\gPphitwo$ and $\gPphitwo\simbot\gPphithree$ implies $\gPphione\sim\gPphithree$, for each equivalence class there is at most one corresponding equivalence class whose elements are $\simbot$-related with its own elements. Thus, one can define a sign assignation by choosing a sign for each equivalence class, in such a way that for every pair of $\simbot$-related equivalence classes, the two classes are given opposite signs.\bigskip

Given a $1$-qubit circuit $C$ and a sign assignation $S$ for $C$, we define\footnote{The sum is over the $P$ gates of $C$, therefore (despite the abuse of notation) two gates with same parameter contribute separately to the sum.}
\[\interpEq{C}{E}^S\coloneqq\left(\sum_{\gP\text{ gate of $C$}}S(\gP)\varphi\right)\bmod\frac\pi2.\]

Note that since \Cref{euler} acts on 1 qubit, it suffices to prove that it has an instance which is not a consequence of the equations of $A$ acting on at most 1 qubit, namely \crefnosortnocompress{gphaseempty,gphaseaddition,HH,P0,eulerH,Paddition,XPX}, and the chosen instances of \Cref{eulerstar}.

Let
\[\Phi_A\coloneqq\left\{s'_1\beta_1+s'_2\beta_2+s'_3\beta_3-s_1\alpha_1-s_2\alpha_2-s_3\alpha_3\middle|\substack{\displaystyle(\tikzfigS{qc-axioms/euler-left}=\tikzfigS{qc-axioms/euler-right})\in A,\medskip\\\displaystyle s_1,s_2,s_3,s'_1,s'_2,s'_3\in\{-1,1\}}\right\},\]
and let $\langle \Phi_A\rangle$ be the additive subgroup of $\R$ generated by $\Phi_A\cup\{\frac\pi2\}$. Note that the cardinality of $\langle \Phi_A\rangle$ is $\max(\aleph_0,\mathrm{card}(\Phi_A))$, and that since the set of instances of \Cref{eulerstar} belonging to $A$ has cardinality strictly less than $2^{\aleph_0}$, the cardinality of $\Phi_A$, and therefore that of $\langle \Phi_A\rangle$, is strictly less than $2^{\aleph_0}$.\bigskip

First, given any two circuits $C_1$ and $C_2$ that can be transformed into each other using \crefnosortnocompress{gphaseempty,gphaseaddition,HH,P0,eulerH,Paddition,XPX}, and any sign assignation $S_1$ for $C_1$, there exists a sign assignation $S_2$ for $C_2$ such that $\interpEq{C_1}{E}^{S_1}=\interpEq{C_2}{E}^{S_2}$. Indeed, to prove this, it suffices to consider the case where one transforms $C_1$ into $C_2$ by a single step of rewriting using one of these equations. If the equation used is Equation~\eqref{gphaseempty}, \eqref{gphaseaddition} or \eqref{HH}, then it suffices to take $S_2$ as assigning the same signs as $S_1$. If the equation used is \Cref{P0} from left to right, then it suffices to restrict $S_1$ to the remaining gates. If the equation used is \Cref{P0} from left to right, then it suffices to extend $S_1$ by assigning any allowed sign to the new $0$-angled $P$ gate.  If the equation used is \Cref{eulerH} from left to right, then any extension of $S_1$ which assigns allowed signs to the new $\frac\pi2$-angled $P$ gates works, since the sum is taken modulo $\frac\pi2$ in the definition of $\interpEq{\cdot}{E}^{S}$. For the same reason, if the equation used is \Cref{eulerH} from right to left, the restriction of $S_1$ to the remaining gates works. If the equation used is \Cref{Paddition} from left to right, then in $C_1$, the two concerned gates are $\sim$-related, therefore they have the same sign according to $S_1$; then taking $S_2$ as assigning this sign to the new gate (and being identical to $S_1$ elsewhere) works. If the equation used is \Cref{Paddition} from right to left, then taking $S_2$ as assigning the sign of the replaced gate of $C_1$ to both new gates in $C_2$ (and being identical to $S_1$ elsewhere) works. If the equation used is \Cref{XPX}, from left to right, then because of the semantics of the $X$ gates, the new gate $\gPminus$ in $C_2$ is $\simbot$-related to any gate to which the gate $\gP$ was $\sim$-related in $C_1$, and vice-versa; therefore, taking $S_2$ as assigning to $\gPminus$ the opposite sign to the one assigned by $S_1$ to $\gP$, and being a restriction of $S_1$ elsewhere, works (note that whichever sign they were assigned, the $\gPpi$ gates contained in the $X$ gates --~see \Cref{Xdef,Zdef}~-- did not contribute to $\interpEq{C_1}{E}^{S_1}$ as the sum is taken modulo $\frac\pi2$ in the definition). If the equation applied is \Cref{XPX} from right to left, then similarly, taking $S_2$ such that $\gP$ in $C_2$ has an opposite sign to $\gPminus$ in $C_1$ works (with any arbitrary allowed sign assigned to the new $\gPpi$ gates contained in the $X$ gates, and $S_2$ being identical to $S_1$ elsewhere).

Second, given two circuits $C_1$ and $C_2$ such that $C_2$ is obtained from $C_1$ by a single step of rewriting using an instance of \Cref{eulerstar} belonging to $A$ (either from left to right or from right to left), and given a sign assignation $S_1$ for $C_1$, by defining a sign assignation $S_2$ for $C_2$ as assigning arbitrary allowed signs to the modified $P$ gates in $C_2$, and being identical to $S_1$ elsewhere, one gets $\interpEq{C_2}{E}^{S_2}-\interpEq{C_1}{E}^{S_1}+k\frac\pi2\in\Phi_A$ for some $k\in\Z$. It follows that for any two circuits $C_1$ and $C_2$ that can be transformed into each other using the equations of $A$, and any sign assignation $S_1$ for $C_1$, there exists a sign assignation $S_2$ for $C_2$ such that $\interpEq{C_2}{E}^{S_2}-\interpEq{C_1}{E}^{S_1}\in\langle \Phi_A\rangle$.\bigskip

One can show (see below) that in \cref{euler}, for any $\alpha_1,\alpha_2,\alpha_3\not\equiv0\pmod\pi$, one has\footnote{Where $\cot$ denotes the cotangent function.}
\begin{align*}
&\beta_1=-\arctan\left(\cos(\alpha_1)\cot(\alpha_2)+\frac{\sin(\alpha_1)\cot(\alpha_3)}{\sin(\alpha_2)}\right)+\frac\pi2+k_1\pi\\[0.4cm]
&\beta_2=\arccos\bigl(\cos(\alpha_1)\cos(\alpha_3)-\sin(\alpha_1)\cos(\alpha_2)\sin(\alpha_3)\bigr)\\[0.4cm]
&\beta_3=-\arctan\left(\cos(\alpha_3)\cot(\alpha_2)+\frac{\sin(\alpha_3)\cot(\alpha_1)}{\sin(\alpha_2)}\right)+\frac\pi2+k_3\pi
\end{align*}
with $k_1,k_3\in\Z$. Let $b_1,b_2,b_3\colon\mathcal (\R^3\setminus(\pi\Z{\times}\R{\times}\R\;\cup\;\R{\times}\pi\Z{\times}\R\;\cup\;\R{\times}\R{\times}\pi\Z))\to\R$ be defined as
\begin{align*}
&b_1(\alpha_1,\alpha_2,\alpha_3)=-\arctan\left(\cos(\alpha_1)\cot(\alpha_2)+\frac{\sin(\alpha_1)\cot(\alpha_3)}{\sin(\alpha_2)}\right)\\[0.4cm]
&b_2(\alpha_1,\alpha_2,\alpha_3)=\arccos\bigl(\cos(\alpha_1)\cos(\alpha_3)-\sin(\alpha_1)\cos(\alpha_2)\sin(\alpha_3)\bigr)\\[0.4cm]
&b_3(\alpha_1,\alpha_2,\alpha_3)=-\arctan\left(\cos(\alpha_3)\cot(\alpha_2)+\frac{\sin(\alpha_3)\cot(\alpha_1)}{\sin(\alpha_2)}\right).
\end{align*}
The functions $b_1$, $b_2$ and $b_3$ are of class $C^\infty$, and for instance, one has, for any $\alpha_1,\alpha_2,\alpha_3$:
\begin{align*}
&\frac{\partial b_1}{\partial\alpha_2}(\alpha_1,\alpha_2,\alpha_3)=\frac{\cos(\alpha_1)+\sin(\alpha_1)\cos(\alpha_2)\cot(\alpha_3)}{\sin(\alpha_2)^2+\left(\cos(\alpha_1)\cos(\alpha_2)+\sin(\alpha_1)\cot(\alpha_3)\right)^2}\\[0.4cm]
&\frac{\partial b_2}{\partial\alpha_2}(\alpha_1,\alpha_2,\alpha_3)=-\frac{\sin(\alpha_1)\sin(\alpha_2)\sin(\alpha_3)}{\sqrt{1-\left(\cos(\alpha_1)\cos(\alpha_3)-\sin(\alpha_1)\cos(\alpha_2)\sin(\alpha_3)\right)^2}}\\[0.4cm]
&\frac{\partial b_3}{\partial\alpha_2}(\alpha_1,\alpha_2,\alpha_3)=\frac{\cos(\alpha_3)+\sin(\alpha_3)\cos(\alpha_2)\cot(\alpha_1)}{\sin(\alpha_2)^2+\left(\cos(\alpha_3)\cos(\alpha_2)+\sin(\alpha_3)\cot(\alpha_1)\right)^2}.
\end{align*}
By taking for instance $\alpha_1=\alpha_2=\alpha_3=\frac\pi4$, one has $\dfrac{\partial b_1}{\partial\alpha_2}(\frac\pi4,\frac\pi4,\frac\pi4)=\dfrac{\partial b_3}{\partial\alpha_2}(\frac\pi4,\frac\pi4,\frac\pi4)=\dfrac{2+6\sqrt2}{17}\approx0.6168$ and $\dfrac{\partial b_2}{\partial\alpha_2}(\frac\pi4,\frac\pi4,\frac\pi4)=-\dfrac1{\sqrt{5+2\sqrt2}}\approx-0.3574$.

Given $s_1,s_2,s_3,s'_1,s'_2,s'_3\in\{-1,1\}$, let $F_{s_i,s'_i}\colon\mathcal (\R^3\setminus(\pi\Z{\times}\R{\times}\R\;\cup\;\R{\times}\pi\Z{\times}\R\;\cup\;\R{\times}\R{\times}\pi\Z))\to\R$ be defined as $F_{s_i,s'_i}(\alpha_1,\alpha_2,\alpha_3)=s'_1b_1(\alpha_1,\alpha_2,\alpha_3)+s'_2b_2(\alpha_1,\alpha_2,\alpha_3)+s'_3b_3(\alpha_1,\alpha_2,\alpha_3)-s_1\alpha_1-s_2\alpha_2-s_3\alpha_3$. One can check that for any choice of the $s_i$ and $s'_i$, $\dfrac{\partial F_{s_i,s'_i}}{\partial\alpha_2}(\frac\pi4,\frac\pi4,\frac\pi4)\neq0$\smallskip\smallskipbeforeline, so that all the $F_{s_i,s'_i}$ are injective on a small sub-domain of the form $\{\frac\pi4\}\times[\frac\pi4-\varepsilon,\frac\pi4+\varepsilon]\times\{\frac\pi4\}$ with $\varepsilon>0$. This sub-domain has cardinality $2^{\aleph_0}$. Additionally, for each choice of the $s_i$ and $s'_i$, since $F_{s_i,s'_i}$ is injective on this sub-domain, the set $F_{s_i,s'_i}^{-1}(\langle \Phi_A\rangle)\cap\left(\{\frac\pi4\}\times[\frac\pi4-\varepsilon,\frac\pi4+\varepsilon]\times\{\frac\pi4\}\right)$ is at most of the cardinality of $\langle \Phi_A\rangle$, which is strictly less than $2^{\aleph_0}$. This implies that the union $\displaystyle\bigcup_{s_i,s'_i\in\{-1,1\}}F_{s_i,s'_i}^{-1}(\langle \Phi_A\rangle)\cap\left(\{\tfrac\pi4\}\times[\tfrac\pi4-\varepsilon,\tfrac\pi4+\varepsilon]\times\{\tfrac\pi4\}\right)$ also has cardinality strictly less than $2^{\aleph_0}$. Hence, their exists $\alpha_2\in[\frac\pi4-\varepsilon,\frac\pi4+\varepsilon]$ such that for any choice of the $s_i$ and $s'_i$, $F_{s_i,s'_i}(\frac\pi4,\alpha_2,\frac\pi4)\notin\langle \Phi_A\rangle$. Then 
if we consider the corresponding instance of \Cref{euler}, $\tikzfigS{proof-indep-Euler/euler-left-pi4alpha2pi4}=\tikzfigS{qc-axioms/euler-right}$\smallskip, there does not exist sign assignations $S_1$ and $S_2$ such that $\interpEq{\tikzfigS{qc-axioms/euler-right}}{E}^{S_2}-\interpEq{\tikzfigS{proof-indep-Euler/euler-left-pi4alpha2pi4}}{E}^{S_1}\in\langle \Phi_A\rangle$, as this would give us a choice of the $s_i$ and $s'_i$ such that $F_{s_i,s'_i}(\frac\pi4,\alpha_2,\frac\pi4)\in\langle \Phi_A\rangle$. This proves that this instance of \Cref{euler} is not a consequence of the axioms of $A$.\bigskip\bigskip

It remains to prove the formulas given above for $\beta_1$, $\beta_2$ and $\beta_3$ in \cref{euler}, using the expressions given in \cref{sec:discussionEuler}. First, note that if $\alpha_1,\alpha_2,\alpha_3\not\equiv0\pmod\pi$ then $z,z'\neq0$. For $i\in\{1,2,3\}$, we denote $c_i\coloneqq\cos(\frac{\alpha_i}2)$ and $s_i\coloneqq\sin(\frac{\alpha_i}2)$.

One has $\beta_1=\arg(zz')$, with
\begin{align*}
\Re(zz')&=\cos^2\left(\frac{\alpha_2}2\right)\cos\left(\frac{\alpha_1+\alpha_3}2\right)\sin\left(\frac{\alpha_1+\alpha_3}2\right)+\sin^2\left(\frac{\alpha_2}2\right)\cos\left(\frac{\alpha_1-\alpha_3}2\right)\sin\left(\frac{\alpha_1-\alpha_3}2\right)\\[0.4cm]
&=\frac12\bigl(c_2^2\sin(\alpha_1+\alpha_3)+s_2^2\sin(\alpha_1-\alpha_3)\bigr)\\[0.4cm]
&=\frac12\bigl(c_2^2(\sin(\alpha_1)\cos(\alpha_3)+\cos(\alpha_1)\sin(\alpha_3))+s_2^2(\sin(\alpha_1)\cos(\alpha_3)-\cos(\alpha_1)\sin(\alpha_3))\bigr)\\[0.4cm]
&=\frac12\bigl(\sin(\alpha_1)\cos(\alpha_3)(c_2^2+s_2^2)+\cos(\alpha_1)\sin(\alpha_3)(c_2^2-s_2^2)\bigr)\\[0.4cm]
&=\frac12\bigl(\sin(\alpha_1)\cos(\alpha_3)+\cos(\alpha_1)\cos(\alpha_2)\sin(\alpha_3)\bigr)
\end{align*}
and
\begin{align*}
\Im(zz')&=\cos\left(\frac{\alpha_2}2\right)\sin\left(\frac{\alpha_2}2\right)\left(\cos\left(\frac{\alpha_1-\alpha_3}2\right)\sin\left(\frac{\alpha_1+\alpha_3}2\right)-\cos\left(\frac{\alpha_1+\alpha_3}2\right)\sin\left(\frac{\alpha_1-\alpha_3}2\right)\right)\\[0.4cm]
&=c_2s_2\bigl((c_1c_3+s_1s_3)(s_1c_3+c_1s_3)-(c_1c_3-s_1s_3)(s_1c_3-c_1s_3)\bigr)\\[0.4cm]
&=2c_2s_2\bigl(c_1^2c_3s_3+s_1^2c_3s_3\bigr)\\[0.4cm]
&=2c_2s_2c_3s_3\\[0.4cm]
&=\frac12\sin(\alpha_2)\sin(\alpha_3).
\end{align*}
If $\alpha_1,\alpha_2,\alpha_3\not\equiv0\pmod\pi$, then this implies that $\Im(zz')\neq0$, so that $\beta_1\equiv\frac\pi2-\arctan\left(\frac{\Re(zz')}{\Im(zz')}\right)\pmod\pi$, which gives the desired formula.

To prove the analogous formula for $\beta_3$, it suffices to remark that by exchanging $\alpha_1$ and $\alpha_3$ in the expressions given in \cref{sec:discussionEuler}, one computes $\beta_3$ instead of $\beta_1$ (and vice-versa).

One has
\begin{align*}
&\beta_2=2\arg\left(i+\left\lvert\frac{z}{z'}\right\rvert\right)=\arg\left(\left(i+\left\lvert\frac{z}{z'}\right\rvert\right)^2\right)=\arg\left(\left\lvert\frac{z}{z'}\right\rvert^2\!-1+2i\left\lvert\frac{z}{z'}\right\rvert\right).
\end{align*}
Since the imaginary part of $\left\lvert\frac{z}{z'}\right\rvert^2-1+2i\left\lvert\frac{z}{z'}\right\rvert$ is positive, its argument is equal to
\begin{align*}
\arccos\left(\frac{\left\lvert\frac{z}{z'}\right\rvert^2-1}{\left\lvert\left\lvert\frac{z}{z'}\right\rvert^2-1+2i\left\lvert\frac{z}{z'}\right\rvert\right\rvert}\right)&=\arccos\left(\frac{\left\lvert\frac{z}{z'}\right\rvert^2-1}{\sqrt{\left\lvert\frac{z}{z'}\right\rvert^4+1-2\left\lvert\frac{z}{z'}\right\rvert^2+4\left\lvert\frac{z}{z'}\right\rvert^2}}\right)\\[0.4cm]
&=\arccos\left(\frac{\left\lvert\frac{z}{z'}\right\rvert^2-1}{\left\lvert\frac{z}{z'}\right\rvert^2+1}\right)\\[0.4cm]
&=\arccos\left(\frac{\left\lvert z\right\rvert^2-\left\lvert z'\right\rvert^2}{\left\lvert z\right\rvert^2+\left\lvert z'\right\rvert^2}\right).
\end{align*}
One has $\left\lvert z\right\rvert^2=\cos^2\left(\frac{\alpha_2}{2}\right)\cos^2\left(\frac{\alpha_1+\alpha_3}{2}\right)+\sin^2\left(\frac{\alpha_2}{2}\right)\cos^2\left(\frac{\alpha_1-\alpha_3}{2}\right)$ and $\binoppenalty10000\relpenalty10000\left\lvert z'\right\rvert^2=\cos^2\left(\frac{\alpha_2}{2}\right)\sin^2\left(\frac{\alpha_1+\alpha_3}{2}\right)+\sin^2\left(\frac{\alpha_2}{2}\right)\sin^2\left(\frac{\alpha_1-\alpha_3}{2}\right)$, so that \smallskipbeforeline$\left\lvert z\right\rvert^2+\left\lvert z'\right\rvert^2=1$ and
\begin{align*}
\left\lvert z\right\rvert^2-\left\lvert z'\right\rvert^2&=c_2^2\left(\cos^2\left(\frac{\alpha_1+\alpha_3}{2}\right)-\sin^2\left(\frac{\alpha_1+\alpha_3}{2}\right)\right)+s_2^2\left(\cos^2\left(\frac{\alpha_1-\alpha_3}{2}\right)-\sin^2\left(\frac{\alpha_1-\alpha_3}{2}\right)\right)\\[0.4cm]
&=c_2^2\cos(\alpha_1+\alpha_3)+s_2^2\cos(\alpha_1-\alpha_3)\\[0.4cm]
&=c_2^2(\cos(\alpha_1)\cos(\alpha_3)-\sin(\alpha_1)\sin(\alpha_3))+s_2^2(\cos(\alpha_1)\cos(\alpha_3)+\sin(\alpha_1)\sin(\alpha_3))\\[0.4cm]
&=\cos(\alpha_1)\cos(\alpha_3)(c_2^2+s_2^2)-\sin(\alpha_1)\sin(\alpha_3)(c_2^2-s_2^2)\\[0.4cm]
&=\cos(\alpha_1)\cos(\alpha_3)-\sin(\alpha_1)\cos(\alpha_2)\sin(\alpha_3)
\end{align*}
which gives the desired expression.
\end{proof}

\subsection{Necessity of the remaining axioms}\label{appendix:necessityproofs}
\begin{proposition}
  \Cref{gphaseempty} cannot be removed from $\QC$ without loosing completeness.
\end{proposition}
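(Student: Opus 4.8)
The plan is to apply the alternative-interpretation technique of \cref{sec:min}. Because \eqref{gphaseempty} is an equation between $0$-qubit circuits and every generator of $\propQC$ preserves the number of qubits, any derivation in $\QC\setminus\{\eqref{gphaseempty}\}$ of an equation on $0$-qubit circuits uses only axioms acting on $0$ qubits; the unique such axiom is \eqref{gphaseaddition}. So I would reduce the task to exhibiting a prop functor out of $\propQC$ that is sound for \eqref{gphaseaddition} but assigns different values to the two sides of \eqref{gphaseempty}.

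First I would introduce the alternative interpretation $\interp{\cdot}_{S_{2\pi}}$, valued in $\{0,1\}$ and defined inductively by
\begin{gather*}
  \interp{C_2\circ C_1}_{S_{2\pi}}=\interp{C_1\otimes C_2}_{S_{2\pi}}=\max\bigl(\interp{C_1}_{S_{2\pi}},\interp{C_2}_{S_{2\pi}}\bigr)\\
  \interp{\gs}_{S_{2\pi}}=1;\quad\textup{and }\interp{g}_{S_{2\pi}}=0\textup{ for every other generator }g.
\end{gather*}
Intuitively, $\interp{C}_{S_{2\pi}}=1$ holds exactly when $C$ contains at least one global phase gate. I would then check that this is a well-defined prop functor: the associativity and unitality of $\circ$ and $\otimes$, the interchange law, and the naturality of the swaps are all preserved, since $\max$ is associative, commutative, idempotent and admits $0$ as unit.

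Next I would verify soundness for \eqref{gphaseaddition}: for any parameters, the left-hand side of \eqref{gphaseaddition} is a parallel composition of two global phase gates and the right-hand side a single one, so both sides have interpretation $1$. Finally, the left-hand side of \eqref{gphaseempty} is a single global phase gate, hence has interpretation $1$, whereas its right-hand side $\gempty$ has interpretation $0$. Therefore \eqref{gphaseempty} is not a consequence of $\QC\setminus\{\eqref{gphaseempty}\}$; since the equation it expresses is sound (it holds because $e^{2i\pi}=1$), the theory $\QC\setminus\{\eqref{gphaseempty}\}$ is incomplete, which proves the claim. There is no real obstacle here: the only delicate point is the reduction to $0$-qubit axioms, which is the general principle already recalled in \cref{sec:min}, and the remaining verifications are routine.
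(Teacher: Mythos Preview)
Your proof is correct and uses essentially the same alternative interpretation as the paper's own proof. You are slightly more explicit than the paper in invoking the reduction to $0$-qubit axioms and in checking the prop-functor axioms, but the argument is the same.
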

\begin{proof}
  \Cref{gphaseempty} is the only axiom that is not sound according to the alternative interpretation $\interpEq{\cdot}{S_{2\pi}}$. Intuitively, \Cref{gphaseempty} is the only rule acting on $0$ qubits that transforms a circuit containing some global phase gates into one that contains none.
  \begin{gather*}
    \interpEq{C_2\circ C_1}{S_{2\pi}} = max(\interpEq{C_2}{S_{2\pi}},\interpEq{C_1}{S_{2\pi}})\\
    \interpEq{C_1\otimes C_2}{S_{2\pi}} = max(\interpEq{C_1}{S_{2\pi}},\interpEq{C_2}{S_{2\pi}})\\
    \interpEq{\gs}{S_{2\pi}}=1 \qquad\textup{and}\qquad \interpEq{g}{S_{2\pi}}=0 \textup{ for any other generator }g
  \end{gather*}
\end{proof}

\begin{proposition}
  \Cref{gphaseaddition} cannot be removed from $\QC$ without loosing completeness.
\end{proposition}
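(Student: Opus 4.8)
The plan is to show that $\QC$ becomes incomplete already on $0$-qubit circuits once \eqref{gphaseaddition} is dropped, via an alternative interpretation in exactly the style used for \eqref{gphaseempty}. The starting observation is the one made at the beginning of \Cref{sec:min}: since every generator of $\propQC$ preserves the number of qubits, any derivation of an equation between $0$-qubit circuits uses only axioms acting on $0$ qubits, and in $\QC$ these are exactly \eqref{gphaseempty} and \eqref{gphaseaddition}. Hence, in the theory obtained from $\QC$ by removing \eqref{gphaseaddition}, the only $0$-qubit axiom that can occur in a derivation is \eqref{gphaseempty}, and it is enough to exhibit an alternative interpretation that is a prop functor, is sound for \eqref{gphaseempty}, and separates two $0$-qubit circuits with the same semantics.

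Concretely, I would introduce the alternative interpretation $\interpEq{\cdot}{S_+}$, valued in $(\{0,1\},\max)$, defined inductively by $\interpEq{C_2\circ C_1}{S_+}=\interpEq{C_1\otimes C_2}{S_+}=\max(\interpEq{C_1}{S_+},\interpEq{C_2}{S_+})$, $\interpEq{\gempty}{S_+}=0$, $\interpEq{\gs}{S_+}=1$ when $\varphi\neq 2\pi$ and $\interpEq{\gs}{S_+}=0$ when $\varphi=2\pi$, and $\interpEq{g}{S_+}=0$ for every other generator $g$; intuitively $\interpEq{C}{S_+}=1$ exactly when $C$ contains a global phase gate whose parameter is not $2\pi$. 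As usual this is a well-defined prop functor, hence invariant under the deformation rules of $\propQC$, and it is sound for \eqref{gphaseempty} since the two sides of that equation both evaluate to $0$. By contrast it is not sound for \eqref{gphaseaddition} — which is precisely what will yield incompleteness — the simplest witness being the instance $\tikzfigS{./qcoriginal-axioms/s0}=\gempty$, a special case of \eqref{S0}.

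To conclude, I would note that this equation is semantically valid, $\interp{\tikzfigS{./qcoriginal-axioms/s0}}$ being the scalar $e^{i\cdot 0}=1=\interp{\gempty}$, while $\interpEq{\tikzfigS{./qcoriginal-axioms/s0}}{S_+}=1\neq 0=\interpEq{\gempty}{S_+}$. Since any derivation of this $0$-qubit equation in $\QC$ without \eqref{gphaseaddition} consists only of prop deformations and applications of \eqref{gphaseempty}, all of which preserve $\interpEq{\cdot}{S_+}$, the equation $\tikzfigS{./qcoriginal-axioms/s0}=\gempty$ is not derivable there, so that theory is not complete. The only step needing a little care is the reduction to $0$-qubit axioms, which is exactly the general principle recalled at the start of \Cref{sec:min}; the rest (functoriality of $\interpEq{\cdot}{S_+}$ and its soundness for \eqref{gphaseempty}) is routine. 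The same argument with an $\N$-valued variant of $\interpEq{\cdot}{S_+}$ — counting the global phase gates whose parameter is not $2\pi$ — additionally distinguishes, for instance, a circuit made of two global phase gates of parameter $1$ from one of parameter $2$, yielding the sharper statement that infinitely many instances of \eqref{gphaseaddition} are required.
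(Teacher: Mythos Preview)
Your main argument is correct and proceeds along the same lines as the paper: both use the reduction to $0$-qubit axioms and an alternative $\{0,1\}$-valued interpretation based on $\max$. The difference is that the paper uses a \emph{family} of interpretations $\interpEq{\cdot}{S_+}^\psi$, one for each $\psi\in\R\setminus\{2\pi\}$, detecting the presence of a global phase gate with parameter exactly $\psi$; this yields the stronger statement that for every such $\psi$ at least one instance of \eqref{gphaseaddition} involving $\psi$ is required. Your single interpretation (detecting any global phase gate with parameter $\neq 2\pi$) is a simplification that is perfectly sufficient for the proposition as stated, but does not by itself give the finer information about which instances are needed.

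Your closing remark, however, does not work. The $\N$-valued variant you propose---counting global phase gates whose parameter is not $2\pi$---is unsound for essentially every instance of \eqref{gphaseaddition} (the two sides have counts $2$ and $1$ whenever $\varphi_1,\varphi_2,\varphi_1+\varphi_2\neq 2\pi$). Consequently it cannot be used to separate different \emph{subsets} of instances of \eqref{gphaseaddition}: as soon as you keep even a single generic instance, this interpretation is no longer sound for your reduced theory, so it tells you nothing about whether finitely many instances suffice. To get ``infinitely many instances are required'' you need interpretations that are sound for some instances of \eqref{gphaseaddition} and not others---exactly what the paper's $\psi$-indexed family provides. If you want that sharper conclusion, adopt the paper's parametrised interpretation; otherwise simply drop the last sentence.
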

\begin{proof}
  For any $\psi\in\R\setminus\{2\pi\}$, the instances of \Cref{gphaseaddition} where exactly one of the two sides contains a global phase gate with parameter $\psi$ are the only ones that are not sound according to the alternative interpretation $\interpEq{\cdot}{S_+}^\psi$. Note that such instances exist for any $\psi$, for instance $\tikzfigS{./examples/gphasepsipluspisur4}$. Thus at least one such instance for every possible $\psi$ needs to be taken as an axiom. Intuitively, the instances of \Cref{gphaseaddition} with a parameter $\psi$ on exactly one side are the only rules acting on $0$ qubits that transform a circuit containing some global phase gates with parameter $\psi$ into a circuit that contains none.
  \begin{gather*}
    \interpEq{C_2\circ C_1}{S_+}^\psi = max(\interpEq{C_2}{S_+}^\psi,\interpEq{C_1}{S_+}^\psi)\\
    \interpEq{C_1\otimes C_2}{S_+}^\psi = max(\interpEq{C_1}{S_+}^\psi,\interpEq{C_2}{S_+}^\psi)\\
    \interpEq{\gs}{S_+}^\psi=\begin{cases}1&\text{if $\varphi=\psi$}\\0&\text{if $\varphi\neq\psi$}\end{cases} \qquad\textup{and}\qquad \interpEq{g}{S_+}^\psi=0 \textup{ for any other generator }g
  \end{gather*}
\end{proof}

\begin{proposition}
  \Cref{HH} cannot be removed from $\QC$ without loosing completeness.
\end{proposition}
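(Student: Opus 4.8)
The plan is to reuse the alternative interpretation $\interp{\cdot}_{H^2}$ introduced at the beginning of \Cref{sec:min}, namely the map defined by $\interp{C_2\circ C_1}_{H^2}=\interp{C_1\otimes C_2}_{H^2}=\max\!\left(\interp{C_1}_{H^2},\interp{C_2}_{H^2}\right)$, $\interp{\gH}_{H^2}=1$, and $\interp{g}_{H^2}=0$ for every other generator $g$ (so $\interp{\gempty}_{H^2}=\interp{\gI}_{H^2}=\interp{\gSWAP}_{H^2}=0$), with values in $\{0,1\}$ equipped with $\max$ as composition and monoidal product and $0$ as identity. Since the true equation $\gH\circ\gH=\gI$ acts on a single qubit and every generator of $\propQC$ preserves the number of wires, any derivation of it uses only axioms of $\QC$ acting on at most one qubit. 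Hence it suffices to show that $\interp{\cdot}_{H^2}$ is a prop functor that is sound for all such axioms except \Cref{HH}, while $\interp{\gH\circ\gH}_{H^2}=\max(1,1)=1\neq0=\interp{\gI}_{H^2}$: for then completeness of $\QC$ without \Cref{HH} would force a derivation of $\gH\circ\gH=\gI$, contradicting functoriality of $\interp{\cdot}_{H^2}$.

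First I would check that $\interp{\cdot}_{H^2}$ is well defined as a functor: associativity of the two compositions and the interchange law of \Cref{def:prop} follow from associativity, commutativity and idempotency of $\max$; the unit laws follow from $\max(x,0)=x$ for $x\in\{0,1\}$; and since the swap is sent to $0$, the naturality equations of the $\sigma_k$ hold as well. This step is routine bookkeeping.

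Next I would verify soundness for every axiom of $\QC$ acting on at most one qubit other than \Cref{HH}. Equations \eqref{gphaseempty}, \eqref{gphaseaddition} and \eqref{P0} relate circuits involving only $\gs$ and $\gP$ gates, so both of their sides evaluate to $0$. Equations \eqref{eulerH} and \eqref{euler} contain $\gH$ gates on both sides: the left-hand side of \eqref{eulerH} is a single $\gH$, and its right-hand side --- like both sides of \eqref{euler} --- is built from $\gRX$ gates, each of which contains $\gH$ gates once the shortcut \Cref{RXdef} is unfolded; hence both sides of \eqref{eulerH} and \eqref{euler} evaluate to $1$. This gives soundness of $\interp{\cdot}_{H^2}$ for all one-qubit axioms except \Cref{HH}, which completes the proof.

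The only point that calls for a little attention, and the closest thing here to an obstacle although it is a minor one, is to remember to expand the shortcut $\gRX$ inside \eqref{eulerH} and \eqref{euler} before evaluating $\interp{\cdot}_{H^2}$, so as not to mistakenly treat the right-hand side of \eqref{eulerH} as containing no $\gH$ gate. Once all definitions are unfolded, every clause above is immediate.
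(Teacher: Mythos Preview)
Your proposal is correct and follows essentially the same approach as the paper: you use the very interpretation $\interp{\cdot}_{H^2}$ the paper introduces in \Cref{sec:min}, restrict attention to axioms on at most one qubit (as the paper also does), and check that \eqref{HH} is the unique such axiom it fails. Your write-up is in fact more explicit than the paper's, which merely states the interpretation and the intuition; your remark about unfolding $\gRX$ in \eqref{eulerH} and \eqref{euler} is exactly the point the paper leaves implicit.
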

\begin{proof}
  \Cref{HH} is the only axiom that is not sound according to the alternative interpretation $\interpEq{\cdot}{H^2}$. Intuitively, \Cref{HH} is the only rule acting on at most $1$ qubit that transforms a circuit containing some Hadamard gates into one that contains none.
  \begin{gather*}
    \interpEq{C_2\circ C_1}{H^2} = max(\interpEq{C_2}{H^2},\interpEq{C_1}{HH})\\
    \interpEq{C_1\otimes C_2}{H^2} = max(\interpEq{C_1}{H^2},\interpEq{C_2}{HH})\\
    \interpEq{\gH}{H^2}=1 \qquad\textup{and}\qquad \interpEq{g}{H^2}=0 \textup{ for any other generator }g
  \end{gather*}
\end{proof}

\begin{proposition}
  \Cref{P0} cannot be removed from $\QC$ without loosing completeness.
\end{proposition}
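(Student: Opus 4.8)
The plan is to follow the same template as for the interpretation $\interpEq{\cdot}{H^2}$ used to handle \eqref{HH}: I will exhibit an alternative interpretation that is preserved by every axiom of $\QC$ acting on at most one qubit, but not by \Cref{P0}, and then invoke the principle recalled at the start of \cref{sec:min} that a derivation of a $1$-qubit equation uses only axioms acting on at most one qubit. Since both sides of \Cref{P0} are $1$-qubit circuits (namely $\gP$ with parameter $0$, and $\gI$), this shows \Cref{P0} cannot be derived from the remaining axioms.

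Concretely, I would define $\interpEq{\cdot}{P_0}$ to be the parity of the total number of $\gH$ and $\gP$ generators occurring in a circuit, where every shortcut notation is first expanded into generators (so that, in particular, an $\gRX$ gate, being $\gH\circ\gP\circ\gH$, contributes $1+1+1\equiv 1$, exactly like a single $\gP$). Formally:
\begin{gather*}
  \interpEq{C_2\circ C_1}{P_0}=\interpEq{C_1\otimes C_2}{P_0}=\interpEq{C_1}{P_0}+\interpEq{C_2}{P_0}\bmod 2,\\
  \interpEq{\gH}{P_0}=\interpEq{\gP}{P_0}=1,\qquad\text{and}\qquad\interpEq{g}{P_0}=0\ \text{for every other generator }g.
\end{gather*}
Because $(\mathbb Z/2\mathbb Z,+,0)$ is a commutative monoid and both $\circ$ and $\otimes$ are sent to $+$, this assignment is a well-defined functor, i.e.\ it takes the same value on circuits that are equal by the prop axioms.

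Next I would check soundness for the axioms of $\QC$ acting on at most one qubit, namely \eqref{gphaseempty}, \eqref{gphaseaddition}, \eqref{HH}, \eqref{eulerH} and \eqref{euler}. Equations \eqref{gphaseempty} and \eqref{gphaseaddition} only involve $\gs$ and $\gempty$, all of value $0$; in \eqref{HH} both sides have value $1+1\equiv 0$. For \eqref{euler} and \eqref{eulerH}, after unfolding the $\gRX$ shortcuts, each side is a composition of rotation gates ($\gP$ or $\gRX$), each of value $1$, together with at most one global phase of value $0$: the two sides of \eqref{euler} each consist of three rotations, hence have value $3\equiv 1$, and in \eqref{eulerH} the left-hand side is a single $\gH$ of value $1$ while the right-hand side is an Euler decomposition of $H$, again three rotations of total value $1$. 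On the other hand $\interpEq{\gP}{P_0}=1\neq 0=\interpEq{\gI}{P_0}$, so \Cref{P0} is \emph{not} sound for $\interpEq{\cdot}{P_0}$.

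Finally, suppose for contradiction that \Cref{P0} were derivable from $\QC\setminus\{\eqref{P0}\}$. As it relates two $1$-qubit circuits, this derivation would use only axioms acting on at most one qubit, each of which preserves $\interpEq{\cdot}{P_0}$ by the above, forcing $\interpEq{\gP}{P_0}=\interpEq{\gI}{P_0}$ --- a contradiction. The argument is essentially routine; the only point that deserves care is checking that the list of axioms acting on at most one qubit is exhaustive and that, once their $\gRX$ shortcuts are expanded, \eqref{eulerH} and \eqref{euler} indeed have odd $\gH$-plus-$\gP$ count on both sides (equivalently, that they are honest three-rotation Euler decompositions).
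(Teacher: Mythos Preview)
Your proposal is correct and essentially identical to the paper's proof: the paper defines exactly the same alternative interpretation $\interpEq{\cdot}{P_0}$ (parity of the number of $\gH$ and $\gP$ generators), with the same intuition that \eqref{P0} is the only axiom on at most one qubit that changes this parity. Your write-up is in fact more detailed, explicitly verifying soundness for each of \eqref{gphaseempty}, \eqref{gphaseaddition}, \eqref{HH}, \eqref{eulerH}, \eqref{euler} after unfolding the $\gRX$ shortcut, whereas the paper leaves this to the reader.
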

\begin{proof}
  \Cref{P0} is the only axiom that is not sound according to the alternative interpretation $\interpEq{\cdot}{P_0}$. Intuitively, \Cref{P0} is the only rule acting on at most $1$ qubit that does not preserve the parity of number of $1$-qubit generators.
  \begin{gather*}
    \interpEq{C_2\circ C_1}{P_0} = \interpEq{C_1}{P_0}+\interpEq{C_2}{P_0}\bmod{2}\\
    \interpEq{C_1\otimes C_2}{P_0} = \interpEq{C_1}{P_0}+\interpEq{C_2}{P_0}\bmod{2}\\
    \interpEq{\gH}{P_0}=\interpEq{\gP}{P_0} = 1 \qquad\textup{and}\qquad \interpEq{g}{P_0}=0 \textup{ for any other generator }g
  \end{gather*}
\end{proof}

\begin{proposition}
  \Cref{CNOTPCNOT} cannot be removed from $\QC$ without loosing completeness.
\end{proposition}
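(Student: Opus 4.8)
The plan is to reuse the template employed for \Cref{HH}: exhibit an alternative interpretation, compatible with the prop structure, that is sound for all the relevant other axioms but not for \eqref{CNOTPCNOT}. The key preliminary observation is that \eqref{CNOTPCNOT} acts on $2$ qubits, and since all generators of $\propQC$ preserve the number of qubits, any derivation of \eqref{CNOTPCNOT} from the remaining axioms uses only axioms acting on at most $2$ qubits; in particular it never invokes \eqref{ctrl2pi}, which acts on $n\geq3$ qubits. So it suffices to build an interpretation sound for \eqref{gphaseempty}, \eqref{gphaseaddition}, \eqref{HH}, \eqref{P0}, \eqref{bigebre}, \eqref{CZ}, \eqref{eulerH} and \eqref{euler} but not for \eqref{CNOTPCNOT}.

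Concretely, I would take $\interpEq{\cdot}{C}\in\{0,1\}$ to detect the presence of the generator $\gCNOT$:
\begin{gather*}
  \interpEq{C_2\circ C_1}{C} = max(\interpEq{C_2}{C},\interpEq{C_1}{C})\\
  \interpEq{C_1\otimes C_2}{C} = max(\interpEq{C_1}{C},\interpEq{C_2}{C})\\
  \interpEq{\gCNOT}{C}=1 \qquad\textup{and}\qquad \interpEq{g}{C}=0 \textup{ for every other generator }g
\end{gather*}
so that $\interpEq{C}{C}=1$ precisely when $C$ contains at least one $\gCNOT$ generator. First I would check that this is well defined, i.e.\ invariant under the equations of props: associativity of $\circ$ and $\otimes$ and the interchange law follow from associativity and commutativity of $max$, while $\interpEq{\gempty}{C}=\interpEq{\gI}{C}=\interpEq{\gSWAP}{C}=0$ is compatible with the unit and naturality equations since all values are nonnegative. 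Next I would verify soundness for the other $\le2$-qubit axioms: Equations \eqref{gphaseempty}, \eqref{gphaseaddition}, \eqref{HH}, \eqref{P0}, \eqref{eulerH} and \eqref{euler} involve no $\gCNOT$ generator at all (recall that $\gRX$ unfolds into $\gH$ and $\gP$ gates, see \Cref{RXdef}), so both sides evaluate to $0$; whereas \eqref{bigebre} and \eqref{CZ} carry a $\gCNOT$ generator on each side, so both sides evaluate to $1$. Finally, the left-hand side of \eqref{CNOTPCNOT} contains two $\gCNOT$ generators and its right-hand side contains none, so the two sides get $\interpEq{\cdot}{C}$-values $1$ and $0$ respectively, and \eqref{CNOTPCNOT} is not sound for $\interpEq{\cdot}{C}$. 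Hence \eqref{CNOTPCNOT} is not derivable from the other axioms of $\QC$ acting on at most $2$ qubits, and therefore not from the other axioms of $\QC$, which is the claim.

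I expect the only delicate point to be the bookkeeping in the soundness check: one must confirm that every $\le2$-qubit axiom other than \eqref{CNOTPCNOT} is either free of $\gCNOT$ generators on both sides or carries a $\gCNOT$ generator on both sides. This is where it matters that the circuit denoted $\textup{CZ}$ on the right of \eqref{CZ} is genuinely implemented with CNots (as recalled in \Cref{sec:complete}) and that the reversed-CNot box on the left of \eqref{bigebre} expands, through swaps, to a circuit still containing the $\gCNOT$ generator --~and in fact this is forced, since the only entangling generator of $\propQC$ is $\gCNOT$, so both sides of \eqref{bigebre} and \eqref{CZ} must contain at least one $\gCNOT$. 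With these observations the rest is routine, mirroring the necessity argument for \eqref{HH}.
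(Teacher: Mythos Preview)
Your proposal is correct and follows essentially the same approach as the paper: the paper defines exactly the same alternative interpretation (denoted $\interpEq{\cdot}{CX^2}$ there), taking value $1$ on $\gCNOT$ and $0$ on every other generator, combined via $\max$, and observes that \eqref{CNOTPCNOT} is the only rule acting on at most $2$ qubits that transforms a circuit containing some CNot gates into one that contains none. Your write-up is simply more explicit about the soundness checks.
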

\begin{proof}
  \Cref{CNOTPCNOT} is the only axiom that is not sound according to the alternative interpretation $\interpEq{\cdot}{CX^2}$. Intuitively, \Cref{CNOTPCNOT} is the only rule acting on at most $2$ qubits that transforms a circuit containing some CNot gates into one that contains none.
  \begin{gather*}
    \interpEq{C_2\circ C_1}{CX^2} = max(\interpEq{C_2}{CX^2},\interpEq{C_1}{CX^2})\\
    \interpEq{C_1\otimes C_2}{CX^2} = max(\interpEq{C_1}{CX^2},\interpEq{C_2}{CX^2})\\
    \interpEq{\gCNOT}{CX^2}=1 \qquad\textup{and}\qquad \interpEq{g}{CX^2}=0 \textup{ for any other generator }g
  \end{gather*}
\end{proof}

\begin{proposition}
  \Cref{bigebre} cannot be removed from $\QC$ without loosing completeness.
\end{proposition}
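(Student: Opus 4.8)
The plan is to follow the same template as the preceding propositions and exhibit an alternative interpretation that is sound for every axiom of $\QC$ except \Cref{bigebre}. Since \Cref{bigebre} is the only axiom whose two sides contain a different number of $\gSWAP$ gates — the right-hand side has exactly one $\gSWAP$ and the left-hand side none — the natural invariant is the parity of the number of $\gSWAP$ gates occurring in a circuit. Concretely, I would define $\interpEq{\cdot}{B}\in\{0,1\}$ inductively by
\begin{gather*}
  \interpEq{C_2\circ C_1}{B}=\interpEq{C_1\otimes C_2}{B}=\interpEq{C_1}{B}+\interpEq{C_2}{B}\bmod 2\\
  \interpEq{\gSWAP}{B}=1\qquad\textup{and}\qquad\interpEq{g}{B}=0\ \textup{for any other generator }g.
\end{gather*}

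The first step is to check that $\interpEq{\cdot}{B}$ is a well-defined prop functor, so that it is automatically invariant under the deformation rules of the prop. Bifunctoriality, associativity and the unit laws are immediate because the definition is additive modulo $2$; the relation $\gSWAP\circ\gSWAP=\gI$ is respected because $1+1\equiv0\pmod 2$. For the naturality equation $\sigma_m\circ(C\otimes\gI)=(\gI\otimes C)\circ\sigma_n$ I would use the observation, already made in \cref{sec:min}, that every generator of $\propQC$ has as many inputs as outputs, so every $\propQC$-circuit $C\colon n\to m$ satisfies $n=m$; then $\sigma_m$ and $\sigma_n$ denote the same circuit and both sides of the equation contain the same number of $\gSWAP$ gates.

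The second step is to verify soundness for every axiom of $\QC$ other than \Cref{bigebre}. All of these relate circuits that, once the shortcut notations of \cref{fig:shortcutcircuits} are unfolded, contain no $\gSWAP$ gate at all: $\gRX$, the Pauli gates and the multi-controlled gates of \cref{mctrlRXdef,mctrlPdef} are built purely from $\gH$, $\gP$ and $\gCNOT$, so \Cref{eulerH,euler,ctrl2pi} — and trivially \Cref{gphaseempty,gphaseaddition,HH,P0,CNOTPCNOT,CZ} — evaluate to $0$ on both sides. By contrast the left-hand side of \Cref{bigebre} evaluates to $0$ while its right-hand side evaluates to $1$. Hence $\interpEq{\cdot}{B}$ is preserved by $\QC$ without \Cref{bigebre} (and by prop deformations) but not by \Cref{bigebre}; since the two sides of \Cref{bigebre} have equal semantics, this shows \Cref{bigebre} is not derivable from the other axioms, and therefore completeness is lost if it is removed.

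The point that needs the most care is the claim that the multi-controlled gate notations appearing in the axioms — in particular the one in \Cref{ctrl2pi} — unfold to $\gSWAP$-free circuits. This follows from the inductive definitions of \cref{fig:shortcutcircuits}, but it deserves to be stated explicitly, because the companion notation for controls lying below the target is defined in \cref{sec:prop} using $\sigma$-circuits and hence does contain $\gSWAP$ gates; it is deliberately not used in any axiom.
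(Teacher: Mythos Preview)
Your proof is correct and rests on the same invariant as the paper's—the parity of the number of $\gSWAP$ gates—but the packaging differs. You compute the parity directly as a $\Z/2$-valued map and verify the prop axioms by hand (correctly, via the observation that every $\propQC$-morphism satisfies $n=m$, so naturality has the same $\sigma$ on both sides). The paper instead defines $\interpEq{\cdot}{B}$ as a genuine prop functor into linear maps, sending every generator to the identity of the appropriate arity except $\gSWAP\mapsto\interp{\gSWAP}$; since the target is already a prop, all structural equations (naturality included) hold for free, and the two sides of \eqref{bigebre} become $\interp{\gI\otimes\gI}$ versus $\interp{\gSWAP}$. Your parity is precisely the sign of the wire permutation computed by the paper's functor, so the arguments are equivalent. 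The functorial formulation buys a small robustness gain for \eqref{ctrl2pi}: one does not need the unfolding of the multi-controlled gate to be literally $\gSWAP$-free, only that any swaps present compose to the identity permutation (as they do whenever they arise from conjugation in the long-range CNOT shortcut).
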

\begin{proof}
  \Cref{bigebre} is the only axiom that is not sound according to the following alternative interpretation $\interpEq{\cdot}{B}$. Intuitively, \Cref{bigebre} is the only rule that does not preserve the parity of number of swap gates. Note that, as quantum circuits have as much output qubits as input qubits, all the deformation rules of the prop formalism preserve the parity of swap gates.
  \begin{gather*}
    \interpEq{C_2\circ C_1}{B} = \interpEq{C_2}{B}\circ\interpEq{C_1}{B}\\
    \interpEq{C_1\otimes C_2}{B} = \interpEq{C_1}{B}\otimes\interpEq{C_2}{B}\\
    \interpEq{\gempty}{B}=\interpEq{\gs}{B}=\interp{\gempty}, \qquad \interpEq{\gI}{B}=\interpEq{\gH}{B}=\interpEq{\gP}{B}=\interp{\gI}, \qquad \interpEq{\gCNOT}{B}=\interp{\gII}, \qquad \interpEq{\gSWAP}{B}=\interp{\gSWAP}
  \end{gather*}
\end{proof}

\begin{proposition}
  \Cref{CZ} cannot be removed from $\QC$ without loosing completeness.
\end{proposition}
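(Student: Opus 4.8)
The plan is to prove, exactly as for the other axioms, that \eqref{CZ} cannot be removed by exhibiting an alternative interpretation $\interpEq{\cdot}{CZ}$ of $\propQC$-circuits that is a prop functor, is sound for every axiom of $\QC$ except \eqref{CZ}, and distinguishes the two sides of \eqref{CZ}. Following the hint that \eqref{CZ} is the unique rule that changes the parity of the total number of $\gCNOT$ and $\gSWAP$ gates, I would set $\interpEq{\cdot}{CZ}$ valued in $(\Z/2\Z,+)$ by
\begin{gather*}
  \interpEq{C_2\circ C_1}{CZ} = \interpEq{C_1\otimes C_2}{CZ} = \interpEq{C_1}{CZ}+\interpEq{C_2}{CZ}\bmod 2\\
  \interpEq{\gCNOT}{CZ}=\interpEq{\gSWAP}{CZ}=1 \qquad\textup{and}\qquad \interpEq{g}{CZ}=0 \textup{ for every other generator }g,
\end{gather*}
so that $\interpEq{C}{CZ}$ is the number of $\gCNOT$ and $\gSWAP$ gates occurring in $C$, taken mod $2$.

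First I would check that $\interpEq{\cdot}{CZ}$ is well defined, i.e.~invariant under the deformation rules of \Cref{def:prop}: the interchange law is immediate since $+$ on $\Z/2\Z$ is associative and commutative; $\gSWAP\circ\gSWAP=\gI^{\otimes 2}$ is respected because $1+1\equiv0$; and the naturality equation $\sigma_m\circ(C\otimes\gI)=(\gI\otimes C)\circ\sigma_n$ is respected because every $\propQC$-circuit has as many inputs as outputs, so $\sigma_m=\sigma_n$ and both sides contain the same $\gSWAP$'s coming from the $\sigma$-part. Next I would check soundness for every axiom of $\QC$ acting on at most two qubits other than \eqref{CZ} (this suffices, since a derivation of a two-qubit equation uses only axioms acting on at most two qubits). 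The axioms \eqref{gphaseempty}, \eqref{gphaseaddition}, \eqref{HH}, \eqref{P0}, \eqref{eulerH} and \eqref{euler} involve no $\gCNOT$ or $\gSWAP$ at all (recall that $\gRX$ abbreviates a circuit built only from $\gH$ and $\gP$), so both sides evaluate to $0$; \eqref{CNOTPCNOT} has two $\gCNOT$'s on one side and none on the other, so both sides again evaluate to $0$; and for \eqref{bigebre} a direct count, after unfolding any control-below-target CNot as $\gSWAP\circ\gCNOT\circ\gSWAP$, shows that the two sides contain numbers of $\gCNOT$'s and of $\gSWAP$'s that each change parity, so that their sum has the same parity on both sides.

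Finally, \eqref{CZ} is not sound for $\interpEq{\cdot}{CZ}$: one of its two sides contains a single $\gCNOT$ and no $\gSWAP$, hence evaluates to $1$, while the other realises the control-$Z$ using two $\gCNOT$'s and no $\gSWAP$, hence evaluates to $0$. Therefore, since all axioms of $\QC$ acting on at most two qubits except \eqref{CZ} are sound for $\interpEq{\cdot}{CZ}$, \eqref{CZ} cannot be derived from the remaining axioms, so $\QC\setminus\{\eqref{CZ}\}$ is not complete, which is the claim.

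I expect the only delicate point to be the verification for \eqref{bigebre}: one must be careful to make explicit all the $\gSWAP$'s hidden in the shortcut notation for CNots whose control lies below their target before counting, and then observe that \eqref{bigebre} flips the parities of $\#\gCNOT$ and of $\#\gSWAP$ simultaneously, so that it does preserve their sum mod $2$ whereas \eqref{CZ} does not. All the remaining checks are routine.
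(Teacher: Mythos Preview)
Your proof is correct. The parity of the total number of $\gCNOT$ and $\gSWAP$ generators is a well-defined invariant on $\propQC$ (your observation that every $\propQC$-morphism has equal arity on both sides is exactly what makes the naturality axiom go through), it is preserved by every axiom of $\QC$ acting on at most two qubits except \eqref{CZ}, and it separates the two sides of \eqref{CZ}. Your counting interpretation is precisely the intuition the paper itself states for this axiom.

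The paper formalizes this same intuition differently, however: instead of a $\Z/2\Z$-valued count, it uses a prop functor into unitaries that sends $\gH,\gP,\gs$ to the identity and $\gCNOT,\gSWAP$ to their standard semantics. Under that interpretation the left side of \eqref{CZ} becomes $\interp{\gCNOT}$ while the right side becomes $\interp{\gCNOT}^2=I$; soundness for \eqref{bigebre} is then immediate because \eqref{bigebre} involves no $\gH,\gP,\gs$ and is already true in the standard semantics. Your $\Z/2\Z$ invariant is essentially the sign of the permutation computed by the paper's functor on two-qubit circuits. The trade-off is that your approach is more elementary and closer to the stated intuition, while the paper's target is an honest prop, so the deformation rules hold automatically without needing the $n=m$ argument.
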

\begin{proof}
  The following alternative interpretation $\interpEq{\cdot}{CZ}$ excludes \Cref{CZ}. Intuitively, \Cref{CZ} is the only rule that does not preserve the parity of number of cnot gates plus the parity of swap gates.
  \begin{gather*}
    \interpEq{C_2\circ C_1}{CZ} = \interpEq{C_2}{CZ}\circ\interpEq{C_1}{CZ}\\
    \interpEq{C_1\otimes C_2}{CZ} = \interpEq{C_1}{CZ}\otimes\interpEq{C_2}{CZ}\\
    \interpEq{\gempty}{CZ}=\interpEq{\gs}{CZ}=\interp{\gempty}, \qquad \interpEq{\gI}{CZ}=\interpEq{\gH}{CZ}=\interpEq{\gP}{CZ}=\interp{\gI}, \qquad \interpEq{\gCNOT}{CZ}=\interp{\gCNOT}, \qquad \interpEq{\gSWAP}{CZ}=\interp{\gSWAP}
  \end{gather*}
\end{proof}

\begin{proposition}
  \Cref{eulerH} cannot be removed from $\QC$ without loosing completeness.
\end{proposition}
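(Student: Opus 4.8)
The plan is to follow the template of the other necessity results in \Cref{sec:min} and exhibit a prop functor $\interpEq{\cdot}{E_{\textup{H}}}$ that is sound for every axiom of $\QC$ except \Cref{eulerH}. The guiding observation is that \Cref{eulerH} is the only axiom that changes the parity of the number of $\gH$ gates: its left-hand side contains exactly one $\gH$ gate, whereas its right-hand side contains two, hidden inside the single $\gRX$ gate (recall from \Cref{RXdef} that $\gRX$ is a shortcut notation for a circuit containing two $\gH$ gates). Concretely, I would let $\interpEq{C}{E_{\textup{H}}}$ be the number of $\gH$ gates occurring in $C$ taken modulo $2$, defined inductively by
\begin{gather*}
  \interpEq{C_2\circ C_1}{E_{\textup{H}}}=\interpEq{C_1\otimes C_2}{E_{\textup{H}}}=\interpEq{C_1}{E_{\textup{H}}}+\interpEq{C_2}{E_{\textup{H}}}\bmod{2}\\
  \interpEq{\gH}{E_{\textup{H}}}=1 \qquad\textup{and}\qquad \interpEq{g}{E_{\textup{H}}}=0 \textup{ for any other generator }g
\end{gather*}

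First I would check that this is a well-defined prop functor: it sends $\gI$, $\gempty$ and $\gSWAP$ to $0$, and both sides of the interchange law equal $\interpEq{C_1}{E_{\textup{H}}}+\interpEq{C_2}{E_{\textup{H}}}+\interpEq{C_3}{E_{\textup{H}}}+\interpEq{C_4}{E_{\textup{H}}}$, so $\interpEq{\cdot}{E_{\textup{H}}}$ is invariant under every deformation rule of the prop. Next I would verify soundness for every axiom of $\QC$ other than \Cref{eulerH}; this amounts to checking that the two sides of each such axiom have the same number of $\gH$ gates modulo $2$, which is immediate since every generator other than $\gH$ contributes $0$, the macro $\gRX$ contributes exactly $2$, and the unfolding of the multi-controlled gates via \Cref{mctrlPdef} uses only $\gP$ and $\gCNOT$ gates, hence contributes $0$. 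For instance \Cref{HH} removes two $\gH$ gates, \Cref{CZ} relates a circuit with two $\gH$ gates to one with none, and \Cref{euler} relates a circuit with two $\gRX$ gates to one with a single $\gRX$ gate. Moreover, since \Cref{eulerH} acts on a single qubit, by the observation at the beginning of \Cref{sec:min} it is enough to establish soundness for the axioms of $\QC$ acting on at most one qubit, namely \Cref{gphaseempty}, \Cref{gphaseaddition}, \Cref{HH}, \Cref{P0} and \Cref{euler}, all of which are covered above.

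Finally, $\interpEq{\cdot}{E_{\textup{H}}}$ is not sound for \Cref{eulerH}: its left-hand side evaluates to $\interpEq{\gH}{E_{\textup{H}}}=1$, while its right-hand side, whose only $\gH$ gates are the two inside its single $\gRX$ gate, evaluates to $0$. Hence \Cref{eulerH} cannot be a consequence of the remaining axioms, since a derivation of it together with the soundness of $\interpEq{\cdot}{E_{\textup{H}}}$ for those axioms would equate a circuit of odd $\gH$-count with one of even $\gH$-count. I do not anticipate a substantial obstacle in carrying this out; the only point warranting a line of care is confirming that unfolding the bullet notation in \Cref{CZ} and \Cref{ctrl2pi} introduces no $\gH$ gates, which follows from \Cref{mctrlPdef} being phrased purely in terms of $\gP$ and $\gCNOT$.
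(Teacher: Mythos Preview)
Your proposal is correct and takes exactly the same approach as the paper: the alternative interpretation $\interpEq{\cdot}{E_{\textup{H}}}$ counting the parity of the number of $\gH$ gates is precisely the one the paper uses, with the same verification that every axiom of $\QC$ other than \Cref{eulerH} preserves this parity while \Cref{eulerH} does not. Your additional remarks (well-definedness as a prop functor, the unfolding of \Cref{mctrlPdef} introducing no $\gH$ gates, the reduction to axioms on at most one qubit) are all accurate and simply make explicit what the paper leaves implicit.
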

\begin{proof}
  \Cref{eulerH} is the only axiom that is not sound according to the alternative interpretation $\interpEq{\cdot}{E_H}$. Intuitively, \Cref{eulerH} is the only rule that does not preserve the parity of number of Hadamard gates.
  \begin{gather*}
    \interpEq{C_2\circ C_1}{E_H} = \interpEq{C_2}{E_H}+\interpEq{C_1}{E_H}\bmod 2\\
    \interpEq{C_1\otimes C_2}{E_H} = \interpEq{C_1}{E_H}+\interpEq{C_2}{E_H}\bmod 2\\
    \interpEq{\gH}{E_H}=1 \qquad\textup{and}\qquad \interpEq{g}{E_H}=0 \textup{ for any other generator }g
  \end{gather*}
\end{proof}

\section{Alternative equational theory $\QCprime$}
\subsection{Proofs of Equations \eqref{eulerH} and \eqref{euler}}\label{appendix:completenessQCprime}

\begin{proof}[Proof of \Cref{eulerH} in $\QCprime$]
  \begin{gather*}
    \tikzfigS{./qcprime-completeness/Heuler-step-0}
    \eqquatreeqref{HH}{P0}{gphaseempty}{gphaseaddition}\tikzfigS{./qcprime-completeness/Heuler-step-1}
    \eqeqref{eulerprime}\tikzfigS{./qcprime-completeness/Heuler-step-2}
    \eqdeuxeqref{gphaseempty}{gphaseaddition}\tikzfigS{./qcprime-completeness/Heuler-step-3}
  \end{gather*}
\end{proof}

\begin{proof}[Proof of \Cref{P2pi} in $\QCprime$]
  \begin{eqnarray*}
    \tikzfigS{./qcprime-completeness/P2pi_00}
    &\eqquatreeqref{HH}{P0}{gphaseempty}{gphaseaddition}&\tikzfigS{./qcprime-completeness/P2pi_01}\\
    &\eqtroiseqref{RXdef}{gphaseempty}{gphaseaddition}&\tikzfigS{./qcprime-completeness/P2pi_02}\\
    &\eqeqref{eulerprime}&\tikzfigS{./qcprime-completeness/P2pi_03}\\
    &\eqquatreeqref{eulerH}{HH}{gphaseaddition}{gphaseempty}&\tikzfigS{./qcprime-completeness/P2pi_04}
  \end{eqnarray*}
\end{proof}

\begin{proof}[Proof of \Cref{XPX} in $\QCprime$]
  \begin{eqnarray*}
    \tikzfigS{./qcprime-completeness/XPX_00}
    &\eqdeuxeqref{Xdef}{Zdef}&\tikzfigS{./qcprime-completeness/XPX_01}\\
    &\eqeqref{HH}&\tikzfigS{./qcprime-completeness/XPX_02}\\
    &\eqtroiseqref{RXdef}{gphaseempty}{gphaseaddition}&\tikzfigS{./qcprime-completeness/XPX_03}\\
    &\eqeqref{eulerprime}&\tikzfigS{./qcprime-completeness/XPX_045}\\
    &\eqquatreeqref{gphaseaddition}{gphaseempty}{Paddition}{P2pi}&\tikzfigS{./qcprime-completeness/XPX_04}\\
    &\eqtroiseqref{gphaseaddition}{Paddition-prime}{P2pi}&\tikzfigS{./qcprime-completeness/XPX_05}\\
    &\eqdeuxeqref{eulerH}{HH}&\tikzfigS{./qcprime-completeness/XPX_06}
  \end{eqnarray*}
\end{proof}

\begin{proof}[Proof of \Cref{ZRZ} in $\QCprime$]
  \begin{eqnarray*}
    \tikzfigS{./qcprime-completeness/ZRZ_00}
    \eqeqref{HH}\tikzfigS{./qcprime-completeness/ZRZ_01}
    \eqeqref{RXdef}\tikzfigS{./qcprime-completeness/ZRZ_02}
    \eqdeuxeqref{XPX}{gphaseaddition}\tikzfigS{./qcprime-completeness/ZRZ_03}
    \eqeqref{RXdef}\tikzfigS{./qcprime-completeness/ZRZ_04}
  \end{eqnarray*}
\end{proof}

\begin{proof}[Proof of \Cref{euler} in $\QCprime$]
  \begin{eqnarray*}
    \tikzfigS{./qcprime-completeness/euler_00}
    &\eqeqref{Paddition}&\tikzfigS{./qcprime-completeness/euler_01}\\
    &\eqeqref{HH}&\tikzfigS{./qcprime-completeness/euler_02}\\
    &\eqtroiseqref{RXdef}{gphaseempty}{gphaseaddition}&\tikzfigS{./qcprime-completeness/euler_03}\\
    &\eqdeuxeqref{eulerprime}{gphaseaddition}&\tikzfigS{./qcprime-completeness/euler_04}\\
    &\eqeqref{Paddition}&\tikzfigS{./qcprime-completeness/euler_05}
  \end{eqnarray*}

  Considering $x$ as a variable, we choose $x$ such that $\gamma_3(x)+\delta_1(x)=0\pmod{\pi}$. Notice that the existence of such an $x$ is proved in \cite{vilmart2018nearoptimal}.

  If $\gamma_3+\delta_1=0\pmod{2\pi}$, then we can assume w.l.o.g. that $\gamma_3+\delta_1=0$ (otherwise we can apply \Cref{P2pi,Paddition}).
  \begin{eqnarray*}
    \tikzfigS{./qcprime-completeness/euler_00}
    &=&\tikzfigS{./qcprime-completeness/euler_case0_00}\\
    &\eqeqref{P0}&\tikzfigS{./qcprime-completeness/euler_case0_01}\\
    &\eqquatreeqref{gphaseaddition}{Paddition}{RXdef}{HH}&\tikzfigS{./qcprime-completeness/euler_case0_02}
  \end{eqnarray*}

  If $\gamma_3+\delta_1=\pi\pmod{2\pi}$, then we can assume w.l.o.g. that $\gamma_3+\delta_1=\pi$ (otherwise we can apply \Cref{P2pi,Paddition}).
  \begin{eqnarray*}
    \tikzfigS{./qcprime-completeness/euler_00}
    &=&\tikzfigS{./qcprime-completeness/euler_casepi_00}\\
    &\eqtroiseqref{ZRZ}{P2pi}{Paddition}&\tikzfigS{./qcprime-completeness/euler_casepi_01}\\
    &\eqquatreeqref{gphaseaddition}{Paddition}{RXdef}{HH}&\tikzfigS{./qcprime-completeness/euler_casepi_02}
  \end{eqnarray*}
  In both cases we have a circuit of the form $\tikzfigS{qc-axioms/euler-right}$. We conclude the proof 
  in the same way as for \Cref{lem:procedureNormalFormEuler}, except that instead of applying \cref{euler} with $R_X(\pi)$ and $R_X(-\pi)$ gates, we apply \cref{XPX} after remarking that according to \cref{Zdef,RXdef,Xdef}, an $R_X(\pi)$ gate is nothing but an $X$ gate up to a global phase.
\end{proof}

\subsection{Proof of Proposition \ref{prop:eulerprimeclifford}}\label{appendix:minimalityQCprime}
\eulerprimeclifford*
\begin{proof}
  \fbox{$\implies$} Supposing $\alpha_i'=0\pmod{\frac{\pi}{2}}$ for $i\in\{1,3\}$, we have to show that $\beta_i'=0\pmod{\frac{\pi}{2}}$ for $i\in\{1,2,3\}$. As the functions used to compute the angles $\beta_i'$ are $4\pi$-periodic, there is only a finite number on instances to check. We can do so by direct computation. Note that in the three cases (corresponding to the three cases to compute the angles $\beta_i'$) we can also reduce the number of instances to check as follows.

  \noindent\textbf{Case $\bf{z'=0}$.}
  \begin{gather*}
    z'=0\implies\begin{cases}
      \sin(\frac{\alpha_1'+\alpha_3'}{2})=0\\
      \cos(\frac{\alpha_1'-\alpha_3'}{2})=0
    \end{cases}\implies\begin{cases}
      \frac{\alpha_1'+\alpha_3'}{2}=0\pmod{\pi}\\
      \frac{\alpha_1'-\alpha_3'}{2}=\frac{\pi}{2}\pmod{\pi}
    \end{cases}\implies\begin{cases}
      \alpha_1'=\frac{\pi}{2}\pmod{\pi}\\
      \alpha_3'=-\frac{\pi}{2}\pmod{\pi}
    \end{cases}
  \end{gather*}

  \noindent\textbf{Case $\bf{z=0}$.}
  \begin{gather*}
    z=0\implies\begin{cases}
      \cos(\frac{\alpha_1'+\alpha_3'}{2})=0\\
      \sin(\frac{\alpha_1'-\alpha_3'}{2})=0
    \end{cases}\implies\begin{cases}
      \frac{\alpha_1'+\alpha_3'}{2}=\frac{\pi}{2}\pmod{\pi}\\
      \frac{\alpha_1'-\alpha_3'}{2}=0\pmod{\pi}
    \end{cases}\implies\begin{cases}
      \alpha_1'=\frac{\pi}{2}\pmod{\pi}\\
      \alpha_3'=\frac{\pi}{2}\pmod{\pi}
    \end{cases}
  \end{gather*}

  \noindent\textbf{Case $\bf{z\ne0}$ and $\bf{z'\ne0}$.} In this case, we can even easily prove the statement. First note that $\nicefrac{\alpha_1'+\alpha_3'}{2}=0\pmod{\nicefrac{\pi}{4}}$. We distinguish the case $\nicefrac{\alpha_1'+\alpha_3'}{2}=0\pmod{\nicefrac{\pi}{2}}$ and the case $\nicefrac{\alpha_1'+\alpha_3'}{2}=\nicefrac{\pi}{4}\pmod{\nicefrac{\pi}{2}}$.

  \textit{Subcase $\nicefrac{\alpha_1'+\alpha_3'}{2}=0\pmod{\nicefrac{\pi}{2}}$.} As $\alpha_i'=0\pmod{\nicefrac{\pi}{2}}$, it is also the case that $\nicefrac{\alpha_1'-\alpha_3'}{2}=0\pmod{\nicefrac{\pi}{2}}$. Then, one can check that $\arg(z)=0\pmod{\nicefrac{\pi}{2}}$, $\arg(z')=0\pmod{\nicefrac{\pi}{2}}$ and $|z|=|z'|$. Hence we get $\beta_1'=\arg(z){+}\arg(z')=0\pmod{\nicefrac{\pi}{2}}$, $\beta_2'=2\arg\left(i{+}\left\lvert\frac{z}{z'}\right\rvert\right)=\nicefrac{\pi}{2}$ and $\beta_3'=\arg(z){-}\arg(z')=0\pmod{\nicefrac{\pi}{2}}$.

  \textit{Subcase $\nicefrac{\alpha_1'+\alpha_3'}{2}=\nicefrac{\pi}{4}\pmod{\nicefrac{\pi}{2}}$.} As $\alpha_i'=0\pmod{\nicefrac{\pi}{2}}$, it is also the case that $\nicefrac{\alpha_1'-\alpha_3'}{2}=\nicefrac{\pi}{4}\pmod{\nicefrac{\pi}{2}}$. Then, one can check that $\arg(z)=\nicefrac{\pi}{4}\pmod{\nicefrac{\pi}{2}}$, $\arg(z')=\nicefrac{\pi}{4}\pmod{\nicefrac{\pi}{2}}$ and $|z|=|z'|$. Hence we get $\beta_1'=\arg(z){+}\arg(z')=0\pmod{\nicefrac{\pi}{2}}$, $\beta_2'=2\arg\left(i{+}\left\lvert\frac{z}{z'}\right\rvert\right)=\nicefrac{\pi}{2}$ and $\beta_3'=\arg(z){-}\arg(z')=0\pmod{\nicefrac{\pi}{2}}$.

  \vspace{1em}
  \noindent\fbox{$\impliedby$} Supposing $\beta_i'=0\pmod{\frac{\pi}{2}}$ for $i\in\{1,2,3\}$, we have to show that $\alpha_i'=0\pmod{\frac{\pi}{2}}$ for $i\in\{1,3\}$. We distinguish three cases, corresponding to the three cases to compute the angles $\beta_i'$. In the following we use the notation $H\defeq\interp{\gH}$, $P(\varphi)\defeq\interp{\gP}$ and $R_X(\theta)\defeq\interp{\gRX}$.

  \noindent\textbf{Case $\bf{z'=0}$.} In this case $\beta_2'=\beta_3'=0$, then by soundness of \Cref{eulerprime} we have
  \begin{gather*}
    R_X(\alpha_3')HR_X(\alpha_1')=e^{i\beta_0'}P(0)R_X(0)P(\beta_1')
    \overset{\eqref{RXdef}}{\iff} e^{-i(\nicefrac{\alpha_1'}{2}+\nicefrac{\alpha_3'}{2})}HP(\alpha_3')HP(\alpha_1')H=e^{i\beta_0'}P(\beta_1')\\
    \iff P(\alpha_3')HP(\alpha_1')=e^{i(\beta_0'+\nicefrac{\alpha_1'}{2}+\nicefrac{\alpha_3'}{2})}HP(\beta_1')H\\
    \iff \frac{1}{\sqrt{2}}\begin{pmatrix}
      1 & e^{i\alpha_1'}\\
      e^{i\alpha_3'} & -e^{i(\alpha_1'+\alpha_3')}
    \end{pmatrix} = e^{i(\beta_0'+\nicefrac{\alpha_1'}{2}+\nicefrac{\alpha_3'}{2}+\nicefrac{\beta_1'}{2})}\begin{pmatrix}
      \cos(\nicefrac{\beta_1'}{2}) & -i\sin(\nicefrac{\beta_1'}{2})\\
      -i\sin(\nicefrac{\beta_1'}{2}) & \cos(\nicefrac{\beta_1'}{2})
    \end{pmatrix}\\
    \implies\begin{cases}
      -e^{i(\alpha_1'+\alpha_3')}=1\\
      e^{i\alpha_1'}=e^{i\alpha_3'}
    \end{cases}
    \implies\begin{cases}
      \alpha_1'+\alpha_3'+\pi=0\pmod{2\pi}\\
      \alpha_1'=\alpha_3'\pmod{2\pi}
    \end{cases}
    \implies\begin{cases}
      \alpha_1'=\frac{\pi}{2}\pmod{\pi}\\
      \alpha_3'=\frac{\pi}{2}\pmod{\pi}
    \end{cases}\implies\begin{cases}
      \alpha_1'=0\pmod{\nicefrac{\pi}{2}}\\
      \alpha_3'=0\pmod{\nicefrac{\pi}{2}}
    \end{cases}
  \end{gather*}

  \noindent\textbf{Case $\bf{z=0}$.} In this case $\beta_2'=\pi$ and $\beta_3'=0$, then by soundness of \Cref{eulerprime} we have 
  \begin{gather*}
    R_X(\alpha_3')HR_X(\alpha_1')=e^{i\beta_0'}P(0)R_X(\pi)P(\beta_1')
    \iff R_X(\alpha_3'-\pi)HR_X(\alpha_1')=e^{i\beta_0'}P(\beta_1')\\
    \overset{\eqref{RXdef}}{\iff} e^{-i(\nicefrac{\alpha_1'}{2}+\nicefrac{\alpha_3'}{2}-\nicefrac{\pi}{2})}HP(\alpha_3'-\pi)HP(\alpha_1')H=e^{i\beta_0'}P(\beta_1')
    \iff P(\alpha_3'-\pi)HP(\alpha_1')=e^{i(\beta_0'+\nicefrac{\alpha_1'}{2}+\nicefrac{\alpha_3'}{2}-\nicefrac{\pi}{2})}HP(\beta_1')H\\
    \iff \frac{1}{\sqrt{2}}\begin{pmatrix}
      1 & e^{i\alpha_1'}\\
      e^{i(\alpha_3'-\pi)} & -e^{i(\alpha_1'+\alpha_3'-\pi)}
    \end{pmatrix} = e^{i(\beta_0'+\nicefrac{\alpha_1'}{2}+\nicefrac{\alpha_3'}{2}-\nicefrac{\pi}{2}+\nicefrac{\beta_1'}{2})}\begin{pmatrix}
      \cos(\nicefrac{\beta_1'}{2}) & -i\sin(\nicefrac{\beta_1'}{2})\\
      -i\sin(\nicefrac{\beta_1'}{2}) & \cos(\nicefrac{\beta_1'}{2})
    \end{pmatrix}\\
    \implies\begin{cases}
      -e^{i(\alpha_1'+\alpha_3'-\pi)}=1\\
      e^{i\alpha_1'}=e^{i(\alpha_3'-\pi)}
    \end{cases}
    \implies\begin{cases}
      \alpha_1'+\alpha_3'=0\pmod{2\pi}\\
      \alpha_1'=\alpha_3'-\pi\pmod{2\pi}
    \end{cases}
    \implies\begin{cases}
      \alpha_1'=\frac{\pi}{2}\pmod{\pi}\\
      \alpha_3'=-\frac{\pi}{2}\pmod{\pi}
    \end{cases}\implies\begin{cases}
      \alpha_1'=0\pmod{\nicefrac{\pi}{2}}\\
      \alpha_3'=0\pmod{\nicefrac{\pi}{2}}
    \end{cases}
  \end{gather*}

  \noindent\textbf{Case $\bf{z\ne0}$ and $\bf{z'\ne0}$.} In this case $\beta_2'=0\pmod{\nicefrac{\pi}{2}}$ implies
  \begin{equation*}
    \frac{\beta_2'}{2}=\arg\left(i+\frac{|z|}{|z'|}\right)=\arctan\left(\frac{\sqrt{\cos^2(\nicefrac{\alpha_1'+\alpha_3'}{2})+\sin^2(\nicefrac{\alpha_1'-\alpha_3'}{2})}}{\sqrt{\sin^2(\nicefrac{\alpha_1'+\alpha_3'}{2})+\cos^2(\nicefrac{\alpha_1'-\alpha_3'}{2})}}\right)=0\pmod{\nicefrac{\pi}{4}}
  \end{equation*}

  \textit{Subcase $\nicefrac{\beta_2'}{2}=0\pmod{\nicefrac{\pi}{2}}$.} In this case, using $\arctan(x)=0\pmod{\nicefrac{\pi}{2}}\implies x=0$ and $\nicefrac{\beta_2'}{2}\in{]0,\nicefrac{\pi}{2}[}$, we have $\cos^2(\nicefrac{\alpha_1'+\alpha_3'}{2})+\sin^2(\nicefrac{\alpha_1'-\alpha_3'}{2})=0$ which implies
  \begin{gather*}
    \begin{cases}
      \cos^2(\frac{\alpha_1'+\alpha_3'}{2})=0\\
      \sin^2(\frac{\alpha_1'-\alpha_3'}{2})=0
    \end{cases}\implies\begin{cases}
      \frac{\alpha_1'+\alpha_3'}{2}=\frac{\pi}{2}\pmod{\pi}\\
      \frac{\alpha_1'-\alpha_3'}{2}=0\pmod{\pi}
    \end{cases}\implies\begin{cases}
      \alpha_1'+\alpha_3'=\pi\pmod{2\pi}\\
      \alpha_1'-\alpha_3'=0\pmod{2\pi}
    \end{cases}\\
    \implies\begin{cases}
      \alpha_1'=\frac{\pi}{2}\pmod{\pi}\\
      \alpha_3'=\frac{\pi}{2}\pmod{\pi}
    \end{cases}\implies\begin{cases}
      \alpha_1'=0\pmod{\nicefrac{\pi}{2}}\\
      \alpha_3'=0\pmod{\nicefrac{\pi}{2}}
    \end{cases}
  \end{gather*}

  \textit{Subcase $\nicefrac{\beta_2'}{2}=\nicefrac{\pi}{4}\pmod{\nicefrac{\pi}{2}}$.} In this case, using $\arctan(x)=\nicefrac{\pi}{4}\pmod{\nicefrac{\pi}{2}}\implies x=\pm 1$ and $\nicefrac{\beta_2'}{2}\in{]0,\nicefrac{\pi}{2}[}$, we have
  \begin{gather*}
    \sqrt{\cos^2(\nicefrac{\alpha_1'+\alpha_3'}{2})+\sin^2(\nicefrac{\alpha_1'-\alpha_3'}{2})}=\sqrt{\sin^2(\nicefrac{\alpha_1'+\alpha_3'}{2})+\cos^2(\nicefrac{\alpha_1'-\alpha_3'}{2})}\\
    \implies\cos^2(\nicefrac{\alpha_1'+\alpha_3'}{2})+\sin^2(\nicefrac{\alpha_1'-\alpha_3'}{2})=\sin^2(\nicefrac{\alpha_1'+\alpha_3'}{2})+\cos^2(\nicefrac{\alpha_1'-\alpha_3'}{2})\\
    \implies 1-\sin^2(\nicefrac{\alpha_1'+\alpha_3'}{2})+\sin^2(\nicefrac{\alpha_1'-\alpha_3'}{2})=\sin^2(\nicefrac{\alpha_1'+\alpha_3'}{2})+1-\sin^2(\nicefrac{\alpha_1'-\alpha_3'}{2})\\
    \implies \sin^2(\nicefrac{\alpha_1'-\alpha_3'}{2})=\sin^2(\nicefrac{\alpha_1'+\alpha_3'}{2})\implies \sin(\nicefrac{\alpha_1'-\alpha_3'}{2})=\pm\sin(\nicefrac{\alpha_1'+\alpha_3'}{2})\\
    \implies\frac{\alpha_1'-\alpha_3'}{2}=\pm\frac{\alpha_1'+\alpha_3'}{2}\pmod{\pi}\implies\alpha_1'-\alpha_3'=\pm\alpha_1'\pm\alpha_3'\pmod{2\pi}
  \end{gather*}
  \noindent If $\alpha_1'{-}\alpha_3'=\alpha_1'{+}\alpha_3'\pmod{2\pi}$ then $\alpha_3'=0\pmod{\pi}$ and by computing the value of $\beta_3'$ (as the functions are $4\pi$-periodic in $\alpha_3'$ we just need to check the computation for the four instances $\alpha_3'\in\{0,\pi,2\pi,3\pi\}$) we obtain $\beta_3'=\pm\alpha_1'\pmod{\nicefrac{\pi}{2}}$ which together with $\beta_3'=0\pmod{\nicefrac{\pi}{2}}$ implies that $\alpha_1'=0\pmod{\nicefrac{\pi}{2}}$.

  \noindent If $\alpha_1'{-}\alpha_3'={-}\alpha_1'{-}\alpha_3'\pmod{2\pi}$ then $\alpha_1'=0\pmod{\pi}$ and by computing the value of $\beta_1'$ (as the functions are $4\pi$-periodic in $\alpha_1'$ we just need to check the computation for the four instances $\alpha_1'\in\{0,\pi,2\pi,3\pi\}$) we obtain $\beta_1'=\pm\alpha_3'\pmod{\nicefrac{\pi}{2}}$ which together with $\beta_1'=0\pmod{\nicefrac{\pi}{2}}$ implies that $\alpha_3'=0\pmod{\nicefrac{\pi}{2}}$.
\end{proof}

\end{document}